\documentclass[tbtags,authorcolumns,numberwithinsect,footnotebackref]{no-lipics}

\usepackage{amssymb}
\usepackage{mathtools}
\usepackage{bm}
\usepackage{stmaryrd}
\usepackage[draft]{fixme}
\usepackage{booktabs}
\usepackage{tikz-3dplot}
\usepackage{caption}
\usepackage{subcaption}
\usepackage{empheq}
\usepackage[table]{xcolor}

\usetikzlibrary{shapes, positioning}

\SetKwComment{Comment}{$\triangleright$\ \rm}{}

\usetikzlibrary{arrows,arrows.meta,shapes.misc, decorations.pathreplacing,decorations.pathmorphing,calligraphy, patterns.meta}
\usetikzlibrary{calc}
\tikzset{dotmark/.style={circle,fill,inner sep=1.5pt}}
\tikzset{emptymark/.style={circle,draw,fill=white,inner sep=1.5pt}}
\tikzset{crossmark/.style={thick,inner sep=1.5pt}}

\newcommand{\eps}{\varepsilon}
\newcommand{\A}{\mathcal{A}}

\SetKwInput{KwInput}{Input}
\SetKwInput{KwOutput}{Output}

\def\ShowAuthNotes{0}
\ifnum\ShowAuthNotes=1
\newcommand{\authnote}[3]{\textcolor{#3}{[{\bf #1:} { {#2}}]}}
\else
\newcommand{\authnote}[3]{}
\fi

\newtheorem{observation}{Observation}

\DeclareMathOperator*{\diam}{diam}
\DeclareMathOperator*{\ecc}{ecc}
\DeclareMathOperator*{\radius}{radius}
\newcommand{\maxmin}{\textnormal{max-min}\xspace}
\makeatletter
\newcommand{\superimpose}[2]{{%
  \ooalign{%
    \hfil$\m@th#1\@firstoftwo#2$\hfil\cr
    \hfil$\m@th#1\@secondoftwo#2$\hfil\cr
  }%
}}
\makeatother
\newcommand{\maxminprod}{\mathrel{\mathpalette\superimpose{{\bigcirc}{\lor}}}}

\newcommand{\Oh}{\mathcal{O}}
\newcommand{\Ohtilde}{\tilde{\Oh}}

\DeclareMathOperator*{\poly}{poly}
\DeclareMathOperator*{\polylog}{polylog}

\def\fragmentco#1#2{\bm{[}\,#1\,\bm{.\,.}\,#2\,\bm{)}}
\def\fragmentoc#1#2{\bm{(}\,#1\,\bm{.\,.}\,#2\,\bm{]}}
\def\fragmentoo#1#2{\bm{(}\,#1\,\bm{.\,.}\,#2\,\bm{)}}
\def\fragment#1#2{\bm{[}\,#1\,\bm{.\,.}\,#2\,\bm{]}}

\newcommand{\ceil}[1]{\lceil #1 \rceil}

\newcommand{\prepdelta}[1]{\textsc{Preprocess-delta}\ensuremath{(#1)}}
\newcommand{\querydelta}[1]{\textsc{Query-delta}\ensuremath{(#1)}}

\newcommand{\mD}{\mathcal{D}}

\newcommand{\sT}{\mathsf{T}}

\newcommand{\sP}{\mathsf{P}}
\newcommand{\sQ}{\mathsf{Q}}

\newcommand{\bG}{\mathbf{G}}
\newcommand{\bH}{\mathbf{H}}

\newcommand{\bT}{\mathbf{T}}

\renewcommand{\epsilon}{\varepsilon}

\DeclarePairedDelimiter\abs{\lvert}{\rvert}
\DeclarePairedDelimiter\norm{\lVert}{\rVert}

\makeatletter
\let\oldabs\abs
\def\abs{\@ifstar{\oldabs}{\oldabs*}}
\let\oldnorm\norm
\def\norm{\@ifstar{\oldnorm}{\oldnorm*}}
\makeatother

\def\problembox#1{%
    \vspace{2mm}%
    \noindent\fbox{%
    \begin{minipage}{.985\linewidth}%
        #1
    \end{minipage}%
    }%
    \vspace{2mm}%
}

\makeatletter
\renewenvironment{cases}{%
  \matrix@check\cases\env@cases
}{%
  \endarray\right.%
}
\def\env@cases{%
  \let\@ifnextchar\new@ifnextchar
  \left\lbrace
  \def\arraystretch{1.1}%
  \array{@{\;}c@{\quad}l@{}}%
}
\makeatother

\def\mid{\ensuremath :}

\def\emptyset{\varnothing}

\newcommand\thefont{\expandafter\string\the\font}

\title{The Cost of Changing Edges \\for Diameter Computation and More}

\author{Sam Hiken}{University of Michigan\\Ann Arbor, United States}{hiken@umich.edu}{}{}
\author{Yael Kirkpatrick}{Massachusetts Institute of Technology\\Cambridge, United States}{yaelkirk@mit.edu}{https://orcid.org/0009-0007-6718-7390}{}
\author{Jakob Nogler}{Massachusetts Institute of Technology\\Cambridge, United States}{jnogler@mit.edu}{https://orcid.org/0009-0002-7028-2595}{}
\author{Virginia Vassilevska Williams}{Massachusetts Institute of Technology\\Cambridge, United States}{virgi@mit.edu}{https://orcid.org/0000-0003-4844-2863}{}

\authorrunning{S. Hiken, Y. Kirkpatrick, J. Nogler, and V. Vassilevska Williams}
\titlerunning{The Cost of Changing Edges for Diameter Computation and More}

\begin{document}

\pagenumbering{roman}
\maketitle
\begin{abstract}
The \emph{sensitivity} setting is a restricted setting for dynamic algorithms, particularly practical for scenarios where extensive preprocessing is feasible but responses to real-time modifications must be near-instantaneous before the data structure is eventually rebuilt. For graph problems, a sensitivity data structure is constructed with a preprocessing time $\sP$ so that the following queries can be answered quickly, preferably in $\Oh(1)$ time: given an edge $e$, return the answer to the problem on either $\bG \setminus e$ (decremental) or $\bG \cup e$ (incremental).

In this paper, we almost entirely settle the decremental setting for the diameter and eccentricities problems in a variety of approximation regimes by matching the preprocessing time $\sP$ to the static runtime while supporting constant time queries, thereby improving upon all previous results for a single failure [Bil\'o, Cohen, Friedrich, Schirneck, MFCS 2021; Bil\'o, Choudhary, Cohen, Friedrich, Krogmann, Schirneck, ICALP 2021]. More precisely:
\begin{enumerate}[(a)]
    \item We provide a tight reduction demonstrating that any exact distance sensitivity oracle (regardless of whether the graph is directed, undirected, weighted, unweighted, or has small integer weights) can be used to efficiently solve decremental exact diameter and all-node eccentricities within that same setting.

    \item For the approximate setting, we match the runtime of all known static diameter algorithms across all sparsity settings, preserving the same approximation factor up to an additional $(1+\eps)$ factor.
\end{enumerate}
On the other hand, for the previously unexplored incremental setting of these problems:
\begin{enumerate}[(a), resume]
    \item We develop new lower bounds under the Strong Exponential Time Hypothesis, demonstrating that no incremental algorithm can efficiently approximate diameter, radius, or eccentricity beyond a $5/3$ factor in undirected graphs or a $2$ factor in directed graphs. These lower bounds provide multiple provable separations between settings, such as decremental versus incremental, and undirected versus directed, the latter of which remains open for the static setting.

    \item We introduce two new instructive techniques and demonstrate how to utilize them to construct several new algorithms. Most notably, we develop incremental single-node eccentricity approximations for both directed and undirected graphs that match our new lower bounds. Furthermore, we provide new diameter approximations, culminating in a near-$2$ approximation for $1$-incremental diameter in subquadratic time.
\end{enumerate}





\end{abstract}
\newpage
{
\tableofcontents}
\thispagestyle{plain}
\clearpage
\pagenumbering{arabic}

\section{Introduction}
\label{sec:intro}

Dynamic graph algorithms aim to compute and maintain various graph quantities under arbitrarily many changes to the input graph, typically insertions and deletions of vertices or edges. However, in many practical scenarios, changes to the input are naturally confined to small batches and fully dynamic algorithms are often subject to strong lower bounds under standard fine-grained assumptions. This has spurred interest in the \emph{sensitivity} setting, in which one prepares for a bounded number of changes to the original input data. When the allowed changes are only deletions, the setting is often referred to as ``fault tolerant" or ``emergency'', as deletions of edges from the input graph can represent failures or faults in the network.

Formally, a data structure with \emph{sensitivity} for problem $P$ preprocesses an instance $p$ of size $n$ in time $\sP(n)$. Upon receiving a batch of at most $d$ changes to $p$, it must output the solution for the modified instance in query time $\sQ(n)$. Crucially, these updates are independent and do not compound; while many distinct queries can be performed on the original instance $p$, each update applies only to the initial preprocessed state. Many key problems in graph algorithms have been studied under the sensitivity setting in the past, including connectivity \cite{connectivity2017, connectivity2010, connectivity2007, connectivity2016}, reachability \cite{reachability2012, reachability2015, reachability2016, reachability2016b}, single source shortest paths (SSSP) \cite{sssp2010, sssp2015, sssp2016, sssp2016b, grandoni-vw-dso, HKIM24, bernsteinRP, dualrp}, all-pairs shortest paths (APSP) \cite{sssp2010, BK09, apsp2014, apsp2017,grandoni-vw-dso, apsp2014b} and graph spanners \cite{spanners2012, spanners2015}.

This paper focuses on sensitivity algorithms for vertex 
\emph{eccentricities} $\ecc_{\bG}(v)$ (the maximum distance from $v$ to any node $u$), the \emph{diameter} $\diam(\bG)$ (the maximum eccentricity), and the \emph{radius} $\radius(\bG)$ (the minimum eccentricity). These parameters measure how fast information spreads, informing decisions such as network controller placement. As exact computation is hard under fine-grained assumptions \cite{roditty-vw-diamrad,abboud**}, significant work has focused on approximations in static \cite{staticdiam1, backurs2018, Chechik2014BetterAA} and dynamic \cite{dynamicdiam1} settings. Henzinger et al. \cite{sensitivity2017} initiated the sensitivity study with fine-grained lower bounds, followed by several works \cite{ftdiam2023, ftdiam2022, ftdiam2021} focused on upper bounds.

\subparagraph*{Some Wishful Thinking:} Our ideal objective for the sensitivity versions of diameter, eccentricities, and radius (in both exact and approximate settings) is to achieve a preprocessing time $\sP$ matching the static problem's complexity and a query time $\sQ = \Oh(1)$. This could be realized through a tight \emph{reduction} or by developing a new oracle with equivalent preprocessing costs. In the exact setting, since no better algorithm for the problems of interest is currently known than computing APSP, if $\sP$ already matches the runtime of computing APSP on the specific graph this would be similarly significant.  

Naturally, achieving the same bounds in the sensitivity setting as in the static version is a very ambitious goal, as the sensitivity setting often introduces additional hardness. For instance, in weighted graphs, decremental diameter is hard under the APSP hypothesis \cite{sensitivity2017}, while its static relationship to APSP remains an open question. A starker contrast is seen regarding SSSP: while static SSSP (in graphs with nonnegative weights) can be solved in $\Oh(m + n \log n)$ time via Dijkstra's algorithm, under the APSP hypothesis, computing decremental SSSP (even single source single destination) for a single edge failure either requires $n^{3-o(1)}$ preprocessing time, or $n^{2-o(1)}$ query time \cite{sensitivity2017}. This gives a substantial gap between the static and sensitivity complexities.

There are, however, also some positive results that support our objective, most notably 
for APSP itself. In the incremental setting, this is nearly self-evident: upon the insertion of an edge $e = (u,v)$, the new distance $d_{\bG\cup e}(x,y)$ is given by $\min(d_{\bG}(x,y), d_{\bG}(x,u) + w(u,v) + d_{\bG}(v,y))$. Thus, any static APSP algorithm 
trivially serves as an incremental sensitivity oracle with $\sQ = \Oh(1)$. The decremental setting is more complex, yet also yields positive results. In weighted graphs, it is conjectured that no algorithm for APSP exists which is 
substantially faster than the $\Ohtilde(nm)$ time required for running Dijkstra's 
algorithm from all nodes. 
Bernstein and Karger \cite{BK09} demonstrated that for 1 edge failure decremental APSP one can indeed achieve $\sP = \Ohtilde(nm + n^2)$\footnote{The notation $\Ohtilde(\cdot)$ suppresses $\polylog(n)$ factors. The notation $\Ohtilde_{\varepsilon}(\cdot)$ additionally suppresses $\poly(\varepsilon^{-1})$ factors.} and $\sQ = \Oh(1)$, giving a preprocessing time that is the same as the best known for APSP.

We provide a brief summary of the results for decremental/incremental diameter and eccentricities that are already known, and compare them with our ideal objective. We focus primarily on the case of single edge sensitivity.

\subparagraph*{Previous Results for the Decremental Setting:}
Previous decremental algorithms reduce these problems to \emph{Distance Sensitivity Oracles} (DSO)\footnote{A DSO is an oracle for decremental APSP. Formally, after preprocessing, we need to answer queries of the form $d_{\bG\setminus e}(x,y)$ for some $x,y,e$.} or \emph{single-source DSOs} (with a fixed source $s$). From the previous discussion we note that establishing such a reduction (at least to the former) is nearly as desirable as reducing to APSP itself. We briefly summarize these existing results.

Given a graph $\bG$, a DSO for $\bG$ with preprocessing $\sP$ and $\Oh(1)$ query time implies\footnote{We note that this could be stated more generally in terms of a DSO for $\bG$ with query time potentially larger than $\Oh(1)$. However, this is unnecessary. As remarked by Ren \cite{Ren22}, the work of Bernstein and Karger \cite{BK09} implies that any DSO with preprocessing time $\sP$ and query time $\sQ$ can be transformed into one with preprocessing time $\sP + \Ohtilde(n^2) \cdot \sQ$ and query time $\Oh(1)$.}: 
\begin{enumerate}[(1)]
    \item \textbf{Unweighted Diameter \cite{ftdiam2021}:} a $(1+\varepsilon)$-approximate algorithm for decremental diameter with $\sP + \Oh(n^2/\varepsilon)$ preprocessing and $\Oh(1)$ query time for unweighted graphs.
    \label{red:1}
\end{enumerate}
Moreover, an $\alpha$-approximate\footnote{In this paper, we say a value $\tilde{x}$ is a $\alpha$-approximation of $x$ if $x \leq \tilde{x} \leq \alpha x$. In particular, whenever $\alpha=1$, then $\tilde{x}=x$.} single-source DSO with preprocessing $\sP$ and $\Oh(1)$ query time implies: 
\begin{enumerate}[(1), resume]
    \item \textbf{Eccentricities \cite{ftdiam2023}:} a $2(1+\alpha)$-approximate algorithm for decremental (all-node) eccentricities  with $\sP$ preprocessing and $\Oh(1)$ query time.
    \label{red:3}
\end{enumerate}

Combining these reductions with existing DSOs yields various trade-offs for decremental diameter and eccentricities. While Reduction \ref{red:3} is advantageous because it can be extended to multiple failures, both Reductions \ref{red:1} and \ref{red:3} are unsatisfactory for single failures. Specifically, they incur approximation losses even with exact DSOs (i.e., when $\alpha=1$), and only Reduction~\ref{red:1} maintains a factor below $2$—yet it only applies to unweighted graphs.

\subparagraph*{Previous Results for the Incremental Setting:} 

The incremental setting of the sensitivity problem has received significantly less attention than its decremental counterpart. Henzinger et al. \cite{sensitivity2017} established the only known lower bound for this setting, which applies specifically to cases where at least $\omega(\log n)$ edges are removed. Prior to this work, neither algorithms nor lower bounds were known for the incremental single-sensitivity setting concerning diameter, radius, and eccentricities.

\subsection*{Our Results for the Decremental Setting}

As a first result, we significantly improve upon the decremental diameter algorithm 
obtained via Reduction~\ref{red:1} through an improved reduction. The reduction is to DSOs, together with an algorithm capable of computing \emph{canonical shortest paths} on $\bG$; that is, an algorithm that, for each pair of nodes $u,v$, selects a shortest path $\pi_{\bG}(u,v)$ such that any subpath of a canonical path is itself a canonical path.

\begin{restatable*}{mtheorem}{algdecexact}\label{thm:alg_dec_exact}
Let $\bG$ be a graph (directed or undirected, weighted or unweighted) with $n$ nodes. Suppose it takes time $\sP$ to construct a DSO for $\bG$ with query time $\Oh(1)$, and it takes time $\sT$ to compute a set of canonical shortest paths on $\bG$. Then, we can solve  the following problems on $\bG$:
\begin{enumerate}[(i)]
    \item decremental diameter with $\sP + \sT + \Ohtilde(n^2)$ preprocessing and $\Oh(1)$ query time; \label{it:alg_dec_exact:i}
    \item $(1+\varepsilon)$-approximate all-node decremental eccentricities with $\sP + \sT + \Ohtilde_{\varepsilon}(n^2 \log M)$ preprocessing and $\Oh(1)$ query time, assuming the graph has weights bounded in absolute value by $M$; \label{it:alg_dec_exact:ii}
    \item \label{it:alg_dec_exact:iii}all-node decremental eccentricities in $\sP + \sT + \Ohtilde(n^{2 + (7-\omega)/(10-2\omega)}) = \sP + \sT + \Oh(n^{2.88})$ preprocessing\footnote{Here $\omega<2.3714$ is the exponent of square matrix multiplication \cite{AlmanDWXXZ25}.} and $\Oh(1)$ query time. \qedhere
\end{enumerate}
\end{restatable*}

\Cref{thm:alg_dec_exact}\ref{it:alg_dec_exact:i} gives a tight reduction from (exact) decremental diameter to the construction of a DSO. This represents a significant improvement over Reduction~\ref{red:1}: it eliminates the 
$(1+\varepsilon)$-approximation overhead and generalizes to any possible graph beyond unweighted ones! 

We observe that the reductions in \cref{thm:alg_dec_exact} are of the strongest possible form: the DSO only needs to work on $\bG$ itself, without requiring the algorithm to support a broader class (all graphs on the same weight set, for instance). We also remark that the newly added dependence on an algorithm computing canonical shortest paths is not a major drawback, as one can generally expect to compute them in the same time as APSP (see \cite{GR21} for instance) and constructing DSOs is harder than APSP.
This dependence in \cref{thm:alg_dec_exact} is only because it seems unclear how to frame a general reduction (one that works for any graph) from an algorithm that computes canonical paths to one that merely computes APSP. This remains an interesting question in its own right. 

In \Cref{thm:alg_dec_exact}\ref{it:alg_dec_exact:ii}, we extend this reduction to (all-node) eccentricities with an additional $(1+\varepsilon)$-approximation. 
We discuss in \cref{sec:to} why a loss in the approximation factor for \cref{thm:alg_dec_exact}\ref{it:alg_dec_exact:ii} is unavoidable; thus, at least we are able to minimize this loss. 
Finally, \cref{thm:alg_dec_exact}\ref{it:alg_dec_exact:ii} demonstrates that even when pursuing an exact reduction, it is possible to maintain a subcubic runtime. In \cref{table:runtime} we summarize the runtimes that one achieves by combining \cref{thm:alg_dec_exact} with existing DSOs,
and compare those achieved through \cref{thm:alg_dec_exact}\ref{it:alg_dec_exact:i} and Reduction~\ref{red:1}.

\begin{table}[t!]
    \centering
     \caption{Comparison between the previously best known decremental diameter algorithms (that only work for the unweighted case)
    obtained through Reduction~\ref{red:1} of \cite{ftdiam2021} and the better 
    algorithms obtained through our reduction in \cref{thm:alg_dec_exact}\ref{it:alg_dec_exact:i} 
    (the latter highlighted in yellow). By introducing a $(1+\varepsilon)$ approximation factor, the algorithm from \cref{thm:alg_dec_exact}\ref{it:alg_dec_exact:i} can be extended to all-nodes decremental eccentricities using \cref{thm:alg_dec_exact}\ref{it:alg_dec_exact:ii}. For exact eccentricities, an additional $\Oh(n^{2.88})$ overhead is incurred via \cref{thm:alg_dec_exact}\ref{it:alg_dec_exact:iii}. All listed algorithms have $\Oh(1)$ query time.}
    \label{table:runtime}
    \small 
    \begin{tabular}{cllll}
        \hline
        Approx. & Setting & Preprocessing & Reduction & DSO Reference \\
        \hline
        $(1+\varepsilon)$ & unweighted, combinatorial & $\Ohtilde(nm)$ & Reduction~\ref{red:1} & \cite{BK09} \\
        
        \rowcolor{yellow!40}
        exact & weighted, combinatorial & $\Ohtilde(nm)$ & \cref{thm:alg_dec_exact}\ref{it:alg_dec_exact:i} & \cite{BK09} \\

        $(1+\varepsilon)$ & unweighted & $\Oh(n^{2.529})$ & Reduction~\ref{red:1} & \cite{KS23} \\

        \rowcolor{yellow!40}
        exact & unweighted & $\Oh(n^{2.529})$ & \cref{thm:alg_dec_exact}\ref{it:alg_dec_exact:i} & \cite{KS23} \\

        \rowcolor{yellow!40}
        exact & weights in $\fragment{1}{M}$ & $\Oh(Mn^{2.6865})$ & \cref{thm:alg_dec_exact}\ref{it:alg_dec_exact:i} & \cite{GR21}  \\

        \rowcolor{yellow!40}
        exact & weights in $\fragment{-M}{M}$ & $\Oh(Mn^{2.8729})$ & \cref{thm:alg_dec_exact}\ref{it:alg_dec_exact:i} & \cite{GV19,CC20} \\
        \hline
    \end{tabular}

\end{table}

We also achieve our target bounds for decremental diameter in the regime of larger approximation ratios and sparse graphs. The first component enabling these results is a near-linear time algorithm for $(1+\varepsilon)$-approximating the decremental eccentricity of a single vertex. We develop this algorithm by combining monotonicity-related observations (which we also rely on in \cref{thm:alg_dec_exact} in some sense) with the decremental single-source shortest distance oracle of Harada et al. \cite{HKIM24}.

\begin{restatable*}[Adapted from \cite{HKIM24}]{lemma}{decapproxecc}\label{lem:decapproxecc}
For any $\varepsilon > 0$, there is an algorithm for decremental $(1+\varepsilon)$-approximate eccentricities of a single node in a (directed  or undirected) graph with weights in $\fragment{1}{M}$ with $\Ohtilde_{\varepsilon}(m \polylog M)$ preprocessing and $\Oh(1)$ query time.
\end{restatable*}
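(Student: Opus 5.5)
We build on the decremental single-source approximate distance oracle of Harada et al.\ \cite{HKIM24}. For a fixed source $s$ it preprocesses $\bG$ (weights in $\fragment{1}{M}$) in $\Ohtilde_{\varepsilon}(m\polylog M)$ time. Given a failed edge $e$ and a target $t$, it returns a $(1+\varepsilon)$-approximation of $d_{\bG\setminus e}(s,t)$ in $\Oh(1)$ time. Querying every target would cost $\Theta(n)$ per failure, so the goal is to reduce the number of distinct answers we must precompute. The idea is to precompute, for each edge that matters, a single value $\widetilde{\ecc}_e(s)$. Then a query on $e$ is a table lookup.

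\textbf{Step 1: only tree edges matter.} Fix a shortest-path tree $T$ rooted at $s$, computed by Dijkstra in $\Oh(m+n\log n)$ time. If $e\notin T$, then $T\subseteq \bG\setminus e$, so all distances from $s$ are preserved and $\ecc_{\bG\setminus e}(s)=\ecc_{\bG}(s)$. This value is computed once. It remains to handle the $n-1$ tree edges $e=(p,c)$, where $c$ is the child endpoint.

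\textbf{Step 2: localize the change to a subtree.} Removing $e$ can only increase distances, and only for the nodes in the subtree $T_c$ of $c$. Every node outside $T_c$ keeps its tree path. Hence
\[
\ecc_{\bG\setminus e}(s)=\max\Bigl(\max_{v\notin T_c} d_{\bG}(s,v),\ \max_{v\in T_c} d_{\bG\setminus e}(s,v)\Bigr).
\]
The first term is a maximum over the complement of a subtree, that is, over an Euler-tour interval complement. One range-maximum structure over the Euler tour gives all these values in $\Oh(n)$ total time.

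For the second term, the naive cost $\sum_c |T_c|$ can be $\Theta(n^2)$. Two fixes are available:
\begin{itemize}
\item Heavy-path decomposition. The edges on one heavy path form a nested chain of subtrees, and the distances $d_{\bG\setminus e}(s,v)$ are monotone along that chain. This is the monotonicity observation hinted at in the text. It lets us charge each node $v$ only to the $\Oh(\log n)$ light edges above it, plus a sweep along each heavy path.
\item Structure of the \cite{HKIM24} oracle. Their construction itself stores, for each tree edge, $\Ohtilde_\varepsilon(1)$ ``breakpoint'' values. Along the root-to-$v$ path, the replacement distance takes only $\Oh(\log_{1+\varepsilon}(nM))$ distinct $(1+\varepsilon)$-rounded values. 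So one can aggregate maxima per edge while building their oracle.
\end{itemize}

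\textbf{Step 3: the main obstacle.} The difficulty is that $\max_{v\in T_c} d_{\bG\setminus e}(s,v)$ depends on $e$, so maxima cannot simply be reused across edges. My first attempt is to open up \cite{HKIM24}. Their oracle approximates $d_{\bG\setminus e}(s,v)$ as a minimum over $\Ohtilde_\varepsilon(\polylog M)$ candidate distance scales. For each scale $(1+\varepsilon)^i$, the set of failing edges for which $v$ exceeds that scale forms a path segment of tree edges above $v$. The query ``is $\widetilde{\ecc}_e(s)\ge(1+\varepsilon)^i$?'' then becomes ``does some $v\in T_c$ have $e$ in its segment for scale $i$?''. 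Marking these path segments, via offline path-union or difference-array propagation up the tree, takes $\Ohtilde(n)$ per scale. Over $\Oh(\log_{1+\varepsilon}(nM))$ scales this gives the claimed $\Ohtilde_\varepsilon(m\polylog M)$ preprocessing. Finally, for each $e$ we store the largest marked scale, maximized with the outside-subtree term. This gives a $(1+\varepsilon)$-approximation, after rescaling $\varepsilon$ by a constant, with $\Oh(1)$ lookup. The delicate point is showing that the per-scale failing-edge sets really are contiguous tree-path segments. I would argue this from monotonicity: deleting an edge closer to $v$ on the tree path removes at least as many candidate detours from that scale's structure.
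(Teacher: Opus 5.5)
Your Steps 1 and 2 are fine. The argument breaks at the structural claim that everything else depends on. For a fixed target $v$, the single-edge replacement distance $d_{\bG\setminus e}(s,v)$ is \emph{not} monotone as $e$ moves along $\pi_{\bG}(s,v)$. Consequently, the set of edges $e$ with $d_{\bG\setminus e}(s,v)\ge(1+\varepsilon)^i$ need not be a contiguous segment. Take the unit-weight path $s=u_0,u_1,u_2,u_3=v$, a vertex $w$ adjacent to $u_1$ and $u_2$ with unit weights, and a vertex $y$ adjacent to $s$ and $v$ with weight $M$ each. Deleting $(u_0,u_1)$, $(u_1,u_2)$, $(u_2,u_3)$ gives $d_{\bG\setminus e}(s,v)=2M,\,4,\,2M$ respectively, and the subtree maximum $\max_{v'\in T_c}d_{\bG\setminus e}(s,v')$ oscillates in the same way.

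Your justification (``deleting an edge closer to $v$ removes at least as many detours'') fails because the detours bypassing a single edge $e_j$ are exactly those that leave the tree path before $e_j$ and rejoin after it. These families are incomparable for different $j$. The same example defeats your heavy-path variant. The remark that only $\Oh(\log_{1+\varepsilon}(nM))$ rounded values occur does not help either, since the value can alternate among them $\Theta(\text{depth})$ times. Without contiguity, each $v$ contributes up to $\Theta(n)$ segments per scale, and your marking scheme becomes quadratic. The monotonicity the paper actually uses is for distances avoiding a whole prefix or suffix of a path, such as $d_{\widehat{\bH}\setminus\pi_{\bG}(x,z)}(r,t)$, or for proxies built from them. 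It is never for single-edge replacement distances.

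The missing idea is how to isolate the non-monotone part so that it does not depend on the target. The paper opens up the centroid decomposition $\mD$ of \cite{HKIM24}. For $t$ below $x$, it rewrites the oracle's answer as in \cref{eq:rec_ecc_itq}: $d_{\bG}(z_{\bH},t)$ plus the minimum of two kinds of terms.
\begin{enumerate}[(a)]
\item The $t$-dependent terms $q^j_{\bH^{(i)}}(x,t)-d_{\bG}(z_{\bH},t)$. Each is monotone in $x$ along $\pi_{\bG}(r_{\bH},z_{\bH})$ and has only $\Oh(\log M/\varepsilon)$ breakpoints.
\item The single non-monotone term $\bar d_{\widehat{\bH}\setminus x}(r_{\bH},z_{\bH})$, which does not depend on $t$.
\end{enumerate}
For each $\bH\in\mD$ the paper then sweeps $x$ along the centroid path. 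It keeps the vertices $t\in\bH_2\setminus\{z_{\bH}\}$ in a balanced search tree keyed by $k_x(t)$, and a key changes only at breakpoints. Splitting the tree at the $t$-independent threshold yields $\alpha_{\bH}(x)$ in $\Ohtilde(1)$ time (\cref{lem:decapproxecc_comp}). The stored answer for $x$ is the maximum of $\ecc_{\bG}(s)$ and the $\Ohtilde(1)$ values $\alpha_{\bH}(x)$ over those $\bH$ with $x\in\pi_{\bG}(r_{\bH},z_{\bH})$. Finally, \cite{HKIM24} handles vertex failures, so you also need the edge-subdivision reduction to cover edge failures.
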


Since a $2$-approximation of the diameter can be obtained by doubling the eccentricity of an arbitrary vertex (i.e., by routing through it), our result implies that a $(2+\varepsilon)$-approximation of the decremental diameter can be maintained in near-linear time. 

\begin{corollary}\label{cor:dec_diam_two}
    For decremental diameter on weighted (directed or undirected) graphs there is a $(2+\eps)$-approximate algorithm with $\Ohtilde_{\varepsilon}(m \polylog M)$ preprocessing and $\Oh(1)$ query time. \lipicsEnd
\end{corollary}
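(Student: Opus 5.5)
We apply \cref{lem:decapproxecc} to one arbitrary fixed vertex $v$ and use the fact that routing through $v$ upper-bounds every distance. Fix $\varepsilon' = \varepsilon/2$ and pick an arbitrary node $v$.

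\textbf{Undirected graphs.} We run the preprocessing of \cref{lem:decapproxecc} on $\bG$ with source $v$ and accuracy $\varepsilon'$. Given a failing edge $e$, we query it to get a value $\tilde{x}$ with $\ecc_{\bG\setminus e}(v) \le \tilde{x} \le (1+\varepsilon')\ecc_{\bG\setminus e}(v)$, and we output $2\tilde{x}$. Write $\bH = \bG\setminus e$. By the triangle inequality, $d_{\bH}(a,b) \le d_{\bH}(a,v)+d_{\bH}(v,b) \le 2\ecc_{\bH}(v)$ for all $a,b$, so $\diam(\bH)\le 2\ecc_{\bH}(v)$. Trivially $\ecc_{\bH}(v)\le\diam(\bH)$. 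Combining these gives
\[
\diam(\bH)\le 2\ecc_{\bH}(v)\le 2\tilde{x}\le 2(1+\varepsilon')\ecc_{\bH}(v)\le (2+\varepsilon)\diam(\bH).
\]

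\textbf{Directed graphs.} Here $\ecc$ denotes out-eccentricity, so we additionally need in-eccentricity. Let $\bG^{R}$ be the reversal of $\bG$, built in $\Oh(m)$ time. We run the preprocessing of \cref{lem:decapproxecc} from $v$ on both $\bG$ and $\bG^{R}$. A deletion of $e=(a,b)$ in $\bG$ corresponds to deleting $(b,a)$ in $\bG^{R}$. The two queries then return $\tilde{x}_{\mathrm{out}}$ and $\tilde{x}_{\mathrm{in}}$, which $(1+\varepsilon')$-approximate $\max_u d_{\bH}(v,u)$ and $\max_u d_{\bH}(u,v)$ respectively. We output $\tilde{x}_{\mathrm{out}}+\tilde{x}_{\mathrm{in}}$. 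For all $a,b$ we have $d_{\bH}(a,b)\le d_{\bH}(a,v)+d_{\bH}(v,b)$, so the diameter is at most the sum of the true in- and out-eccentricities, hence at most the output. Conversely, each of the two eccentricities is at most $\diam(\bH)$. Therefore the output is at most $(1+\varepsilon')\cdot 2\diam(\bH)=(2+\varepsilon)\diam(\bH)$.

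\textbf{Disconnection.} If $\bH$ is disconnected (respectively, not strongly connected), then some vertex cannot be reached from $v$ or cannot reach $v$. In that case the corresponding eccentricity is $\infty$. The lemma's output must then also be $\infty$, since it is an approximation with lower bound the true value. So the answer is correctly $\infty$.

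\textbf{Costs.} Preprocessing consists of at most two invocations of \cref{lem:decapproxecc} with $\varepsilon'=\varepsilon/2$, giving $\Ohtilde_{\varepsilon}(m\polylog M)$ time. The query makes at most two $\Oh(1)$ lookups plus one addition.

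The only point needing care is the directed case. Doubling a single out-eccentricity is not a valid upper bound there, which is why the reversed graph and the in-eccentricity are required.
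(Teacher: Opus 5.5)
Your proof is correct and follows the paper's argument: apply \cref{lem:decapproxecc} to one arbitrary vertex and bound every distance by routing through it. In the directed case, building the reversed graph to also get the in-eccentricity is exactly what ``routing through $v$'' requires; the paper's phrase ``doubling the eccentricity'' glosses over this, and doubling the out-eccentricity alone would not upper-bound the directed diameter.
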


\Cref{cor:dec_diam_two} improves upon all decremental all-node eccentricity algorithms derived from Reduction~\ref{red:3} (there the derived approximation ratio is always at least $2$).
In terms of complexity, \Cref{cor:dec_diam_two} nearly matches the static approximation achieved by computing shortest paths from a single node using Dijkstra's algorithm and routing all distances through it.

The final approximation range to address lies between $1+\varepsilon$ and $2$. For static (and thus also decremental) algorithms, a known barrier exists at the $3/2$ threshold: under SETH, any approximation better than $3/2$ requires at least $\Oh(m^{2-o(1)})$ time \cite{roditty-vw-diamrad}. Consequently, significant effort has been directed toward developing faster algorithms specifically for this $3/2$ ratio.
In the static setting, these approximation algorithms typically follow a specific framework \cite{roditty-vw-diamrad, cgr2016,staticdiam1}: they identify a set $S \subseteq V$ with a certain "nice" property, run Dijkstra from every $s \in S$, and return the maximum distance found. Furthermore, these methods may introduce a small additive error\footnote{To accommodate additive errors in the notation, we say that $\tilde{x}$ is an $(\alpha, \beta)$-approximation of $x$ if $x \leq \tilde{x} \leq \alpha x + \beta$.}. For undirected graphs, we demonstrate that algorithms following this framework can be translated to the decremental setting, provided we have an algorithm for maintaining the decremental eccentricity of a single node.

\begin{restatable*}{mtheorem}{diamalg} \label{thm:sam-alg}
    Let $\bG = (V,E)$ be an undirected graph with non-negative edge weights. Further, suppose there is an algorithm to find an $(\alpha, \beta, k)$-sample\footnote{These "good algorithms" all identify a set of nodes $S \subseteq V$ that satisfies certain properties, which we define as an $(\alpha, \beta, k)$-sample. While the technical specifics of this definition are not critical for the initial presentation of our results, they are discussed in further detail in \cref{sec:approx-dec}.}  from $V$ in time $\sT$ and an algorithm for single-node $(\alpha',\beta')$-approximate decremental eccentricity in  $\bG$ with $\sP$ preprocessing and $\sQ$ query time. Then, there is a $(\alpha \alpha',\beta \alpha' + \beta')$-approximate algorithm for decremental diameter in $\bG$ with $\Oh(\sT + \sP + km\sQ)$ preprocessing and $\Oh(1)$ query time.
\end{restatable*}

Since \cref{lem:decapproxecc} provides an efficient decremental single-node eccentricity algorithm, we are able to achieve once again our primary objective of translating static diameter algorithms in this regime to the decremental setting. By doing so, the approximation factor grows only by a $(1+\varepsilon)$ factor.

\begin{corollary}
     Combining \cref{thm:sam-alg} with existing algorithms computing a $(\alpha, \beta, k)$-sample and \cref{lem:decapproxecc}, yields the following algorithms for decremental diameter on undirected graphs with weights in $\fragment{1}{M}$:
    \begin{enumerate}[(i)]
        \item a $(3/2+\eps,M(1/2+\eps))$-approximation with $\Ohtilde_{\eps}(mn^{1/2}  \polylog M)$ preprocessing and $\Oh(1)$ query (using  \cite{roditty-vw-diamrad});
        \item an $(3/2+\eps)$-approximation with $\Ohtilde_{\eps}(m^{3/2} \polylog M)$ preprocessing and $\Oh(1)$ query (using \cite{Chechik2014BetterAA});
        \item for any $k \geq 1$, an $(2-1/2^k+\eps,(1 - 1/2^k+\eps)M)$-approximation with $\Ohtilde_{\eps}(mn^{1/(k+1)} \polylog M)$ preprocessing and $\Oh(1)$ query (using \cite{cgr2016}). 
        \lipicsEnd
    \end{enumerate}
\end{corollary}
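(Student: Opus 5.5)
The statement to prove is the corollary: plug each known static sampling procedure into \cref{thm:sam-alg}, using \cref{lem:decapproxecc} as the single-node eccentricity oracle. The plan is to instantiate the parameters case by case.

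First I fix the eccentricity oracle. \Cref{lem:decapproxecc} with accuracy $\eps'$ gives an $(\alpha',\beta')=(1+\eps',0)$-approximate decremental single-node eccentricity algorithm. Its preprocessing is $\sP=\Ohtilde_{\eps'}(m\polylog M)$ and its query time is $\sQ=\Oh(1)$. This holds on undirected graphs with weights in $\fragment{1}{M}$.

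Second, for each cited static algorithm I verify that it computes an $(\alpha,\beta,k)$-sample in the sense of \cref{sec:approx-dec}. I read off $(\alpha,\beta,k)$ and the time $\sT$:
\begin{enumerate}[(i)]
\item For \cite{roditty-vw-diamrad}, I expect $\alpha=3/2$, $\beta=M/2$ (additive error of one heaviest edge, halved), $k=\Ohtilde(n^{1/2})$, and $\sT=\Ohtilde(m\sqrt n)$.
\item For \cite{Chechik2014BetterAA}, I expect $\alpha=3/2$, $\beta=0$, $k=\Ohtilde(\sqrt m)$, and $\sT=\Ohtilde(m^{3/2})$.
\item For \cite{cgr2016}, I expect $\alpha=2-1/2^k$, $\beta=(1-1/2^k)M$, sample size $\Ohtilde(n^{1/(k+1)})$, and $\sT=\Ohtilde(mn^{1/(k+1)})$.
\end{enumerate}
This step is the main obstacle. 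The corollary depends on the abstract sample definition really capturing what each paper's analysis proves, including the exact additive terms for weighted graphs. So I would re-derive the key covering property for each construction: the sampled set $S$ hits the relevant balls or neighborhoods of a vertex of maximum eccentricity. The argument must hold in any graph $\bG\setminus e$ and not only in $\bG$, since the sample is fixed at preprocessing time. If the definition is phrased as a property of $\bG\setminus e$ for all $e$, the cited constructions may need a brief argument that they still work there. For example, random hitting sets are oblivious to the graph, and deterministic choices may need to be made per failed edge.

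Third, I apply \cref{thm:sam-alg}, which gives approximation $(\alpha(1+\eps'),\,\beta(1+\eps'))$. Choosing $\eps'=\Theta(\eps)$ small enough turns this into $(3/2+\eps,\,M(1/2+\eps))$, $(3/2+\eps)$, and $(2-1/2^k+\eps,\,(1-1/2^k+\eps)M)$ respectively, since $\alpha\le 2$ and $\beta\le M$. The preprocessing is $\Oh(\sT+\sP+km\sQ)$ with $\sQ=\Oh(1)$. This gives $\Ohtilde_\eps(mn^{1/2}\polylog M)$, $\Ohtilde_\eps(m^{3/2}\polylog M)$, and $\Ohtilde_\eps(mn^{1/(k+1)}\polylog M)$ respectively. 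The query time is $\Oh(1)$ by \cref{thm:sam-alg}.
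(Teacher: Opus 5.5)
Your plan matches what the paper does: it gives no separate argument and simply instantiates \cref{thm:sam-alg} with \cref{lem:decapproxecc} ($\alpha'=1+\eps'$, $\beta'=0$, $\sQ=\Oh(1)$) and with the samples produced by the three cited algorithms, relying on the paper's observation that those constructions satisfy \cref{def:sam} for every threshold $\gamma$ and not only for $\gamma=\diam(\bG)$. One correction to your hedge: \cref{def:sam} is a property of $\bG$ alone, so no per-failed-edge choices are needed---the passage to $\bG\setminus e$ (taking $\gamma=\diam(\bG\setminus e)$ and splitting on whether $e\in\pi_{\bG}(a,s_a)$) is already handled inside the proof of \cref{thm:sam-alg}.
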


As a final contribution to the decremental setting, we provide improved lower bounds. 

\begin{restatable*}{lemma}{lbdec}\label{lem:lb_dec}
We can encode a boolean matrix product in a single instance of 
decremental diameter and decremental 
all-nodes eccentricities in unweighted (undirected or directed) graphs 
such that any $(3/2-\varepsilon)$-approximate algorithm for these problems 
allows us to read off the solution to the boolean matrix product from 
$\Oh(n^2)$ queries.
\end{restatable*}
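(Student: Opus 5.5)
We encode the product of Boolean matrices $A$ and $B$ in a layered graph whose diameter is $3$ or $2$, depending on whether a queried entry of $C = A \cdot B$ is $0$ or $1$. Since $2 < (3/2-\varepsilon)\cdot 2 < 3$, a $(3/2-\varepsilon)$-approximation separates the two cases.

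\textbf{The base graph.} Take node sets $I=\{a_i\}$, $J=\{b_j\}$, $K=\{c_k\}$. Add edges $a_i b_j$ iff $A[i,j]=1$ and $b_j c_k$ iff $B[j,k]=1$. The product entry $C[i,k]=1$ iff $d(a_i,c_k)=2$ through $J$.

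\textbf{The shortcuts.} Next, add the $n^2$ shortcut edges $a_ic_k$ for all pairs $(i,k)$. Deleting $a_ic_k$ is the query about $(i,k)$: afterwards $d(a_i,c_k)=2$ if $C[i,k]=1$ and $\geq 3$ otherwise. Every other pair of nodes must keep distance at most $2$ after any single deletion, so that the diameter, and the eccentricity of $a_i$, equals $2$ or at least $3$ exactly according to $C[i,k]$.

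\textbf{Forcing all other distances to $2$.} Add a hub vertex $h$ adjacent to all nodes of $I\cup J$ and to all of $K$. This would destroy the encoding, since $a_i h c_k$ would be a path of length $2$. Fix this by splitting the hub: $h_1$ is adjacent to $I\cup J$, and $h_2$ is adjacent to $J\cup K$, with $h_1h_2$ an edge. Now check the pairs:
\begin{itemize}
\item Pairs inside $I\cup J\cup\{h_1\}$ are at distance $\leq 2$ via $h_1$; similarly inside $J\cup K\cup\{h_2\}$.
\item $h_1$ to $c_k$ goes via $b_j$ and $h_2$, or via $h_2$ directly: $h_1h_2c_k$ has length $2$.
\item $a_i$ to $h_2$ goes via $h_1$: length $2$.
\item $a_i$ to $c_{k}$ uses the shortcut edge. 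After deleting $a_ic_k$, every other pair $a_i c_{k'}$ still has its own edge.
\item Paths from $a_i$ to $c_k$ of length $2$ through hubs are impossible: $a_i$'s neighbors are $B$-side nodes in $J$, $h_1$, and $c$'s. A path via another $c_{k'}$ would need an edge $c_{k'}c_k$, which does not exist. Via $h_1$, $h_1$ is not adjacent to $c_k$. So the only length-$2$ paths go through $b_j$ with $A[i,j]B[j,k]=1$.
\end{itemize}
Also verify that a single deletion of a shortcut edge never raises any other pair's distance above $2$. Shortcuts are not used in the hub routes, so this holds. The degenerate cases (a $b_j$ being isolated) are harmless because of the hubs.

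\textbf{The directed case.} Orient the edges $I\to J\to K$ and $I\to K$. Then handle reachability of the other pairs by adding reverse-direction hub edges so that all other ordered pairs are at distance $\leq 2$, while any path from $a_i$ to $c_k$ of length $2$ still has to pass through $J$. Concretely, add a vertex $h$ with edges from every node to $h$ and from $h$ to every node except into $K$ from ... This needs care. The symmetric cleaner option is to use a split hub $h_1\to h_2$, with all nodes pointing to $h_1$, $h_1$ pointing to $I\cup J\cup\{h_2\}$, and $h_2$ pointing to everything. Then $a_i\to h_1\to c_k$ is absent, and $a_i\to h_1\to h_2\to c_k$ has length $3$. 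Adding undirected/bidirectional edges between $K$ and the hubs gives ordered pairs from $K$ distance $\leq 2$.

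\textbf{Main obstacle and reading off the answer.} The main obstacle is exactly this bookkeeping. Every pair other than $(a_i,c_k)$ must be at distance $\leq 2$ after the deletion, and no route of length $2$ from $a_i$ to $c_k$ may avoid $J$. I would verify both properties by the case analysis on vertex classes above.

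The graph has $\Oh(n)$ vertices. The $n^2$ queries (delete $a_ic_k$ and report the diameter or $\ecc(a_i)$) have answers $2$ versus $3$. A $(3/2-\varepsilon)$-approximation reports a value below $3$ in the first case and at least $3$ in the second, so each query reveals $C[i,k]$.
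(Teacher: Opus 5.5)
Your undirected construction is correct, and it is the paper's construction up to relabeling. Your shortcut edges $a_ic_k$ play the role of the paper's complete bipartite graph between $A$ and $B$, and your $h_1$ (adjacent to $I\cup J$) and $h_2$ (adjacent to $J\cup K$) are the paper's $\ell$ and $r$. The only difference is your extra edge $h_1h_2$, where the paper instead connects $\ell$ and $r$ through the middle layer. Your check of all remaining pairs is complete, so the undirected half of the statement is fine.

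The directed half does not work as written. Your final digraph has arcs $I\to J\to K$ and $I\to K$, every node $\to h_1$, $h_1\to I\cup J\cup\{h_2\}$, $h_2\to$ everything, plus ``bidirectional edges between $K$ and the hubs''. There are two readings of that last part, and both fail.
\begin{itemize}
\item If those arcs include $h_1\to c_k$, then $a_i\to h_1\to c_k$ has length $2$ for every query.
\item If they only involve $h_2$, then $b_j$ has out-neighbours only $h_1$ and $\{c_k : B[j,k]=1\}$, and none of these has an arc into $c_{k'}$ when $B[j,k']=0$. So $d(b_j,c_{k'})=3$ for every zero entry of $B$, regardless of which arc $a_i\to c_k$ was deleted. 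As soon as $B$ has a zero entry, the diameter is $3$ for every query and carries no information; only the out-eccentricity of $a_i$ still encodes $C[i,k]$.
\end{itemize}
You deferred this to ``the case analysis on vertex classes above'', but that analysis covered only the undirected graph. Carrying it out for the digraph exposes exactly these $J\to K$ pairs.

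The repair is easy. One option is to add arcs $J\to h_2$; this is harmless because $a_i\not\to h_2$. A simpler option is to replace every edge of your undirected graph by two opposite arcs and delete only the arc $a_i\to c_k$. Then the out-neighbourhood of $a_i$ and the in-neighbourhood of $c_k$ are exactly the undirected neighbourhoods minus the deleted edge. The pair $(c_k,a_i)$ keeps distance $1$, and every other ordered pair inherits its undirected bound of $2$.
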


Thus, for $(3/2-\varepsilon)$-approximations, we obtain a general $n^{\omega-o(1)}$ 
lower bound and an $n^{3-o(1)}$ lower bound for combinatorial algorithms\footnote{Informally, we use the term combinatorial to mean an algorithm that does not use algebraic tools such as fast matrix multiplication.}, 
given the Boolean Matrix Multiplication hypothesis which stipulates that it 
is impossible to solve boolean matrix multiplication in truly subcubic time using a combinatorial algorithm. In the latter case, 
we improve on the lower bound of \cite{sensitivity2017}, which established 
a $5/4$-approximation factor as the barrier for subcubic combinatorial 
algorithms.

\subsection*{Our Results for the Incremental Setting}

Interestingly, the incremental setting tells quite a different story, quickly putting an end to our wishful thinking. Indeed, while we achieved our ideal outcome in the decremental setting by matching the runtime of the static algorithm for both approximate and exact cases, it turns out this is (conditionally) impossible for the incremental setting.

\subparagraph*{Lower Bounds.} Because of this, to properly contextualize our results, we must first introduce our lower bounds, as they establish a new baseline that we aim to match.

\begin{restatable*}{lemma}{inclb}\label{thm:inc_lb}
Assuming the OV hypothesis, for $\eps, \delta > 0$, there is no data structure which has both $\Oh(m^{2-\delta})$ preprocessing time and $\Oh(m^{1-\delta})$ query time for the following incremental problems:
\begin{enumerate}[(i)]
    \item $(5/3-\eps)$-approximate eccentricity of a single node in an unweighted and undirected graph;
    \label{it:inc_lb:i}
    \item  $(5/3-\eps)$-approximate diameter or radius in an unweighted and undirected graph; 
    \label{it:inc_lb:ii}
    \item $(2-\eps)$-approximate eccentricity of a single node or radius of an unweighted and directed graph; and
    \label{it:inc_lb:iii}
    \item $(2-\eps)$-approximate diameter in an unweighted and directed graph;
    \label{it:inc_lb:iv}
\end{enumerate}
The lower bounds \ref{it:inc_lb:i},\ref{it:inc_lb:ii} hold even when the input graph is guaranteed to be disconnected.
\end{restatable*}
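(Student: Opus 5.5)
Each part is proved by a reduction from Orthogonal Vectors. Take an OV instance $A,B\subseteq\{0,1\}^d$ with $|A|=|B|=N$ and $d=c\log N$. We build a graph $\bG$ with $m=\Ohtilde(N)$ edges (nodes for vectors and coordinates, with vector--coordinate edges when the entry is $1$). We then issue $N$ incremental queries, one per $a\in A$; each query inserts a single edge attached to the gadget of $a$. The construction must ensure the following gap: if $a$ is orthogonal to some $b\in B$, the queried parameter (eccentricity of a fixed node, diameter, or radius) of $\bG\cup e_a$ is at least $5$ (resp.\ $2$ in the directed case); otherwise it is at most $3$ (resp.\ $1$, after scaling path lengths so the ratio is $2$). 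A data structure with $\Oh(m^{2-\delta})$ preprocessing and $\Oh(m^{1-\delta})$ query time then solves OV in $\Oh(N^{2-\delta}\polylog N)$ time, contradicting the OV hypothesis.

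\textbf{Undirected single-node eccentricity, part (i).} I would use a fixed source $s$ that is disconnected from the main gadget in $\bG$, which is also why the bound holds for disconnected inputs. The query for $a$ inserts the edge $(s,x_a)$, where $x_a$ is the node representing $a$. After insertion, $\ecc(s)=1+\max_b d(x_a,b)$.

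The core gadget has three layers: vector $a$, coordinates $C$, and vectors $b\in B$. Edges run $a$--$c$ when $a_c=1$ and $c$--$b$ when $b_c=1$. Then $d(a,b)=2$ if $a,b$ are non-orthogonal. To make orthogonal pairs far, every alternative route must cost at least $4$, for instance via a second copy of the $A$-side joined to $B$ through coordinates with an extra layer. Calibrating these lengths to get exactly $3$ versus $5$ for $\ecc(s)$ is delicate. I expect to need a path-length-$2$ layer, e.g.\ subdivided coordinate edges plus a hub $h$ adjacent to all coordinates, so that every node other than the $B$-nodes stays within distance $3$ of $s$. The $B$-nodes must be at distance $3$ if non-orthogonal and $\ge 5$ otherwise.

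I also need to check that the other nodes of $A$ do not create shortcuts. Moving from $x_a$ through coordinate $c$ to another $a'$ and then on to $b$ takes at least $4$ steps. The $B$-side distances must therefore be set so that these detours still give at least $5$. One way is to reach $B$ only through a separate "$B$-coordinate" copy, attached to $x_a$ only via its own coordinates.

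\textbf{Diameter, radius, and the directed cases, parts (ii)--(iv).} For diameter and radius in (ii), I would pad the gadget: attach pendant paths, or a clique to $s$, so that every pair other than $(s,B)$ is within the small distance and $s$ is the node realizing the diameter. For radius, I would arrange that $s$ is the only candidate center, by attaching a long path elsewhere that only $s$ can reach cheaply.

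The directed versions (iii) and (iv) are easier. Orient the edges $a\to c\to b$ and add reverse "return" edges from everything to a hub. Then non-orthogonal pairs give distance $2$, while orthogonal pairs force a route through the hub of length $\ge 4$, giving the ratio $2$.

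The main obstacle is the distance calibration in the undirected case. The $5/3$ gap must survive every shortcut through other $A$-vectors, since undirected edges let paths bounce between sides. I would verify it by a case analysis on the first coordinate node used after $x_a$.
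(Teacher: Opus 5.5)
Your reduction template is the right one and matches the paper: an OV instance, $N$ single-edge insertions attaching a fixed, initially isolated node $s$ to the gadget of $a$, so that $\ecc(s)=1+\ecc_{\bG}(x_a)$. However, the gadgets, which carry the whole proof, are missing, and the calibration you aim for in (i) is the wrong one. With unit edges, ``$\ecc(s)\le 3$ versus $\ge 5$'' means $\ecc_{\bG}(x_a)\le 2$ versus $\ge 4$, a ratio-$2$ static eccentricity gap.

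\begin{itemize}
\item Your three-layer design does not give this gap. A hub adjacent to all coordinates leaves $C\setminus\mathrm{supp}(a)$ at distance $3$ from $x_a$. A common neighbour $z$ of $x_a$ and such a $c$ creates the path $x_a,z,c,b$ of length $3$ to every $b\perp a$ with $b[c]=1$.
\item The gap also cannot be rescaled. A $2a$-versus-$4a$ gap for large $a$ would be separated by the $\Ohtilde(m^{3/2})$-time $5/3$-approximation of all eccentricities of Chechik et al., refuting OV.
\end{itemize}

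What you need is the standard static $3$-versus-$5$ gadget, scaled. Take layers $V_1,C,V_2$ joined by paths of length $a$ for each incidence $v[c]=1$, and hang a pendant path of length $a$ from each $u_2$ to a copy $u_3$. For $u\perp v$, every $v_1$--$u_3$ path visits two distinct coordinates, which are at least $2a$ apart, so $d(v_1,u_3)\ge 5a$. If $v$ is in no orthogonal pair, every node is within $3a$ of $v_1$, via some $u_1$ or $u_2$ at distance $2a$. Hence $\ecc(s)\ge 5a+1$ versus $\le 3a+1$, and choosing $a=\lceil 1/\eps\rceil$ absorbs the $+1$ of the inserted edge. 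This is exactly where the $\eps$ in the statement comes from.

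Parts (ii)--(iv) have the same problem.

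\begin{itemize}
\item \emph{Radius, (ii).} No fixed structure on $s$'s side can make $s$ the center, because $s$ reaches the graph only through $x_a$. With a path of length $\ell$ hanging at $s$, the no case forces $\ell\le 3$. In the yes case $x_a$ then has eccentricity $\max(\ecc_{\bG}(x_a),\ell+1)$, which can be $4$, so the gap collapses to $3$ versus $4$.
\item \emph{Diameter and radius, the missing idea.} Let the far side scale with the instance. Take two disjoint copies of the base graph and insert the bridge $(v_1^{(1)},v_1^{(2)})$. If $v$ is in an orthogonal pair, $\diam\ge 10a+1$ and $\radius\ge 5a+1$. Otherwise every distance routes through $v_1^{(1)}$ or $v_1^{(2)}$, giving $\diam\le 6a+1$ and $\radius\le 3a+1$. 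The input is disconnected, as the statement requires.
\item \emph{Directed, (iii).} Your hub does not yield ratio $2$. If it points only into the vector layer, coordinates outside $\mathrm{supp}(a)$ are at distance $3$ and orthogonal $b$ at distance $4$. If it also points into $C$, orthogonal $b$ drop to distance $3$. The paper instead uses arcs $v_1\to c$, bidirected length-$a$ paths between $C$ and $V_2$, and one-way length-$a$ paths $v_2\to v_3$ and $v_2\to v_1$. This gives $\ecc(v_1)\le 2a+1$ versus at least $4a$.
\item \emph{Directed diameter, (iv).} The paper takes a reversed copy and a forward copy, plus a node $q$ with arcs from $q$ into the reversed copy and from the forward copy into $q$. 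Then the only long distances run from the reversed copy to the forward copy through the inserted arc $(v_1^{(1)},v_1^{(2)})$.
\end{itemize}
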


We next detail four stark contrasts highlighted by \cref{thm:inc_lb}, making it far more interesting than a simple baseline to match:

\begin{enumerate}[(a)]
    \item The first contrast lies between the incremental and static settings. While any $(3/2-\varepsilon)$-approximation requires $m^{2-o(1)}$ time in the static setting, this lower-bound threshold increases from $3/2$ to $5/3$ for undirected graphs, and to $2$ for directed graphs in the incremental setting.
    \item As mentioned earlier, because we match the decremental setting to the static setting in this paper, the previous point immediately yields a contrast between the incremental and decremental settings, showing that the latter is easier. This asymmetry is somewhat unexpected; as noted for APSP, obtaining an incremental oracle is significantly easier than a decremental one. (For instance, developing a DSO with a preprocessing time that directly matches Williams' APSP time \cite{W14}, or simply achieving an $o(n^3)$ runtime, remains an interesting open problem, whereas this is easily achievable in the incremental setting.)

    \item Our next contrast relies on the fact that we obtain different lower bounds for directed and undirected graphs. Together with our upper bound, these results show a provable separation between such graph classes. We highlight that such a separation is still a big open problem in the static setting. Indeed there, a lower bound tradeoff for directed graphs has been extended to undirected graphs with substantially more work \cite{diamhardnessdir, diamhardnessundir}. However, from the upper bound side our best algorithm for directed graphs \cite{directeddiamlb} is quite far from the undirected case \cite{cgr2016}.

    \item Lastly, our lower bounds reveal a  contrast between connected and disconnected graphs. Our lower bound for undirected diameter begins with a disconnected graph, incrementing it each time with an edge connecting the two parts. This is common practice in dynamic lower-bounds, as it forces the new shortest paths to go through the inserted edge. However, for this disconnected setting we can construct a simple algorithm matching the lower bound. This demonstrates that breaking the $5/3$-approximation barrier will require a fundamentally new approach to lower-bound construction.
\end{enumerate}

\subparagraph*{Upper Bounds}
While lower bounds for the incremental sensitivity setting have been studied in the past, no prior work has attempted to develop algorithms. To this end, our main contribution is developing new techniques to allow constructing algorithms to approximate fundamental graph parameters in the incremental setting.

We develop two main techniques, on which we elaborate in the technical overview. In the remainder of this section, we highlight the results we are able to obtain by each technique on its own, as well as by combining the two in different ways, in some cases obtaining tight algorithms matching our lower bounds. The incremental algorithmic results are summarized in \cref{tab:results}, where tight algorithms are highlighted in yellow.

\begin{table}
{\small
\caption{This table summarizes our results for the incremental settings. Approximations with a star ($\star$) have an additive error, tight results under our lower bounds are highlighted in yellow, and $+r$ indicates that the algorithm applies to the $r$-incremental setting.}\label{tab:results}
\begin{tabular}{clllll}
\hline
Problem                      & Approx.                 & Setting                  & Preprocessing $\Ohtilde(\cdot)$                                                                 & Query      & Ref. \\ \hline
$+1$ eccentricity of 1 node           & $2$                     & undirected               & $m$                                                                        & $\Ohtilde(1)$  & \ref{thm:directedsinglenodeecc}           \\
\rowcolor{yellow!40}
$+1$ eccentricity of 1 node           & $2$                     & directed                 & $m\sqrt{n}$                                                                & $\Ohtilde(1)$  & \ref{thm:directedsinglenodeecc}           \\ 
$+1$ eccentricity of $n^\gamma$ nodes & $2+\varepsilon$         & directed                 & $mn^{\frac{1+\gamma}{2}} + \sqrt{m}\cdot n^{1+\frac{\omega + 1}{4}\gamma}$ & $n^\gamma$      &   \ref{thm:directedsinglenodeecc}        \\ 
\rowcolor{yellow!40}
$+1$ eccentricity of 1 node           & $5/3^\star$             & undirected               & $m\sqrt{n}$                                                                & $\Ohtilde(1)$  &   \ref{thm:singlenodeeccundir}        \\ 
$+1$ eccentricity of $n^\gamma$ nodes & $5/3+\varepsilon^\star$ & undirected               & $mn^{\frac{1+\gamma}{2}} + \sqrt{m}\cdot n^{1+\frac{\omega + 1}{4}\gamma}$ & $n^\gamma$      &        \ref{thm:singlenodeeccundir}   \\ 
$+r$ diameter                & $3$                     & undirected               & $2^r(m+n\log n)$                                                           & $r\cdot 4^r$    &      \ref{thm:diamkinc3approx}     \\
\rowcolor{yellow!40}
$+1$ diameter                & $5/3$                   & undirected, disconnected & $m\sqrt{n}$                                                                & $\Oh(1)$          &      \ref{lem:inc_discon}     \\ 
$+1$ diameter                & $3/2 + \varepsilon$     & undirected               & $m\sqrt{n} + n^{2.343}$                                                    & $\sqrt{n}$      &       \ref{thm:incdiam}    \\ 
$+1$ diameter                & $3/2 + \varepsilon$     & undirected               & $n^{2.5}$                                                                  & $\Oh(1)$        &     \ref{thm:incdiam}      \\
$+1$ diameter                & $5/3 + \varepsilon$     & undirected               & $m\cdot n^{0.711} + n^{2.198}$                                             & $n^{0.711}$     &      \ref{thm:incdiam}     \\ 
$+1$ diameter                & $2 + \varepsilon$       & undirected               & $m\sqrt{n} + n^{1.843}$                                                    & $\sqrt{n}$      &       \ref{thm:incdiam}    \\ 
$+1$ diameter                & $5/2$                   & undirected               & $m\sqrt{n}$                                                                & $\sqrt{n}$      &     \ref{thm:incdiam}      \\ 
$+1$ radius                  & $3$                     & undirected               & $m\sqrt{n}$                                                                & $\sqrt{n}$      &      \ref{thm:incradiusapprox}     \\ \hline
\end{tabular}
\captionsetup{justification=centering}
}
\end{table}

\begin{enumerate}
    \item First, we show how to adapt a static all-nodes-eccentricity approximation algorithm to a single or multi-node incremental eccentricity approximation algorithm. While in the decremental setting we lost a factor of $\varepsilon$ in the approximation, here we are able to maintain the exact same approximation, while providing an algorithm with preprocessing time equal to the runtime of the static algorithm and logarithmic query time. We extend this result to faster approximation of the eccentricity of a set of nodes using fast matrix multiplication at the expense of the return of an $\eps$ loss in the approximation factor.

    This technique allows us to obtain the following eccentricity approximation results. For both the directed and undirected cases we are able to obtain an optimal approximation algorithm for the eccentricity of a single node, matching the respective lower bounds.

\begin{restatable*}{mtheorem}{directedsinglenodeecc}\label{thm:directedsinglenodeecc}
Let $\bG = (V,E)$ be a weighted graph with edge weights bounded by $M$.
Consider the problem of 2-approximating the incremental eccentricities for a set $P \subseteq V$. Then, if $\bG$ is directed, we can obtain:
\begin{enumerate}[(i)]
    \item a $2$-approximation with $\Ohtilde(m\sqrt{n})$ preprocessing and $\Ohtilde(1)$ query time for $|P| = 1$.
    \label{it:directedsinglenodeecc:i}
    \item a $(2+\eps)$-approximation with $\Ohtilde(mn^{\frac{1+\gamma}{2}}+\sqrt{m}\cdot n^{1 + \frac{\omega+1}{4}\gamma})$ preprocessing and $\Oh(n^\gamma)$ query time for $|P|=\Oh(n^\gamma)$.
    \label{it:directedsinglenodeecc:ii}
\end{enumerate}
Moreover, if $\bG$ is undirected, we can obtain:
\begin{enumerate}[(i), resume]
    \item a $2$-approximation with $\Ohtilde(m)$ preprocessing and $\Ohtilde(1)$ query time for $|P|=1$.
    \label{it:directedsinglenodeecc:iii} \qedhere
\end{enumerate}
\end{restatable*}

\begin{restatable*}{mtheorem}{singlenodeeccundir}\label{thm:singlenodeeccundir}
Let $\bG = (V,E)$ be a weighted and undirected graph with edge weights bounded by $M$. Consider the problem of computing a $5/3$-approximation to the incremental eccentricities of a set $P$. Then:
\begin{enumerate}[(i)]
    \item If $|P|=1$ we can obtain a $(5/3, 2M/3)$-approximation with $\Ohtilde(m\sqrt{n})$ preprocessing and $\Ohtilde(1)$ query time.
    \item If $|P|=n^\gamma$ we can obtain $(5/3+\eps,2M/3 + \eps)$-approximation with $\Ohtilde(mn^{\frac{1+\gamma}{2}}+\sqrt{m}\cdot n^{1 + \frac{\omega+1}{4}\gamma})$ preprocessing and $\Oh(n^\gamma)$ query time. \qedhere
\end{enumerate}
\end{restatable*}

    \item As a second key technique, we observe a simple property regarding distances between triples of vertices and demonstrate ways to exploit this property algorithmically. This allows us to approximate the diameter and radius in the incremental setting faster than performing $\Oh(n)$ queries to an APSP oracle. This technique gives us an elegantly simple 3-approximation algorithm for the general $d$-sensitivity setting \cref{thm:diamkinc3approx}, where up to $d$ edges can be added.

\end{enumerate}

\begin{restatable*}{theorem}{incdiamthree}\label{thm:diamkinc3approx}
    There exists a $3$-approximation for $r$-incremental undirected diameter running in $\Oh(2^r (m + n\log n))$ pre-processing time and $\Oh(r\cdot 4^r)$ query time.
\end{restatable*}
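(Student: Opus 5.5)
The plan is to reduce the query to a constant-size ``sketch'' built from a few Dijkstra runs. At query time, distances in $\bG\cup F$ between precomputed nodes can then be evaluated through the endpoints of the inserted set $F$, with $|F|\le r$.

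\textbf{Preprocessing.} I would pick an arbitrary start vertex $s_0$ and build a set $S=\{s_0,s_1,\dots,s_{k}\}$ with $k=\Theta(2^r)$ by farthest-point sampling in $\bG$: $s_{i+1}$ is the vertex maximizing $\min_{j\le i} d_{\bG}(s_j,\cdot)$. Each step is one Dijkstra run, so the preprocessing costs $\Oh(2^r(m+n\log n))$. I keep the full distance arrays $d_{\bG}(s_i,\cdot)$ and the covering radius $R=\max_u\min_{s\in S}d_{\bG}(u,s)$. By construction the points of $S$ are pairwise at $\bG$-distance at least $R$, and every vertex is within $R$ of $S$.

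\textbf{Query.} Given $F$, for every pair $s,t\in S$ I compute $d_{\bG\cup F}(s,t)$. A shortest $s$--$t$ path in $\bG\cup F$ is an alternation of $\bG$-subpaths and inserted edges. Its $\bG$-subpaths start at $s$, end at $t$, or join two endpoints of $F$. The first two kinds are read from the stored arrays. The endpoint-to-endpoint distances can be computed once per query, either by a shortest-path closure on the $\Oh(r)$ endpoints or, more economically, by storing $d_{\bG}$ from the endpoints' nearest sample points and using the triangle inequality. With $|S|^2=\Oh(4^r)$ pairs and $\Oh(r)$ work per pair, this gives the $\Oh(r\cdot 4^r)$ query. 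Let $A=\max_{s,t\in S}d_{\bG\cup F}(s,t)$; since it is a genuine distance in the new graph, $A\le\diam(\bG\cup F)$. The output is an estimate of the form $E=\max(A,\text{something in }R)$, combined with an additive term in $R$.

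\textbf{Correctness.} For the upper bound, any $x,y$ have sample points $s_x,s_y$ within distance $R$, so $\diam(\bG\cup F)\le A+2R$. The crux is to show that $R$ is itself comparable to the new diameter despite the insertions. My argument is pigeonhole over endpoints. A sample point $s$ whose $\bG$-distance to every endpoint of $F$ is at least $R/2$ satisfies $d_{\bG\cup F}(s,t)\ge R/2$ for every $t\in S$, since any path from $s$ either avoids $F$ or first travels to an endpoint. Moreover, each endpoint can have at most one sample point within $R/2$, by the pairwise $R$-separation of $S$. So once $|S|>2r+1$ such an $s$ exists and $\diam(\bG\cup F)\ge R/2$.

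\textbf{Main obstacle.} This crude bound only yields a factor of $5$. Tightening it to $3$ is the step I expect to be hardest. I would first try to exploit the full $2^r$ size of $S$: count, for each subset of inserted edges that a path can profitably use, the sample points it can ``collapse,'' and show that some pair $s,t\in S$ keeps $d_{\bG\cup F}(s,t)\ge R$. Then $A\ge R$, and the output $A+2R\le 3A\le 3\diam(\bG\cup F)$ is a $3$-approximation.
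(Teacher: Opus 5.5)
\textbf{Gap 1: the key step is unproved.} Your framework is sound. Farthest-point sampling of $|S|\ge 2^r+1$ points makes them pairwise $R$-separated in $\bG$, and $\diam(\bG\cup F)\le A+2R$ holds. But you never show that some pair of sample points keeps $d_{\bG\cup F}(s,t)\ge R$. Your endpoint pigeonhole only gives $R/2$ (factor $5$), and the ``collapse-counting over subsets of edges'' plan is not the idea that works.

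The missing idea is the triangle lemma (\cref{lm:ktrianglethm}). For one inserted edge $(x,y)$, the vertices weakly closer to $x$ and those weakly closer to $y$ cover $V$. No distance between two vertices on the same side shrinks, because a shortcut $u\to x\to y\to v$ needs $u$ strictly closer to $x$ and $v$ strictly closer to $y$. For $r$ edges, split by $e_r$ using distances in $\bG\cup\{e_1,\dots,e_{r-1}\}$, keep a side containing $2^{r-1}+1$ of the points, and induct. Hence among any $2^r+1$ points some pairwise distance is unchanged.

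Do not intersect sides computed in $\bG$ alone, as the paper's proof of that lemma does. Take $d_{\bG}(u,v)=9$, $d_{\bG}(u,z)=d_{\bG}(v,z)=10$ and unit-weight insertions $(u,z),(z,v)$. Then $u,v$ lie on the same side for both edges, yet their distance drops to $2$.

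With the lemma, $A\ge R$, and your chain $\diam(\bG\cup F)\le A+2R\le 3A\le 3\diam(\bG\cup F)$ closes. This covering-radius argument is a valid alternative to the paper's case split on whether the diameter endpoints of $\bG\cup F$ lie within $\diam(\bG\cup F)/3$ of a sample point. It also avoids that case analysis.

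\textbf{Gap 2: the query step for $r\ge 2$.} Your last inequality uses $A\le\diam(\bG\cup F)$, so the values you compute must be true distances in $\bG\cup F$ (or lower bounds on them). A closure over the endpoints of $F$ needs $\bG$-distances between those endpoints, and the $\Oh(2^r)$ stored shortest-path trees do not contain them. Routing through nearest sample points gives only upper estimates, and for those $A\le\diam(\bG\cup F)$ is no longer justified. The paper's query formula does not settle this either. It considers only detours through a single inserted edge, which is exact for $r=1$ but not in general for $r\ge 2$: in the example above it reports $9$ where $d_{\bG\cup F}(u,v)=2$.
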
 

Finally, we combine the two techniques to
develop numerous algorithms with better approximation guarantees, trading off tighter approximations with higher running times. We achieve a near $2$-approximation with subquadratic preprocessing time and a near $3/2$-approximation (better than $5/3$) with preprocessing time higher than $n^2$ (as otherwise this would refute our lower bound) but faster than $n^\omega$ with the current bounds on $\omega$, along with an intermediate $(5/3 + \eps)$-approximation and a faster, combinatorial, $2.5$-approximation.

\begin{restatable*}{mtheorem}{incdiam}\label{thm:incdiam}
Let $\bG = (V,E)$ be a weighted and undirected graph with edge weights bounded by $M$.
Then, we can devise the following algorithms for incremental diameter:
\begin{enumerate}[(i)]
    \item a $(3/2+\varepsilon)$-approximation with $\Ohtilde(m\sqrt{n} + n^{(\omega+7)/4})\leq \Ohtilde(m\sqrt{n} + n^{2.343})$ preprocessing and $\Ohtilde(\sqrt{n})$ query time;
    \label{it:incdiam:i}
    \item a $(3/2+\varepsilon)$-approximation with $\Ohtilde(n^{2.5})$ preprocessing time and $\Oh(1)$ query time;
    \label{it:incdiam:ii}
    
    \item a $(5/3+\eps)$-approximation with $\Ohtilde(m\cdot n^{\frac{\omega+1}{2\omega}}+n^{\frac{\omega^2 + 6\omega + 1}{4\omega}})\leq \Ohtilde(m\cdot n^{0.711} + n^{2.198})$ preprocessing and $\Ohtilde(n^{\frac{\omega + 1}{2\omega}})\leq \Oh(n^{0.711})$ query time. \label{it:incdiam:iv}

    \item a $(2+\eps)$-approximation with $\Ohtilde(m\sqrt{n}+n^{(\omega+5)/4})\leq \Ohtilde(m\sqrt{n} + n^{1.843})$ preprocessing and $\Ohtilde(\sqrt{n})$ query time. \label{it:incdiam:iii}
    
    \item a $5/2$-approximation with $\Ohtilde(m\sqrt{n})$ preprocessing and $\Ohtilde(\sqrt{n})$ query time.\label{it:incdiam:v} \qedhere
\end{enumerate}
\end{restatable*}

For the sake of completeness, we formally present the upper bound for disconnected graphs mentioned earlier.

\begin{restatable*}{lemma}{incdiscon}\label{lem:inc_discon}
    There exists a $5/3$-approximation for the incremental diameter of a \emph{disconnected} undirected graph, with $\Ohtilde(m\sqrt{n})$ preprocessing time and $\Oh(1)$ query time.
\end{restatable*}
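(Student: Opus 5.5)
Here $\bG$ is disconnected, so before the insertion the diameter is infinite. We only need to answer queries where the inserted edge $e=(u,v)$ joins two components $C_1 \ni u$ and $C_2 \ni v$ and makes the graph connected; otherwise the answer stays $\infty$. Hence if $\bG$ has three or more components, every query answers $\infty$. So I assume exactly two components $C_1,C_2$ and write $D_i=\diam(C_i)$. If $e$ lies inside one component, the answer is $\infty$. Otherwise the new diameter is
\[
\diam(\bG\cup e)=\max\bigl(D_1,\;D_2,\;\ecc_{C_1}(u)+w(e)+\ecc_{C_2}(v)\bigr).
\]
Cross pairs $x\in C_1$, $y\in C_2$ must route through $e$, so their distance is $d(x,u)+w(e)+d(v,y)$. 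Within a component no shortest path can use $e$, since a path that crosses $e$ cannot return.

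My plan is to approximate each of these three terms separately. For the cross term I need, in preprocessing, an approximation $\tilde e(x)$ of $\ecc_{C_i}(x)$ for every node $x$. A $5/3$-factor on the eccentricities alone is available in $\Ohtilde(m\sqrt n)$ time via the Chechik et al. all-node eccentricity approximation, which gives $\ecc \le \tilde e \le 5/3\,\ecc$. In the max, this contributes at most a $5/3$ factor on the cross term, and the edge weight $w(e)$ is exact. For the intra terms I use a $5/3$-approximation (or better) of $D_1$ and $D_2$: take $\tilde D_i=\max_{x\in C_i}\tilde e(x)$, which lies in $[D_i,\,5/3\,D_i]$.

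The query then returns $\max(\tilde D_1,\tilde D_2,\tilde e(u)+w(e)+\tilde e(v))$ in $\Oh(1)$ time after an $\Oh(1)$ component lookup. Every term upper bounds the corresponding true term and is at most $5/3$ times it, so the max is a $5/3$-approximation of the true max.

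The main point to check is the approximation guarantee of the static eccentricity algorithm. The Chechik et al. algorithm (building on Roditty and Vassilevska Williams) gives estimates in $[\,3/5\,\ecc,\ecc]$ in $\Ohtilde(m\sqrt n)$ time for undirected graphs. Scaling these by $5/3$ gives overestimates with ratio $5/3$. I must verify that the guarantee holds for weighted graphs without additive error. If an additive term appears in the weighted case, I would instead rely on the unweighted/integer version, or absorb the additive term, as in \cref{thm:singlenodeeccundir}. Running the algorithm on each component separately costs $\Ohtilde(m\sqrt n)$ in total.
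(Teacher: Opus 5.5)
Your proposal is correct and is essentially the paper's proof. You output $\infty$ unless the graph has exactly two components and the inserted edge joins them, precompute $5/3$-approximate eccentricities within each component via Chechik et al., and answer with the maximum of the two approximate component diameters and $\tilde e(u)+w(e)+\tilde e(v)$. The only difference is that you scale to overestimates where the paper uses underestimates in $[\tfrac35\ecc(x),\ecc(x)]$, which is immaterial.

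The weighted-graph caveat you flag is legitimate, but the paper does not address it either: it simply cites the $\Ohtilde(m\sqrt{n})$ bound, even though its own \cref{thm:singlenodeeccundir} incurs an additive $\Oh(M)$ term at that running time for weighted graphs.
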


Finally, to demonstrate the strength of our second central technique, we show that it extends to the setting of incremental radius.
Using a combination of similar and new, radius-specific techniques, we also establish the first constant approximation of the radius in the incremental setting.

\begin{restatable*}{theorem}{incradiusapprox}\label{thm:incradiusapprox}
There exists a $3$-approximation for incremental undirected radius running in $\Ohtilde(m\sqrt{n})$ preprocessing time and $\Ohtilde(\sqrt{n})$ query time.
\end{restatable*}

\section{Technical Overview}
\label{sec:to}

In this section, we present the techniques behind two of our results.
First, in \cref{subsec:to:1}, we give an overview of \cref{thm:alg_dec_exact}\ref{it:alg_dec_exact:i}. Then in \cref{subsec:to:2} we provide a broader overview for the tools we develop to prove our incremental diameter results (\cref{thm:incdiam}, \cref{thm:diamkinc3approx}), namely, approximating incremental eccentricity and an observation we call the triangle lemma.

\subsection{Decremental Exact Diameter}
\label{subsec:to:1}

Fix an arbitrary graph $\bG = (V,E)$.
We want to compute for each edge $e$ the diameter of $\bG$ with $e$ removed, that is, the value $\diam(\bG \setminus e)$.
For this task we are given help, via access to a DSO on $\bG$ that allows us to query for any input vertices $v,u$ and edge $e$ the distance 
$d_{\bG \setminus e}(v,u)$.
For sake of simplicity, we also assume that we have at our disposal the values $d_{\bG}(u,v)$ for all $u,v \in V$ and that the shortest paths in $\bG$ are unique.

\subparagraph*{A naive first approach:}
For each edge $e$ that we remove, we are only interested in the distances that change—that is, the pairs of nodes $u,v$ such that $e$ lies on the shortest path $\pi_{\bG}(u,v)$. Indeed, for each $e$ it suffices for $\diam(\bG \setminus e)$ to output the value 
\begin{align}
    \max\Big\{ \diam(\bG) \ , \ \max_{u,v \mid e \in \pi_{\bG}(u,v)} d_{\bG \setminus e}(u,v) \Big\} \label{eq:to:1}
\end{align}
This is because either $\diam(\bG \setminus e) = \diam(\bG)$, or there must have been an increase caused by a distance that is now larger in $\bG \setminus e$. 

So, how many of such values $d_{\bG \setminus e}(u,v)$ do we need to query? For each vertex $u$ and each edge $e$ in the shortest path tree rooted at $u$, we need to query $d_{\bG \setminus e}(u,v)$ for every $v$ in the subtree below $e$, because these are exactly the nodes such that $e \in \pi_{\bG}(u,v)$. Overall, this gives $\Oh(n^3)$ queries: too many!

\subparagraph*{Aggregating operations:}
In \cref{eq:to:1}, instead of computing each distance $d_{\bG \setminus e}(u,v)$ within the term $\max_{u,v \mid e \in \pi_{\bG}(u,v)} d_{\bG \setminus e}(u,v)$ individually, we must be more strategic and partition the maximum into fewer terms. 

To this end, suppose we sample a set $S \subseteq V$ of size $\Theta(\sqrt{n} \log n)$. For any two nodes $a,b \in S$ such that no other sampled node $c \in S$ lies on the path $\pi_{\bG}(a,b)$, we define the set $P(a,b)$ to contain all pairs $(u,v) \in V \times V$ such that $\pi_{\bG}(u,v)$ contains $\pi_{\bG}(a,b)$ as a subpath. Our goal for every such pair $(a,b)$ is to compute, for each edge $e \in \pi_{\bG}(a,b)$, the value $\max_{(x,y) \in P(a,b)} d_{\bG \setminus e}(x,y)$.

It is not difficult to see that this approach allows us access to all distances $d_{\bG \setminus e}(u,v)$ in \cref{eq:to:1} where $e$ is not among the first or last $\Omega(\sqrt{n})$ edges of $\pi_{\bG}(u,v)$, as after (resp. before) the first (resp. last) $\Omega(\sqrt{n})$ edges we expect to find a sampled node\footnote{All of this holds with high probability. In the rest of this overview, we do not distinguish between events that always hold and that hold with high probability. In the full proof we are more rigorous about it.}. Fortunately, the first and last $\Oh(\sqrt{n})$ edges of each path $\pi_{\bG}(u,v)$ are easy to handle, as querying $d_{\bG \setminus e}(x,y)$ for them requires only $\Oh(n^2 \sqrt{n})$ queries in total.

\subparagraph*{A new subproblem:}
Now, let us fix such a pair $(a,b)$. For brevity, we define $P \coloneqq P(a,b)$, $X \coloneqq \{x \mid (x,y) \in P\}$, and $Y \coloneqq \{y \mid (x,y) \in P\}$. We order the edges along the path $\pi_{\bG}(a,b)$ as $e_1, \ldots, e_h$. 
We let the endpoints of the $i$th edge be $e_i = \{u_i,v_i\}$, where $u_i$ is closer to $a$ than $v_i$.
As previously noted, our goal is to compute for each $i \in \fragment{1}{h}$ the value $\max_{(x,y) \in P} d_{\bG \setminus e_i}(x,y)$.

To improve on the naive approach of querying all $h \cdot |P|$ values in $\max_{(x,y) \in P} d_{\bG \setminus e_i}(x,y)$ and do it instead with $\Ohtilde(|P| + h|X| + h|Y|)$ queries and time, we will exploit that the values $d_{\bG \setminus e_i}(x,y)$ are related for different indices $i \in \fragment{1}{h}$. To see why these values are related, we need to rewrite for each $(x,y) \in P$ and $i \in \fragment{1}{h}$ the values $d_{\bG \setminus e_i}(x,y)$ as:

\begin{subequations} \label{eq:to:2}
    \begin{empheq}[left={d_{\bG \setminus e_i}(x,y) = \min \empheqlbrace}, right=\empheqrbrace]{align}
        & d_{\bG \setminus \pi_{\bG}(a,b)}(x,y) \label{eq:to:2a} \\
        & d_{\bG \setminus \pi_{\bG}(a,v_i)}(x,b) + d_{\bG}(b,y) \label{eq:to:2b} \\
        & d_{\bG}(x,a) + d_{\bG  \setminus \pi_{\bG}(u_i,b)}(a,y) \label{eq:to:2c} \\
        & d_{\bG}(x,a) + d_{\bG \setminus e_i}(a,b) + d_{\bG}(a,y) \label{eq:to:2d}
    \end{empheq}
\end{subequations}
Here, with the notation $\bG \setminus \pi_{\bG}(\cdot, \cdot)$, we indicate the removal from $\bG$ of all edges on $\pi_{\bG}(\cdot, \cdot)$.
To see why \cref{eq:to:2} is correct, we need to perform a case distinction depending on how $\pi_{\bG \setminus e_i}(x,y)$ behaves:
\begin{description}
    \item[Term ~\eqref{eq:to:2a}:] it shares no edge with $\pi_{\bG}(a,b)$ at all;
    \item[Term ~\eqref{eq:to:2b}:] it shares edges with $\pi_{\bG}(a,b)$ only after $e_i$, in which case it goes through $b$, because as soon as $\pi_{\bG \setminus e_i}(x,y)$ visits the first vertex on $\pi_{\bG}(a,b)$ after $e_i$ it follows the shortest path in $\bG$ to $y$;
    \item[Term ~\eqref{eq:to:2c}:] it shares edges with $\pi_{\bG}(a,b)$ only before $e_i$, in which case it goes through $a$;
    \item[Term ~\eqref{eq:to:2d}:] it shares edges with $\pi_{\bG}(a,b)$ before and after $e_i$, in which case it goes through $a$ and $b$;
\end{description}

Assume for now, that all values listed in Terms \eqref{eq:to:2a}, \eqref{eq:to:2b}, \eqref{eq:to:2c} and \eqref{eq:to:2d} are available to us.

\subparagraph*{Exploiting the monotonicity:}
The property that we are seeking to exploit in Terms \eqref{eq:to:2b} and \eqref{eq:to:2c} is the monotonicity of $d_{\bG \setminus \pi_{\bG}(a,v_i)}(x,b)$ and $d_{\bG  \setminus \pi_{\bG}(u_i,b)}(a,y)$.
Indeed, for any $i \leq j$ we have $d_{\bG \setminus \pi_{\bG}(a,v_i)}(x,b) \leq d_{\bG \setminus \pi_{\bG}(a,v_j)}(x,b)$ (resp. $d_{\bG  \setminus \pi_{\bG}(u_i,b)}(a,y) \geq d_{\bG  \setminus \pi_{\bG}(a,u_j)}(a,y)$).
This is because, when $i$ increases, we progressively prohibit $\pi_{\bG \setminus \pi_{\bG}(a,v_i)}(x,b)$ (resp. allow $\pi_{\bG  \setminus \pi_{\bG}(u_i,b)}(a,y)$) to go through another edge of $\pi_{\bG}(a,b)$, so we have fewer and fewer (resp. more and more) possibilities to go from $x$ to $b$ (resp. $a$ to $y$).

Therefore, for fixed $(x,y) \in P$ in \cref{eq:to:2}, as $i \in \fragment{1}{h}$ increases (so $i$ is not fixed anymore), Term~\eqref{eq:to:2b} increases, Term~\eqref{eq:to:2c} decreases, and Term~\eqref{eq:to:2a} remains fixed.
This means that, for each $(x,y) \in P$, there are indices $\ell(x,y),r(x,y) \in \fragment{1}{h}$
such that:
\begin{align}
        \min \begin{rcases}
        \begin{dcases}
         d_{\bG \setminus \pi_{\bG}(a,b)}(x,y) \\
        d_{\bG \setminus \pi_{\bG}(a,v_i)}(x,b) + d_{\bG}(b,y) \\
        d_{\bG}(x,a) + d_{\bG  \setminus \pi_{\bG}(u_i,b)}(a,y)
        \end{dcases}
        \end{rcases}
        =
        \begin{cases}
           d_{\bG \setminus \pi_{\bG}(a,v_i)}(x,b) + d_{\bG}(b,y) & i \in \fragmentco{1}{\ell(x,y)}, \\
            d_{\bG \setminus \pi_{\bG}(a,b)}(x,y) & i \in \fragment{\ell(x,y)}{r(x,y)}, \\
            d_{\bG}(x,a) + d_{\bG  \setminus \pi_{\bG}(u_i,b)}(a,y) & i \in \fragmentoc{r(x,y)}{h}.
        \end{cases}
\end{align}
(Here, note that for each $(x,y)$, we can get $\ell(x,y),r(x,y)$ in time $\Oh(\log n)$, e.g., via binary searches.)

In particular, for each $(x,y) \in P$, we have
\begin{align}
    d_{\bG \setminus e_i}(x,y) = 
    \begin{cases}
        \min \{ d_{\bG \setminus \pi_{\bG}(a,v_i)}(x,b) \ , \ d_{\bG}(x,a) + d_{\bG \setminus e_i}(a,b) \ \} + d_{\bG}(b,y) & i \in \fragmentco{1}{\ell(x,y)},\\
        \min \{ d_{\bG \setminus \pi_{\bG}(a,b)}(x,y)  \ , \ d_{\bG}(x,a) + d_{\bG \setminus e_i}(a,b) + d_{\bG}(b,y) \ \} & i \in \fragment{\ell(x,y)}{r(x,y)}, \\
        d_{\bG}(x,a) + \min \{ d_{\bG  \setminus \pi_{\bG}(u_i,b)}(a,y) \ , \ d_{\bG \setminus e_i}(a,b) + d_{\bG}(b,y)) \ \} & i \in \fragmentoc{r(x,y)}{h}.
    \end{cases}\label{eq:to:4}
\end{align}

\subparagraph*{Getting back to our subproblem:}
All this allows us to rewrite $\max_{(x,y) \in P} d_{\bG \setminus e_i}(x,y)$ as
\begin{align}
 \max_{(x,y) \in P} d_{\bG \setminus e_i} (x,y) = \max \Big\{ \max_{x \in X}        \max_{y \in L_{x,i}} d_{\bG \setminus e_i} (x,y)
    \ , \ 
    \max_{x \in X} \max_{y \in C_{x,i}} d_{\bG \setminus e_i} (x,y)
    \ , \
    \max_{y \in Y} \max_{x \in R_{y,i}} d_{\bG \setminus e_i} (x,y)
 \Big\}, \label{eq:to:5}
\end{align}
where $L_{x,i} \coloneqq \{y \mid (x,y) \in P, i \in \fragmentco{1}{\ell(x,y)}\}$,
$C_{x,i} \coloneqq \{y \mid (x,y) \in P, i \in \fragment{\ell(x,y)}{r(x,y)}\}$ and 
$R_{y,i} \coloneqq \{x \mid (x,y) \in P, i \in \fragmentoc{r(x,y)}{h}\}$.

We note that in \cref{eq:to:5} in the terms $\max_{y \in L_{x,i}} d_{\bG \setminus e_i} (x,y)$ and $\max_{x \in R_{y,i}} d_{\bG \setminus e_i} (x,y)$ are somehow symmetric.
Indeed, if we reverse the direction of all edges in $\bG$ and swap $a, b$ 
as well as the two components of the pairs in $P$, then the roles of $X, Y$ and 
$\ell(x,y), r(x,y)$ would switch accordingly\footnote{If $\bG$ is undirected then we only need to switch the roles of $a$ and $b$ to see the symmetry.}. The order of the edges 
$e_1, \ldots, e_h$ would be reversed, and the terms $\max_{y \in L_{x,i}} d_{\bG \setminus e_i} (x,y)$ and $\max_{x \in R_{y,i}} d_{\bG \setminus e_i} (x,y)$ would be also the same up to a bijection.
Therefore, we can limits ourselves to only describe how the first two terms in \cref{eq:to:5} can be computed, and we may assume that third one can be obtained symmetrically to the first.

\subparagraph*{Computing \cref{eq:to:5}:}

For computing the first and second terms of \cref{eq:to:5}, we exploit that for each $x \in X$ and $i \in\fragment{1}{h}$, we have by \cref{eq:to:4} that $d_{\bG \setminus e_i}(x,y)$ can be written as a minimum of only two values if $y \in L_{x,i}$ or $y \in C_{x,i}$. Not only are there only two, but the two terms in the minimum have limited 
dependency on $x, y, i$. 
We omit the details of this computation here, but it will turn out that we can compute all three terms (and thus also 
$\max_{(x,y) \in P} d_{\bG \setminus \{e_i\}}(x,y)$ for each $i \in\fragment{1}{h}$) in time $\Oh(h|Y|\log n + h|X|\log n + |P|\log n)$.

\subparagraph*{Getting back to our original problem:}
Reintroducing the dependency of $P, X, Y,$ and $h$ on the nodes $a, b$, we see that (up to polylogarithmic factors) the total complexity across all $a, b \in S$ is:
\[
    \sum\nolimits_{a,b \in S} \left( h(a,b)|X(a,b)| + h(a,b)|Y(a,b)| + |P(a,b)|\right) + n^{2.5} = \Ohtilde(n^{2.5}),
\]
staying well below the $\Oh(n^3)$ threshold.
In this last expression, we bound $|X(a,b)|, |Y(a,b)| \leq n$, $|S| \leq \Ohtilde(\sqrt{n})$ and $\sum_{a,b} P(a,b) = n^{2.5}$ because each pair $(a,b)$ appear in at most $\Ohtilde(\sqrt{n})$ sets $P(a,b)$. We also use the fact that for each pair $(a,b)$, either the path length satisfies $h(a,b) = \Oh(\sqrt{n})$, or there must exist another distinct node from $S$ on the path $\pi_{\bG}(a,b)$. In the latter case, $P(a,b), X(a,b),$ and $Y(a,b)$ are empty by construction. 

A more careful analysis reveals that $\sum_{b \in S} |X(a,b)| \leq n$ for each $b \in V$ and $\sum_{b \in S} |Y(a,b)| \leq n$ for each $a \in S$ (details deferred to the full proof).
This will allow us to employ $\Oh(\log n)$ distinct sampling rates to reduce the overall complexity from $\Ohtilde(n^{2.5})$ down to $\Ohtilde(n^2)$ (similar to \cite{BK08, BK09}).
    
\subparagraph*{Getting all terms of \cref{eq:to:2} from a DSO:}
While the progress so far is promising, we have made an assumption that could be problematic: it is not immediately clear how a DSO can provide the values $d_{\bG \setminus \pi_{\bG}(a,b)}(x,y)$, $d_{\bG \setminus \pi_{\bG}(a,v_i)}(x,b)$, and $d_{\bG \setminus \pi_{\bG}(u_i,b)}(a,y)$ required by Terms \eqref{eq:to:2a}, \eqref{eq:to:2b}, and \eqref{eq:to:2c}. The solution is to use proxy terms. These are designed to be equal to the original term when \cref{eq:to:2} is minimized by that respective term; otherwise, they are larger, leaving the final minimum unaffected.



\subparagraph*{A word on computing exact decremental eccentricities instead:}
We can approach the computation of the decremental eccentricity of a vertex in a similar manner. By sampling the hitting set $S$ and defining the quantities $P, X, Y,$ and $h$ for a fixed pair $(a,b)$ as before, we focus on evaluating the quantity $\max_{y \mid (x,y) \in P} d_{\bG \setminus e_i}(x,y)$ for each $x \in X$ and $i \in \fragment{1}{h}$. This can be rewritten as:
\begin{align}
 \max_{y \mid (x,y) \in P} d_{\bG \setminus e_i} (x,y) = \max \Big\{         \max_{y \in L_{x,i}} d_{\bG \setminus e_i} (x,y)
    \ , \ 
    \max_{y \in C_{x,i}} d_{\bG \setminus e_i} (x,y)
    \ , \
    \max_{x \in R_{x,i}} d_{\bG \setminus e_i} (x,y)
 \Big\}, \label{eq:to:6}
\end{align}
where $L_{x,i}$ and $C_{x,i}$ are defined as in \cref{eq:to:5} and $R_{x,i} \coloneqq \{y \mid (x,y) \in P, i \in \fragmentoc{r(x,y)}{h}$.
So up to the last term $\max_{x \in R_{x,i}} d_{\bG \setminus e_i} (x,y)$, these are all familiar terms from \cref{eq:to:5}!
Unfortunately, the final term introduces a significant bottleneck, as its computation currently relies on specialized matrix products (specifically, the \maxmin product) whose complexity interpolates between $\Oh(n^{\omega})$ and $\Oh(n^3)$. This complexity, combined with the fact that these products only yield a speed-up when $|P| \approx |X||Y|$ and $h \approx |X| \approx |Y|$, ensures that we remain far from the quadratic bar when aiming for exact eccentricities\footnote{We note that, at least for weighted directed graphs, one can reduce a \maxmin-product 
(see \cref{sec:prelims} for definition) to an instance where all DSO queries are trivial. This tells us that we cannot expect to do better than the \maxmin-product in an exact reduction for eccentricities that is formulated as generally as the one for diameter. 
Given $n \times n$ matrices $A$ and $B$, one can encode a \maxmin-product by taking $n$ nodes in sets $X$ and $Y$, connecting all $x \in X$ to a node $a$ which is connected by a path with $n$ edges to a node $b$ that has an edge to each $y \in Y$. Entries 
from $A$ can be encoded through edges between $x \in X$ and $y \in Y$, and entries from $B$ through the nodes on the path to nodes in $Y$.
The eccentricities of nodes in $X$ under edge failures on the path will gives us the entries of the product between $A,B$.}. However, as mentioned earlier, if one is willing to take into incur a $(1+\varepsilon)$ approximation error, then it is possible to get a reduction as tight as for diameter.

\subsection{Incremental Approximate Diameter} \label{subsec:to:2}
We now change our focus to the incremental setting - fixing an $n$ node, $m$ edge graph $\bG = (V,E)$, for any pair of vertices $x,y\in V$ we want to compute the diameter of $\bG$ after adding an edge between $x$ and $y$, be it directed or undirected, weighted or unweighted. We note that unlike in the decremental case, there are $\Omega(n^2)$ possible queries now and not just $\Oh(m)$. The incremental setting is opposite to the decremental setting in that distances in the graph can only shrink - the new edge $(x,y)$ could shortcut a path between the diameter endpoint of $\bG$, changing the pairs of points that are relatively far apart. 

However, if the distance between a pair $u,v$ shrinks then their new shortest path goes through $(x,y)$ and so $d_{\bG\cup(x,y)}(u,v) = d_\bG(u,x) + w(x,y) + d_\bG(y,v)$. Thus, the distance between a pair of points in $\bG \cup (x,y)$ is given by $\min(d_\bG(u,v), d_\bG(u,x) + w(x,y) + d_\bG(y,v))$.

To estimate the incremental diameter of a graph we develop a few new tools. First we attempt to approximate the eccentricity of a vertex $p$ under a single edge insertion. We note that if we computed this exactly we would have a 2-approximation to the incremental diameter. Using our previous observation, 
\[
{\ecc}_{\bG \cup (x,y)}(p) =\max_v d_{\bG\cup (x,y)}(p,v) =\max_v \min (d_\bG(p,v), d_\bG(p,x) + w(x,y) + d_\bG(y,v)).
\]

Thus if we fix $p$, we can define $\delta_k(z) = \max_v \min (d_\bG(p,v), d_\bG(z,v) + k)$. If we had a way to preprocess these $\delta$ values for all nodes, then upon query we can take $k = d_\bG (p,x) + w(x,y)$ and query $\delta_k(y)$ to obtain $\ecc_{\bG\cup (x,y)}(p)$. To compute these values we note that $\delta_k(z)$ is very similar to the eccentricity of $z$. If we divide the nodes into $A = \{v: d_\bG(p,v) < d_\bG(z,v) + k\}$ and $B = V\setminus A$ we have,
\[
\delta_k(z) = \max(\max_{v\in A}d_\bG(p,v), \max_{v\in B}(d_\bG(z,v) + k)).
\]

Thus, computing the incremental eccentricity of a single node $p$ transforms to the problem of computing all-node $\delta_k(z)$, which is conceptually similar to computing all-node eccentricities. We demonstrate in \cref{sec:inc-ecc} how to transform all-node-eccentricity algorithms in both directed and undirected graphs into a sort of all-node-$\delta$ approximation which in turn gives us an approximation to $\ecc_{\bG \cup (x,y)}(p)$, running in the same time as the static algorithm with constant query time.

However, often in diameter approximations we wish to compute the eccentricity of a set of points. We leverage fast matrix multiplication using the \maxmin product to obtain a non-trivial speedup over approximating the incremental eccentricity of each node individually. This step comes at the expense of the approximation and adds a $+\eps$ to our approximation factor.

We can now use a $5/3$-approximation to a single-node undirected eccentricity (derived from Chechik et al.'s \cite{Chechik2014BetterAA} static $5/3$-approximate all-node eccentricities) to get a $10/3$-approximation to the incremental diameter. However, we note the following simple observation  that allows us to get a better $3$-approximation in linear preprocessing time and constant query time.

\begin{observation}
    If the addition of an edge $(x,y)$ shortcuts the distance between $u,v$ then w.l.o.g. $u$ is closer to $x$ than it is to $y$ and $v$ is closer to $y$ than it is to $x$. \lipicsEnd
\end{observation}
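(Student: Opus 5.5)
We argue directly from the incremental distance formula. Suppose adding $(x,y)$ shortens the distance between $u$ and $v$, so that $d_{\bG\cup(x,y)}(u,v) < d_\bG(u,v)$. Every path that is shorter than before must use the new edge. Such a path is simple without loss of generality, so it uses the edge exactly once, in one of the two orientations. Hence
\[
d_{\bG\cup(x,y)}(u,v)=\min\{d_\bG(u,x)+w(x,y)+d_\bG(y,v),\ d_\bG(u,y)+w(x,y)+d_\bG(x,v)\}.
\]
Up to swapping the names of $x$ and $y$ (this is the ``w.l.o.g.''), the minimum is attained by the first term:
\[
d_\bG(u,x)+w(x,y)+d_\bG(y,v)<d_\bG(u,v).
\]
In the directed case only the orientation $x\to y$ exists, so no renaming is needed.

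Next, we show that $u$ is closer to $x$ than to $y$, i.e.\ $d_\bG(u,x)<d_\bG(u,y)$. Suppose instead that $d_\bG(u,y)\le d_\bG(u,x)$. By the triangle inequality,
\[
d_\bG(u,v)\le d_\bG(u,y)+d_\bG(y,v)\le d_\bG(u,x)+d_\bG(y,v)\le d_\bG(u,x)+w(x,y)+d_\bG(y,v),
\]
where the last step uses that the weight is non-negative. This contradicts the strict inequality above.

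Symmetrically, suppose $d_\bG(x,v)\le d_\bG(y,v)$. Then
\[
d_\bG(u,v)\le d_\bG(u,x)+d_\bG(x,v)\le d_\bG(u,x)+d_\bG(y,v)\le d_\bG(u,x)+w(x,y)+d_\bG(y,v),
\]
again a contradiction. So $v$ is closer to $y$ than to $x$ (in the direction $\cdot\to v$ for directed graphs), as claimed.

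The only subtle point is interpreting ``closer'' in the directed setting and handling ties. Using the strict decrease assumption yields strict inequalities, so ties resolve automatically. For directed graphs, the statement should be read with distances measured from $u$ and to $v$. With that reading the argument above goes through verbatim.
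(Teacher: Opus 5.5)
Your proof is correct and follows essentially the same argument as the paper, both in its informal justification and in the claim inside the proof of the $r$-edge triangle lemma: orient the shortcut path as $u\to x\to y\to v$, then combine the triangle inequality through $y$ (for the $u$-side) or through $x$ (for the $v$-side) with $w(x,y)\ge 0$ to contradict the strict decrease. Your explicit treatment of both endpoints, of ties via strictness, and of the directed reading is slightly more careful than the paper's sketch but adds nothing conceptually new.
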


This is because if $u$ is closer to $x$ than it is to $y$, then the new, shorter, path from $u$ to $v$ goes $u\to x \to y\to v$. If $y$ was also closer to $x$ than it was to $y$ we could obtain a shorter path by directly going $u\to x \to v$, without using the new edge in the graph. Thus, $u,v$ need to have opposite relationships to $x,y$ in order for their distance to change in $\bG\cup (x,y)$. If we instead take 3 points, by the pigeonhole principle at least 2 share the same relationship to $x,y$ and so their distance does not change after adding the new edge.

\begin{lemma}[Triangle Lemma, special case of \cref{lm:ktrianglethm}]
    Let $w_1, w_2, w_3$ be vertices in an undirected graph $\bG = (V,E)$. In the graph $\bG^+$ obtained by the addition of a single edge to $\bG$, at least one of the pairwise distances will remain unchanged, $d_{\bG^+}(w_i, w_j) = d_\bG(w_i, w_j)$.
    \lipicsEnd
\end{lemma}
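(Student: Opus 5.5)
The plan is to formalize the pigeonhole argument from the preceding Observation. Let the inserted edge be $(x,y)$ with weight $w=w(x,y)\ge 0$, and write $\bG^+=\bG\cup(x,y)$. Since $\bG$ is undirected, the only new walks use the edge in one of its two orientations. Hence for every pair $u,v$ we have
\[
d_{\bG^+}(u,v)=\min\{d_\bG(u,v),\ d_\bG(u,x)+w+d_\bG(y,v),\ d_\bG(u,y)+w+d_\bG(x,v)\}.
\]
A shortest path in $\bG^+$ may be taken simple, so it uses the new edge at most once, which justifies this formula.

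\textbf{Step 1: assign labels.} Label each vertex $u$ with $x$ if $d_\bG(u,x)\le d_\bG(u,y)$, and with $y$ otherwise.

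\textbf{Step 2: same-label pairs keep their distance.} Suppose $u$ and $v$ both carry label $x$. For the first alternative route,
\[
d_\bG(u,x)+w+d_\bG(y,v)\ \ge\ d_\bG(u,x)+d_\bG(x,v)\ \ge\ d_\bG(u,v),
\]
using $w\ge0$, $d_\bG(y,v)\ge d_\bG(x,v)$, and the triangle inequality in $\bG$. Symmetrically,
\[
d_\bG(u,y)+w+d_\bG(x,v)\ \ge\ d_\bG(u,x)+d_\bG(x,v)\ \ge\ d_\bG(u,v).
\]
So neither route through the new edge beats $d_\bG(u,v)$, and $d_{\bG^+}(u,v)=d_\bG(u,v)$. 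The case where both carry label $y$ is identical with $x$ and $y$ swapped. Infinite distances, for instance in a disconnected graph, are handled by the same inequalities in the extended reals.

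\textbf{Step 3: pigeonhole.} Among $w_1,w_2,w_3$ only two labels are available, so some pair $w_i,w_j$ shares a label. By Step 2 their distance is unchanged.

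The only delicate point is Step 2. The naive heuristic ``the path must go $u\to x\to y\to v$'' has to be checked against both orientations of the undirected edge. With ties broken consistently, as in Step 1, the two inequalities above cover both orientations. Non-negative weights are essential there. With negative $w$ the claim could fail, so I would state that assumption explicitly.
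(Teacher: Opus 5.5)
Your proof is correct and follows essentially the same route as the paper's proof of \cref{lm:ktrianglethm} specialized to $r=1$. Both split $V$ according to which endpoint of the inserted edge is closer, show that two vertices on the same side cannot have their distance shortcut, and finish by pigeonhole. The differences are cosmetic: you break ties to get a true partition rather than an overlapping cover, and you check both orientations of the new edge with direct inequalities instead of a ``without loss of generality'' argument by contradiction.
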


This lemma allows us to obtain a simple 3-approximation to the incremental diameter. In preprocessing, take an arbitrary vertex $w_1$ and run Dijkstra's from it. Pick $w_2$ to be the furthest away vertex from $w_1$ and run Dijsktra's from it as well. Finally pick $w_3$ to be the furthest vertex from the set $\{w_1, w_2\}$ and compute the distances from it. This completes the preprocessing. We claim that for \textbf{any} pair $x,y$, in the graph $\bG^+$ obtained by adding the edge $(x,y)$, one of the distances $d_{\bG^+}(w_1, w_2),d_{\bG^+}(w_2, w_3),d_{\bG^+}(w_1, w_3)$ will be greater than $\diam(\bG^+)/3$. Since we can compute these 3 distances in constant time using the distances we have computed in $\bG$, we have constant query time.

Indeed, we show this by case analysis considering the distances from a pair of diameter endpoints $d_{\bG^+}(s,t) = \diam(\bG^+)$ to the points $w_1,w_2,w_3$. We show that if all such distances are $<\diam(\bG^+)/3$ then all of $d_\bG(w_i, w_j)$ must be $>\diam(\bG^+)/3$ in the original $\bG$, arriving at a contradiction since one of the distance cannot shrink when adding one edge. 

We prove a more general version of this lemma and its resulting $3$-approximation algorithm, allowing any number of edge insertions, in \cref{lm:ktrianglethm} and \cref{thm:diamkinc3approx}.

Finally, we combine these tools to create the first incremental diameter approximation algorithms, combining sets of points such that some distance is guaranteed to stay large after the insertion of an edge with sets of points for which we maintain their incremental eccentricity.

\section{Preliminaries}
\label{sec:prelims}

\subparagraph*{Set notation.}
For integers \(i, j \in \mathbb{Z}\), we write \(\fragment{i}{j}\) to represent the set \(\{i, \dots, j\}\), and \(\fragmentco{i}{j}\) to denote the set \(\{i, \dots, j - 1\}\).
We define \(\fragmentoc{i}{j}\) and \(\fragmentoo{i}{j}\) similarly.

\subparagraph*{Approximation.} We say that $\tilde{x}$ is an $(\alpha, \beta)$-approximation of $x$ if $x \leq \tilde{x} \leq \alpha x + \beta$. Whenever $\beta = 0$, we may simply say that $\tilde{x}$ is an $\alpha$-approximation of $x$.

\subparagraph*{Graph notation.} Given a graph $\bG = (V,E)$ and an edge $e \in  E$, we denote with $\bG \setminus e$ the graph $(V, E \setminus \{e\})$. 
Similarly, for $e \in V \times V$, we write $\bG \cup e$ for the graph obtained from $\bG$ by adding $e$.

For a graph $\bG = (V,E)$, we denote with $d_{\bG}(x,y)$ the shortest distance from $x \in V$ to $y \in V$ in $\bG$.
Given a vertex $x \in V$, the \emph{eccentricity of $x$ in $\bG$} is defined as $\ecc_{\bG}(x) \coloneqq \max_{y \in V} d_{\bG}(x,y)$, 
the \emph{diameter of $\bG$} as $\diam(\bG) \coloneqq \max_{x \in V} \ecc_{\bG}(x)$ and \emph{radius of $\bG$} as $\radius(\bG) \coloneqq \min_{x \in V} \ecc_{\bG}(x)$.
\subparagraph*{Special matrix products.}

In this paper, we also deal with a matrix product defined differently. 

\begin{definition}[\maxmin Product]
    Given two $n \times m$ and $m \times p$ matrices $A, B$,  define their \maxmin product $C = A \maxminprod B$ as
    $C[i,j] \coloneqq \max_{k \in \fragment{1}{m}} \min \{ A[i,k], B[k,j] \}$ for $i \in \fragment{1}{n}$ and $j \in \fragment{1}{p}$.
    We denote with $\sT_{\maxmin}(n,m,p)$ the runtime for computing such product.
\end{definition}

The runtime for the \maxmin product interpolates between normal matrix multiplication and $\Oh(n^3)$.

\begin{theorem}[\cite{apbp2009}, Theorem 3.3]\label{thm:maxmintime}
    We have $\sT_{\maxmin}(n,n,n) = \Oh(n^{(\omega + 3)/2})$. \lipicsEnd
\end{theorem}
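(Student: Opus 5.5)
Since \cref{thm:maxmintime} is quoted from \cite{apbp2009}, I only outline how I would reprove it. The plan is to write the \maxmin product as two ``dominance-with-maximum'' products and handle each by Matou\v{s}ek's bucketing technique. Concretely, $C[i,j] = \max\{C_1[i,j], C_2[i,j]\}$, where
\[
C_1[i,j] = \max_{k \,:\, A[i,k] \le B[k,j]} A[i,k] \quad\text{and}\quad C_2[i,j] = \max_{k \,:\, B[k,j] < A[i,k]} B[k,j].
\]
By symmetry (transpose everything) it suffices to compute $C_1$ in time $\Oh(n^{(\omega+3)/2})$. First replace all $2n^2$ entries by their ranks in a global sorted order; this costs $\Oh(n^2\log n)$ and preserves all comparisons.

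For each index $k$, sort the $2n$ values $\{A[i,k]\}_i \cup \{B[k,j]\}_j$ and cut this list into $s$ consecutive chunks of $\Oh(n/s)$ elements each. A triple $(i,k,j)$ is then of one of two kinds.
\begin{description}
\item[Same chunk.] $A[i,k]$ and $B[k,j]$ lie in the same chunk of list $k$. For each $k$, each of the $2n$ elements has only $\Oh(n/s)$ partners in its chunk. So all such triples can be enumerated explicitly in $\Oh(n^3/s)$ total time, updating $C_1[i,j]$ directly whenever $A[i,k]\le B[k,j]$.
\item[Different chunks.] Here the comparison is decided by the chunk indices alone: we need the maximum $A[i,k]$ over those $k$ for which $A[i,k]$ lies in a chunk $r$ of list $k$ and $B[k,j]$ lies in a chunk $>r$.
\end{description}
For a fixed chunk index $r$, define the $0/1$ matrix $B_r[k,j]=1$ iff $B[k,j]$ is in a chunk $>r$ of list $k$. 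Let $A_r$ be $A$ restricted to entries lying in chunk $r$ of their column's list. Then the different-chunk part for index $r$ is the ``max-witness'' product
\[
\max_{k} \{ A_r[i,k] : B_r[k,j] = 1 \}.
\]

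The main obstacle is computing this max-witness product fast, since a Boolean product only gives existence. Let $N_r$ be the number of nonzero entries of $A_r$. The sets $A_r$ partition $A$, so $\sum_r N_r = n^2$. I would split the entries of $A_r$ by global rank into groups of $n^2/t$ consecutive ranks (with $t \approx s$).
\begin{enumerate}
\item For each group threshold $g$, one Boolean product of $[A_r \ge g]$ with $B_r$ identifies, for every $(i,j)$, the highest group containing a witness.
\item Within that group, brute-force the candidates $k$ with $A_r[i,k]$ in that group.
\end{enumerate}
To keep the brute force cheap, I would choose the groups per row of $A_r$ rather than globally, so that each $(i,j)$ scans only $\Oh(N_r/(nt))$-ish candidates on average. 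If the thresholds can be made uniform across rows this way, the cost is about $s\cdot t\cdot n^{\omega}$ for the products plus $n^3/s$ for the scans. Balancing with $s=t$ and a rectangular split would give the claimed $\Oh(n^{(3+\omega)/2})$, i.e.\ $s=n^{(3-\omega)/2}$ in the one-level version.

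If the per-row bucketing does not align thresholds cleanly across rows, I would fall back to Duan--Pettie's approach. It avoids the extra factor by merging the dominance step and the max step into a single family of $s$ Boolean products. It uses $\Oh(n^2/s)$-sized global value buckets, with an explicit scan over the few entries of each $k$-chunk that share a global bucket.
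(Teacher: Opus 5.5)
The paper gives no proof of this statement; it imports the bound from \cite{apbp2009}. Judged on its own, your outline has a genuine gap, and it sits exactly at the step you flag as the main obstacle.

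\textbf{The accounting fails.} Your primary route needs one Boolean product for every pair (chunk index $r$, value threshold), i.e.\ $st$ products. Its cost is therefore about $st\,n^{\omega}+n^3/s+n^3/t$. With $s=t=n^{(3-\omega)/2}$ the first term is $n^{3-\omega}\cdot n^{\omega}=n^{3}$, not $n^{(3+\omega)/2}$. The best balance, $s=t=n^{(3-\omega)/3}$, only gives $n^{2+\omega/3}$, the bound that predates \cite{apbp2009}.

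The ``rectangular split'' does not rescue this. Packing the per-row groups of each $A_r$ into one tall product per $r$ does reduce the number of square products to $O(s+t)$. But groups of different $A_r$ interleave in value, so afterwards you must scan one group for every $r$, i.e.\ up to $s\cdot n/t$ entries per $(i,j)$. Optimizing $(s+t)n^{\omega}+n^3/s+n^3s/t$ lands at $n^{2+\omega/3}$ again. The underlying problem is that each row of $A$ is partitioned in two unrelated ways (by chunk index and by value), and nothing in your sketch correlates them. The fallback paragraph is only a pointer. Its description, scanning ``the few entries'' of a $k$-chunk inside a global bucket, overlooks that a row of $A$, a column of $B$, or a list $L_k$ may have up to $n$ entries in one global bucket.

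\textbf{The missing idea.} Do not split $C$ into the two $(\max,\le)$-products at the start. That discards the fact that makes $(\max,\min)$ easier: $C[i,j]\ge\tau$ if and only if the Boolean product of the indicator matrices $[A\ge\tau]$ and $[B\ge\tau]$ has a $1$ at $(i,j)$, with no dominance test involved. A route that works:
\begin{enumerate}
\item Cut the global rank order into $2n/g$ buckets of $gn$ consecutive entries. Use $O(n/g)$ Boolean products to find the bucket $b_{ij}$ containing each $C[i,j]$.
\item Dominance is needed only inside a bucket. Cut the bucket-$b$ entries of every row of $A$, every column of $B$, and every list $L_k$ into pieces of at most $g$ consecutive entries. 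Since bucket $b$ holds only $gn$ entries, there are only $O(n)$ pieces.
\item For all $(i,j)$ and every row-piece $q$ of row $i$, decide whether some $A[i,k]$ in bucket $b$, in piece $q$ or above, has its partner $B[k,j]$ in a higher bucket or in a higher $L_k$-piece. This is a single Boolean product of an $O(n)\times O(n)$ matrix (one row per row-piece; one column per $k$ plus one per $L_k$-piece) with an $O(n)\times n$ matrix, i.e.\ $O(n^{\omega})$ per bucket.
\item Enumerate pairs lying in a common $L_k$-piece, and scan the selected row-piece (and, symmetrically, column-piece) for each $(i,j)$. Both cost $O(n^2g)$.
\end{enumerate}
The total $O((n/g)\,n^{\omega}+n^2g)$ is $O(n^{(3+\omega)/2})$ for $g=n^{(\omega-1)/2}$.
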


We remark that for any $\ell_1 \leq \ell_2 \in \fragment{1}{n}$, we have $\sT_{\maxmin}(\ell_1,\ell_2,n) \leq n/\ell_1\cdot \ell_2/\ell_1 \cdot \sT_{\maxmin}(\ell_1,\ell_1,\ell_1) \leq \Oh(n\ell_2/\ell_1^2\cdot \ell_1^{(\omega + 1)/2})$ (here, we split the first matrix into $\Theta(n\ell_2/\ell_1^2)$ matrices of dimensions $\Oh(\ell_1) \times \Oh(\ell_1)$).
\section{Exact Decremental Diameter and Eccentricities} 
\label{sec:exact_dec}

In this section, we provide the proof of \cref{thm:alg_dec_exact} by addressing the remaining steps from \cref{subsec:to:1}. First, in \cref{subsec:exact_dec:1}, we resolve the details concerning the subproblem involving the path sets $P$. Subsequently, in \cref{subsec:exact_dec:2}, we present the final derivation of \cref{thm:alg_dec_exact}.

\subsection{Filling in the Gaps of the Subproblem Computation}
\label{subsec:exact_dec:1}

We begin by formalizing the specific setting for our key lemmas. 
As described in \cref{subsec:to:1}, we consider a collection of shortest paths with endpoints 
$P \subseteq V \times V$ that all pass through two common 
nodes, $a$ and $b$, in succession.

\begin{definition}\label{def:rel_values}
    Let $\bG = (V,E)$ be a (directed or undirected, weighted or unweighted) graph and let $P \subseteq V \times V$ and $a,b \in V$
    such that for all $x,y \in P$ we have that $\pi_{\bG}(x,y)$ goes first through $a$ and then $b$.
    Based on this, set $X \coloneqq \{x \mid (x,y) \in P\}$ and $Y \coloneqq \{y \mid (x,y) \in P\}$, and order the edges $e_1, \ldots, e_h$ on $\pi_{\bG}(a,b)$ from the closest to $a$ to the furthest from $a$.

     We call the following values the \emph{relevant values} w.r.t. $(P,a,b)$:
    \begin{enumerate}[(a)]
        \item $d_{\bG \setminus e_i}(x,b)$ for every $x \in X$ and $i \in \fragment{1}{h}$;
        \label{it:save:a}
        \item $d_{\bG \setminus e_i}(a,y)$ for every $y \in Y$ and $i \in \fragment{1}{h}$;
        \label{it:save:b}
        \item $q(x,y) \coloneqq \max_{i \in \fragment{1}{h}} d_{\bG \setminus e_i}(x,y)$ for every $(x,y) \in P$;
        \label{it:save:c}
        \item $d_{\bG \setminus e_i}(a,b)$ for every $i \in \fragment{1}{h}$;
        \label{it:save:d}
        \item $d_{\bG}(a,x)$ for every $x \in X$ and $d_{\bG}(b,y)$ for every $y \in Y$. \qedhere
        \label{it:save:e}
    \end{enumerate}
\end{definition}

It is not difficult to see that the relevant values \ref{it:save:a}, \ref{it:save:b} and \ref{it:save:d} can be obtained from a DSO using $\Oh(h(|X|+|Y|)+|P|)$ queries. 
Regarding values \ref{it:save:c}, we can use the same approach as in \cite[Section 6]{BK09} to get them using $\Oh(|P| \log n)$ queries.
Lastly, for values \ref{it:save:e}, we observe that we can get them using $\Oh(|X|+|Y|)$ queries by taking any three edges $e_1, e_2, e_3$ incident to the same vertex $v$\footnote{If no such node exists, then the graph is a path and solving decremental diameter becomes trivial.} and setting $d_{\bG}(a,x) = \min\{d_{\bG \setminus e_1}(a,x), d_{\bG \setminus e_2}(a,x), d_{\bG \setminus e_3}(a,x)\}$ for each $x \in X$, as $\pi_{\bG}(x,a)$ uses at most one of $e_1, e_2, e_3$. Similar holds for $d_{\bG}(b,y)$.

We proceed to observe that the values \ref{it:save:a} and \ref{it:save:b} can be made monotone
and that the relevant values provide sufficient information to ensure that for any 
$(x,y) \in P$ and $i \in \fragment{1}{h}$, the distance $d_{\bG \setminus e_i}(x,y)$ 
is computable.

\begin{lemma}\label{lem:make_mon}
        Let $P,a,b,X,Y,h$ be as in \cref{def:rel_values} and suppose we are given the relevant values w.r.t. $(P,a,b)$.
        Then we can do the following two things:
        \begin{enumerate}[(a)]
            \item in time $\Oh(h \cdot |Y|)$ compute for each $y \in Y$ and $i \in \fragment{1}{h}$ a value $\tilde{d}_{\bG \setminus e_i}(a,y)$ that is monotonically decreasing in $i \in \fragment{1}{h}$ and satisfies $d_{\bG \setminus e_i}(a,y) = \min(\tilde{d}_{\bG \setminus e_i}(a,y), d_{\bG \setminus e_i}(a,b) + d_{\bG}(b,y))$; and
            \label{it:clm:save:a}
            \item in time $\Oh(h \cdot |X|)$ compute for each $x \in X$ and $i \in \fragment{1}{h}$ a value $\tilde{d}_{\bG \setminus e_i}(x,b)$ that is monotonically increasing in $i \in \fragment{1}{h}$ and satisfies $d_{\bG \setminus e_i}(x,b) = \min(\tilde{d}_{\bG \setminus e_i} (x,b), d_{\bG}(x,a) + d_{\bG \setminus e_i}(a,b))$.
            \label{it:clm:save:b}
    \end{enumerate}
    Moreover, for each $(x,y) \in P$ and $i \in \fragment{1}{h}$ these values satisfy:
    \[
        d_{\bG \setminus e_i}(x,y)
        \coloneqq \min \Big\{
       \tilde{d}_{\bG \setminus e_i}(x,b) + d_{\bG}(b,y), 
        \tilde{d}_{\bG}(x,a) +{d}_{\bG \setminus e_i}(a,y),
        q(x,y),
        d_{\bG}(x,a) + {d}_{\bG \setminus e_i}(a,b) + {d}_{\bG}(a,y)
        \Big\}.
    \]
\end{lemma}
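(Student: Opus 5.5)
Both parts reduce to a running minimum over the DSO values, and the final identity to the four-way case split \eqref{eq:to:2} from \cref{subsec:to:1}. I describe \ref{it:clm:save:b}; part \ref{it:clm:save:a} is symmetric after reversing edges and swapping $a,b$, which reverses the order $e_1,\dots,e_h$.

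\textbf{Defining the monotone values.} Fix $x\in X$ and set
\[
\tilde d_{\bG\setminus e_i}(x,b)\coloneqq \max_{j\le i} d_{\bG\setminus e_j}(x,b).
\]
A single left-to-right scan computes this in $\Oh(h)$ per $x$, so $\Oh(h|X|)$ overall, and it is increasing by construction. The intended meaning is $d_{\bG\setminus\pi_{\bG}(a,v_i)}(x,b)$, the length of the best $x$--$b$ walk that uses no edge of $\pi_{\bG}(a,b)$ before $v_i$.

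\textbf{The min-identity.} We must show $d_{\bG\setminus e_i}(x,b)=\min(\tilde d_i, d_{\bG}(x,a)+d_{\bG\setminus e_i}(a,b))$. The inequality ``$\le$'' in the second argument is clear. For the first argument we need $d_{\bG\setminus e_i}(x,b)\le \tilde d_i$, and this holds because $\tilde d_i \ge d_{\bG\setminus e_i}(x,b)$ trivially.

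For ``$\ge$'', take the shortest path $Q$ from $x$ to $b$ in $\bG\setminus e_i$ and split into two cases.
\begin{itemize}
\item If $Q$ touches $\pi_{\bG}(a,b)$ at a vertex before $e_i$ (including $a$), we may take the last such vertex $w$. Replacing the prefix of $Q$ by $\pi_{\bG}(x,a)$ followed by the path from $a$ to $w$ along $\pi_{\bG}(a,b)$ does not lengthen it: $\pi_{\bG}(x,w)$ passes through $a$, by subpath closure of the canonical paths and $x\in X$. So the length is at least $d_{\bG}(x,a)+d_{\bG\setminus e_i}(a,b)$.
\item Otherwise $Q$ avoids all of $e_1,\dots,e_i$, and in fact all of $\pi_{\bG}(a,v_i)$. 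Then $Q$ survives in $\bG\setminus e_j$ for every $j\le i$, so $d_{\bG\setminus e_j}(x,b)\le |Q|$ and hence $\tilde d_i\le |Q|$.
\end{itemize}
This proves the identity, and incidentally gives $\tilde d_i \le d_{\bG\setminus\pi(a,v_i)}(x,b)$.

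\textbf{The final four-way formula.} Every term in the minimum is at least $d_{\bG\setminus e_i}(x,y)$; the concern is $q(x,y)$, which is a maximum over $i$. Before this lemma is applied, $q$ must be read as $d_{\bG\setminus\pi(a,b)}(x,y)$ whenever that term is active. The needed fact is $q\ge d_{\bG\setminus e_i}$ whenever $q$ attains the minimum, and $q\le d_{\bG\setminus\pi(a,b)}(x,y)$ because that path survives every single failure.

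For the reverse inequality, apply the case split \eqref{eq:to:2} to $\pi_{\bG\setminus e_i}(x,y)$:
\begin{itemize}
\item \textbf{No shared edge} with $\pi_{\bG}(a,b)$: this is bounded by $q$.
\item \textbf{Shared only after $e_i$:} the path passes through $b$ via a prefix avoiding $\pi(a,v_i)$, which is bounded by $\tilde d_i(x,b)+d_{\bG}(b,y)$ using the bound just proved.
\item \textbf{Shared only before $e_i$:} symmetrically, $d_{\bG}(x,a)+\tilde d(a,y)$.
\item \textbf{Shared on both sides:} $d_{\bG}(x,a)+d_{\bG\setminus e_i}(a,b)+d_{\bG}(b,y)$.
\end{itemize}
The terms written in the statement as $\tilde d_{\bG}(x,a)+d_{\bG\setminus e_i}(a,y)$ I will interpret via part \ref{it:clm:save:a}. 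Since they are at least the true distance, they are harmless.

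\textbf{Expected main obstacle.} The hardest step is the first case of ``$\ge$'': showing that a replacement path touching $\pi(a,b)$ before $e_i$ can be rerouted through $a$. This relies on the canonical-path and uniqueness assumptions, and must be stated carefully for directed graphs, where the relevant vertex is the last vertex of $Q$ on $\pi(a,u_i)$.
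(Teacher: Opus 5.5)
Your proof is correct, but your construction of the monotone values differs from the paper's.

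\textbf{Construction of $\tilde d$.} You take prefix maxima $\tilde d_{\bG\setminus e_i}(x,b)=\max_{j\le i} d_{\bG\setminus e_j}(x,b)$, and suffix maxima $\max_{j\ge i} d_{\bG\setminus e_j}(a,y)$ for part (a). Monotonicity is then automatic, and the identity follows from your two-case rerouting argument.

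The paper instead builds $\tilde d_{\bG\setminus e_i}(a,y)$ left to right. It keeps $d_{\bG\setminus e_i}(a,y)$ when $d_{\bG\setminus e_i}(a,y)<d_{\bG\setminus e_i}(a,b)+d_{\bG}(b,y)$, and otherwise carries the previous value forward, starting from $+\infty$. It then proves that under this condition $d_{\bG\setminus e_i}(a,y)\ge d_{\bG\setminus e_j}(a,y)$ for all $j\ge i$, and derives both the identity and monotonicity from that.

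The underlying fact is the same as yours: a replacement path that never touches $\pi_{\bG}(a,b)$ on one side of $e_i$ survives every other failure on that side. Your version is simpler in two ways:
\begin{itemize}
\item It needs only the DSO values themselves, not $d_{\bG\setminus e_i}(a,b)$ or $d_{\bG}(b,y)$.
\item It is the pointwise smallest admissible choice, since any valid increasing $\tilde d$ must dominate $d_{\bG\setminus e_j}(x,b)$ for all $j\le i$.
\end{itemize}

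\textbf{The four-term formula.} The paper does not argue path by path here. Substituting the two identities collapses the minimum to $\min\{q(x,y),\, d_{\bG\setminus e_i}(x,b)+d_{\bG}(b,y),\, d_{\bG}(x,a)+d_{\bG\setminus e_i}(a,y)\}$. The Bottleneck Lemma of Bernstein and Karger then finishes, together with $d_{\bG\setminus\pi_{\bG}(a,b)}(x,y)\ge q(x,y)\ge d_{\bG\setminus e_i}(x,y)$.

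That route is shorter and works for any $\tilde d$ satisfying the two identities. Your four-case argument instead relies on $\tilde d_{\bG\setminus e_i}(x,b)\le d_{\bG\setminus\pi_{\bG}(a,v_i)}(x,b)$. This holds for your prefix maxima but need not hold for the paper's $+\infty$-padded values, so that step is tied to your particular choice.

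\textbf{Two cosmetic points.}
\begin{itemize}
\item You announce a ``running minimum'' but define a running maximum.
\item No reinterpretation of $q(x,y)$ is needed. The inequality $q(x,y)\ge d_{\bG\setminus e_i}(x,y)$ holds for every $i$ by definition, and $q(x,y)\le d_{\bG\setminus\pi_{\bG}(a,b)}(x,y)$ is the only other property you use.
\end{itemize}
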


\begin{proof}
    We only give the computation for \ref{it:clm:save:a}, as for \ref{it:clm:save:b} we can use symmetric computations.
    The computation is simple. We begin with $i = 1$, and set 
    \[
        \tilde{d}_{\bG \setminus e_1}(a,y) 
        \coloneqq
        \begin{cases}
            {d}_{\bG \setminus e_{1}}(a,y) & \text{if ${d}_{\bG \setminus e_1}(a,y) < {d}_{\bG \setminus e_1}(a,b) + {d}_{\bG}(b,y)$,}\\
            +\infty & \text{otherwise.}
        \end{cases}
    \]
    Then, we iterate up from $2$ to $h$, and for $i \in \fragmentoc{1}{h}$, we set 
    \[
        \tilde{d}_{\bG \setminus e_i}(a,y)
        \coloneqq
        \begin{cases}
            {d}_{\bG \setminus e_{i}}(a,y) & \text{if ${d}_{\bG \setminus e_i}(a,y) < \tilde{d}_{\bG \setminus e_i}(a,b) + {d}_{\bG}(b,y)$,}\\
            {d}_{\bG \setminus e_{i-1}}(a,y) & \text{otherwise.}
        \end{cases}
    \]
    First, we claim that for every $i \in \fragment{1}{h}$, if $d_{\bG \setminus e_i}(a,y) < d_{\bG \setminus e_i}(a,b) + d_{\bG}(b,y)$, then $d_{\bG \setminus e_i}(a,y) \geq d_{\bG \setminus e_{j}}(a,y)$ for every $j \in \fragment{i}{h}$. To see this, observe that when the condition $d_{\bG \setminus e_i}(a,y) < d_{\bG \setminus e_i}(a,b) + d_{\bG}(b,y)$ holds, the shortest path $\pi_{\bG \setminus e_i}(a,y)$ cannot pass through $b$ and $\pi_{\bG \setminus e_i}(a,y)$ must diverge from the shortest path $\pi_{\bG}(a,b)$ at some node prior to $e_i$. This implies that $\pi_{\bG \setminus e_i}(a,y)$ does not use any edge $e_{j}$ for $j \geq i$. Therefore, this path remains valid in $\bG \setminus e_{j}$, which implies $d_{\bG \setminus e_i}(a,y) \geq d_{\bG \setminus e_{j}}(a,y)$.
    
    We proceed to prove inductively that for every $i \in \fragment{1}{h}$ and $j \in \fragment{i}{h}$, we have $\tilde{d}_{\bG \setminus e_i}(a,y) \geq d_{\bG \setminus e_{j}}(a,y)$. In the base case $i=1$, this follows immediately from the property proved in the previous paragraph. For the inductive step, consider $i \in \fragmentoc{1}{h}$. If $d_{\bG \setminus e_i}(a,y) < d_{\bG \setminus e_i}(a,b) + d_{\bG}(b,y)$, the result holds by the same property. Otherwise, by definition $\tilde{d}_{\bG \setminus e_i}(a,y) = \tilde{d}_{\bG \setminus e_{i-1}}(a,y)$. By the inductive hypothesis, $\tilde{d}_{\bG \setminus e_{i-1}}(a,y) \geq d_{\bG \setminus e_k}(a,y)$ for all $k \in \fragment{i-1}{h}$. Since $j \in \fragment{i}{h}$ implies $j \geq i-1$, it follows that $\tilde{d}_{\bG \setminus e_i}(a,y) \geq d_{\bG \setminus e_j}(a,y)$.
    
    Consequently, $\tilde{d}_{\bG \setminus e_i}(a,y)$ satisfies $\tilde{d}_{\bG \setminus e_i}(a,y) = d_{\bG \setminus e_i}(a,y)$ whenever $d_{\bG \setminus e_i}(a,y) < d_{\bG \setminus e_i}(a,b) + d_{\bG}(b,y)$, and $\tilde{d}_{\bG \setminus e_i}(a,y) \geq d_{\bG \setminus e_i}(a,y)$ otherwise. Combining this with
     $d_{\bG \setminus e_i}(a,y) \leq d_{\bG \setminus e_i}(a,b) + d_{\bG}(b,y)$ because $\pi_{\bG \setminus e_i}(a,b) \circ \pi_{\bG}(b,y)$ is a valid path from $a$ to $y$ in $\bG \setminus e_i$, we get $d_{\bG \setminus e_i}(a,y) = \min(\tilde{d}_{\bG \setminus e_i}(a,y), d_{\bG \setminus e_i}(a,b) + d_{\bG}(b,y))$.
    
    It remains to prove monotonicity. To this end, consider $i \in \fragment{1}{h}$ and observe that for every $j \in \fragment{i}{h}$, either $\tilde{d}_{\bG \setminus e_j}(a,y) = \tilde{d}_{\bG \setminus e_i}(a,y)$ or $\tilde{d}_{\bG \setminus e_j}(a,y) = d_{\bG \setminus e_k}(a,y)$ for some $k \in \fragment{i}{h}$. Monotonicity follows now directly, as we have just proven that for every $i \in \fragment{1}{h}$ and $j \in \fragment{i}{h}$, we have $\tilde{d}_{\bG \setminus e_i}(a,y) \geq d_{\bG \setminus e_{j}}(a,y)$.

    Finally, we prove the final part of the lemma.
    We observe that the new values \ref{it:clm:save:a} and \ref{it:clm:save:b} ensure
    \[
        \min \begin{rcases}
            \begin{dcases}
              q(x,y) \\
              \tilde{d}_{\bG \setminus e_i}(a,b) + d_{\bG}(b,y) \\
              d_{\bG}(x,a) + \tilde{d}_{\bG \setminus e_i}(a,y) \\
              d_{\bG}(x,a) + d_{\bG \setminus e_i}(a,b) + d_{\bG}(a,y)
            \end{dcases}
      \end{rcases}
        =
        \min \begin{rcases}
            \begin{dcases}
              q(x,y) \\
              d_{\bG \setminus e_i}(a,b) + d_{\bG}(b,y) \\
              d_{\bG}(x,a) + {d}_{\bG \setminus e_i}(a,y) \\
            \end{dcases}
      \end{rcases}
      = d_{\bG \setminus e_i}(x,y),
    \]
    where in the last step we use the Bottleneck Lemma from \cite{BK08}.
    That is, $\pi_{\bG \setminus e_i} (x,y)$ either goes through $a$ or $b$ or avoids all of $\pi_{\bG \setminus \pi_{\bG}(a,b)}(x,y)$. The first two cases are captured by the second and third term.
    To capture the case where all of $\pi_{\bG \setminus \pi_{\bG}(a,b)}(x,y)$ is avoided, we would need to have $d_{\bG \setminus \pi_{\bG}(a,b)}(x,y)$  instead of $q(x,y)$.
    But notice that $q(x,y) \geq d_{\bG \setminus e_i}(x,y)$ and $d_{\bG \setminus \pi_{\bG}(a,b)}(x,y) \geq  q(x,y)$, so whenever $d_{\bG \setminus e_i}(x,y) = d_{\bG \setminus \pi_{\bG}(a,b)}(x,y)$, we have $d_{\bG \setminus e_i}(x,y) = q(x,y)$.
\end{proof}

The monotonicity of the values $\tilde{d}_{\bG \setminus e_i}(\cdot, \cdot)$ from \cref{lem:make_mon} allows us to rewrite $d_{\bG \setminus e_i}(x,y)$ as follows. 

\begin{corollary}\label{cor:rewrite_d}
    Let $P,a,b,X,Y,h$ be as in \cref{def:rel_values} and let the values $\tilde{d}_{\bG \setminus e_i}(\cdot, \cdot)$ be as in \cref{lem:make_mon}.
    Then, from the monotonicity of these values, we can choose for each $(x,y) \in P$ indices $\ell(x,y), r(x,y) \in \fragment{1}{h}$ such that
    \[
        \min \begin{rcases}
        \begin{dcases}
         \tilde{d}_{\bG \setminus e_i}(x,b) + d_{\bG}(b,y)\\
          q(x,y) \\
          d_{\bG}(x,a) +\tilde{d}_{\bG \setminus e_i}(a,y)   \\
        \end{dcases}
        \end{rcases}
        =
        \begin{cases}
           \tilde{d}_{\bG \setminus e_i}(x,b) + d_{\bG}(b,y) & i \in \fragmentco{1}{\ell(x,y)} \\
            q(x,y) & i \in \fragment{\ell(x,y)}{r(x,y)} \\
            d_{\bG}(x,a) +\tilde{d}_{\bG \setminus e_i}(a,y) & i \in \fragmentoc{r(x,y)}{h}.
        \end{cases}
    \]
    
    In particular, for each $(x,y) \in P$ we have that $d_{\bG \setminus e_i}(x,y)$ equals to:
    \begin{align*}
        \begin{cases}
            \min \{ \tilde{d}_{\bG \setminus e_i}(x,b) \ , \ d_{\bG}(x,a) + d_{\bG \setminus e_i}(a,b) \ \} + d_{\bG}(b,y) = d_{\bG \setminus e_i}(x,b)+d_{\bG}(b,y) & i \in \fragmentco{1}{\ell(x,y)}\\
            \min \{ q(x,y) \ , \ d_{\bG}(x,a) + d_{\bG \setminus e_i}(a,b) + d_{\bG}(b,y) \ \} & i \in \fragment{\ell(x,y)}{r(x,y)} \\
            d_{\bG}(x,a) + \min \{\tilde{d}_{\bG \setminus e_i}(a,y) \ , \ d_{\bG \setminus e_i}(a,b) + d_{\bG}(b,y)) \ \} = d_{\bG \setminus e_i}(a,y) + d_{\bG}(x,a) & i \in \fragmentoc{r(x,y)}{h}.
        \end{cases} 
    \end{align*}\lipicsEnd
\end{corollary}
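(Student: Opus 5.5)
The approach is to treat, for a fixed pair $(x,y) \in P$, the three quantities as functions of $i$:
\[
A_i \coloneqq \tilde{d}_{\bG \setminus e_i}(x,b) + d_{\bG}(b,y), \qquad Q \coloneqq q(x,y), \qquad C_i \coloneqq d_{\bG}(x,a) + \tilde{d}_{\bG \setminus e_i}(a,y).
\]
By \cref{lem:make_mon}, $A_i$ is monotonically increasing in $i$, $C_i$ is monotonically decreasing in $i$, and $Q$ does not depend on $i$. The whole corollary then reduces to an elementary statement about the pointwise minimum of an increasing sequence, a decreasing sequence and a constant, followed by substituting the identities of \cref{lem:make_mon}.

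\textbf{Step 1 (the split points).} The set $I \coloneqq \{ i : Q \le A_i \text{ and } Q \le C_i\}$ is an interval. It is the intersection of an up-set in $i$ (from $A$ increasing) with a down-set (from $C$ decreasing).
\begin{itemize}
\item If $I$ is nonempty, write $I = \fragment{\ell}{r}$. For $i < \ell$ we have $A_i < Q$. This forces $A_i \le C_i$, since otherwise $C_i < A_i < Q$ and $C_i < Q$ would propagate to all larger indices, contradicting $\ell \in I$. So the minimum is $A_i$. Symmetrically, for $i > r$ the minimum is $C_i$.
\item If $I$ is empty, there is no middle block. Let $t$ be the first index with $C_t < A_t$; this is well defined by monotonicity of $A_i - C_i$. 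Set $\ell \coloneqq t$ and $r \coloneqq t-1$, so that $\fragment{\ell}{r}$ is empty.
\end{itemize}
Indices $\ell = h+1$ or $r = 0$ are used as the degenerate conventions for empty outer ranges. Ties are harmless, because the minimum value is the same whichever term attains it.

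\textbf{Step 2 (the explicit formulas).} Start from the final identity of \cref{lem:make_mon}:
\[
d_{\bG \setminus e_i}(x,y) = \min\{A_i,\; d_{\bG}(x,a) + d_{\bG \setminus e_i}(a,y),\; Q,\; D_i\}, \qquad D_i \coloneqq d_{\bG}(x,a) + d_{\bG \setminus e_i}(a,b) + d_{\bG}(b,y).
\]
The last summand of $D_i$ should read $d_{\bG}(b,y)$; the displayed $d_{\bG}(a,y)$ is a typo. By \cref{lem:make_mon}\ref{it:clm:save:a},
\[
d_{\bG}(x,a) + d_{\bG \setminus e_i}(a,y) = \min\{C_i, D_i\},
\]
so
\[
d_{\bG \setminus e_i}(x,y) = \min\{\min(A_i, Q, C_i),\; D_i\}.
\]
Inserting Step 1 gives the three cases.
\begin{itemize}
\item For $i < \ell$ the value is $\min\{A_i, D_i\}$. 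Factoring out $d_{\bG}(b,y)$ gives $\min\{\tilde{d}_{\bG \setminus e_i}(x,b),\, d_{\bG}(x,a) + d_{\bG \setminus e_i}(a,b)\} + d_{\bG}(b,y)$. By \cref{lem:make_mon}\ref{it:clm:save:b} this equals $d_{\bG \setminus e_i}(x,b) + d_{\bG}(b,y)$.
\item For $i \in \fragment{\ell}{r}$ the value is $\min\{Q, D_i\}$, which is exactly the middle case.
\item For $i > r$ the value is $\min\{C_i, D_i\}$. Factoring out $d_{\bG}(x,a)$ and applying \cref{lem:make_mon}\ref{it:clm:save:a} gives $d_{\bG}(x,a) + d_{\bG \setminus e_i}(a,y)$.
\end{itemize}

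The only delicate point is Step 1: the case where $I$ is empty needs the crossover argument, and the case split must cover every $i$ with no overlap. Everything else is direct substitution of the identities from \cref{lem:make_mon}.
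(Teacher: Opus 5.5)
Your proof is correct and follows the route the paper intends. The paper gives no argument beyond ``from the monotonicity'' of \cref{lem:make_mon}: an increasing term, a decreasing term and a constant split $\fragment{1}{h}$ into three consecutive blocks, and substituting \ref{it:clm:save:a} and \ref{it:clm:save:b} into that lemma's final identity gives the explicit formulas, which is exactly what you do. Your typo fix ($d_{\bG}(a,y)\to d_{\bG}(b,y)$ in the fourth term) and your degenerate conventions $\ell=h+1$ or $r=0$ are genuinely needed, since the stated range $\ell(x,y),r(x,y)\in\fragment{1}{h}$ cannot always be met (for example, when the first term is the strict minimum for every $i$).
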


As already anticipated in \cref{subsec:to:1}, we can efficiently compute the eccentricity-related values based on the new way of seeing $d_{\bG \setminus e_i} (x,y)$.

\begin{lemma} \label{lem:save}
     Let $P,a,b,X,Y,h$ be as in \cref{def:rel_values} and $\ell(x,y), r(x,y)$ as in \cref{cor:rewrite_d}, and suppose that we are given the relevant values w.r.t. $(P,a,b)$.
     Then, in time $\Oh(h \cdot |X| \log n + |P| \log n)$ we can compute for every $i \in \fragment{1}{h}$ and $x \in X$ the values 
     \[
        \max\nolimits_{y \in L_{x,i}} d_{\bG \setminus e_i} (x,y)
        \quad \text{ and }
        \max\nolimits_{y \in C_{x,i}} d_{\bG \setminus e_i} (x,y),
    \]
    where $L_{x,i} \coloneqq \{y \mid (x,y) \in P, i \in \fragmentco{1}{\ell(x,y)}\}$
    and $C_{x,i} \coloneqq \{y \mid (x,y) \in P, i \in \fragment{\ell(x,y)}{r(x,y)}\}$. 
    \lipicsEnd
\end{lemma}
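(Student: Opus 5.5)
Both maxima are computed by a sweep over $i \in \fragment{1}{h}$ for each fixed $x \in X$, using the closed forms of \cref{cor:rewrite_d}. First, for every $(x,y) \in P$ I compute $\ell(x,y)$ and $r(x,y)$ by binary search. By \cref{lem:make_mon}, $\tilde{d}_{\bG \setminus e_i}(x,b)+d_{\bG}(b,y)$ is nondecreasing in $i$ and $d_{\bG}(x,a)+\tilde{d}_{\bG \setminus e_i}(a,y)$ is nonincreasing in $i$. So $\ell(x,y)$ is the first index where the first term reaches the minimum of $q(x,y)$ and the third term, and $r(x,y)$ is the last index before the third term drops strictly below the minimum of the other two. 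Each binary search evaluates relevant values in $\Oh(1)$ per probe, giving $\Oh(|P|\log n)$ in total.

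\emph{First maximum.} For $y \in L_{x,i}$, \cref{cor:rewrite_d} gives $d_{\bG \setminus e_i}(x,y) = d_{\bG \setminus e_i}(x,b) + d_{\bG}(b,y)$. Hence
\[
\max_{y \in L_{x,i}} d_{\bG \setminus e_i}(x,y) = d_{\bG \setminus e_i}(x,b) + \max_{y \in L_{x,i}} d_{\bG}(b,y),
\]
and the first summand is a relevant value. The sets $L_{x,i}$ shrink as $i$ grows: $y$ belongs to $L_{x,i}$ exactly for $i < \ell(x,y)$. For fixed $x$, I bucket each $y$ by the index $\ell(x,y)-1$, which is the last $i$ at which it is present. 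A suffix maximum over $i$ from $h$ down to $1$ then yields $M_L(x,i) \coloneqq \max_{y \in L_{x,i}} d_{\bG}(b,y)$ for all $i$. This costs $\Oh(h + \deg_P(x))$ per $x$.

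\emph{Second maximum.} For $y \in C_{x,i}$,
\[
d_{\bG \setminus e_i}(x,y) = \min\{\, q(x,y),\ D_i + d_{\bG}(b,y) \,\}, \qquad D_i \coloneqq d_{\bG}(x,a)+d_{\bG \setminus e_i}(a,b).
\]
Here $D_i$ depends only on $x$ and $i$. This is a max-min over an interval-membership structure, and it is the step I expect to be the main obstacle, since $D_i$ is not monotone in $i$. I would use the fact that, for fixed $i$, the quantity $\max_{y}\min\{q_y,\, D_i+c_y\}$ (writing $q_y = q(x,y)$ and $c_y = d_{\bG}(b,y)$) splits by a threshold. Let $\Delta_y \coloneqq q_y - c_y$. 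Pairs with $\Delta_y \le D_i$ contribute $q_y$, and pairs with $\Delta_y > D_i$ contribute $D_i + c_y$. So I need, over the active set $C_{x,i}$:
\begin{enumerate}[(1)]
\item the maximum of $q_y$ among those $y$ with $\Delta_y \le D_i$;
\item the maximum of $c_y$ among those $y$ with $\Delta_y > D_i$.
\end{enumerate}

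To answer these, for each fixed $x$ I sort the $y$ with $(x,y)\in P$ by $\Delta_y$. I then sweep $i$ from $1$ to $h$, inserting $y$ at $i=\ell(x,y)$ and deleting it after $r(x,y)$. I maintain a balanced search tree (or segment tree over the sorted order) keyed by $\Delta_y$ that stores subtree maxima of both $q_y$ and $c_y$. Each query (1) and (2) is then a prefix or suffix maximum at a position found by binary search on $D_i$, costing $\Oh(\log n)$. Insertions and deletions also cost $\Oh(\log n)$.

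The total cost is $\Oh(\deg_P(x)\log n + h\log n)$ per $x$. Summing over $x \in X$ gives $\Oh(h|X|\log n + |P|\log n)$, matching the claimed bound. The remaining step is to check that the boundary conventions of \cref{cor:rewrite_d} make the interval memberships exactly $\fragmentco{1}{\ell}$ and $\fragment{\ell}{r}$, and to handle ties and infinite values. Both are routine.
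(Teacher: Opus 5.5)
Your proposal is correct and is essentially the paper's proof. It uses the same closed forms from \cref{cor:rewrite_d}, the same threshold split of $C_{x,i}$ according to whether $q(x,y)-d_{\bG}(x,a)-d_{\bG}(b,y)$ exceeds $d_{\bG \setminus e_i}(a,b)$, and the same sweep over $i$ with a balanced tree keeping subtree maxima; the only variation is that you handle the nested sets $L_{x,i}$ by offline bucketing and suffix maxima rather than a search tree with deletions, which is equally valid. One small remark: $\ell(x,y)$ and $r(x,y)$ are given in the hypothesis, so your binary-search step is not needed (it would also need the $\tilde{d}_{\bG \setminus e_i}(a,y)$ values, an extra $\Oh(h|Y|)$ outside the stated bound, and your strict test for $r$ puts an index where the first and third terms tie strictly below $q(x,y)$ into $\fragment{\ell}{r}$, which a non-strict test avoids).
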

\begin{proof}
For the first value, by \cref{cor:rewrite_d}, for each $x \in X$ and $i \in \fragment{1}{h}$:
\begin{align*}
     \max\nolimits_{y \in L_{x,i}} d_{\bG \setminus e_i} (x,y)
    &= \min \{ \tilde{d}_{\bG \setminus e_i}(x,b) \ , \ d_{\bG}(x,a) + d_{\bG \setminus e_i}(a,b) \ \}
    + \max\nolimits_{y \in L_{x,i}} d_{\bG}(b,y). 
\end{align*}
This allows us to proceed as follows. We proceed in $h$ rounds.
In the $i$th round, we maintain for each $x \in X$ the set $L_{x,i}$ as a binary balanced search tree ordered by $d_{\bG}(b,y)$.
To get the desired value for a certain $x$ in the $i$th round, it suffices to query the element $y \in L_{x,i}$ with maximum $d_{\bG}(b,y)$.
When we go from round $i$ to $i+1$, we observe that we can  get $L_{x,i+1}$ from $L_{x,i}$, by removing all elements $y$ such that $\ell(x,y)=i+1$.

For the second value, for each $x \in X$ and $i \in \fragment{1}{h}$, partition $C_{x,i}$ into $C_{x,i}^{\leq}$ and $C_{x,i}^{>}$ depending on whether $q(x,y) - d_{\bG}(x,a) - d_{\bG}(b,y)$ for an element $(x,y) \in C_{x,i}$ is less or equal than $d_{\bG \setminus e_i}(a,b)$ or not.
Using  \cref{cor:rewrite_d} again, we see that:
\begin{align*}
    \max_{\substack{y \in C_{x,i}}} d_{\bG \setminus e_i} (x,y)
    &= \max \Big\{ \ \max_{y \in C_{x,i}^{\leq}} q(x,y) \ , \ d_{\bG \setminus e_i} (a,b) + \max_{y \in C_{x,i}^{>}} \{ d_{\bG}(x,a) + d_{\bG}(b,y) \} \ \Big\}.
\end{align*}
This allows us to proceed again in $h$ rounds. In the $i$th round, we maintain for each $x \in X$ the set $C_{x,i}$ as a binary balanced search tree ordered by $q(x,y) - d_{\bG}(x,a) - d_{\bG}(b,y)$. This time, we need a tree that supports $\Oh(\log n)$ split/merge operations and subtree maxima. To get the desired value for a certain $x$ in the $i$th round, we split $C_{x,i}$ into $C_{x,i}^{\leq}$ and $C_{x,i}^{>}$ using a split operation depending on how $q(x,y) - d_{\bG}(x,a) - d_{\bG}(b,y)$ compares to $d_{\bG \setminus e_i}(a,b)$.
Then, via subtree maxima, from $C_{x,i}^{\leq}$ and $C_{x,i}^{>}$ we query $\max_{y \in C_{x,i}^{\leq}} q (x,y)$ and $\max_{y \in C_{x,i}^{>}} \{ d_{\bG}(x,a) + d_{\bG}(b,y) \}$.
Keeping $C_{x,i}$ up to date is again easy: When we go from round $i$ to $i+1$, we observe we can get $C_{x,i+1}$ from $C_{x,i}$, by removing from $C^{i}_x$ all elements $y$ such that $r(x,y)=i$ and inserting all such that $\ell(x,y)=i+1$.

\subparagraph*{Runtime:} For the running time notice that first we need to compute the values $\tilde{d}_{\bG \setminus e_i} (a,y)$ in time $\Oh(h|X|)$. Then, in each round we perform at most $\Oh(|X|)$ operations on binary balanced search trees. Moreover, each $(x,y) \in P$ is inserted/deleted at most once from $L_{i,x}$ and $C_{i,x}$.
\end{proof}

This brings us to the key results behind \cref{thm:alg_dec_exact}.
Here, we consider $P,a,b,X,Y,h$ as we did so far.
We want to determine $\max_{(x,y) \in P} d_{\bG \setminus e_i} (x,y)$ for every $i \in \fragment{1}{h}$, which can be seen as the diameter of $\bG \setminus e_i$ restricted to the path set $P$.
A naive approach using a DSO would be to query $d_{\bG \setminus e_i} (x,y)$ for every $(x,y) \in P$ and $i \in \fragment{1}{h}$, thereby requiring $|P|h$ queries.
Using \cref{lem:save}, following again the discussion of \cref{subsec:to:1} we can improve this, using only $\Ohtilde(|P| + h(|X|+|Y|))$ queries and time.  

\begin{corollary} \label{cor:save}
    Let $P,a,b,X,Y,h$ be as in \cref{def:rel_values}, and suppose that we are given the relevant values w.r.t. $(P,a,b)$.
    Then, in time $\Oh(|P| \log n + h (|X| + |Y|) \log n)$, we can compute for each $i \in \fragment{1}{h}$ the value $\max_{(x,y) \in P} d_{\bG \setminus e_i} (x,y)$. \lipicsEnd
\end{corollary}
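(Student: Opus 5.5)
We use the decomposition of \cref{eq:to:5}. For every $(x,y) \in P$ and every $i \in \fragment{1}{h}$, exactly one of $i \in \fragmentco{1}{\ell(x,y)}$, $i \in \fragment{\ell(x,y)}{r(x,y)}$, or $i \in \fragmentoc{r(x,y)}{h}$ holds. Hence for each $i$,
\[
\max_{(x,y)\in P} d_{\bG\setminus e_i}(x,y)=\max\Big\{\max_{x\in X}\max_{y\in L_{x,i}} d_{\bG\setminus e_i}(x,y),\ \max_{x\in X}\max_{y\in C_{x,i}} d_{\bG\setminus e_i}(x,y),\ \max_{y\in Y}\max_{x\in R_{y,i}} d_{\bG\setminus e_i}(x,y)\Big\},
\]
where $R_{y,i} \coloneqq \{x \mid (x,y)\in P,\ i\in\fragmentoc{r(x,y)}{h}\}$.

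The first preparatory step is to compute the monotone proxies $\tilde d$ from \cref{lem:make_mon}, in time $\Oh(h(|X|+|Y|))$. Next we determine $\ell(x,y)$ and $r(x,y)$ for each $(x,y)\in P$. By \cref{lem:make_mon}, $\tilde d_{\bG\setminus e_i}(x,b)+d_{\bG}(b,y)$ is nondecreasing in $i$, $d_{\bG}(x,a)+\tilde d_{\bG\setminus e_i}(a,y)$ is nonincreasing, and $q(x,y)$ is constant. So two binary searches over $i$ locate the crossing points, costing $\Oh(|P|\log n)$ in total.

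With the indices in hand, \cref{lem:save} gives all values $\max_{y\in L_{x,i}}$ and $\max_{y\in C_{x,i}}$ for every $x$ and $i$ in time $\Oh(h|X|\log n+|P|\log n)$. Maximizing over $x\in X$ for each $i$ then costs an extra $\Oh(h|X|)$.

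For the third term we do not build new machinery; we argue symmetry. Reverse all edges of $\bG$ (no change is needed if $\bG$ is undirected), swap the roles of $a$ and $b$, replace $P$ by $\{(y,x)\mid (x,y)\in P\}$, and reverse the edge order so that $e_i$ becomes $e_{h+1-i}$. Under this transformation the relevant values w.r.t.\ $(P,a,b)$ map bijectively to the relevant values of the reversed instance. Items (a) and (b) swap, (c) and (d) are preserved, and (e) swaps its two parts. The three-way case split of \cref{cor:rewrite_d} also maps: the suffix case $i\in\fragmentoc{r(x,y)}{h}$ becomes the prefix case of the reversed instance. Applying \cref{lem:save} to the reversed instance therefore yields $\max_{x\in R_{y,i}} d_{\bG\setminus e_i}(x,y)$ for every $y\in Y$ and $i$, in time $\Oh(h|Y|\log n+|P|\log n)$.

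Finally we take the maximum of the three terms for each $i$. Summing all the costs gives $\Oh(|P|\log n+h(|X|+|Y|)\log n)$.

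The main obstacle is checking this symmetry carefully. The ties at the boundaries of the intervals must be handled consistently: a prefix interval $\fragmentco{1}{\ell}$ becomes a suffix interval $\fragmentoc{h+1-\ell}{h}$. The proxies $\tilde d$ of \cref{lem:make_mon} are defined asymmetrically, so their reversed versions must coincide with the proxies recomputed in the reversed instance. If this matching fails, the fallback is a direct argument. By \cref{cor:rewrite_d}, for $x\in R_{y,i}$ we have $d_{\bG\setminus e_i}(x,y)=d_{\bG}(x,a)+d_{\bG\setminus e_i}(a,y)$. We can then maintain, for each $y$, a balanced search tree over $R_{y,i}$ keyed by $d_{\bG}(x,a)$. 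Each $x$ is inserted into the tree of $y$ in round $i=r(x,y)+1$ and never removed. Each round queries the maximum key and adds $d_{\bG\setminus e_i}(a,y)$, which gives the same time bound directly.
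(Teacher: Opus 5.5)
Your proposal is correct and follows the paper's own route. The paper gives no separate proof of this corollary: it points to the decomposition \cref{eq:to:5}, uses \cref{lem:save} for the first two terms, and asserts that the third term follows by the edge-reversal symmetry with $a,b$ swapped, which is exactly your main line. Your fallback is also valid and removes any need to check that symmetry, including the tie-breaking and proxy-matching issue you rightly flag. By \cref{cor:rewrite_d}, $d_{\bG\setminus e_i}(x,y)=d_{\bG}(x,a)+d_{\bG\setminus e_i}(a,y)$ for all $x\in R_{y,i}$, and $R_{y,i}$ only grows with $i$. So a running maximum of $d_{\bG}(x,a)$ per $y$ already handles the third term in $\Oh(|P|\log n+h|Y|)$ time, without a search tree.
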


Finally, we turn to computing for each $x \in X$ in $\bG \setminus e_i$ 
the value $\max_{y \mid (x,y) \in P} d_{\bG \setminus e_i} (x,y)$,
which can be tough somehow as the eccentricty of $x$ in $\bG \setminus e_i$ w.r.t. the set of paths $P$.
Within the same running time as \cref{cor:save} (up to a $\log_{1+\varepsilon} nM$ factor), we can compute $(1+\varepsilon)$-approximations of these values.

\begin{lemma} \label{lem:save_ecc_approx}
     Let $P,a,b,X,Y,h$ be as in \cref{def:rel_values}, and suppose that we are given the relevant values w.r.t. $(P,a,b)$.
     If the graph has weights bounded in absolute value by $M$,
     then we can compute in time $\Oh(|P| \log n \log_{1+\varepsilon} nM + h (|X| + |Y|) \log n)$ for each $x \in X$ an $(1+\varepsilon)$-approximation of the value $\max_{y \mid (x,y) \in P} d_{\bG \setminus e} (x,y)$.
\end{lemma}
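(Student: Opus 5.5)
Following \cref{eq:to:6}, I would split $\max_{y \mid (x,y) \in P} d_{\bG \setminus e_i}(x,y)$ for each $x \in X$ and $i \in \fragment{1}{h}$ into three parts according to \cref{cor:rewrite_d}: the part over $L_{x,i}$, the part over $C_{x,i}$, and the part over $R_{x,i} \coloneqq \{y \mid (x,y) \in P,\ i \in \fragmentoc{r(x,y)}{h}\}$. The first two parts are computed exactly by \cref{lem:save} in time $\Oh(h|X|\log n + |P|\log n)$. The final answer for $x$ and $i$ is the maximum of the three parts. Because maxima of $(1+\varepsilon)$-approximations are again $(1+\varepsilon)$-approximations, it suffices to approximate only the third part.

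For $y \in R_{x,i}$, \cref{cor:rewrite_d} gives $d_{\bG \setminus e_i}(x,y) = d_{\bG}(x,a) + d_{\bG \setminus e_i}(a,y)$. So the third part is $d_{\bG}(x,a) + \max_{y \in R_{x,i}} d_{\bG \setminus e_i}(a,y)$. The difficulty is that $R_{x,i}$ depends on both $x$ and $i$ (through $r(x,y)$), while the inner value depends on both $y$ and $i$. This is exactly a \maxmin-type product, which we want to avoid. The plan is to round the distances. Distances lie in a range of size $\Oh(nM)$ (with the usual care for negative weights, e.g.\ working with $d_{\bG\setminus e_i}(a,y)$, which is bounded). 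Partition this range into $\Oh(\log_{1+\varepsilon} nM)$ geometric buckets $[(1+\varepsilon)^k,(1+\varepsilon)^{k+1})$. For each $y \in Y$ and each bucket $k$, let $t_k(y)$ be the smallest $i$ with $d_{\bG \setminus e_i}(a,y) < (1+\varepsilon)^{k+1}$, or the set of such $i$. Relying on the monotonicity of $\tilde{d}$ from \cref{lem:make_mon}, I would first replace $d_{\bG\setminus e_i}(a,y)$ by $\min(\tilde d_{\bG\setminus e_i}(a,y), d_{\bG\setminus e_i}(a,b)+d_{\bG}(b,y))$. The monotone term $\tilde d$ then determines a threshold index per bucket, computable by a single scan in $\Oh(h|Y|)$ time. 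The second term is handled separately as $d_{\bG\setminus e_i}(a,b) + \max d_{\bG}(b,y)$, maintained with a search tree as in \cref{lem:save}.

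The third part then reduces to two tasks. First, for each $x$, $i$ and $k$, decide whether some $y \in R_{x,i}$ has rounded $\tilde d_{\bG\setminus e_i}(a,y)$ at least $(1+\varepsilon)^k$. Since $\tilde d$ is decreasing in $i$, for each pair $(x,y)$ and each bucket $k$ this condition holds on an interval of $i$: it starts at $r(x,y)+1$ and ends at the threshold index $t_k(y)$. For each fixed $x$ and $k$, the quantity we want is the maximum $k$ whose union of intervals covers $i$. I would obtain it by a sweep over $i$: each pair contributes one interval per bucket, and we mark interval endpoints in a difference array. This costs $\Oh(|P|\log_{1+\varepsilon} nM + h|X|)$ for the intervals plus $\Oh(|P|\log n)$ for sorting. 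Second, the non-monotone term is handled as in \cref{lem:save}: we keep the elements $y$ with $r(x,y) < i$ in a tree keyed by $d_{\bG}(b,y)$, adding each $y$ once. Taking the larger of the two contributions and adding $d_{\bG}(x,a)$ gives a $(1+\varepsilon)$-approximation, and the total running time is $\Oh(|P|\log n\log_{1+\varepsilon} nM + h(|X|+|Y|)\log n)$.

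The main obstacle is justifying that the minimum of the two terms can be approximated by treating each term separately. The maximum over $y$ of a pointwise minimum is not the minimum of the maxima. So I would instead classify each pair $(y,i)$ according to which term attains the minimum. Because the first term is monotone in $i$ and the second depends on $i$ only through $d_{\bG\setminus e_i}(a,b)$, which is common to all $y$, the classification should be expressible per bucket by comparing thresholds. I expect this classification to need the same split trick as in the $C_{x,i}$ case of \cref{lem:save}, done once per bucket, which accounts for the extra $\log_{1+\varepsilon} nM$ factor.
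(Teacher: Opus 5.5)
Your decomposition into the parts over $L_{x,i}$, $C_{x,i}$ and $R_{x,i}$ matches the paper. So do the use of \cref{lem:save} for the first two parts and the rounding of the monotone term $\tilde{d}_{\bG\setminus e_i}(a,y)$ to powers of $1+\varepsilon$. The rewriting of the third part via \cref{cor:rewrite_d} is also the paper's:
\[
d_{\bG}(x,a) + \max_{y \in R_{x,i}} \min\bigl\{\tilde{d}_{\bG\setminus e_i}(a,y),\ d_{\bG\setminus e_i}(a,b) + d_{\bG}(b,y)\bigr\}.
\]
The gap is the one you flag yourself. Your algorithm computes $\max_{y} \hat{d}_{\bG\setminus e_i}(a,y)$ and $d_{\bG\setminus e_i}(a,b) + \max_{y} d_{\bG}(b,y)$ separately over $R_{x,i}$ and takes the larger. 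But $\max_y \min\{A_y,B_y\}$ can be far below both $\max_y A_y$ and $\max_y B_y$: take one $y$ with $A_y$ large and $B_y$ small, and another with the reverse. So the output is not a $(1+\varepsilon)$-approximation. Your last paragraph only conjectures a repair ("should be expressible", "I expect"), so the central step of the lemma is not established.

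The missing idea is a single combined key. For fixed $i$, the rounded term attains the minimum if and only if $\hat{d}_{\bG\setminus e_i}(a,y) - d_{\bG}(b,y) \le d_{\bG\setminus e_i}(a,b)$, and the right-hand side is the same for every $y$.
\begin{itemize}
\item For each $x$, keep one balanced search tree on $R_{x,i}$, keyed by $\hat{d}_{\bG\setminus e_i}(a,y) - d_{\bG}(b,y)$, storing subtree maxima of $\hat{d}_{\bG\setminus e_i}(a,y)$ and of $d_{\bG}(b,y)$.
\item In round $i$, one split at $d_{\bG\setminus e_i}(a,b)$ yields $R^{\le}_{x,i}$ and $R^{>}_{x,i}$. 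The value is then $d_{\bG}(x,a) + \max\{\max_{y\in R^{\le}_{x,i}} \hat{d}_{\bG\setminus e_i}(a,y),\ d_{\bG\setminus e_i}(a,b) + \max_{y\in R^{>}_{x,i}} d_{\bG}(b,y)\}$.
\item From round $i$ to $i+1$, insert the $y$ with $r(x,y)=i$, and re-key every $y$ whose $\hat{d}$ dropped to a smaller power of $1+\varepsilon$. By monotonicity this happens $\Oh(\log_{1+\varepsilon} nM)$ times per pair.
\end{itemize}
The rounding serves only to bound these re-keyings, which is where the $|P|\log n\log_{1+\varepsilon} nM$ term comes from. The per-round work is one split per $x$, i.e.\ $\Oh(h|X|\log n)$ in total.

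Your suggestion to do the split "once per bucket" in every round would cost $\Oh(h|X|\log n\log_{1+\varepsilon} nM)$. That exceeds the stated bound, where the $\log_{1+\varepsilon} nM$ factor multiplies only $|P|$. The same applies to one difference array per bucket and per $x$, which costs $h|X|\log_{1+\varepsilon} nM$ rather than $h|X|$. With the combined key, the per-bucket thresholds $t_k(y)$ and the interval sweep are unnecessary.
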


\begin{proof}
    Borrowing the notation from \cref{lem:save}, we observe that $\max_{y \mid (x,y) \in P} d_{\bG \setminus e} (x,y)$ equals to
     \begin{align*}
         \max_{y \mid (x,y) \in P} d_{\bG \setminus e} (x,y) \coloneqq \max \Big\{ \max_{y \in L_{x,i}} d_{\bG \setminus e_i} (x,y)
         \ , \ 
         \max_{y \in C_{x,i}} d_{\bG \setminus e_i} (x,y)
         \ , \
         \max_{y \in R_{x,i}} d_{\bG \setminus e_i} (x,y)
         \Big\},
     \end{align*}   
    where $R_{x,i} \coloneqq \{y \mid (x,y) \in P, i \in \fragmentoc{r(x,y)}{h}\}$.
    The first two terms can be computed (exactly) using \cref{lem:save}.
    It remains to explain how to compute $(1+\varepsilon)$-approximations of the last term, i.e., $\max_{y \in R_{x,i}} d_{\bG \setminus e_i} (x,y)$.
    To this end, by \cref{cor:rewrite_d}, for each $x \in X$ and $i \in \fragment{1}{h}$:
    \begin{align*}
         \max_{y \in R_{x,i}} d_{\bG \setminus e_i} (x,y)
        &= d_{\bG}(x,a) + \max_{y \in R_{x,i}} \min \{ \tilde{d}_{\bG \setminus e_i}(a,y) \ , \ d_{\bG \setminus e_i}(a,b) + d_{\bG}(b,y) \ \}.
    \end{align*}
    In this last expression, we round $\tilde{d}_{\bG \setminus e_i}(a,y)$ to the closest power of $(1+\varepsilon)$ obtaining $\hat{d}_{\bG \setminus e_i}(a,y)$.
    Clearly, using $\hat{d}_{\bG \setminus e_i}(a,y)$ instead of $\tilde{d}_{\bG \setminus e_i}(a,y)$, yields a $(1+\varepsilon)$-approximation.

    Now, we proceed in $h$ rounds.
    In the $i$th round, we maintain for each $x \in X$ the set $R_{x,i}$ as a binary balanced search tree ordered by $\hat{d}_{\bG \setminus e_i}(a,y) - d_{\bG}(b,y)$.
    The tree need to support $\Oh(\log n)$-time split/merge operations and subtree maxima.
    To get the desired value for a certain $x$ in the $i$th round, we split $R_{x,i}$ into $R_{x,i}^{\leq}$ and $R_{x,i}^{>}$ using a split operation depending on how $\hat{d}_{\bG \setminus e_i}(a,y) - d_{\bG}(b,y)$ compares to $d_{\bG \setminus e_i}(a,b)$.
    Then, via subtree maxima, from $R_{x,i}^{\leq}$ and $R_{x,i}^{>}$ we can obtain the desired value by taking the maximum between $d_{\bG}(x,a) + \max_{y \in R_{x,i}^{\leq}} \{ \hat{d}_{\bG \setminus e_i}(a,y)\}$ and $d_{\bG}(x,a) + \max_{y \in R_{x,i}^{>}} \{ d_{\bG \setminus e_i}(a,b) + d_{\bG}(b,y) \}$.

    We keep $R_{x,i}$ up to date as follows: When going from round $i$ to $i+1$, we get $R_{x,i+1}$ from $R_{x,i}$, by inserting all elements such that $r(x,y)=i$, and by updating the stored values for all $y \in R_{x,i}$ such that $\hat{d}_{\bG \setminus e_i}(a,y)$ drops down to the next power of $1+\varepsilon$ (notice that $\hat{d}_{\bG \setminus e_i}(a,y)$ can only decrease because so does $\tilde{d}_{\bG \setminus e_i}(a,y)$).
    
    \subparagraph*{Runtime:} We only analyze the complexity related to computing the value $\max_{y \in R_{x,i}} d_{\bG \setminus e_i} (x,y)$.
    In each round, to get these values for each $x \in X$, we perform at most $\Oh(1)$ operations on binary balanced search trees. Moreover, for keeping up to date $R_{x,i}$, we note that each $(x,y) \in P$ is inserted at most once into $R_{i,x}$ and updated at most $\Oh(\log_{1+\varepsilon} nM)$ times.
\end{proof}

If instead of an approximation we aim for the exact term, we manage to achieve an improvement over a naive DSO querying strategy when $|P| \approx |X||Y|$ and $h \approx |X| \approx |Y|$.

\begin{lemma} \label{lem:save_ecc}
    Let $P,a,b,X,Y,h$ be as in \cref{def:rel_values}, and suppose that we are given the relevant values w.r.t. $(P,a,b)$.
    Let a parameter $0 < \Delta \leq h$ be given as well.
    Then, using $\Oh(\Delta \cdot |P|)$ additional values of the form $d_{\bG \setminus e_i}(x,y)$ for some $(x,y) \in P$ and $i \in \fragment{1}{h}$, we can compute
    in time $\Ohtilde(h/\Delta \cdot \sT_{\maxmin}(|X|,|Y|,\Delta))$ for each $x \in X$ the value $\max_{y \mid (x,y) \in P} d_{\bG \setminus e} (x,y)$.

\end{lemma}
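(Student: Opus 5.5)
We follow the decomposition used in \cref{lem:save_ecc_approx}. For every $x \in X$ and $i \in \fragment{1}{h}$,
\[
\max_{y \mid (x,y)\in P} d_{\bG\setminus e_i}(x,y) = \max\Big\{\max_{y\in L_{x,i}} d_{\bG\setminus e_i}(x,y),\ \max_{y\in C_{x,i}} d_{\bG\setminus e_i}(x,y),\ \max_{y\in R_{x,i}} d_{\bG\setminus e_i}(x,y)\Big\}.
\]
By \cref{lem:save}, the first two terms are computed exactly in $\Oh(h|X|\log n+|P|\log n)$ time, which lies within the claimed budget. It therefore remains to compute, exactly, $T(x,i) \coloneqq \max_{y\in R_{x,i}} d_{\bG\setminus e_i}(x,y)$. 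By \cref{cor:rewrite_d}, for $y\in R_{x,i}$ we have $d_{\bG\setminus e_i}(x,y)=d_{\bG}(x,a)+d_{\bG\setminus e_i}(a,y)$. Hence
\[
T(x,i)=d_{\bG}(x,a)+\max_{y\in R_{x,i}} d_{\bG\setminus e_i}(a,y),
\]
and the only $x$-dependence inside the maximum is through the membership condition $y\in R_{x,i}$, i.e.\ through $i>r(x,y)$.

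We split $\fragment{1}{h}$ into $h/\Delta$ consecutive blocks $B_1,\dots,B_{h/\Delta}$ of $\Delta$ indices each. Fix a block $B=\fragmentoc{s}{s+\Delta}$ and consider a pair $(x,y)\in P$ and an index $i\in B$.

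\emph{Case $r(x,y)\le s$.} Then $y\in R_{x,i}$ for all $i\in B$, so these pairs contribute $\max_y \min\{A[x,y], D[y,i]\}$. Here $A[x,y]=\infty$ if $(x,y)\in P$ and $r(x,y)\le s$, and $A[x,y]=-\infty$ otherwise; $D[y,i]=d_{\bG\setminus e_i}(a,y)$, which is a relevant value~\ref{it:save:b}. Taking a \maxmin product of an $|X|\times|Y|$ $\{\pm\infty\}$-matrix with a $|Y|\times\Delta$ matrix costs $\sT_{\maxmin}(|X|,|Y|,\Delta)$ per block. Over all blocks this gives the claimed $\Ohtilde(h/\Delta\cdot \sT_{\maxmin}(|X|,|Y|,\Delta))$. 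A \maxmin product with a $0/\infty$ mask is really just a masked max, so even a plain Boolean-style reduction (sorting the $D$ entries and thresholding) works with the same bound up to logs.

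\emph{Case $r(x,y)\in B$.} Only the indices $i\in B$ with $i>r(x,y)$ count. There are fewer than $\Delta$ of them, and each $(x,y)$ falls into this case for exactly one block. For these we query directly the values $d_{\bG\setminus e_i}(x,y)$, which uses $\Oh(\Delta|P|)$ additional values in total, and take maxima into $T(x,i)$ in $\Oh(\Delta|P|)$ time. This is exactly the number of extra values the statement allows.

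\emph{Case $r(x,y)>s+\Delta$.} Then $y\notin R_{x,i}$ for all $i \in B$, so these pairs contribute nothing.

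Finally, for each $x$, combining the three terms and maximizing over $i$ (or keeping the values per $i$ as needed) yields the claimed quantity. The step I expect to need care is checking that the time bound absorbs the bookkeeping: building the block masks $A$ costs $\Oh(|X||Y|)$ per block, which is dominated by $\sT_{\maxmin}(|X|,|Y|,\Delta)$ since reading the input already costs that much. The Case~2 queries also need to be organized so that each pair $(x,y)$ is charged at most $\Delta$ times.
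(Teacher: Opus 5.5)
Your proposal is correct and is essentially the paper's proof: split $\fragment{1}{h}$ into $\lceil h/\Delta\rceil$ blocks; handle pairs whose $r(x,y)$ lies in an earlier block with a max-min product of a $\{\pm\infty\}$ mask against the columns $d_{\bG\setminus e_i}(a,y)$, shifted by $d_{\bG}(x,a)$ via \cref{cor:rewrite_d}; query directly for the block containing $r(x,y)$; and take the $L$ and $C$ terms from \cref{lem:save}. The only differences are cosmetic: you query just the indices $i>r(x,y)$ in that block rather than the whole block, and your aside about a ``plain Boolean-style reduction'' is unnecessary and unjustified if read as a Boolean-matrix-multiplication time bound, since such masked max products already encode the maximum-witness problem.
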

\begin{proof}
    We partition $\fragment{1}{h}$ into $s \coloneqq \ceil{h/\Delta}$ contiguous subsets $I_1, \ldots, I_{s}$ each of size $\Delta$ (the last subset might be potentially smaller). For $(x,y) \in P$, we let $t(x,y)$ be equal to the $j$ such that $r(x,y) \in I_j$.
    
    Next, for each $j \in \fragment{1}{s}$, we construct the $|X|\times |Y|$ and $|Y|\times|I_j|$ matrices $A_j$ and $B_j$ as 
    \begin{align*}
         A_j[x,y] &\coloneqq 
        \begin{cases}
            +\infty & \text{if $(x,y) \in P$ and $j \in \fragmentoc{t(x,y)}{h}$,} \\
            -\infty & \text{otherwise;}
        \end{cases}
        &&\text{ for $x \in X, y \in Y$,} \\
        B_j[y,i] &\coloneqq d_{\bG \setminus e_i}(a,y)
        &&\text{ for $y \in Y, i \in \fragment{1}{h}$.}
    \end{align*}
    This allows us, for each $j \in \fragment{1}{s}$, to compute the product $M_j \coloneqq A_j \maxminprod B_j$.
    Moreover, via \cref{lem:save}, we compute for each $i \in \fragment{1}{h}$ and $x \in X$ the values $\max\nolimits_{y \in L_{x,i}} d_{\bG \setminus e_i} (x,y)$ and $\max\nolimits_{y \in C_{x,i}} d_{\bG \setminus e_i} (x,y)$.

    Finally, for each $x \in X$ and $i \in \fragment{1}{h}$, we compute the desired value  $\max_{y \mid (x,y) \in P} d_{\bG \setminus e_i} (x,y)$ as
    \begin{align}
        \max \Big\{ \max_{y \in L_{x,i}} d_{\bG \setminus e_i} (x,y), \max_{y \in C_{x,i}} d_{\bG \setminus e_i} (x,y), \max_{\substack{y \mid (x,y) \in P \\i \in I_{t(x,y)}}} d_{\bG \setminus e_i} (x,y), \max_{j \in \fragment{1}{\Delta}} M_j[x,i] + d_{\bG}(x,a)\Big\}. \label{eq:save_ecc}
    \end{align}
    (Note that the third term in \cref{eq:save_ecc} are the claimed $\Oh(\Delta \cdot |P|)$ additional terms.)
       
    \subparagraph*{Correctness:}
    We claim that for every $x \in X$, $j \in \fragment{1}{s}$ and $i \in I_j$, we have 
    \[
        M_j[x,i] + d_{\bG}(x,a)
        = 
        \begin{cases}
            - \infty & \text{if $\{ y \mid (x,y)\in P, j \in \fragmentoc{t(x,y)}{h}\} = \emptyset$}\\
            \max_{\substack{y \mid (x,y)\in P, j \in \fragmentoc{t(x,y)}{h}}} d_{\bG \setminus e_i} (x,y) & \text{otherwise.}
        \end{cases}
    \]
    Clearly, this is sufficient because for every $y$ such that $(x,y) \in P$ one of $y \in L_{x,i}$, $y \in C_{x,i}$, $i \in I_{t(x,y)}$ and $i \in I_j$ for some $j \in \fragmentoc{t(x,y)}{s}$ always holds.
    
    Let $x \in X$, $j \in \fragment{1}{s}$ and $i \in I_j$. Assuming $\{ y \mid (x,y)\in P, j \in \fragmentoc{t(x,y)}{s}\} \neq \emptyset$, we observe that
    \begin{align*}
        M_j[x,i] + d_{\bG}(x,a) 
        &= \max\nolimits_{y \in Y} \min\{A[x,y], B[y,i] + d_{\bG}(x,a)\} \\
        &= \max\nolimits_{y \in Y \mid (x,y) \in P, j \in \fragmentoc{t(x,y)}{h}} B[y,i] + d_{\bG}(x,a) \\
        &= \max\nolimits_{y \in Y \mid (x,y) \in P, j \in \fragmentoc{t(x,y)}{h}} d_{\bG \setminus e_i}(x,y),
    \end{align*}
    where in the last step we use that, by \cref{cor:rewrite_d}, for all $y$ such that $j \in \fragmentoc{t(x,y)}{s}$ (which means $i \in \fragmentoc{r(x,y)}{h})$ we have $d_{\bG \setminus e_i}(a,y) + d_{\bG}(x,a) = d_{\bG \setminus e_i}(x,y)$. Otherwise, if $\{ y \mid (x,y)\in P, j \in \fragmentoc{t(x,y)}{h}\} = \emptyset$, we take the maximum over only $-\infty$ in the product.
   
    \subparagraph*{Runtime:} We compute a $|X|\times |Y|$ by $|Y|\times \Oh(\Delta)$ product exactly $s = \ceil{h/\Delta}$ times.
    This costs at least the input/output size $\Omega(h/\Delta \cdot (|X| \cdot |Y| + |Y| \cdot \Delta + |X| \cdot \Delta))$.
    This last expression already dominates (up to logarithmic factors) all other computations that we perform, including \cref{lem:save}.
\end{proof}

\subsection{Putting Everything Together}
\label{subsec:exact_dec:2}

\algdecexact

\begin{proof}
    Let $V$ and $E$ be the vertex and edge sets of $\bG$, respectively, and assume we have 
    available for each pair of vertices $u,v \in V$ a canonical shortest path $\pi_{\bG}(u,v)$ after running the subroutine on the graph. 
    First, we give the proof of \ref{it:alg_dec_exact:i} and \ref{it:alg_dec_exact:ii} 
    together, as they use the same sampling strategy. Then, we give the proof of 
    \ref{it:alg_dec_exact:iii}.

    \paragraph*{Proof of \ref{it:alg_dec_exact:i} and \ref{it:alg_dec_exact:ii}.}
    
    We assign priorities as in \cite{BK08, BK09}.
    We assign to each vertex $v \in V$ a \emph{priority} $k \in \fragment{1}{p}$, where $p = \Theta(\log n)$. We say that $v \in V$ is a $k$-center if it has priority $k$, and we denote by $R_k$ the set of all $k$-centers. We require the following two properties:

    \begin{enumerate}[(1)]
        \item For every $u,v \in V$, if $\pi_{\bG}(u,v)$ contains $\Omega(2^k \log n)$ vertices, then it also contains a $k$-center;
        \label{it:center:i}
        \item The set $R_k$ has size $|R_k| = \Oh(n/2^k \cdot \log n)$ for $k \in \fragmentoc{1}{p}$, and $R_1 = V$.
        \label{it:center:ii}
    \end{enumerate}
    
    A simple approach to obtain $R_k$ is to give each vertex priority $k \in \fragmentoc{1}{p}$ with probability proportional to $\Theta(2^{-k})$. This ensures that the desired properties hold with high probability.
    
    Next, for every pair of vertices $x,y \in V$, we split the path $\pi_{\bG}(x,y)$ into segments. To do this, we identify a sequence of centers $c_1, c_2, \ldots, c_{\ell}$ with $\ell = \Oh(\log n)$ based on their priorities. We set $c_1 = x$, and define $c_i$ as the first center on the subpath $\pi_{\bG}(c_{i-1},y)$ with a priority strictly higher than that of $c_{i-1}$. Once the center with the maximal priority on the path is reached, the subsequent centers are selected in ascending order from $c_{\ell}=y$ in reverse order. Finally, we \emph{assign} the pair $(x,y)$ to each pair $(c_i, c_{i+1})$ for $i \in \fragmentco{1}{\ell}$.

    Now, for every pair $(a,b) \in V \times V$, we let $P(a,b) \subseteq V \times V$ be the set of $(x,y)$ pairs assigned to it. Further, we define $X(a,b) = \{x \mid (x,y) \in P(a,b)\}$, $Y(a,b) = \{y \mid (x,y) \in P(a,b)\}$ and $h(a,b)$ as the number of edges on $\pi_{\bG}(a,b)$. 
    
    Now, for \ref{it:alg_dec_exact:i}, this allows us to apply \cref{lem:save} for every $(a,b) \in V \times V$ such that $P(a,b) \neq \emptyset$. By doing so, we compute for every $e \in \pi_{\bG}(a,b)$ the value $\max_{(x,y) \in P(a,b)} d_{\bG \setminus e} (x,y)$. Finally, for each $e \in E$, we take any $(a,b) \in V \times V$ such that $e \in \pi_{\bG}(a,b)$ and $P(a,b) \neq \emptyset$, and take the maximum among all $\max_{(x,y) \in P(a,b)} d_{\bG \setminus e} (x,y)$ and $\diam(\bG)$.
    Clearly, this is correct, as $\max \{ \max_{x,y \mid e \in \pi_{\bG}(x,y)} d_{\bG \setminus e} (x,y), \diam(\bG) \}$ equals to $\diam(\bG \setminus e)$. 

    On the other hand, for \ref{it:alg_dec_exact:ii}, we apply \cref{lem:save_ecc_approx} instead and obtain for each $e \in \pi_{G}(a,b)$ and $x \in X(a,b)$ a $(1+\varepsilon)$-approximation of $\max_{y \mid (x,y) \in P(a,b)} d_{G \setminus \{e\}}(x,y)$. 
    Then, for each $e \in E$ and $x \in X$, we take the maximum between $\text{ecc}_{G}(x)$ and all approximations of $\max_{y \mid (x,y) \in P(a,b)} d_{G \setminus \{e\}}(x,y)$ such that $x \in X(a,b)$ and $e \in \pi_{G}(a,b)$. Correctness should be equally clear here.

    \subparagraph*{Runtime:} 
    For the runtime analysis, we first consider an arbitrary node $a \in V$ with priority $k \in \fragment{1}{p}$.
    Observe that for any $b \in V$ such that $P(a,b) \neq \emptyset$, we have $h(a,b) = \Oh(2^k \log n)$ due to property \ref{it:center:i}.
    Moreover, we claim that for each $a \in V$, the sets $Y(a,b)$ for distinct $b \in V$ must be disjoint. To see this, assume for the sake of contradiction that there exist $b \neq b'$ and a vertex $y \in Y(a,b) \cap Y(a,b')$.
    This implies that the path $\pi_{\bG}(a,y)$ includes both $b$ and $b'$. However, the criteria for splitting paths ensure that the next center is unique, a contradiction.
    Thus, $\sum_{b \in V} |Y(a,b)| \leq n$. Similarly, one can show that for any $b \in V$ with priority $k$ and $a \in V$, we have $h(a,b) = \Oh(2^k \log n)$ and $\sum_{a \in V} |X(a,b)| \leq n$.

    For \ref{it:alg_dec_exact:i}, we obtain an overall runtime of:
    {
    \allowdisplaybreaks
    \begin{align*}
        &\sP + \sT + \Oh\Big( \sum_{\substack{(a,b) \in V \times V \\ P(a,b) \neq \emptyset}} \left(|P(a,b)| \log n +  h(a,b) \cdot |X(a,b)| \log n + h(a,b) \cdot |Y(a,b)| \log n \right) \Big) \\
        &= \sP + \sT + \Oh\Big( n^2 \log^2 n + \sum_{k \in \fragment{1}{p}} \Big( \sum_{a \in R_k} \sum_{\substack{(a,b) \in V \times V \\ P(a,b) \neq \emptyset}} h(a,b) \cdot |X(a,b)| + \sum_{b \in R_k} \sum_{\substack{(a,b) \in V \times V \\ P(a,b) \neq \emptyset}} h(a,b) \cdot |Y(a,b)| \Big) \log n \Big) \\
        &= \sP + \sT + \Oh\Big( n^2 \log^2 n + \sum_{k \in \fragment{1}{p}} \Big( |R_k| \cdot n 2^k \log^2 n \Big) \Big) \\
        &= \sP + \sT + \Oh\Big( n^2 \log^4 n \Big),
    \end{align*}
    }
    where we use that each $(x,y)$ appears in at most $\Oh(\log n)$ sets $P(a,b)$.
    For \ref{it:alg_dec_exact:ii}, we obtain the same analysis as above, up to a factor of $\Ohtilde(\log_{1+\varepsilon} M)$.

    \paragraph*{Proof of \ref{it:alg_dec_exact:iii}.}

    For \ref{it:alg_dec_exact:iii}, use a slightly different definition of the sets $P(a,b)$. 
    We sample a set $S \subseteq V$ of size $\Theta(\frac{n}{\Delta} \log n)$. With high probability, any shortest path $\pi_{\mathbf{G}}(x,y)$ containing at least $\Delta$ edges intersects $S$. For each such pair $(x,y)$, let $a$ and $b$ be the first and last nodes from $S$ appearing on $\pi_{\mathbf{G}}(x,y)$, respectively.  We \emph{assign} $(x,y)$ to the pair $(a,b)$.
    
    We define $P(a,b) \subseteq V \times V$ as the set of pairs $(x,y)$ assigned to the pair $(a,b) \in S \times S$, and define $X(a,b)$, $Y(a,b)$, and $h(a,b)$ accordingly. Let $\alpha \leq \Delta$ be a parameter to be determined.
    
    For each $x \in V$ and each edge $e$ on the shortest path tree of $x$, we compute the replacement eccentricity $\text{ecc}_{\mathbf{G} \setminus e}(x)$ as the maximum of $\text{ecc}_{\mathbf{G}}(x)$ and the following three terms:
    
     \begin{enumerate}[1.]
        \item $\max_{y \mid y \in P(a,b)} d_{\bG \setminus e}(x,y)$ for all $(a,b) \in S \times S$ s.t. $e \in \pi_{\bG}(a,b)$, $x \in X(a,b)$, $\min(|X(a,b)|,|Y(a,b)|) \leq \alpha$; \label{it:algdecexact:1}
        \item  $\max_{y \mid y \in P(a,b)} d_{\bG \setminus e}(x,y)$ for all $(a,b) \in S \times S$ s.t. $e \in \pi_{\bG}(a,b)$, $x \in X(a,b)$, $|X(a,b)|,|Y(a,b)| > \alpha$;
        \label{it:algdecexact:2}
        \item the maximum $d_{\bG \setminus e}(x,y)$ among all $y \in V$ s.t. $e$ lies among the first or last $\Oh(\Delta)$ edges of $\pi_{\bG}(x,y)$.
        \label{it:algdecexact:3}
    \end{enumerate} 
    
    Taking the maximum over these cases correctly yields $\text{ecc}_{\mathbf{G} \setminus e}(x)$, as they exhaustively cover all pairs $(x,y)$ for which $e$ lies on the original shortest path $\pi_{\mathbf{G}}(x,y)$.

    \subparagraph*{Runtime:} 
    We break down the computation of Terms \ref{it:algdecexact:1}, \ref{it:algdecexact:2}, and \ref{it:algdecexact:3} as follows.

    For Term~\ref{it:algdecexact:1}, we iterate over all pairs $(a,b) \in S \times S$ where $\min(|X(a,b)|, |Y(a,b)|) \leq \alpha$. For each pair $(x,y) \in P(a,b)$ and each edge $e \in \pi_{\mathbf{G}}(x,y)$, we compute $d_{\mathbf{G} \setminus e}(x,y)$ by querying the Distance Sensitivity Oracle (DSO). For a fixed $x \in X(a,b)$, computing $\max_{y \in P(a,b)} d_{\mathbf{G} \setminus e}(x,y)$ takes $\Oh(|X(a,b)| h(a,b))$ time. Summing over all relevant $(a,b) \in S \times S$:
    \[
        \sum_{\substack{(a,b) \in S \times S \\ |X(a,b)| \leq \alpha}} h(a,b)|P(a,b)| \leq \sum_{\substack{(a,b) \in S \times S \\ |X(a,b)| \leq \alpha}} n |X(a,b)||Y(a,b)| \leq \sum_{(a,b) \in S \times S} |Y(a,b)| n \alpha.
    \]
    Using the property that $\sum_{(a,b) \in S \times S} |Y(a,b)| = \Ohtilde(n^2 / \Delta)$, the total time for this case is $\Ohtilde(n^3 \alpha / \Delta)$. A symmetric bound holds for the case where $|Y(a,b)| \leq \alpha$.
    
    For Term~\ref{it:algdecexact:3}, we consider pairs $(x,y)$ where $e$ lies within the first or last $\Oh(\Delta)$ edges of $\pi_{\mathbf{G}}(x,y)$. For each such $x \in V$, we query the DSO for all $\Oh(\Delta)$ relevant edges per pair. This requires a total of $\Oh(n^2 \Delta)$ queries and time.
    
    For Term~\ref{it:algdecexact:2}, we iterate over pairs $(a,b)$ where both $|X(a,b)|, |Y(a,b)| > \alpha$. We compute $\max_{y \mid (x,y) \in P} d_{\mathbf{G} \setminus e} (x,y)$ in $\Ohtilde(\frac{h}{\Delta} \cdot \sT_{\maxmin}(|X(a,b)|, |Y(a,b)|, \Delta))$ time. Since $|X(a,b)|, |Y(a,b)|, \Delta \geq \alpha$, we can bound this by partitioning the matrices into $\alpha \times \alpha$ blocks:
    \[
        \Oh\left(\frac{n}{\Delta} \cdot \frac{|X(a,b)||Y(a,b)|\Delta}{\alpha^3} \cdot \sT_{\maxmin}(\alpha, \alpha, \alpha)\right) \leq \Oh\left(|X(a,b)||Y(a,b)| n \cdot \alpha^{(\omega-3)/2}\right).
    \]
    By bounding $|X(a,b)| \leq n$ and again using the sum over $|Y(a,b)|$, the total time for this step is $\Ohtilde(n^4 / \Delta \cdot \alpha^{(\omega-3)/2})$.
    
    Combining these, the total runtime is:
    \[
        \Ohtilde\left(n^3 \alpha / \Delta + n^4 / \Delta \cdot \alpha^{(\omega-3)/2} + n^2 \Delta\right).
    \]
    By setting $\alpha = n^{2/(5-\omega)}$ and $\Delta = n^{(7-\omega)/2(5-\omega)}$, we balance these terms (ensuring $\alpha \leq \Delta$). For $\omega \approx 2.37$, this simplifies to approximately $\Oh(n^{2.88})$.
\end{proof}
\section{Approximate Decremental Eccentricities}
\label{sec:decapproxecc}

In this section, we adapt the decremental single-source shortest distance oracle by Harada, Kitamura, Izumi and Masuzawa \cite{HKIM24} to obtain the following results.

\decapproxecc*

The algorithm of \cite{HKIM24} is designed for node failures in directed graphs, 
achieving $\Ohtilde(m \log M / \varepsilon + n\log^2 M / \varepsilon^2)$ preprocessing and $\Ohtilde_{\varepsilon}(1)$ query time.
Our adapted algorithm is also designed for vertex failures in directed graphs.
We observe that we can reduce edge failures to node failures.
To this end, 
it suffices to subdivide each edge by inserting a representative node; this transformation 
yields a total running time of $\Ohtilde(m \log^2 M / \varepsilon^2) \leq \Ohtilde_{\varepsilon}(m \polylog M)$. Furthermore, 
the node-failures in directed graphs can also handle edge failures in undirected 
graphs. After subdividing each edge, we replace each new edge $\{x,y\}$ with two directed edges $(x,y), (y,x)$ of the same weights.
By doing so, the shortest path distances remain the same, and the failure of the auxiliary central node 
accounts for the deletion of the original undirected edge in both directions.

\subparagraph*{Brief Summary of \cite{HKIM24}.}
We begin by summarizing the oracle introduced in \cite{HKIM24}. 
Throughout this section, let $\bG$ represent a weighted graph with edge weights in $\fragment{1}{M}$, and let $s \in \bG$ denote a designated source vertex. 
The oracle proposed in \cite{HKIM24} supports queries of the form $d_{\bG \setminus x}(s, t)$ for any $x,t \in \bG$.

Let $\bT$ be the shortest path tree of $s$ in $\bG$. Observe that for any $x \in \bG$ we are interested in the values $d_{\bG \setminus x}(s, t)$ for all $t$ in the subtree of $x$ w.r.t. $\bT$. For all other $t$, we have $d_{\bG \setminus x}(s, t) = d_{\bG}(s, t)$. 

\begin{figure*}[thbp]
   \centering
   \scalebox{0.8}{\begin{tikzpicture}[scale=1, dot/.style={
        draw, 
        circle, 
        fill, 
        minimum size=1mm, 
        inner sep=0pt
    }]
    \begin{scope}[scale=0.9]
        \fill[red, opacity=0.4] (0,0) -- (2,1.5) -- (4,0) -- (2,3) -- (0,0);
        \draw[thick] (0,0) -- (4,0) -- (2,3) node[dot, label={above:$r_{\bH}$}] {} -- (0,0);
        \fill[teal, opacity=0.4] (0,0) -- (4,0) -- (2,1.5) -- (0,0);
        \draw[thick] (0,0) -- (4,0) -- (2,1.5)  node[dot, label={above left:$z_{\bH}$}] {} -- (0,0);
        \draw[thick, dotted] (2,3) -- (2,1.5);
        \node at (0, 3) {\large $\bH$};
    \end{scope}

    \begin{scope}[scale=0.9, shift={(6,0)}]
        \fill[red, opacity=0.4] (0,0) -- (2,1.5) -- (4,0) -- (2,3) -- (0,0);
        \draw[thick] (0,0) -- (2,1.5) node[dot] {} -- (4,0) -- (2,3) node[dot, label={above:$r_{\bH_1}$}] {} -- (0,0);
        \node at (0, 3) {\large $\bH_1$};
    \end{scope}

    \begin{scope}[scale=0.9, shift={(12,0)}]
        \fill[teal, opacity=0.4] (0,0) -- (4,0) -- (2,1.5) -- (0,0);
        \draw[thick] (0,0) -- (4,0) -- (2,1.5)  node[dot, label={above left:$r_{\bH_2}$}] {} -- (0,0);
        \node at (0, 3) {\large $\bH_2$};
    \end{scope}
\end{tikzpicture}}
   \caption{
        This figure shows how $\bH$ is split into $\bH_1$ and $\bH_2$.
    }
   \label{fig:centroid}
\end{figure*}

The oracle of \cite{HKIM24} is based on a \emph{recursive centroid decomposition} of $\bT$.
More precisely, this decomposition of $\bT$, which we denote with $\mD$, can be thought as a binary tree, whose vertices are connected subtrees of $\bT$. The root of $\mD$ is $\bT$ itself. Each non-leaf node $\bH \subseteq \bT$ of $\mD$ is split into two subtrees $\bH_{1},\bH_2$ of $\bT$ (which are the left and right children of $\bH$ in $\mD$, respectively) as follows: a node $z_{\bH}$, also called \emph{centroid}, is found such that:
\begin{enumerate}[(i)]
    \item The subtree $\bH_1$ of $\bH$ includes all nodes that are not strict descendants of $z_{\bH}$ in $\bH$ (so $z_{\bH}$ is included);
    \item the subtree $\bH_2$ of $\bH$ includes all nodes and edges in the subtree of $z_{\bH}$ in $\bH$ (so $z_{\bH}$ is again included).
\end{enumerate}
Refer to \cref{fig:centroid} for a visualization of the split.
The choice of $z_{\bH}$ can be made such that the number of nodes in $\bH_1, \bH_2$ is a $2/3$-fraction of the one in $\bH$. (Thus, the depth of $\mD$ is $\Oh(\log n)$.)
Finally, the leaves $\bH$ of $\mD$ contain only $\Oh(1)$ nodes.

Further, each subtree $\bH \in \mD$ is modified to a graph $\widehat{\bH}$ by adding new edges and changing weights of the current ones. For the root $\bT$ of $\mD$, we have $\widehat{\bT} = \bG$. Letting $r_{\bH}$ be the root of $\bH$, the most important 
property that $\widehat{\bH}$ satisfies is that, for any $i \in \{1,2\}$ and $t \in \bH_i$, we have $d_{\widehat{\bH}}(r_{\bH}, t) = d_{\widehat{\bH}_i}(r_{\bH_i}, t)$. By applying this property repeatedly, until we get to the root $\bT$ of $\mD$, we obtain  $d_{\widehat{\bH}}(r_\bH, t) = d_{\bG}(s, t)$.
This means that the shortest path tree from $r_{\bH}$ in $\widehat{\bH}$ contains the same nodes in the same ancestor/descendant relationship as $\bH$, and distances remain invariant to the original source vertex $s$.

While normal distances from the root $r_{\bH}$ in $\widehat{\bH}$ are always preserved, distances under one vertex failure from the root, follow the following properties.

\begin{lemma}[Lemma 6 of \cite{HKIM24}]\label{lem:hkim1}
    Let $\bH \in \mD$.
    For any $x,t \in \bH_2 \setminus \{z_{\bH}\}$, we have
    \[
        d_{\widehat{\bH}\setminus x}(r_{\bH},t) =  d_{\widehat{\bH}_2 \setminus x}(r_{\bH_2},t).
    \] 
    \lipicsEnd
\end{lemma}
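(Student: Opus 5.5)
Since $r_{\bH_2}=z_{\bH}$, the plan is to first fix the concrete form of the construction of $\widehat{\bH}_2$ from $\widehat{\bH}$. This is the only ingredient not spelled out above, and it is what makes the equality hold with no additive shift. I will take the HKIM24 construction to be as follows. Let the vertex set be that of $\bH_2$, and keep every edge of $\widehat{\bH}$ with both endpoints in $\bH_2\setminus\{z_{\bH}\}$ unchanged. All information about $\bH_1$ is folded into edges leaving the new root $z_{\bH}$. For every $v\in\bH_2\setminus\{z_{\bH}\}$, add an edge $(z_{\bH},v)$ of weight
\[
\min_{u\in \bH_1}\bigl(d_{\widehat{\bH}}(r_{\bH},u)+w_{\widehat{\bH}}(u,v)\bigr),
\]
where the minimum ranges over edges $(u,v)$ of $\widehat{\bH}$ and includes $u=z_{\bH}$. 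In other words, the original edges leaving $z_{\bH}$ are shifted by $d_{\widehat{\bH}}(r_{\bH},z_{\bH})$. Edges entering $\bH_1$ are dropped. Under this construction, absolute distances from the root are preserved, consistent with the stated invariant $d_{\widehat{\bH}}(r_{\bH},t)=d_{\widehat{\bH}_2}(r_{\bH_2},t)$. First I will check that this construction yields that invariant, as a sanity check and also as a template for the faulty case.

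The key fact is that every $u\in\bH_1$ satisfies $d_{\widehat{\bH}\setminus x}(r_{\bH},u)=d_{\widehat{\bH}}(r_{\bH},u)$. The reason is that the shortest path tree from $r_{\bH}$ in $\widehat{\bH}$ has $\bH$'s ancestor structure. Since $x\in\bH_2\setminus\{z_{\bH}\}$ is a strict descendant of $z_{\bH}$, it is not an ancestor of any node of $\bH_1$, so the tree path from $r_{\bH}$ to $u$ avoids $x$.

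For the inequality $\le$, I will take a shortest $z_{\bH}$--$t$ path in $\widehat{\bH}_2\setminus x$. Its first edge is some $(z_{\bH},v)$ realized by a pair $(u,v)$ with $u\in\bH_1$. Replace this edge by the tree path from $r_{\bH}$ to $u$, which avoids $x$, followed by the edge $(u,v)$. The rest of the path consists of original edges inside $\bH_2\setminus\{z_{\bH}\}$ that avoid $x$. This gives a walk in $\widehat{\bH}\setminus x$ of the same length. If the path revisits $z_{\bH}$, it is not shortest, given nonnegative weights, so it does not.

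For the inequality $\ge$, I will take a shortest path $\pi$ from $r_{\bH}$ to $t$ in $\widehat{\bH}\setminus x$. Let $u$ be the last vertex of $\pi$ lying in $\bH_1$; it exists because $r_{\bH}\in\bH_1$, and $t\notin\bH_1$. The suffix of $\pi$ after $u$ starts with an edge $(u,v)$ with $v\in\bH_2\setminus\{z_{\bH}\}$, and then stays inside $\bH_2\setminus\{z_{\bH}\}$ and avoids $x$. The prefix has length at least $d_{\widehat{\bH}}(r_{\bH},u)$. Hence the length of $\pi$ is at least the weight of the edge $(z_{\bH},v)$ plus the length of the suffix from $v$, which is the length of a valid path in $\widehat{\bH}_2\setminus x$.

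The main obstacle is fixing the exact definition of $\widehat{\bH}_2$, in particular the shift on the edges leaving $z_{\bH}$. With it, the equality holds literally rather than up to an additive $d(r_{\bH},z_{\bH})$. If HKIM24's construction differs (for example, by retaining some edges into $\bH_1$), I would verify that those edges can only create paths that are dominated by the shortcut edges from $z_{\bH}$, again via the last-visit-to-$\bH_1$ argument.
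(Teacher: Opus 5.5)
Your argument is correct, but there is no proof in the paper to compare it with. The paper imports the statement verbatim as Lemma 6 of \cite{HKIM24} and never says how $\widehat{\bH}_2$ is built, so any proof has to start, as yours does, by fixing that construction.

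Your route has three steps:
\begin{enumerate}
\item Collapse $\bH_1$ into shortcut edges $(z_{\bH},v)$ of weight $\min_u\bigl(d_{\widehat{\bH}}(r_{\bH},u)+w(u,v)\bigr)$.
\item Observe that a strict descendant $x$ of $z_{\bH}$ is an ancestor of no vertex of $\bH_1$, so every root distance into $\bH_1$ survives the failure of $x$.
\item Split paths at their last visit to $\bH_1$.
\end{enumerate}
This is sound in both directions. Only the $\le$ direction actually uses step 2, through the paper's stated fact that $\widehat{\bH}$ has a shortest path tree from $r_{\bH}$ with $\bH$'s ancestor structure. The $\ge$ direction needs nothing beyond $d_{\widehat{\bH}\setminus x}(r_{\bH},u)\ge d_{\widehat{\bH}}(r_{\bH},u)$.

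Compared with the bare citation, this gives a self-contained argument. It works verbatim for any construction in which $\widehat{\bH}_2$ keeps the edges induced on $\bH_2\setminus\{z_{\bH}\}$ and encodes $\bH_1$ only through out-edges of $z_{\bH}$ with those weights. The cost is that it proves the lemma for your reconstruction, not for the definition in \cite{HKIM24} as written. Your closing ``fallback'' sentence is a plan rather than an argument, but it is unnecessary if the construction matches.

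One small correction: the invariant $d_{\widehat{\bH}}(r_{\bH},t)=d_{\widehat{\bH}_2}(r_{\bH_2},t)$ can only hold for $t\neq z_{\bH}$, since at $t=z_{\bH}$ the right-hand side is $0$. That is exactly why the lemma excludes $z_{\bH}$, and your sanity check should state this restriction.
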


\begin{lemma}[Lemma 7 of \cite{HKIM24}]\label{lem:hkim2}
    Let $\bH \in \mD$.
    For any $x,t \in \bH_1$, we have all of the following:
    \begin{enumerate}
        \item if $x \notin \pi_{\bG}(r_{\bH},z_{\bH})$, then $d_{\widehat{\bH} \setminus x}(r_{\bH},t) =  d_{\widehat{\bH}_1 \setminus x}(r_{\bH_1},t)$;
        \item if $x \in \pi_{\bG}(r_{\bH},z_{\bH})$, then at least one of the two following holds:
        \begin{itemize}
            \item $d_{\widehat{\bH} \setminus \pi_{\bG}(x,z_{\bH})}(r_{\bH},t) = d_{\widehat{\bH} \setminus x} (r_{\bH},t)$ and $d_{\widehat{\bH} \setminus x} (r_{\bH},t) \leq d_{\widehat{\bH}_1 \setminus x} (r_{\bH_1},t)$;
            \item $d_{\widehat{\bH}_1 \setminus x} (r_{\bH_1},t)$
            is a $(1+\varepsilon)$-approximation of $d_{\widehat{\bH} \setminus x} (r_{\bH},t)$.  \lipicsEnd
        \end{itemize}
    \end{enumerate}
\end{lemma}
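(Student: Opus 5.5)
Since \cref{lem:hkim2} is imported from \cite{HKIM24}, the plan is to re-derive it from the way $\widehat{\bH}_1$ is built from $\widehat{\bH}$. I will assume the following construction. $\widehat{\bH}_1$ keeps all vertices and edges of $\widehat{\bH}$ inside $\bH_1$ and discards the strict descendants of $z_{\bH}$. It then compensates in two ways:
\begin{itemize}
\item[(A)] For every edge $(v,w)$ leaving the subtree of $z_{\bH}$ into $\bH_1$, it adds a shortcut edge $(z_{\bH},w)$ of weight $\min_v \{ d_{\widehat{\bH}}(z_{\bH},v)+w(v,w) \}$.
\item[(B)] For vertices $w$ of $\bH_1$ that a detour can reach through the subtree while avoiding a vertex $x$ of $\pi_{\bG}(r_{\bH},z_{\bH})$, it adds edges from vertices of $\pi_{\bG}(r_{\bH},z_{\bH})$ to $w$. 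Their weights are rounded up to powers of $(1+\varepsilon)$, so that only few distinct edges are needed per $w$.
\end{itemize}
All shortcut weights are at least the true $\widehat{\bH}$-path lengths they represent. Hence no distance in $\widehat{\bH}_1\setminus x$ can be shorter than the corresponding distance in $\widehat{\bH}\setminus x$, apart from the $(1+\varepsilon)$ rounding, and the whole proof reduces to lower-bounding $d_{\widehat{\bH}\setminus x}(r_{\bH},t)$ for $t\in\bH_1$.

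For item~1, take $x\in\bH_1$ with $x\notin\pi_{\bG}(r_{\bH},z_{\bH})$, and let $\pi$ be a shortest $r_{\bH}\to t$ path in $\widehat{\bH}\setminus x$. If $\pi$ never enters the strict subtree of $z_{\bH}$, it survives verbatim in $\widehat{\bH}_1\setminus x$. Otherwise, let $v$ be the last subtree vertex on $\pi$ and $w$ its successor. Since $x$ is neither on the tree path $r_{\bH}\to z_{\bH}$ nor below $z_{\bH}$, the tree path $r_{\bH}\to z_{\bH}\to v$ avoids $x$. Its length is $d_{\widehat{\bH}}(r_{\bH},v)$, which is at most the length of the prefix of $\pi$ up to $v$. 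So I replace that prefix by $r_{\bH}\to z_{\bH}$ followed by the shortcut edge $(z_{\bH},w)$ from (A). This yields a path in $\widehat{\bH}_1\setminus x$ of length at most $|\pi|$. The reverse inequality is the weight monotonicity above, and together they give equality.

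For item~2, take $x\in\pi_{\bG}(r_{\bH},z_{\bH})$ and split on how an optimal replacement path $\pi$ behaves. If $\pi$ avoids every vertex of $\pi_{\bG}(x,z_{\bH})$ and the subtree below it, then $d_{\widehat{\bH}\setminus\pi_{\bG}(x,z_{\bH})}(r_{\bH},t)=d_{\widehat{\bH}\setminus x}(r_{\bH},t)$. Since $\widehat{\bH}_1$ only deletes subtree vertices and adds edges no shorter than what they replace, we also get $d_{\widehat{\bH}\setminus x}\le d_{\widehat{\bH}_1\setminus x}$; this is the first bullet. Otherwise, $\pi$ rejoins the tree below $x$, either on $\pi_{\bG}(x,z_{\bH})$ or in the subtree of $z_{\bH}$, and afterwards it may re-enter $\bH_1$. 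I would cut $\pi$ at its last vertex $u$ lying in $\pi_{\bG}(x,z_{\bH})$ or the subtree. The suffix from $u$ onward is then represented, within a $(1+\varepsilon)$ factor, by a (B) edge. The prefix from $r_{\bH}$ to the point where $\pi$ leaves the tree above $x$ lies entirely in $\bH_1$. This gives the second bullet.

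I expect the main obstacle to be the bookkeeping in the last case, that is, checking that the (B) edges genuinely cover every detour with only a $(1+\varepsilon)$ loss. The first thing I would try is an exchange argument. After $\pi$ hits $\pi_{\bG}(x,z_{\bH})$ at some $u$, the shortest continuation from $u$ can be taken to follow tree edges below $x$, so its cost depends only on $u$ and on the exit edge. This monotone dependence on the depth of $u$ is exactly what lets one round the weights to powers of $(1+\varepsilon)$ and keep only $\Oh(\log_{1+\varepsilon} nM)$ edges per target. If this exchange argument fails against the real construction of \cite{HKIM24}, I would fall back to checking their definition of $\widehat{\bH}_1$ directly.
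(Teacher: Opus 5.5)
Your proposal has a genuine gap. The lemma is about the specific graphs $\widehat{\bH}_1$ built in \cite{HKIM24}; the paper only imports it and describes that construction merely as ``adding new edges and changing weights of the current ones''. You instead prove it for a graph of your own design, and your component (B) is never pinned down: you do not say which vertices of $\pi_{\bG}(r_{\bH},z_{\bH})$ its edges leave from, which routes they summarize, or with which weights. So the second bullet of item~2, the only nontrivial part of the lemma, rests on the sentence that part of $\pi$ ``is represented, within a $(1+\varepsilon)$ factor, by a (B) edge''. That is an assertion, not an argument. Your case split is also misaligned. Bullet~1 only asks that the replacement path avoid the vertices of $\pi_{\bG}(x,z_{\bH})$. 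A path that avoids them but runs through strict descendants of $z_{\bH}$, entered by a non-tree edge, is therefore a bullet-1 path. You put it in your second case, where (A) edges cannot emulate it (using one requires first reaching $z_{\bH}$ past $x$) and bullet~2 need not hold. Moreover, in your second case the suffix after the last vertex $u$ of $\pi_{\bG}(x,z_{\bH})$ or the subtree is the easy part: it lies in $\bH_1\setminus\pi_{\bG}(x,z_{\bH})$, preceded by at most one exit edge. What actually needs a $(1+\varepsilon)$ emulation in $\widehat{\bH}_1\setminus x$ is the prefix that gets below $x$, and nothing in the proposal handles it.

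Second, the inequality $d_{\widehat{\bH}\setminus x}(r_{\bH},t)\le d_{\widehat{\bH}_1\setminus x}(r_{\bH_1},t)$ does not follow from ``all shortcut weights are at least the true $\widehat{\bH}$-path lengths they represent''. You need it for the reverse direction of item~1, and both bullets of item~2 need it, since in this paper a $(1+\varepsilon)$-approximation never undershoots. A single $\widehat{\bH}_1$ serves every failure, so an added edge $(u,w)$ is safe only if the path it summarizes avoids every vertex whose deletion leaves the edge in place. Your (A) edges pass this test, because they summarize a route using only $z_{\bH}$, strict descendants of $z_{\bH}$, and $w$; that is why item~1 works for an (A)-only construction. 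The (B) edges your exchange argument suggests fail it. Such an edge has tail $u\in\pi_{\bG}(x,z_{\bH})$ and summarizes the tree path down to $z_{\bH}$, then the subtree, then the exit edge to $w$. It survives in $\widehat{\bH}_1\setminus x'$ for every $x'\in\pi_{\bG}(u,z_{\bH})\setminus\{u\}$. Following the tree from $r_{\bH}$ to $u$ and then this edge gives an $r_{\bH}\to w$ path in $\widehat{\bH}_1\setminus x'$ whose length is that of a route through $x'$. Hence $d_{\widehat{\bH}_1\setminus x'}(r_{\bH_1},w)$ can be far below $d_{\widehat{\bH}\setminus x'}(r_{\bH},w)$, even finite when the latter is infinite, and rounding weights up does not help. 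With your construction both bullets can therefore fail. A real proof needs the actual definition of $\widehat{\bH}_1$ from \cite{HKIM24}, a check that every added or reweighted edge is safe for all $x\in\bH_1$, and the $(1+\varepsilon)$ bound for replacement paths that meet $\pi_{\bG}(x,z_{\bH})\setminus\{x\}$.
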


\begin{lemma}\label{lem:hkim3}
    Let $\bH \in \mD$.
    For any $x \in \pi_{\bG}(r_{\bH},z_{\bH})$ and $t \in \bH_2 \setminus \{z_{\bH}\}$, we have 
    \[
        d_{\widehat{\bH} \setminus x}(r_{\bH},t) = \min\Big\{ \ d_{\widehat{\bH} \setminus x} (r_{\bH},z_{\bH}) + d_{\widehat{\bH}}(z_{\bH},t) \ , \ d_{\widehat{\bH} \setminus \pi_{\bG}(x,z_{\bH})}(r_{\bH},t) \ \Big\}.
    \]
\end{lemma}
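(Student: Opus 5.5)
The plan is to prove the two inequalities separately, using only the structural property of $\widehat{\bH}$ stated above. That property says the shortest path tree from $r_{\bH}$ in $\widehat{\bH}$ is $\bH$, with the same ancestor/descendant relations, and $d_{\widehat{\bH}}(r_{\bH},\cdot)=d_{\bG}(s,\cdot)$. Consequently the tree path $\pi_{\bG}(r_{\bH},z_{\bH})$ is a shortest path in $\widehat{\bH}$. Moreover, for any ancestor $w$ of $t$ in $\bH$, the tree path from $w$ to $t$ is a shortest $w$–$t$ path in $\widehat{\bH}$, since subpaths of shortest paths are shortest. Since $t\in\bH_2\setminus\{z_{\bH}\}$, $t$ is a strict descendant of $z_{\bH}$, and $z_{\bH}$ is a descendant of $x$. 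If $x=z_{\bH}$, the first term is $+\infty$ and the claim reduces to the easy case below, so we assume $x\neq z_{\bH}$ where needed.

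\emph{Upper bound ($\le$).} We show both terms are lengths of walks in $\widehat{\bH}\setminus x$.
\begin{itemize}
\item Second term: $\widehat{\bH}\setminus\pi_{\bG}(x,z_{\bH})$ is a subgraph of $\widehat{\bH}\setminus x$, since the removed path contains $x$. Hence $d_{\widehat{\bH}\setminus x}(r_{\bH},t)\le d_{\widehat{\bH}\setminus \pi_{\bG}(x,z_{\bH})}(r_{\bH},t)$.
\item First term: we concatenate a shortest $r_{\bH}$–$z_{\bH}$ path in $\widehat{\bH}\setminus x$ with the tree path from $z_{\bH}$ to $t$, which has length $d_{\widehat{\bH}}(z_{\bH},t)$ by the property above. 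This tree path lies in the subtree of $z_{\bH}$, so it avoids $x$.
\end{itemize}
(If one prefers to argue about an arbitrary shortest $z_{\bH}$–$t$ path, positivity of weights gives the same conclusion. If such a path visited $x$, its length would be at least $d(x,t)\ge d(r_{\bH},t)-d(r_{\bH},x)=d(x,z_{\bH})+d(z_{\bH},t)>d(z_{\bH},t)$, a contradiction.)

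\emph{Lower bound ($\ge$).} Let $Q$ be a shortest $r_{\bH}$–$t$ path in $\widehat{\bH}\setminus x$. There are two cases.
\begin{itemize}
\item If $Q$ contains no vertex (equivalently, no edge) of $\pi_{\bG}(x,z_{\bH})$, then $Q$ lives in $\widehat{\bH}\setminus\pi_{\bG}(x,z_{\bH})$. So $|Q|$ is at least the second term.
\item Otherwise, let $w\neq x$ be a vertex of $\pi_{\bG}(x,z_{\bH})$ on $Q$, and split $Q=Q_1\circ Q_2$ at $w$. The suffix satisfies $|Q_2|\ge d_{\widehat{\bH}}(w,t)$. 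Since $w$ is an ancestor of $z_{\bH}$, which is an ancestor of $t$, the tree path gives $d_{\widehat{\bH}}(w,t)=d_{\widehat{\bH}}(w,z_{\bH})+d_{\widehat{\bH}}(z_{\bH},t)$. Now $Q_1$ followed by the tree path $w\to z_{\bH}$ is an $r_{\bH}$–$z_{\bH}$ walk avoiding $x$: the tree path from $w$ downward cannot contain $x$, which is a strict ancestor of $w$. Therefore
\[
d_{\widehat{\bH}\setminus x}(r_{\bH},z_{\bH})+d_{\widehat{\bH}}(z_{\bH},t)\le |Q_1|+d_{\widehat{\bH}}(w,z_{\bH})+d_{\widehat{\bH}}(z_{\bH},t)\le |Q_1|+|Q_2|=|Q|,
\]
so $|Q|$ is at least the first term.
\end{itemize}

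The main point requiring care is the exact-additivity claim $d_{\widehat{\bH}}(w,t)=d_{\widehat{\bH}}(w,z_{\bH})+d_{\widehat{\bH}}(z_{\bH},t)$ for tree ancestors. This is where the modified graph $\widehat{\bH}$ matters. I would derive it from the stated invariant that the shortest path tree of $r_{\bH}$ in $\widehat{\bH}$ is exactly $\bH$, so its tree paths are shortest paths in $\widehat{\bH}$. I would also check that the added edges of $\widehat{\bH}$ keep positive weights, which is needed for the strict inequality in the alternative argument.
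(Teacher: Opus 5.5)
Your proof is correct and follows the paper's route: the same case split on whether a shortest $r_{\bH}$--$t$ path in $\widehat{\bH}\setminus x$ meets $\pi_{\bG}(x,z_{\bH})\setminus\{x\}$. It is somewhat more careful than the paper in two places: you reroute through the tree path $w\to z_{\bH}$ instead of asserting, as the paper does, that the path itself continues through $z_{\bH}$ (which implicitly relies on unique shortest paths), and you prove the $\le$ direction explicitly where the paper leaves it implicit. The only slip is the parenthetical ``equivalently, no edge'': a path can visit $z_{\bH}$ without using any edge of $\pi_{\bG}(x,z_{\bH})$, but your argument only uses the vertex version, which is the right one for vertex failures.
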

\begin{proof}
    While this property is implicit in the general discussion of \cite{HKIM24}, it is not explicitly stated as a standalone lemma or theorem. For the sake of completeness (and because it is short), we provide an argument below.

    Note that if the path $\pi_{\widehat{\bH} \setminus x}(r_{\bH},t)$ goes through any node in $v \in \pi_{\bG}(x,z_{\bH}) \setminus \{x\}$,
    then from $v$ to $t$ the path  coincides with a shortest path from $v$ to $t$ in $\widehat{\bH}$ (or also $\bT$ or $\bG$).
    In particular, it passes through $z_{\bH}$.

    Thus, either $\pi_{\widehat{\bH} \setminus x}(r_{\bH},t)$ goes through $z_{\bH}$ and $d_{\widehat{\bH} \setminus x}(r_{\bH},t) = d_{\widehat{\bH} \setminus x} (r_{\bH},z_{\bH}) + d_{\widehat{\bH}}(z_{\bH},t)$, or $\pi_{\widehat{\bH} \setminus x}(r_{\bH},t)$ avoids all nodes in $\pi_{\bG}(x,z_{\bH})$ and we conclude $d_{\widehat{\bH} \setminus x}(r_{\bH},t) = d_{\widehat{\bH} \setminus \pi_{\bG}(x,z_{\bH})}(r_{\bH},t)$.
\end{proof}

\Cref{lem:hkim1}, \cref{lem:hkim2}, and \cref{lem:hkim3} tell us that we can compute (recursively) an approximation $\tilde{d}_{\widehat{\bH} \setminus x}(r_{\bH}, t)$ of $d_{\widehat{\bH} \setminus x}(r_{\bH}, t)$ using the following recursive formula:
\begin{align}
    \tilde{d}_{\widehat{\bH} \setminus x}(r_{\bH}, t) =
    \begin{cases}
        \tilde{d}_{\widehat{\bH}_2 \setminus x}(r_{\bH_2},t) & \text{if $x,t \in \bH_2 \setminus \{z_{\bH}\}$,} \\
        \tilde{d}_{\widehat{\bH}_1 \setminus x}(r_{\bH_1},t) & \text{if $x \notin\pi_{\bG}(r_{\bH},z_{\bH})$, $t \in \bH_1$,} \\
        \min\left\{ d_{\widehat{\bH} \setminus \pi_{\bG}(x,z_{\bH})}(r_{\bH},t) ,  \tilde{d}_{\widehat{\bH}_1 \setminus x}(r_{\bH_1},t) \right\} & \text{if $x \in \pi_{\bG}(r_{\bH},z_{\bH})$, $t \in \bH_1$,} \\
         \min\left\{ d_{\widehat{\bH} \setminus x} (r_{\bH},z_{\bH}) + d_{\widehat{\bH}}(z_{\bH},t) , d_{\widehat{\bH} \setminus \pi_{\bG}(x,z_{\bH})}(r_{\bH},t) \right\} & \text{if $x \in \pi_{\bG}(r_{\bH},z_{\bH})$, $t \in \bH_2 \setminus \{z_{\bH}\}$.} 
    \end{cases} \label{eq:rec_ecc}
\end{align}

Note, in each level of recursion this formula accumulates at most a $(1+\varepsilon)$ error in approximation (specifically coming from  \cref{lem:hkim2} in the third case of \cref{eq:rec_ecc}). Since the depth of $\mD$ is bounded by $\Oh(\log n)$, we end up with an $(1 + \varepsilon)^{\Oh(\log n)}$-approximation. Thus, by evaluating the formula at the root of $\mD$ we can get an $(1+\varepsilon')$-approximation $\tilde{d}_{\bG \setminus x}(s, t)$ of $d_{\bG \setminus x}(s, t)$ for $\varepsilon' = \Theta(\varepsilon \log n)$.

We remark that to evaluate \cref{eq:rec_ecc} in the third and fourth case, the algorithm of \cite{HKIM24} additionally uses $(1+\varepsilon)$-approximations of $d_{\widehat{\bH} \setminus \pi_{\bG}(x,z_{\bH})}(r_{\bH},t)$ and $d_{\widehat{\bH} \setminus x} (r_{\bH},z_{\bH})$ instead of their real values.
Obviously, this still gives an $(1+\varepsilon')$-approximation.

\subparagraph*{Our adaptation of \cite{HKIM24}.}
For our purposes, we need to view \cref{eq:rec_ecc} in a non-recursive way. Given $x,t \in \bG$ such that $x \in \pi_{\bG}(s,t)$, let $\bH^{(1)}, \ldots, \bH^{(\ell)} \in \mD$ be the root-to-node path in $\mD$ such that $x,t \in \bH^{(i)}$ and $\bH^{(\ell)}$ has no children in $\mD$ which both contain $x,t$.
(That is, until $\bH^{(\ell)}$ the nodes $x,t$ are in the same partition of $\bT$ in $\mD$.) Note that we necessarily have $x \in \pi_{\bG}(r_{\bH^{(\ell)}}, z_{\bH^{(\ell)}})$.

By collecting in the index set $I(x,t)$ all indices $i \in \fragment{1}{\ell}$ such that $x \in \pi_{\bG}(r_{\bH^{(i)}}, z_{\bH^{(i)}})$, \\
we can write out \cref{eq:rec_ecc} as:
\begin{align}
    \tilde{d}_{\bG \setminus x}(s, t) =
    \min \left\{ \
    \min_{i \in I(x,t)} \left\{ \ d_{\widehat{\bH}^{(i)} \setminus \pi_{\bG}(x,z_{\bH^{(i)}})}(r_{\bH^{(i)}},t)
    \ \right\} \ , \
    d_{\widehat{\bH}^{(\ell)} \setminus x} (r_{\bH^{(\ell)}},z_{\bH^{(\ell)}}) + d_{\widehat{\bH}^{(\ell)}}(z_{\bH^{(\ell)}},t)
    \ \right\}.
   \label{eq:rec_ecc_it}
\end{align}

We want to further modify \cref{eq:rec_ecc_it}.
In particular, as in \cite{HKIM24}, we also want to use $(1+\varepsilon)$-approximations of 
$d_{\widehat{\bH}^{(i)} \setminus \pi_{\bG}(x,z_{\bH^{(i)}})}(r_{\bH^{(i)}},t)$ and $d_{\widehat{\bH}^{(\ell)} \setminus x} (r_{\bH^{(\ell)}},z_{\bH^{(\ell)}})$ and we want to use some more structure that lies in these terms.

To better describe this, fix $\bH \in \mD$ and abbreviate $r = r_{\bH}$ and $z = z_{\bH}$. 
For each $t \in \bH$ and $x \in \pi_{\bG}(r,z)$, the $(1+\varepsilon)$-approximation $\bar{d}_{\widehat{\bH} \setminus \pi_{\bG}(x,z)}(r_{},t)$  of $d_{\widehat{\bH} \setminus \pi_{\bG}(x,z)}(r_{},t)$ in \cite{HKIM24} is obtained by combining other $Q = \Oh(\log n)$ values. 
We let the $i$-th value be $q_{\bH}^i(x,t)$ for $i \in \fragment{1}{Q}$\footnote{These values $q_{\bH}^j(x,t)$ in \cite{HKIM24} appear in a slightly different form. In particular, for $i \in \fragment{1}{Q}$, we set $q^{i}_{\bH}(x,t)$ to be equal to $\mathsf{minL}(\Pi(I_{i}^{j}, t))$, where $j$ is such that $x \in I^{i}_{j+1}$. If no such $j$ exists then $q^{i}_{\bH}(x,t) = +\infty$.}.

The values $q_{\bH}^i(x,t)$ have the following properties\footnote{These properties are proven in \cite[Lemma 11 and Lemma 12]{HKIM24}.}:
\begin{enumerate}
    \item Fix $i \in \fragment{1}{Q}$ and $t \in \bH$. Then, the values $q_{\bH}^i(x,t)$ are monotonically decreasing in $x$, as $x$ goes from $r$ to $z$ in the same order as the vertices on $\pi_{\bG}(r,z)$.
    \item Fix again $i \in \fragment{1}{Q}$ and $t \in \bH$. Then, there are at most $\Oh(\log M/\varepsilon)$ nodes $x \in \pi_{\bG}(r,z)$ such that the previous value $q_{\bH}^i(x',t)$ differs from $q_{\bH}^i(x,t)$ ($x'$ is the node on $\pi_{\bG}(r,z)$ preceding $x$). In particular, the values $q_{\bH}^i(x,t)$ can be stored in a compressed way, using only $\Oh(\log M/\varepsilon)$ space. 
    \item For any $x \in \pi_{\bG}(r,z)$ and $t \in \bH$, an $(1+\varepsilon)$-approximation $\bar{d}_{\widehat{\bH} \setminus \pi_{\bG}(x,z)}(r,t)$ to $d_{\widehat{\bH} \setminus \pi_{\bG}(x,z)}(r,t)$ is 
    \[
        \bar{d}_{\widehat{\bH} \setminus \pi_{\bG}(x,z)}(r_{},t) = \min_{i \in \fragment{1}{Q}} q_{\bH}^i(x,t).
    \]
\end{enumerate}
We derive that we get an $(1+\varepsilon')$-approximation $\bar{d}_{\bG \setminus x}(s,t)$ of $d_{\bG \setminus x}(s, t)$ by changing \cref{eq:rec_ecc_it} to 
\begin{align}
    \bar{d}_{\bG \setminus x}(s,t)
    &\coloneqq 
    \min \Big\{ \
    \min_{\substack{i \in I(x,t) \\ j \in \fragment{1}{Q}}} \left\{ \ 
    q_{\bH^{(i)}}^j(x,t)
    \ \right\} \ , \
    \bar{d}_{\widehat{\bH}^{(\ell)} \setminus x} (r_{\bH^{(\ell)}},z_{\bH^{(\ell)}}) + d_{\bH^{(\ell)}}(z_{\bH^{(\ell)}},t)
    \ \Big\} \nonumber \\
    &= 
    \min \Big\{ \
    \min_{\substack{i \in I(x,t) \\ j \in \fragment{1}{Q}}} \Big\{ \ 
     \underbrace{q_{\bH^{(i)}}^j(x,t) - d_{\bH^{(\ell)}}(z_{\bH^{(\ell)}},t)}_{\text{\ref{it:rec_ecc_itq:a}}}
    \ \Big\} \ , \
    \underbrace{\bar{d}_{\widehat{\bH}^{(\ell)} \setminus x} (r_{\bH^{(\ell)}},z_{\bH^{(\ell)}})}_{\text{\ref{it:rec_ecc_itq:b}}} 
    \ \Big\} + d_{\bH^{(\ell)}}(z_{\bH^{(\ell)}},t),
   \label{eq:rec_ecc_itq}
\end{align}
where $\bar{d}_{\widehat{\bH}^{(\ell)} \setminus x} (r_{\bH^{(\ell)}},z_{\bH^{(\ell)}})$ is an $(1+\varepsilon)$-approximation to $d_{\widehat{\bH}^{(\ell)} \setminus x} (r_{\bH^{(\ell)}},z_{\bH^{(\ell)}})$.
\Cref{eq:rec_ecc_itq} is convenient for us because:
\begin{enumerate}[(a)]
    \item For each $i \in I(x,t)$ and $j \in \fragment{1}{Q}$, the first term is monotonically decreasing in $x$ and does not change more than $\Oh(\log n / \varepsilon)$ times, as $x$ follows the order on $\pi_{\bG}(r_{\bH^{(i)}},z_{\bH^{(i)}})$.
    \label{it:rec_ecc_itq:a}
    \item The second term does not depend on $t$.
    \label{it:rec_ecc_itq:b}
\end{enumerate}
These properties come in handy for the following lemma.

\begin{lemma}\label{lem:decapproxecc_comp}
    Let $\bH \in \mD$, and abbreviate $r = r_{\bH}$ and $z = z_{\bH}$.
    Moreover, let  $\bH^{(1)}, \ldots, \bH^{(\ell)}=\bH \in \mD$ be the root-to-$\bH$ path in $\mD$.
    Suppose that we have access to the values:
    \begin{itemize}
        \item $d_{\widehat{\bH}}(z,t)$ for all $t \in \bH_2 \setminus \{z\}$;
        \item $\bar{d}_{\widehat{\bH} \setminus x} (r,z)$ for all $x \in \pi_{\bG}(r, z)$; and
        \item $q_{\bH^{(i)}}^j(x,t)$ (in compressed form) for all $x \in \pi_{\bG}(r, z)$, $t \in \bH \setminus \{z\}$, $i \in I(x,t)$, $j \in \fragment{1}{Q}$.
    \end{itemize}
    Then, in time $\Ohtilde_{\varepsilon}(n_{\bH} \log M)$, where $n_{\bH}$ are the nodes in $\bH$, we can compute for each $x \in \pi_{\bG}(r, z)$ the value 
    \[
        \alpha_{\bH}(x) \coloneqq \max_{t \in \bH_2 \setminus \{z\}} \bar{d}_{\bG \setminus x}(s, t).
    \]
\end{lemma}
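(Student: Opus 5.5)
We plan to sweep $x$ along $\pi_{\bG}(r,z)$ from $r$ towards $z$ while maintaining, for every $t \in \bH_2 \setminus \{z\}$, the inner quantity of \cref{eq:rec_ecc_itq}. The starting observation is that for $t \in \bH_2\setminus\{z\}$ and $x \in \pi_{\bG}(r,z)$, the pair $x,t$ is separated exactly at $\bH$: $x$ lies in $\bH_1$, and $t$ lies strictly inside $\bH_2$ (for $x\neq z$). So $\bH^{(\ell)} = \bH$ in \cref{eq:rec_ecc_itq}. The additive term $d_{\widehat{\bH}}(z,t)$ and the term \ref{it:rec_ecc_itq:b}, namely $\beta(x) \coloneqq \bar{d}_{\widehat{\bH}\setminus x}(r,z)$, are therefore the same for all such $t$. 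Writing
\[
    \gamma(x,t) \coloneqq \min_{i \in I(x,t),\, j\in\fragment{1}{Q}} \bigl(q^j_{\bH^{(i)}}(x,t) - d_{\widehat{\bH}}(z,t)\bigr),
\]
we have $\bar{d}_{\bG\setminus x}(s,t) = \min\{\gamma(x,t),\beta(x)\} + d_{\widehat{\bH}}(z,t)$.

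This is precisely the threshold structure exploited in \cref{lem:save_ecc_approx}. We would split the $t$'s into those with $\gamma(x,t) \le \beta(x)$ and the rest, which gives
\[
    \alpha_{\bH}(x) = \max\Bigl\{ \max_{t:\gamma(x,t)\le \beta(x)} \bigl(\gamma(x,t)+d_{\widehat{\bH}}(z,t)\bigr),\ \beta(x) + \max_{t:\gamma(x,t)>\beta(x)} d_{\widehat{\bH}}(z,t)\Bigr\}.
\]
We would store all $t \in \bH_2\setminus\{z\}$ in one balanced search tree keyed by $\gamma(x,t)$, supporting split/merge and subtree maxima of the two quantities $\gamma(x,t)+d_{\widehat{\bH}}(z,t)$ and $d_{\widehat{\bH}}(z,t)$. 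For each $x$ we split at the key $\beta(x)$, read off the two subtree maxima, and merge back. This costs $\Oh(\log n)$ per node $x$ on the path, and there are at most $n_{\bH}$ such nodes.

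The remaining issue is keeping $\gamma(x,t)$ up to date as $x$ advances. Here we use property \ref{it:rec_ecc_itq:a}. For fixed $i,j,t$, the function $q^j_{\bH^{(i)}}(x,t)$ is monotone decreasing along $\pi_{\bG}(r_{\bH^{(i)}},z_{\bH^{(i)}})$. It is also stored compressed, with $\Oh(\log M/\varepsilon)$ breakpoints. The subpath $\pi_{\bG}(r,z)$ is contained in that longer path whenever $x$ lies on it, so the restriction to our sweep is still monotone with few breakpoints. Membership $i \in I(x,t)$ means $x \in \pi_{\bG}(r_{\bH^{(i)}},z_{\bH^{(i)}})$. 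Along our path this is an interval condition, so each index $i$ enters or leaves the active set $\Oh(1)$ times. Thus $\gamma(x,t)$ is a minimum of $\Oh(\log n\cdot Q)$ piecewise-constant functions of $x$ with $\Ohtilde_\varepsilon(\log M)$ total change events per $t$. We would precompute all events, bucket them by position on the path, and for each $t$ maintain a small heap of the active $q$-values. At each event we update $\gamma(x,t)$ by deleting and reinserting $t$ in the main tree. Summed over $t \in \bH$, this gives $\Ohtilde_\varepsilon(n_{\bH}\log M)$ updates, each taking $\Oh(\log n)$ time, which yields the claimed bound.

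The main obstacle is making the event bookkeeping rigorous. Concretely, we must verify that $I(x,t)$ restricted to $x \in \pi_{\bG}(r,z)$ is well behaved, and that the ancestor values $q^j_{\bH^{(i)}}$ for $i<\ell$ really contribute only polylogarithmically many breakpoints along $\pi_{\bG}(r,z)$. We expect the latter to hold because each ancestor's compressed list restricts to a sublist, but we need to confirm that the endpoint $z$ and the boundary of membership do not create extra events. The tree-threshold part mirrors \cref{lem:save_ecc_approx} and should be routine. Correctness itself just restates \cref{eq:rec_ecc_itq}.
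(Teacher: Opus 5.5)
Your proposal is correct and takes essentially the paper's approach. Like the paper, you key each $t \in \bH_2\setminus\{z\}$ by $\min_{i,j} q^j_{\bH^{(i)}}(x,t) - d_{\widehat{\bH}}(z,t)$, split against $\bar{d}_{\widehat{\bH}\setminus x}(r,z)$ in a split/merge tree with subtree maxima, and sweep $x$ along $\pi_{\bG}(r,z)$, re-keying $t$ only when $I(x,t)$ or a compressed $q$-value changes. The bookkeeping you flag as the main obstacle is settled as in the paper: any ancestor path $\pi_{\bG}(r_{\bH^{(i)}},z_{\bH^{(i)}})$ that meets $\pi_{\bG}(r,z)$ shares a prefix with it, since both are vertical paths in $\bT$ through $r$. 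Hence $I(x,t)$ only shrinks during the sweep, and each compressed list restricts to a sublist with no extra breakpoints.
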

\begin{proof}
    For any $x \in \pi_{\bG}(r, z)$, we maintain all $t \in \bH_2$ in an ordered set $W_x$ that supports split and merge operations, sorted by the key
    \[
        k_{x}(t) \coloneqq \min_{i \in I(x,t), j \in \fragment{1}{Q}} 
        q_{\bH^{(i)}}^j(x,t) - d_{\widehat{\bH}}(z_{\bH},t).
    \]
    By \cref{eq:rec_ecc_itq}, to get $\alpha_{\bH}(x)$ we can divide $W_x$ into $W_x^{>}$ and $W_x^{\leq}$ according to whether $k_{x}(t)$ is strictly greater than or less than/equal to $\bar{d}_{\widehat{\bH} \setminus x} (r_{\bH},z_{\bH})$, respectively.
 Assuming that we maintain subtree maxima, we can query 
    \[
        w^{>}_x \coloneqq \max\nolimits_{t \in W_x^{>}} d_{\widehat{\bH}}(z_{\bH},t) \quad \text{ and } \quad w^{\leq}_x \coloneqq \max\nolimits_{t \in W_x^{\leq}} k_{x}(t) + d_{\widehat{\bH}}(z_{\bH},t).
    \]
    We can get $\alpha_{\bH}(x)$ by computing $\max \{w^{>}_x + \bar{d}_{\widehat{\bH} \setminus x} (r_{\bH},z_{\bH}), w^{\leq}_x\}$.

    To compute all $\alpha_{\bH}(x)$ we initialize $W_x$ for $x=r$ and then update $W_x$ as $x$ follows the path $\pi_{\bG}(r, z)$, each time computing $\alpha_{\bH}(x)$.
    It remains to explain how to update $W_x$ to get $W_{x'}$ where $x'$ is the node that immediately succeeds $x$ in $\pi_{\bG}(r, z)$.
    To this end, for each $t \in \bH_2$, we store $q_{\bH^{(i)}}^j(x,t)$ for all $i \in I(x,t)$ and $j \in \fragment{1}{Q}$. When going from $x$ and $x'$ we check whether $I(x,t) \neq I(x',t)$ or if there are any $i \in I(x,t) \cap I(x',t)$ and $j \in \fragment{1}{Q}$ such that $q_{\bH^{(i)}}^j(x,t) \neq q_{\bH^{(i)}}^j(x',t)$. If the check passes, we recompute $k_{x'}(t)$ from scratch and update $W_{x'}$ accordingly.

    We finally observe that for all $\fragment{1}{\ell}$ if $\pi_{\bG}(r_{\bH^{(i)}}, z_{\bH^{(i)}}) \cap \pi_{\bG}(r, z) \neq \emptyset$, then $\pi_{\bG}(r_{\bH^{(i)}}, z_{\bH^{(i)}})$ and $\pi_{\bG}(r, z)$ must share a prefix. This means that for any $t \in \bH_2 \setminus \{z\}$ we have $I(x,t) \subseteq I(x',t)$ for any $x$ that precedes $x'$ on $\pi_{\bG}(r, z)$. So, the set $I(x,t)$ can only lose elements as $x$ follows the path $\pi_{\bG}(r, z)$.
    
    \subparagraph*{Runtime:}
    For any round $x \in \pi_{\bG}(r, z)$ and for each $t$ we maintain $\Ohtilde(1)$ values of the type $q_{\bH^{(i)}}^j(x,t)$. By the properties of the latter values, we know that over all  $x \in \pi_{\bG}(r, z)$ we have at most $\Ohtilde(1)$ changes in $I(x,t)$ and $\Ohtilde(\log M / \varepsilon)$ changes in the values $q_{\bH^{(i)}}^j(x,t)$.
     Furthermore, for each of the at most $n_{\bH}$ nodes $x \in \pi_{\bG}(r, z)$ we spend $\Ohtilde(1)$ to get $\alpha_{\bH}(x)$. This gives the desired running time of $\Ohtilde_{\varepsilon}(n_{\bH} \log M)$.
\end{proof}

This finally allows us to get to \cref{lem:decapproxecc}.

\decapproxecc
\begin{proof}
    As already explained above, it suffices if we focus on vertex failures in directed graphs.

    We use the same algorithm as \cite{HKIM24}. That is, given $\bH \in \mD$, we find $z_{\bH}$, compute $q_{\bH}^j(x,t)$ for all $x \in \pi_{\bG}(r_{\bH}, z_{\bH})$, $t \in \bH_2 \setminus \{z_{\bH}\}$, $j \in \fragment{1}{Q}$ and $\bar{d}_{\widehat{\bH} \setminus x} (r_{\bH},z_{\bH})$ for all $x \in \pi_{\bG}(r_{\bH}, z_{\bH})$, construct $\widehat{\bH}_1$ and $\widehat{\bH}_2$, and recurse on $\bH_1$ and $\bH_2$.

    The only modification is that  before constructing $\widehat{\bH}_1$ and $\widehat{\bH}_2$, we use \cref{lem:decapproxecc_comp} to compute for each $x \in \pi_{\bG}(r, z)$ the value $\alpha_{\bH}(x)$. (Note that  $d_{\widehat{\bH}}(z_{\bH},t) = d_{\bG}(z_{\bH},t)$ for all $t \in \bH_2 \setminus \{z\}$ so indeed all values are available, assuming we have run Dijkstra's algorithm with source $s$ in the beginning.)
    For $x \in \bG$, let $\bT(x)$ represent the nodes ($x$ excluded) in the subtree of $x$ w.r.t. $\bT$. After finishing the construction, for every $x \in \bG$, we compute $\max\nolimits_{t \in \bT(x)} \bar{d}_{\bG \setminus x}(s, t) = \max\nolimits_{\bH \in \mD \mid x \in \pi_{\bG}(r_{\bH}, z_{\bH})} \alpha_{\bH}(x)$, which is a maximum of at most $\Ohtilde(1)$ values.
   
    On a query for $x \in \bG$, we return $\max\{\max\nolimits_{t \in \bT(x)} \bar{d}_{\bG \setminus x}(s, t), \ecc_{\bG}(s)\}$ as an approximation of $\ecc_{\bG \setminus x}(s)$.
    This indeed yields an $(1+\varepsilon')$-approximation of the eccentricity, as $\ecc_{\bG \setminus x}(s) = \max\{\max\nolimits_{t \in \bT(x)} d_{\bG \setminus x}(s, t), \ecc_{\bG}(s)\}$ and $\bar{d}_{\bG \setminus x}(s, t)$ is a $(1+\varepsilon')$-approximation of $d_{\bG \setminus x}(s, t)$.

    \subparagraph*{Runtime:} Regarding the computational complexity, we observe that outside and within the recursive construction, \cref{lem:decapproxecc_comp} does not asymptotically dominate 
    the execution time of the existing routines in \cite{HKIM24}. Consequently, the total runtime stays the same.
\end{proof}
\section{Approximate Decremental Diameter} 
\label{sec:approx-dec}

Next, we are interested in approximating decremental diameter. In the static setting, one can compute diameter in time $\Ohtilde(mn)$ by running Dijkstra's algorithm from every node and returning the largest eccentricity obtained. In dense graphs with small edge weights there are faster algorithms that take advantage of matrix multiplication, but in graphs where $m = \Ohtilde(n)$ improving over the $\Ohtilde(m^2)$ runtime would refute the Strong Exponential Time Hypothesis (SETH) \cite{roditty-vw-diamrad}.

This motivated much of the work in designing faster approximation algorithms for diameter in sparse graphs. The most important of these algorithms, i.e., \cite{roditty-vw-diamrad}, \cite{Chechik2014BetterAA} and \cite{cgr2016}, have a similar structure: they sample $k$ nodes for some $0 \leq k \leq m$, run Dijkstra's algorithm from each of them, and then return the maximum eccentricity obtained, taking time $\Ohtilde(mk)$. The technically challenging part of such an algorithm is devising a sampling procedure that guarantees that at least one node will have a large eccentricity. More specifically, they all compute a sample defined as follows.

\begin{restatable}{definition}{sample} \label{def:sam}
     Let $\bG = (V,E)$ be an undirected graph, and let $\alpha> 1$, $\beta > 0$, and $k \geq 1$.
     An $(\alpha, \beta, k)$-sample from $V$ is a set of nodes $S \subseteq V$ so that $|S| \leq k$ and, for any $\gamma \geq 0$, either
    \begin{enumerate}[(i)]
        \item for every $v \in V$, we have $d_{\bG}(v,S) \leq (1-1/\alpha)\gamma + \beta$, or
        \item there exists $s \in S$ with $\ecc_{\bG}(s) \geq \gamma/\alpha - \beta$. \qedhere
    \end{enumerate}
\end{restatable}
Here is how such a sample can be used to obtain a $(\alpha, \beta)$-approximation for diameter. Let $\gamma = \diam(\bG)$, and let $a \in V$ so that $\ecc_{\bG}(a) = \diam(\bG)$.
Then:
\begin{enumerate}[(i)]
    \item If there is an $s_a \in S$ so that $d_{\bG}(a,s_a) \leq (1-1/\alpha)\diam(\bG) + \beta$, we combine this with $\ecc_{\bG}(a) \leq \ecc_{\bG}(s_a) + d_{\bG}(a,s_a)$ to conclude $\ecc_{\bG}(s_a) \geq \diam(\bG) - (1-1/\alpha)\diam(\bG) - \beta = \diam(\bG)/\alpha - \beta$.
    \item Otherwise, by \cref{def:sam}, there is $s \in S$ with $\ecc_{\bG}(s) \geq  \diam(\bG)/\alpha - \beta$ as well.
\end{enumerate}
Therefore, returning the maximum eccentricity in $S$ (scaled first multiplicatively by $\alpha$ then additively by $\beta$) yields the desired approximation algorithm. The algorithms of \cite{roditty-vw-diamrad}, \cite{Chechik2014BetterAA} and \cite{cgr2016} work by finding $(\alpha, \beta, k)$-samples for particular values of $\alpha$, $\beta$, and $k$. Our core observation is that these algorithms all compute samples that satisfy the above requirements \emph{for any $\gamma$}, not just for $\gamma = \diam(\bG)$. 

The central theorem we prove in this (sub)section is that we can essentially replicate any algorithm using $(\alpha, \beta, k)$-samples in the decremental setting as long as we have a good decremental algorithm for the eccentricity of a single node.

We restate and prove \cref{thm:sam-alg}:

\diamalg

\begin{proof}
In the preprocessing stage, our algorithm does the following:
\begin{enumerate}[(i)]
    \item Compute an $(\alpha,\beta,k)$-sample $S \subseteq V$.
    \item For each $s \in S$, run the single-node decremental eccentricity algorithm, and for each $e \in E$  compute the $(\alpha',\beta')$-approximation $\Tilde{\ecc}_{\bG\setminus e}(s)$ of $\ecc_{\bG\setminus e}(s)$.
    \item For each $e \in \bG$ compute $\Tilde{\diam}(\bG \setminus e) \coloneqq \alpha \cdot \max_{s \in S}\Tilde{\ecc}_{\bG \setminus e}(s) + \beta$ as approximation of $\diam(\bG \setminus e)$.
\end{enumerate}
On a query for $e \in E$, we return $\Tilde{\diam}(\bG \setminus e)$.

\subparagraph*{Correctness:}

Let $e = \{x,y\} \in E$ be arbitrary.
It suffices to show that $\alpha \cdot \max_{s \in S} \ecc_{\bG \setminus e}(s)+\beta$ is a $(\alpha,\beta)$-approximation of $\diam(\bG \setminus e)$.
To this end, let $a,b \in V$ so that $d_{\bG \setminus e}(a,b) = \diam(\bG \setminus e)$.
We distinguish two cases.

\begin{itemize}
    \item \textbf{Case 1:} $e \notin \pi_{\bG}(a,b)$.
     Then, $\diam(\bG) = \diam(\bG \setminus e)$.
     Before we argued that $\alpha \cdot \max_{s \in S} \ecc_{\bG}(s)+\beta$ is a $(\alpha,\beta)$-approximation of $\diam(\bG)$. We conclude
     \[
        \diam(\bG \setminus e)  =\diam(\bG) \leq \alpha \cdot \max_{s \in S} \ecc\nolimits_{\bG}(s)+\beta \leq \alpha \cdot \max_{s \in S} \ecc\nolimits_{\bG \setminus e}(s)+\beta \leq \alpha \cdot \diam(\bG \setminus e) + \beta.
     \]

    \item \textbf{Case 2:} $e \in \pi_{\bG}(a,b)$.
    Without loss of generality, we assume that $x$ appears before $y$ in $\pi_{\bG}(a,b)$, which means $d_{\bG}(a,x) \leq d_{\bG}(a,y)$ because of the non-negativity of the weights. 
    Further, without loss of generality, we assume $d_{\bG}(a,x) \geq d_{\bG}(y,b)$.
    We further perform a case distinction depending on $s_a \coloneqq \arg\min_{s \in S} d_{\bG}(a,s)$.

    \begin{itemize}
        \item \textbf{Case 2(a):}  $e \notin \pi_{\bG}(a,s_a)$.
        Then, we use the property of $S$ setting $\gamma=\diam(\bG \setminus e)$.
        Consequently, either $d_{\bG}(a,s_a) \leq (1-1/\alpha ) \diam(\bG\setminus e) + \beta$, which implies
        \begin{align*}
            \ecc\nolimits_{\bG \setminus e} (s_a) \geq \ecc\nolimits_{\bG \setminus e} (a) - d_{\bG \setminus e}(a,s_a) \geq \ecc\nolimits_{\bG \setminus e} (a) - d_{\bG}(a,s_a) \geq 1/\alpha \cdot \diam(\bG \setminus e) - \beta,
        \end{align*}
        or we get directly that there is $s \in S$ with $\ecc_{\bG \setminus e}(s) \geq \ecc_{\bG}(s) \geq 1/\alpha \cdot \diam(\bG \setminus e) - \beta$.
        In any case, $\alpha \cdot \max_{s \in S} \ecc_{\bG \setminus e}(s)+\beta$ is a $(\alpha,\beta)$-approximation of $\diam(\bG \setminus e)$.

        \item \textbf{Case 2(b):} $e \in \pi_{\bG}(a,s_a)$. Then, we can assume that $e \notin \pi_{\bG}(b,s_a)$ because if $\pi_{\bG}(b,s_a)$ passes through $x$ then there are shortest paths from $b$ to $x$ and $x$ to $s_a$ that do not use $e$.
        
        Next, we again use the property of $S$ setting $\gamma=\diam(\bG \setminus e)$.
        Thus, we either have that $d_{\bG}(a,s_a) \leq (1-1/\alpha)\diam(\bG \setminus e) + \beta$ or there is $s \in S$ with $\ecc_{\bG \setminus e} (s) \geq \ecc_{\bG} (s) \geq \diam(\bG \setminus e)/\alpha - \beta$. It suffices to show that for the former case, we have $\ecc_{\bG \setminus e}(s_a) \geq \diam(\bG \setminus e)/\alpha - \beta$. Because then, as before, we get that $\alpha \cdot \max_{s \in S} \ecc_{\bG \setminus e}(s)+\beta$ is a $(\alpha,\beta)$-approximation of $\diam(\bG \setminus e)$.
        
        To this end, we observe that
        \begin{align*}
            d_{\bG \setminus e}(b,s_a) 
            = d_{\bG}(b,s_a)
            \leq d_{\bG}(y,s_a) + d_{\bG}(b,y)
            &= (d_{\bG}(a,s_a) - d_{\bG}(a,x) - w(x,y)) + d_{\bG}(b,y) \\
            &\leq d_{\bG}(a,s_a) - d_{\bG}(a,x) + d_{\bG}(b,y) \\
            &\leq d_{\bG}(a,s_a).
        \end{align*}
        We conclude 
        \begin{align*}
            \ecc\nolimits_{\bG \setminus e}(s_a) 
            \geq \ecc\nolimits_{\bG \setminus e}(b) - d_{\bG \setminus e}(b,s_a) 
            &\geq \ecc\nolimits_{\bG \setminus e}(b) - d_{\bG}(a,s_a) \\
            &\geq \diam(\bG \setminus e) - (1-1/\alpha)\diam(\bG \setminus e) - \beta \\
            &\geq  \diam(\bG \setminus e)/\alpha - \beta,
        \end{align*}
        as desired. \qedhere
    \end{itemize}
\end{itemize}
\end{proof}
\section{Approximate Incremental Eccentricity} 
\label{sec:inc-ecc}
In this section we show a general technique to convert an $\alpha$-approximation algorithm for (directed or undirected) all-node eccentricities into an $\alpha$-approximation for a single node eccentricity under 1-edge insertion. We first show how to apply this idea to a simple 2-approximation algorithm for directed and undirected eccentricity (\cref{thm:directedsinglenodeecc}), and then apply it to a slightly more involved $5/3$-approximation for undirected eccentricity (\cref{thm:singlenodeeccundir}). The new incremental algorithm runs in the same preprocessing time as the original all-node eccentricity algorithm and has constant query time. We further show that this approach allows us to compute the approximate eccentricity of $n^\gamma$ vertices, under 1-edge insertion, faster than simply running the single-node eccentricity approximation $n^\gamma$ times using fast matrix multiplication, at the cost of a $+\eps$ to the approximation factor.

We would like to be able to compute $\ecc_{\bG^+}(p)$ in the graph $\bG^+$ obtained by the addition of an edge $(x,y)$, $\bG^+ \coloneqq (V, E\cup \{(x,y)\})$.
Note that if the graph is directed, or in the undirected case if $p$ is closer to $x$ than it is to $y$, we have $d_{\bG^+}(p,v) = \min(d_\bG(p,v), d_\bG(p,x) + w(x,y) + d_\bG(y,v))$. So in fact, the quantity we wish to compute is 
\[{\ecc}_{\bG^+}(p)=\max_{v}(\min(d_\bG(p,v), d_\bG(p,x) + w(x,y) + d_\bG(y,v))).\]
This motivates the following definition. 

\begin{definition}
    Given a value $k$, subset $U \subseteq V$ and vertices $p,y\in V$, let
    \[
    \delta^U_k(p,y) \coloneqq \max\nolimits_{u\in U} \{\min(d_\bG(p,u),d_\bG(y,u) + k)\}. \qedhere
    \]
\end{definition}

Note that if we set $k\coloneqq d_\bG(p,x) + w(x,y)$, we have that $\ecc_{\bG^+}(p) = \delta^V_k(p,y)$. 
We proceed to give an algorithm that after some preprocessing allows us to have query access to such values.

\begin{lemma}\label{clm:computedeltazu}
    Let $\bG = (V,E)$ be a weighted (directed or undirected) graph with edge weights bounded by $M$, let $p \in V$ and $P\subset V$.
    Given two sets $U, Z$, there exists an algorithm that either:
    \begin{enumerate}[(i)]
        \item Runs in $\Ohtilde(|U|\cdot |Z| + m\cdot \min(|U|, |Z|))$ preprocessing time and computes $\delta_k^U(p,z)$ for any $z\in Z$ and \textbf{any} $k$ in $\Ohtilde(1)$ query time. \label{it:computedeltazu:i}
        \item Given $k$ at preprocessing time, runs in $\Ohtilde(\sT_{\maxmin}(|P|, |U|, |Z|) + m \cdot \min(|U|, |P| + |Z|))$ preprocessing time and computes $\delta_k^U(p,z)$ for any $z\in Z, p\in P$ in $\Oh(1)$ query time.\label{it:computedeltazu:ii}
        \item Runs in $\Ohtilde(\sT_{\maxmin}(|P|, |U|, |Z|) + m \cdot \min(|U|, |P| + |Z|))$ preprocessing time and computes a $(1+\eps)$ approximation to $\delta_k^U(p,z)$ for any $z\in Z, p\in P, k\in [Mn]$ in $\Oh(1)$ query time.\label{it:computedeltazu:iii}
    \end{enumerate} 
\end{lemma}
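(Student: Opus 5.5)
The three parts share one preprocessing step: compute the distances $d_\bG(p,u)$ for all $u \in U$ and $d_\bG(z,u)$ for all $z \in Z$, $u\in U$ (and, in \ref{it:computedeltazu:ii} and \ref{it:computedeltazu:iii}, also $d_\bG(p,u)$ for all $p\in P$). If $|U|$ is small, we run Dijkstra from every $u \in U$ on the reverse graph; if $|Z|$ (resp.\ $|P|+|Z|$) is small, we run it from every $z$ (and every $p$). This costs $\Ohtilde(m\cdot\min(\cdot))$, matching the second summands. The key reformulation is the split used in the technical overview. For fixed $p,z$, write $g(u)\coloneqq d_\bG(p,u)-d_\bG(z,u)$. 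Then $\min(d_\bG(p,u),d_\bG(z,u)+k)$ equals $d_\bG(p,u)$ when $g(u)\le k$ and $d_\bG(z,u)+k$ otherwise. Hence
\[
\delta^U_k(p,z)=\max\Big\{\max_{u:\,g(u)\le k} d_\bG(p,u)\ ,\ k+\max_{u:\,g(u)>k} d_\bG(z,u)\Big\}.
\]

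For \ref{it:computedeltazu:i}, $p$ is fixed. For each $z\in Z$ we sort $U$ by $g(u)$ and store prefix maxima of $d_\bG(p,u)$ and suffix maxima of $d_\bG(z,u)$. This takes $\Ohtilde(|U|)$ per $z$, so $\Ohtilde(|U||Z|)$ in total. A query $(z,k)$ binary-searches the threshold $k$ and combines one prefix value and one suffix value in $\Ohtilde(1)$ time, for any $k$.

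For \ref{it:computedeltazu:ii}, $k$ is known but $p$ ranges over $P$, so per-pair sorting would cost $|P||U||Z|$, which is too much. Instead I would rewrite $\delta$ as a single max over $u$ of a minimum, namely $\max_u \min(A[p,u], B[u,z])$ with $A[p,u]=d_\bG(p,u)$ and $B[u,z]=d_\bG(z,u)+k$. This is exactly the max-min product $A\maxminprod B$, computable in $\sT_{\maxmin}(|P|,|U|,|Z|)$ time. Queries then read single entries in $\Oh(1)$. The definition of $\delta$ already has exactly this form, so this part is immediate once the distances are available.

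For \ref{it:computedeltazu:iii}, $k$ is unknown and ranges over $[Mn]$. I would round $k$ to powers of $(1+\eps)$, giving $\Oh(\log_{1+\eps}(Mn))$ candidate values $k_j$, and compute one product per $k_j$ as in \ref{it:computedeltazu:ii}. On a query we use the smallest $k_j\ge k$. Since $\delta^U_k$ is monotone nondecreasing in $k$, and $\min(a,b+k')\le(1+\eps)\min(a,b+k)$ whenever $k\le k'\le(1+\eps)k$, the answer is a $(1+\eps)$-approximation. The query needs $\Oh(1)$ time, so we precompute the bucket index through a lookup table or an $\Oh(1)$ logarithm computation. The logarithmic number of products is absorbed into $\Ohtilde$ with $\eps$-dependence.

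The main obstacle is reconciling the claimed bound in \ref{it:computedeltazu:iii} with the $\log_{1+\eps}(nM)$ factor, and making the query time strictly $\Oh(1)$. I would accept $\Ohtilde_\eps$ hidden factors, possibly replacing the explicit $k_j$ enumeration by rounding distances instead. The other point to check is that the distance-computation direction in \ref{it:computedeltazu:ii} supports $\min(|U|,|P|+|Z|)$ Dijkstra runs.
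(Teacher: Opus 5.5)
Your proposal is correct and follows the paper's proof essentially step for step. The paper likewise sorts $U$ by $d_\bG(p,u)-d_\bG(z,u)$ with prefix/suffix maxima and a binary search for (i), and uses a single max-min product of $A[p,u]=d_\bG(p,u)$ and $B_k[u,z]=d_\bG(z,u)+k$ for (ii). For (iii) it runs (ii) for the $\Oh(\log_{1+\eps}(Mn))$ powers of $(1+\eps)$, as you do. Your choice to round $k$ \emph{up} to the smallest $k_j\ge k$ and invoke monotonicity is slightly more careful than the paper's ``closest power'' rounding: it directly gives the one-sided guarantee $\delta^U_k(p,z)\le\delta^U_{k_j}(p,z)\le(1+\eps)\,\delta^U_k(p,z)$ that the paper's definition of approximation requires.
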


\begin{proof} 
Note that we can obtain \ref{it:computedeltazu:iii} as a direct corollary of \ref{it:computedeltazu:ii}. Use \ref{it:computedeltazu:ii} to preprocess $\delta_k^U$ for $\Ohtilde(1)$ values of $k$, $k \in  \{(1+\eps)^i\}_{i=1}^{\log_{1+\eps}(Mn)}$. Upon query - $p,z,k$, we can pick the value of $(1+\eps)^i$ closest to $k$ to query $\delta_{(1+\eps)^i}^U(p,z)$ and obtain a $(1+\eps)$ approximation to $\delta_k^U(p,z)$. We are left to prove \ref{it:computedeltazu:i} and \ref{it:computedeltazu:ii}.

We first show \ref{it:computedeltazu:i}. We run the following algorithms at preprocessing and query time, described in pseudocode in \cref{alg:deltaguzprep} and \cref{alg:deltaguzquery} respectively. 

    \subparagraph*{Preprocessing:} Begin by running Dijkstra's algorithm in $\bG$ from all the points in $Z$ or to all the points in $U$, obtaining all the distances from $Z$ to $U$. This takes $\Ohtilde(m\cdot \min(|U|, |Z|))$ time.

    Next, for every $u\in U$ and $z\in Z$ define $\Delta_z(u) \coloneqq d_\bG(p,u) - d_\bG(z,u)$. Note that $\min (d_\bG(p,u), d_\bG(z,u) + k) = d_\bG(p,u)$ if and only if $\Delta_z(u) \leq k$. Therefore,
    \begin{align} \label{eq:delta}
    \begin{split}
        \delta_k^U(p,z) &= \max_{u\in U} \Big\{ \ \min\{d_\bG(p,u), d_\bG(z,u) + k\} \ \Big\}
        \\ &= \max\Big\{ \max_{u\in U : \Delta_z(u) \leq k} \{d_\bG(p,u) \},
        \max_{u\in U : \Delta_z(u) > k} \{d_\bG(z,u) + k \}\Big\}.
    \end{split}
    \end{align}

    To be able to compute this value for any $k$, we sort the vertices in $U$ in increasing order with respect to $\Delta_z(u)$. Denote the ordered vertices with respect to $z$ by $u_1^z, u_2^z, \ldots, u_{|U|}^z$. Running over these vertices once in ascending order we compute $a_i(z) := \max_{j\le i} d_\bG(p, u_j^z)$ for every $i\leq |U|$. Similarly, running over the vertices once in descending order we compute $b_i(z):= \max_{j > i} d_\bG(z, u_i^z)$ for every $i\leq |U|$.

    \subparagraph*{Query:} Let $i$ be the largest index such that $\Delta_z(u_i^z)\leq k$. Then $a_i(z) = \max_{u\in U : \Delta_z(u) \leq k}d_\bG(p,u)$ and $b_i(z) = \max_{u\in U : \Delta_z(u) > k} \{d_\bG(z,u) + k\}$. Therefore, we can output $\delta_k^U(p,z) = \max(a_i(z), b_i(z) + k)$.\\

    \begin{algorithm}[H]
    \caption{\prepdelta{p,U,Z}}\label{alg:deltaguzprep}

    Run Dijkstra from $p$ and all $z\in Z$ or to all $u\in U$ to obtain $d_\bG(z,u)$ for all $u\in U, z\in Z$\;

    \For{$z\in Z$} {
    Sort $u\in U$ in increasing order with respect to the values $\Delta_z(u):=d_\bG(p,u) - d_\bG(z,u)$\;
     
    Store the ordering $u_1^z,\dots, u_{|U|}^z$ and the values $\Delta_z(u_i^z)$ for all $i\leq |U|$\;

     \For{$i\leq |U|$}{
        Store $a_i(z) := \max_{j\le i} d_\bG(p,u_j^z)$ and $b_i(z):= \max_{j > i} d_\bG(z, u_j^z)$\;
     }
    }
    \end{algorithm}

    \begin{algorithm}[H]
    \caption{\querydelta{p,z,U,k}} \label{alg:deltaguzquery}
     Use binary search on $u_1^z,\dots, u_{|U|}^z$ ordered with respect to $\Delta_z(u)$ for all $u\in U$ to find the largest $i\leq |U|$ such that $\Delta_z(u_i^z)\le k$\;
     Output $\delta_k^U(p,z) \gets \max(a_i(z), b_i(z)+k)$\;
    \end{algorithm}

    \subparagraph*{Runtime:} Computing all distances in the preprocessing stage takes $\Ohtilde(m\cdot \min(|U|, |Z|))$. Next, for every $z\in Z$ we calculate $|U|$ different values of $\Delta_z(u)$ and sort $U$ accordingly in $\Ohtilde(|U|)$ time. In runtime $|U|$ we then compute and store $a_i(z)$ and $b_i(z)$ for every $i\leq |U|$. This gives us a total preprocessing time of $\Ohtilde(|U|\cdot |Z| + m\cdot \min(|U|, |Z|))$. In the query, we only need to perform a binary search over $|U|$ values of $i$ to find the right index and then perform $\Oh(1)$ computations to return $\delta_k^U$. This yields a query time of $\Ohtilde(1)$.

    \subparagraph*{Correctness:} As noted, for every $z\in Z$ and maximal $i\le |U|$ such that $\Delta_z(u_i^z) \leq k$ we have $a_i(z) = \max_{u\in U : \Delta_z(u) \leq k}d_\bG(p,u)$ and $b_i(z) = \max_{u\in U : \Delta_z(u) > k} \{d_\bG(z,u) + k\}$. Therefore, by \cref{eq:delta}, we have that our output $\max(a_i(z), b_i(z) + k)$ is equal to $\delta_k^U(p,z)$.

    \subparagraph*{Algorithm for a Larger Set:} We now prove \ref{it:computedeltazu:ii}. With slight abuse of notation we will refer to this algorithm also as $\prepdelta{P,U,Z,k},\querydelta{p,z,U,k}$ noting the effect of $P$ being a set on the runtime and approximation guarantees. 

    During preprocessing, run Dijsktra's algorithm to and from either $U$ or $P\cup Z$ to compute all distances from $P\cup Z$ to $U$. Construct a $|P|\times |U|$ sized matrix $A$ by $A[p,u] = d_\bG(p,u)$ and a $|U|\times |Z|$ sized matrix $B_k$ by $B_k[u,z] = d_\bG(z,u) + k$. Compute their $\maxmin$ product 
    $C_k \coloneqq A \maxminprod B_k$. Upon query $p,z,U,k$ output $C_k[p,z]$.

    Clearly the query algorithm runs in constant time. The preprocessing is dominated by the Dijkstra's search and computing the $\maxmin$ product, giving $\Ohtilde(\sT_{\maxmin}(|P|, |U|, |Z|) + m \cdot \min(|U|, |P| + |Z|))$ time. Correctness follows by definition of the $\delta_k^U(p,z)$ and the $\maxmin$ product.
 \end{proof}
 
\cref{clm:computedeltazu} allows us to use an all-node eccentricity approximation algorithm in a white-box way to instead compute an approximation for $\delta_{k}^V(p,y)$ for any $y\in V$ and any value of $k$.  
To this end, we first recall the (static) 2-approximation algorithm for directed all-node eccentricities of Backurs, Roditty, Segal, Vassilevska Williams and Wein \cite{backurs2018}. 

\begin{theorem}[Theorem 22 in the full version of \cite{backurs2018}]\label{thm:allnodeecc2approx}
    Given a weighted, directed, $m$ edge $n$ node graph $\bG$, in $\Ohtilde(m\sqrt{n})$ time we can output quantities $\eps'(v)$ such that for all $v\in V$ we have $\ecc_\bG(v)/2 \leq \eps'(v) \leq \ecc_\bG(v)$. 
\end{theorem}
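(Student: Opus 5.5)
We prove this with a sampling argument plus one "far" vertex, in the spirit of the classical Roditty--Vassilevska Williams diameter algorithm. For a vertex $u$, let $N^{in}(u)$ be the set of the $\sqrt{n}$ vertices $x$ with smallest $d_\bG(x,u)$, ties broken arbitrarily. Let $r^{in}(u) \coloneqq \max_{x \in N^{in}(u)} d_\bG(x,u)$, so every $x$ with $d_\bG(x,u) < r^{in}(u)$ lies in $N^{in}(u)$.

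\textbf{Algorithm.} Sample $S \subseteq V$ of size $\Theta(\sqrt{n}\log n)$ uniformly at random. By a union bound over the $n$ vertices $u$, whp $S \cap N^{in}(u) \neq \emptyset$ for every $u$ simultaneously. Quantifying over all $u$ matters: the vertex $w$ chosen below depends on $S$, so we cannot argue about a fixed $w$. Run Dijkstra out of every $s \in S$ and into every $s \in S$; this gives $d_\bG(s,u)$ and $d_\bG(u,s)$ for all $u$. Let $w \coloneqq \arg\max_{u} \min_{s \in S} d_\bG(s,u)$ be the vertex farthest from $S$. Run Dijkstra into $w$, and compute $N^{in}(w)$ with a truncated reverse Dijkstra. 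From every $x \in N^{in}(w)$ run an outgoing Dijkstra, which yields $\ecc_\bG(x)$ exactly. Output $\eps'(v) \coloneqq \ecc_\bG(v)$ if $v \in N^{in}(w)$. Otherwise output $\eps'(v) \coloneqq \max\{\max_{s\in S} d_\bG(v,s),\, d_\bG(v,w)\}$. There are $\Ohtilde(\sqrt n)$ Dijkstra runs in total, so the running time is $\Ohtilde(m\sqrt{n})$.

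\textbf{Correctness.} Every output is a maximum of true distances out of $v$, so $\eps'(v) \le \ecc_\bG(v)$ always holds. For the lower bound, fix $v \notin N^{in}(w)$, let $D = \ecc_\bG(v)$ and $d_\bG(v,t) = D$, and assume for contradiction that $\eps'(v) < D/2$.
\begin{itemize}
\item For every $s \in S$ we have $d_\bG(v,s) < D/2$. The triangle inequality $D \le d_\bG(v,s) + d_\bG(s,t)$ then gives $d_\bG(s,t) > D/2$, hence $\min_{s\in S} d_\bG(s,t) > D/2$.
\item By the choice of $w$, this implies $\min_{s\in S} d_\bG(s,w) > D/2$.
\item Since $S$ hits $N^{in}(w)$, some $s \in S \cap N^{in}(w)$ satisfies $r^{in}(w) \ge d_\bG(s,w) > D/2$.
\item We also assumed $d_\bG(v,w) < D/2$. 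Combined with the previous step, $d_\bG(v,w) < r^{in}(w)$, so $v \in N^{in}(w)$. This contradicts $v \notin N^{in}(w)$.
\end{itemize}
Hence $\eps'(v) \ge \ecc_\bG(v)/2$ for every $v$.

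\textbf{Main obstacle.} The delicate step is the choice of $w$ together with the ball around it. We must argue about an unknown far vertex $t$ that differs for each $v$, and a single fixed $w$ has to stand in for all these $t$'s. Choosing $w$ to maximize the distance from $S$ achieves this, because "far from $S$" is exactly what the contradiction hypothesis forces on $t$. The remaining issues are minor: getting the direction of the balls right (in-balls around $w$, out-Dijkstra from the ball's members) and making the hitting property hold for every possible $w$, as noted above.
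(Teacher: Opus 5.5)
Your proposal is correct and follows essentially the same argument as the paper: sample $S$, take $w$ farthest from $S$, compute exact eccentricities on the $\sqrt{n}$-size in-ball around $w$, output $\max_{s\in S\cup\{w\}} d_\bG(v,s)$ for every other $v$, and derive the same triangle-inequality contradiction. Your explicit union bound over all $u$ is a welcome extra: it is needed because $w$ depends on $S$, and the paper leaves it implicit when it says ``$S$ intersects $T$ w.h.p.''
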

\begin{proof}
    Begin by sampling a random set $S$ of size $|S| = \Oh(\sqrt{n} \log n)$ and running Dijkstra's algorithm to and from every point $s\in S$. Let $w$ be the vertex that maximizes $d_\bG(S,w)$ and denote by $T$ the incoming $\sqrt{n}$ neighborhood of $w$ (the first $\sqrt{n}$ nodes reached when running Dijkstra's algorithm out of $w$ following incoming edges). Run Dijkstra's algorithm to and from every point in $T$ to compute its exact eccentricity and all distances to the point. For every point $v\notin T$ output $\eps'(v) = \max_{s\in S \cup \{w\}}d_\bG(v, s)$. 

    First we note that the estimate $\eps'(v)$ is the value of a true distance in the graph and so $\eps'(v)\leq \ecc_\bG(v)$.  
    
    For any point in $T$ we output the correct eccentricity with no approximation factor. For any $v\notin T$, if there exists a point $s\in S\cup \{w\}$ such that $d_\bG(v,s) \geq \frac{\ecc_\bG(v)}{2}$, then we are finished since $\eps'(v)\geq d_\bG(v,s)$. Otherwise, let $v'$ be a point that achieves $d_\bG(v,v') = \ecc_\bG(v)$, then for any $s\in S$ we have by the triangle inequality that $d_\bG(s,v') > \frac{\ecc_\bG(v)}{2}$. Therefore, $d_\bG(S,v') > \frac{\ecc_\bG(v)}{2}$ and so $d_\bG(S,w) >\frac{\ecc_\bG(v)}{2}$ by our choice of $w$. Now, since S intersects $T$ w.h.p, we have that $T$ contains all points $u$ of distance $d_\bG(u, w)\leq \frac{\ecc_\bG(v)}{2}$. 
    Since $v \notin T$, we have $d_\bG(v,w) > \ecc_\bG(v)/2$ and so we are done because we output $\eps'(v) \geq d_\bG(v,w) > \ecc_\bG(v)/2$.
\end{proof}

We now adapt the above algorithm to get \cref{thm:directedsinglenodeecc}. The idea of the proof will be to use the above approach but instead of approximating $\ecc_\bG(v)$ for every $v\in V$ we will approximate $\delta_k^V(p,v)$ using the tool of \cref{clm:computedeltazu}.

\directedsinglenodeecc

\begin{proof}
    We begin by proving \ref{it:directedsinglenodeecc:i} for the case where $P=\{p\}$ and then generalize to any set of points to get $\ref{it:directedsinglenodeecc:ii}$.
    Following along the proof of \cref{thm:allnodeecc2approx}, we start by sampling a set $S$ of size $|S| = \Oh(\sqrt{n}\log n)$ and running Dijkstra's algorithm from the set $S$ and point $p$. 
    Now define $w_p$ to be the point that maximizes $\min(d_\bG(S, w_p),\frac{1}{2}d_\bG(p,w_p))$ and define $T_p$ to be the incoming neighborhood of $w_p$ of size $n\log n / |S| = \sqrt{n}$. We then invoke $\prepdelta{p,S\cup \{w_p\},V}$ and
    $\prepdelta{ p,V, T_p}$ that allows us to have $\Ohtilde(1)$-time access for every $k$ to $\delta_k^V(p,s)$ for every $s\in T_p$ and $\delta_k^{S\cup \{w_p\}}(p,v)$ for every $v\in V$. This completes the preprocessing stage.

    After adding an edge $(x,y)$, we can query the approximate eccentricity of $p$ as follows. Take $k_p \coloneqq d_\bG(p,x) + w(x,y)$. If $y\in T_p$, return $\delta_{k_p}^V(p,y)$. Otherwise, return $ \delta_{k_p}^{S\cup \{w_p\}}(p,y)$. \\

    If instead of a single point $p$ we had a set $P$ we note that much of the preprocessing can be shared among all the vertices. We sample a larger set $S$ (the size of which we will determine later) and run Dijkstra's algorithm from the set $S$ and from every point $p\in P$. 
    Now for every point $p\in P$ we define $w_p, T_p$ as above - the point that maximizes $\min(d_\bG(S, w_p),\frac{1}{2}d_\bG(p,w_p))$ and the incoming neighborhood of $w_p$ of size $n\log n / |S|$. We then invoke $\prepdelta{P,S,V}$ and
    $\prepdelta{p,V,T_p}, \prepdelta{p, \{w_p\}, V}$ for every $p\in P$, completing the preprocessing stage.

    After adding an edge $(x,y)$, we can query the approximate eccentricity of any vertex $p\in P$ as noted by taking $k_p \coloneqq d_\bG(p,x) + w(x,y)$. If $y\in T_p$, return our estimate for $\delta_{k_p}^V(p,y)$. Otherwise, return our estimate for $\max(\delta_{k_p}^{S}(p,y),\delta_{k_p}^{\{w_p\}}(p,y))$. 

    \subparagraph*{Correctness:} 
    Let $p \in P$ be an arbitrary vertex, and for sake of brevity abbreviate $w_p,T_p$ with $w,T$.
    As observed before, if $y \in T$ we output the correct answer.
    If $y\notin T$, denote by $\delta \coloneqq \delta_k^V(p,y) = \ecc_{\bG^+}(p)$. If there exists a point $s\in S\cup \{w\}$ such that $\min(d_\bG(p,s), d_\bG(y,s) + k) \geq \frac{\delta}{2}$ we are finished. Otherwise, let $y'$ be a point such that 
    \[
    \min(d_\bG(p,y'), d_\bG(y,y') + k) = \delta = \max_{v\in V} \{\min (d_\bG(p,v), d_\bG(y,v) + k) \}. 
    \]

    If $\delta_k^{S}(p,y)<\frac{\delta}{2}$ then any point $s\in S$ has either $d_\bG(p,s) < \frac{\delta}{2}$ or $d_\bG(y,s) + k < \frac{\delta}{2}$. In either case, by the triangle inequality we get that $d_\bG(s,y') > \frac{\delta}{2}$.
    Furthermore $d_\bG(p, y') \geq \delta$. Therefore the point $y'$ satisfies $\min (d_\bG(S, y'), \frac{1}{2} d_\bG(p,y'))\geq \frac{\delta}{2}$. By our choice of $w$ we therefore have that $w$ satisfies the same inequality and so $d_\bG(S,w) \geq \frac{\delta}{2}$ and $d_\bG(p,w) \geq \delta$. By the same argument as in the original proof, $T$ contains every point $u$ of distance $d_\bG(u,w) < \frac{\delta}{2}$. As we assume that $y\notin T$, we conclude $d_\bG(y, w) \geq \frac{\delta}{2}$. Therefore,
    \[
    \min(d_\bG(p, w), d_\bG(y, w) + k) \geq \min\left(\delta, \delta/2 + k\right) \geq \delta/2.
    \]
    And so $\delta_k^{S\cup \{w\}}(p,y)$ gives a 2-approximation to $\ecc_{\bG^+}(p) = \delta_k^V(p,y)$. As \cref{clm:computedeltazu} guarantees a $(1+\eps)$ approximation to $\delta_k^U(p,z)$ when $P$ is a set, we obtain a $(2+\Oh(\eps))$ approximation in \ref{it:directedsinglenodeecc:ii}. \\

    \subparagraph*{Runtime \ref{it:directedsinglenodeecc:i}:}
    In our preprocessing stage we run Dijkstra from at most $\Ohtilde(\sqrt{n})$ different vertices.
    Moreover, when using \cref{clm:computedeltazu}, since $|T_p| = \sqrt{n}$, we do not use more than time $\Ohtilde(n^{1.5} + m \cdot \sqrt{n}) = \Ohtilde(m\sqrt{n})$ time. Lastly, in our query algorithm we spend time $\Ohtilde(1)$.
    
    \subparagraph*{Runtime \ref{it:directedsinglenodeecc:ii}:}
    In our preprocessing stage we run Dijkstra's from $\Oh(|P|)$ nodes and then execute $\prepdelta{P,S,V}$ and $|P|$ calls to $\prepdelta{p, V, T_p}$. Thus our total runtime is:
    \begin{align*}
        &\Ohtilde\left(m\cdot n^\gamma + 
        m\cdot |S| + \sT_{\maxmin}(n^\gamma, |S|, n) +
        n^\gamma\cdot \left(m\cdot \frac{n}{|S|} + \frac{n^2}{|S|}\right)\right) = \\
        &\Ohtilde\left(m\cdot \left(|S| + \frac{n^{1+\gamma}}{|S|}\right) + \sT_{\maxmin}(n^\gamma, |S|, n)\right).
    \end{align*}

    By \cref{thm:maxmintime},
    \[
        \sT_{\maxmin}(n^\gamma, |S|, n) \leq n^{1-\gamma}\cdot \frac{|S|}{n^\gamma}\cdot \sT_{\maxmin}(n^\gamma, n^\gamma, n^\gamma)\leq n^{1-2\gamma}|S|\cdot n^{\gamma \cdot \frac{3+\omega}{2}} = |S| \cdot n^{1 + \frac{\omega-1}{2}\gamma}.
    \]

    Setting $|S| = \max \left(n^{\frac{1+\gamma}{2}}, \frac{\sqrt{m}}{n^{\frac{3-\omega}{4}}\gamma}\right)$ we obtain a preprocessing time of $\Ohtilde(mn^{\frac{1+\gamma}{2}}+\sqrt{m}\cdot n^{1 + \frac{\omega+1}{4}\gamma})$.
    
    Lastly, in our query algorithm we spend constant time each for $|P|$ queries.

    \subparagraph*{Algorithm for undirected graphs:} It remains to give the proof of \ref{it:directedsinglenodeecc:iii}. To this end, consider the following simple, linear time 2-approximation for all-node undirected eccentricity. Take an arbitrary vertex $w$ and compute its shortest path tree in $\Ohtilde(m)$ time. For every vertex $v$ output $\eps'(v) = \max(d_\bG(v,w), \ecc_\bG(w) - d_\bG(v,w))$ and we have that $\frac{\eps'(v)}{2}\leq \ecc_\bG(v) \leq \eps'(v)$. We can apply the above technique to this algorithm as well - compute $\delta_k^{\{w\}}(p,v)$ for all $v\in V$ and $\delta_k^V(p,w)$. After the addition of an edge $(x,y)$, such that $x$ is closer to $p$ than $y$, output $\max (\delta_k^{\{w\}}(p,v), \delta_k^V(p,w) - \delta_k^{\{w\}}(p,v))$. Runtime and correctness follow from \cref{clm:computedeltazu}.
\end{proof}

We now apply this technique to the more involved $5/3$-approximation to all-node undirected eccentricities of Chechik et al. \cite{Chechik2014BetterAA} to show that more steps in eccentricity approximation algorithms can be included in this approach. We obtain the following result.

\singlenodeeccundir

\begin{proof}
    For simplicity, in the proof of this theorem we assume that the graph is unweighted and  $\ecc_{\bG^+}(p)$ is divisible by 5 for every $p\in P$, in which case we obtain a purely multiplicative error. In a general weighted graph with maximum edge weight $M$ our algorithm can incur an additive error of at most $+2M/3$. We discuss this additional error and ways to avoid it at the end of the proof.
    
    \subparagraph*{The algorithm:} The algorithm is similar to \cref{thm:directedsinglenodeecc}\ref{it:directedsinglenodeecc:i}. 
    We begin the same way, by sampling a set $S$  and running Dijkstra's algorithm from the set $S$ and every $p\in P$. This allows us to find for each $p \in P$, the point $w_p$ that maximizes $\min (d_\bG(S,w_p), \frac{2}{5}d_\bG(p,w_p))$, breaking ties by $d_\bG(S, \cdot)$. For each $p \in P$, we run Dijkstra's algorithm from $w_p$ to find the closest $n\log n / |S| $ nodes to $w_p$, denote this set by $T_p$. Next, we run Dijkstra's from the set $T_p$ to find for every vertex its nearest neighbor in the set. Finally, we call $\prepdelta{P,S,V}$ and $\prepdelta{p,V,T_p},\prepdelta{p,w_p,V}$ for every $p\in P$.

    After adding an edge $(x,y)$, we can query the approximate eccentricity of a vertex $p\in P$ as follows. We set $k_p = d_\bG(p,x) + w(x,y)$ and assume without loss of generality that $p$ is closer to $x$ than to $y$.
    Given $y$ and $k$, find the closest point $z_p$ to $y$ in $T_p$. Query $\querydelta{p,y, S\cup \{w_p\}, k_p}$ and $\querydelta{p,z_p,V,k_p}$ to obtain an exact value or estimate of $\delta_{k_p}^{S\cup \{w_p\}}(p,y)$ and $\delta_{k_p}^V(p,z_p)$, respectively. Finally, we can compute the desired approximation,
     \[
    \Tilde{\delta}_{k_p}^V(p,y) \coloneqq \max \left(\delta_{k_p}^{S\cup \{w_p\}}(p,y), \delta_{k_p}^V(p,z_p) - d_\bG(z_p,y)\right).
    \]
    \subparagraph*{Correctness:}
    Let $p \in P$ be an arbitrary vertex, and for sake of brevity abbreviate $w_p,T_p, z_p$ with $w,T,z$.
    Further, let $v\in V$ be a point such that $\min(d_\bG(p, v), d_\bG(y,v) + k) = \delta_k^V(p,y) = \ecc_{\bG^+}(p)$. If there exists a point $s\in S$ such that $d_\bG(s,v) \leq \frac{2}{5} \delta_k^V(p,y)$ then by the triangle inequality $\min(d_\bG(p,s), d_\bG(y,s) + k) \geq \frac{3}{5}\delta_k^V(p,y)$. Therefore in this case we have, $
    \delta_k^S(p,y) \geq \min(d_\bG(p,s), d_\bG(y,s) + k) \geq  3/5 \cdot\delta_k^V(p,y)$.

    If this is not the case, then $d_\bG(S,v) > \frac{2}{5}\delta_k^V(p,y)$. Since $d_\bG(S,v) > \frac{2}{5} \delta_k^V(p,y)$ and $d_\bG(v,p)\geq \delta_k^V(p,y)$ we know that these inequalities hold for the point $w$ as well.
    If $d_\bG(w,y) \geq \frac{3}{5}\delta_k^V(p,y)$, then $\delta_k^{\{w\}}(p,y) = \min(d_\bG(p,w), d_\bG(y,w) + k) \geq \frac{3}{5}\delta_k^V(p,y)$. Otherwise, we have $d_\bG(w,y) < \frac{3}{5}\delta_k^V(p,y)$ and so there exists a point $z$ on the shortest path between $w$ and $y$ such that $d_\bG(w,z) = \frac{2}{5}\delta_k^V(p,y)$  and so $d_\bG(z,y) \leq \frac{1}{5}\delta_k^V(p,y)$. It is at this point that we assume that $\frac{2}{5}\delta_k^V(p,y)$ is an integer and the graph is unweighted so such a point $z$ exists. Without these assumptions this step incurs an additive error which we discuss at the end of the proof.
    
    With high probability, the set $S$ hits $T$. Therefore, $T$ contains a point of distance $> \frac{2}{5}\delta_k^V(p,y)$ from $w$ and so it contains all points of distance $\leq \frac{2}{5}\delta_k^V(p,y)$ from $w$ including $z$. We will use the following claim. 
    
    \begin{claim}\label{clm:closedeltas}
        Any pair of points $y,z\in V$ satisfy,
        $\delta_k^V(p,y) \geq \delta_k^V(p,z) - d_\bG(z,y) \geq \delta_k^V(p,y) - 2 d_\bG(z,y)$.
    \end{claim}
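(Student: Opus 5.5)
The plan is to prove both inequalities termwise: fix any $u\in V$, compare the two functions $f_y(u) \coloneqq \min(d_\bG(p,u), d_\bG(y,u)+k)$ and $f_z(u)$, and then take the maximum over $u$. The only tool needed is the triangle inequality in the undirected graph $\bG$, namely $|d_\bG(y,u) - d_\bG(z,u)| \le d_\bG(z,y)$. The first argument $d_\bG(p,u)$ of the minimum is the same for $y$ and $z$.

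\textbf{First inequality.} We want $\delta_k^V(p,y) \ge \delta_k^V(p,z) - d_\bG(z,y)$. From $d_\bG(y,u) \ge d_\bG(z,u) - d_\bG(z,y)$ we get
\[
f_y(u) \ge \min\bigl(d_\bG(p,u),\, d_\bG(z,u)+k - d_\bG(z,y)\bigr) \ge \min\bigl(d_\bG(p,u),\, d_\bG(z,u)+k\bigr) - d_\bG(z,y).
\]
The last step uses $d_\bG(z,y)\ge 0$: subtracting a nonnegative amount from the first argument of a minimum can only lower it. Taking the maximum over $u$ gives $\delta_k^V(p,y) \ge \delta_k^V(p,z) - d_\bG(z,y)$.

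\textbf{Second inequality.} We want $\delta_k^V(p,z) - d_\bG(z,y) \ge \delta_k^V(p,y) - 2d_\bG(z,y)$, which is the same as $\delta_k^V(p,z) \ge \delta_k^V(p,y) - d_\bG(z,y)$. This is the first inequality with $y$ and $z$ exchanged. The argument above is symmetric in $y$ and $z$, since it only used $d_\bG(z,y)=d_\bG(y,z)$, so it applies verbatim. Adding $-d_\bG(z,y)$ to both sides then yields the claim.

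No step is a real obstacle. The one point to check is that $k$ is the same on both sides: $k=k_p$ depends only on $p$ and $x$, not on $y$ or $z$, so this holds. Undirectedness is used so that the triangle inequality applies in both directions.
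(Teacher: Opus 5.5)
Your proof is correct and essentially the paper's argument. The paper applies the triangle inequality inside the minimum for the maximizer $u$ of $\delta_k^V(p,z)$, which is equivalent to your termwise comparison followed by a maximum over $u$, and it likewise obtains the right inequality from the left one by exchanging $y$ and $z$ (using $d_\bG(y,z)=d_\bG(z,y)$).
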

    \begin{claimproof}
        We will show the left inequality, as it implies the right. Let $u$ be the point maximizing $\min(d_\bG(p,u), d_\bG(u,z) + k)$. By the triangle inequality,
        \begin{align*}
            \delta_k^V(p,z) - d_\bG(z,y) &= \min(d_\bG(u,p), d_\bG(u,z) + k) - d_\bG(z,y) \\
            &\leq \min(d_\bG(u,p), d_\bG(u,z) - d_\bG(z,y) + k) \\
            &\leq \min(d_\bG(u,p), d_\bG(u,y) + k) \\
            &\leq \max\nolimits_v \min(d_\bG(u,p), d_\bG(u,y) + k) = \delta_k^V(p,y) 
        \end{align*} \claimqedhere
    \end{claimproof}
    By \cref{clm:closedeltas}, if $d_\bG(z,y) \leq \frac{1}{5}\delta_k^V(p,y)$ then $\delta_k^V(p,z) - d_\bG(z,y) \geq \frac{3}{5} \delta_k^V(p,y)$. Thus, $\delta_k^V(p,y) \geq \Tilde{\delta}_k^V(p,y) \geq 3/5 \cdot \delta_k^V(p,y)$. If $|P|=1$ we have computed the exact values of $\delta_k^U$ and so we obtain a $5/3$ approximation, otherwise \cref{clm:computedeltazu} guarantees a $(1+\eps)$-approximation and so we get a $(5/3+\Oh(\eps))$ approximation.
   
    \subparagraph*{Runtime:} Setting $|S|=\sqrt{n}$ when $|P|=1$ and $|S| = \max \left(n^{\frac{1+\gamma}{2}}, \frac{\sqrt{m}}{n^{\frac{3-\omega}{4}\gamma}}\right)$ when $|P|=n^\gamma$ gives us the runtime with almost identical analysis to
    \cref{thm:directedsinglenodeecc}.

    \subparagraph*{Additive Error:} In a general graph, we show our algorithm outputs $\tilde{\eps}$ such that $\ecc_{\bG^+}(p)\cdot 3/5 - 2M/5 \leq \tilde{\eps} \leq \ecc_{\bG^+}(p)$, thus giving a $(5/3, 2M/3)$ approximation\footnote{Since $\ecc_{\bG^+}(p) \leq \frac{5}{3}\tilde{\eps} + \frac{2M}{3} \leq \frac{5}{3}\ecc_{\bG^+}(p) + \frac{2M}{3}$.}.
    Consider the correctness analysis for the approximation of $\ecc_{\bG^+}(p)$ for some $p\in P$. We now have that if $d_\bG(S,v) \leq  \frac{2}{5}\delta_k^V(p,y) + \frac{2M}{5}$ or if $d_\bG(w,y) \geq \frac{3}{5}\delta_k^V(p,y) - \frac{2M}{5}$ then $\delta_k^S(p,y)$ or $\delta_k^{\{w\}}(p,y)$ obtains ours desired approximation respectively.
    
    Without assuming that the graph is unweighted and $\frac{2}{5}\delta_k^V(p,y)$ is an integer, we can no longer assume that there is a vertex on the path between $w$ and $y$ of this exact distance from $w$. Instead, we can find a node $z$ on this path with the guarantee that
    \[
    \frac{2}{5}\delta_k^V(p,y) - \frac{3M}{5} \leq d_\bG(w,z) \leq \frac{2}{5}\delta_k^V(p,y) +\frac{2M}{5}.
    \]

    By the same argument we know $T$ must contain $z$ and by the triangle inequality we have $d_\bG(z,y) \leq \frac{1}{5}\delta_k^V(p,y) + \frac{M}{5}$. Now, by \cref{clm:closedeltas}, $\delta_k^V(p,z) \geq \frac{3}{5}\delta_k^V(p,y) - \frac{2M}{5}$ obtaining our desired approximation.

    One way to address this error in weighted graphs allowing zero weight edges is the following. Begin by reducing the maximum degree of the graph to constant by taking a node $v$ of degree $\Delta$ and splitting its edges into $\frac{\Delta}{2}$ pairs. For every pair of edges we replace its $v$ endpoint with a distinct copy of $v$. Finally, we connect the $\frac{\Delta}{2}$ copies of $v$ with a binary tree of weight zero edges and depth $O(\log \Delta)$. After doing this for all vertices we are left with a graph of maximum degree $3$ with $\Oh(m)$ vertices and $\Ohtilde(m)$ edges. Now when computing $\delta_k^V$ for all the points in $T$ we can do it for all of the neighbors of $T$ as well, as this set is only larger than $T$ by a constant fraction. On the path between $w$ and $y$ we take $z$ to be the last vertex such that $d_\bG(w,z) \leq \frac{2}{5}\delta_k^V(p,y)$. Now $z$ is in $T$ and the next vertex on the path, which is a neighbor of $z$, is guaranteed to be within distance $\frac{1}{5}\delta_k^V(p,y)$ of $y$, giving the desired approximation.

    Thus, getting rid of the additive error blows up the number of nodes in the graph from $n$ to $m$, so for a single point we get a preprocessing time of $\Ohtilde(m^{3/2})$ and for $|P| = n^\gamma$ we get a preprocessing time of $\Ohtilde(m^{\frac{3}{2} + \frac{\omega + 1}{4}\gamma}).$
\end{proof}

The proof of \cref{thm:singlenodeeccundir} can be applied to the more general eccentricity approximation tradeoff of Cairo, Grossi and Rizzi \cite{cgr2016}. 

\begin{theorem}[Theorem 4.1, \cite{cgr2016}]
    Given an undirected, weighted graph with edge weights bounded by $M$, for any $k\geq 0$ there exists an algorithm that computes a $(3 - \frac{4}{2^k +1}, \left(1-\frac{1}{2^k}\right)M)$ approximation of all-node eccentricities in time $\Ohtilde(mn^{\frac{1}{k+1}})$.
\end{theorem}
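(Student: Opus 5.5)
The plan is to generalize the argument behind \cref{thm:singlenodeeccundir} (the $k=1$ case, which is Chechik et al.'s $5/3$) into a $k$-level hierarchy of hitting sets and balls. For $i \in \fragment{0}{k}$, I sample random sets $S_i$ with $|S_i| = \Ohtilde(n^{i/(k+1)})$, and I run Dijkstra from every node of $S_k$. The estimator keeps two properties from before. First, every output value is of the form $d_\bG(v,u)$ or $\ecc_\bG(z)-d_\bG(z,v)$. Second, both forms are lower bounds on $\ecc_\bG(v)$: the first trivially, the second by the triangle inequality, i.e.\ the undirected analogue of \cref{clm:closedeltas}. Hence only the lower-bound guarantee $\ecc_\bG(v)\ge \tilde{\eps}(v)\ge f_k\ecc_\bG(v)-c_kM$ needs proof, where $f_k=(2^k+1)/(3\cdot 2^k-1)$ is the reciprocal of $3-4/(2^k+1)$.

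\textbf{Construction.} Level by level:
\begin{enumerate}
\item Let $w_k$ be the vertex farthest from $S_k$. Grow its ball $T_k$ containing the $n^{k/(k+1)}$ closest nodes, which $S_{k-1}$ hits w.h.p.
\item Inside this structure, at level $i$ take $w_{i}$ to be the node of $T_{i+1}$ (or the ball boundary) farthest from $S_{i}$, and a smaller ball $T_{i}$ of size $n^{i/(k+1)}$ around it.
\item Stop at $T_0$, whose $\Oh(n^{1/(k+1)})$ nodes get exact eccentricities via Dijkstra.
\end{enumerate}
Each level costs $\Ohtilde(m\cdot n^{1/(k+1)})$: there are $n^{1/(k+1)}$ Dijkstras, and truncated Dijkstras on the balls. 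Multi-source Dijkstra from each $T_i$ gives every vertex $v$ its nearest ball point $z_i(v)$.

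\textbf{Estimator.} Output
\[
\tilde\eps(v)=\max\Big\{\max_{s\in S_k\cup\{w_k,\dots,w_0\}} d_\bG(v,s),\ \max_{i}\big(\ecc'(z_i(v))-d_\bG(z_i(v),v)\big)\Big\},
\]
where $\ecc'(\cdot)$ is the estimate computed recursively for ball points, and is exact on $T_0$.

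\textbf{Correctness.} I induct on the level, with thresholds $\tau_i\cdot\ecc_\bG(v)$. Let $v'$ be the farthest vertex from $v$. Either some sample point is within $\tau_k\ecc(v)$ of $v'$, giving $d(v,s)\ge(1-\tau_k)\ecc(v)$, or $w_k$ is far from $S_k$. Then the ball around $w_k$ has radius at least $\tau_k\ecc(v)$. Either $d(v,w_k)$ is already large, or $v$ is close to $w_k$. In the latter case I walk along $\pi_\bG(w_k,v)$ to a node $z$ inside the ball, which is close to $v$. I then recurse, with $z$ playing the role of $v$ and a worse but known guarantee, losing $2d(z,v)$ as in \cref{clm:closedeltas}. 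Balancing the losses gives a linear recurrence for $f_i$ with $f_0=1$. I would solve for the $\tau_i$ by equating the bad cases; the target closed form is $f_k=(2^k+1)/(3\cdot 2^k-1)$, which gives $f_1=3/5$ and $f_2=5/11$. The additive term $(1-2^{-k})M$ comes from being unable to stop exactly at distance $\tau_i\ecc(v)$ on a weighted path: at each level the overshoot is at most one edge weight, and these overshoots get halved geometrically.

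\textbf{Main obstacle.} I expect the main difficulty to be choosing the thresholds $\tau_i$ and nesting the balls correctly. The recursion must use the approximation guarantee for points $z$ of the next smaller ball, but that guarantee is relative to $\ecc(z)$ rather than $\ecc(v)$. My first attempt is to work entirely with the absolute quantity $D=\ecc_\bG(v)$. I would prove the per-level statement that every $z\in T_i$ with $\ecc(z)\ge D-d(z,v)$ receives estimate at least $f_i D - (\text{error})$, and then verify that the recurrence closes with $f_{i+1}=(1+f_i)/(\text{const})$-type updates matching the claimed closed form.
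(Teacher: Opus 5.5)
\textbf{Verdict: there is a genuine gap.} The paper does not reprove this theorem; it is imported from \cite{cgr2016}. So your sketch has to carry the whole argument, and as written it does not. The thresholds $\tau_i$, the recurrence and the additive term are announced rather than derived, and the two concrete choices you do make would fail.

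\emph{Sample sizes.} With $|S_i|=\Ohtilde(n^{i/(k+1)})$, running Dijkstra from all of $S_k$ already costs $\Ohtilde(mn^{k/(k+1)})$, which is over budget for $k\ge 2$. The sizes must be $\Ohtilde(n^{1-i/(k+1)})$, so that $S_i$ hits every ball of $n^{i/(k+1)}$ nodes. Dijkstra is then run only from:
\begin{enumerate}
\item $S_k$ and the centers $w_i$;
\item the sets $S_i\cap T_{i+1}$, each of size $\Ohtilde(n^{1/(k+1)})$ w.h.p.;
\item the last ball, of $n^{1/(k+1)}$ nodes.
\end{enumerate}

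\emph{The recursive step.} This is the more serious problem. You let the walk point $z$ ``play the role of $v$'' with a guarantee relative to $\ecc_{\bG}(z)$, but nothing supports such a guarantee. The first step of the level-$i$ argument for $z$ would be: either $z$'s farthest vertex is near a sample point, or $w_i$ is far from the sample. This breaks. A sample point near $z$'s farthest vertex, which typically lies far outside $T_{i+1}$, was never a Dijkstra source. And $w_i$, chosen inside $T_{i+1}$, need not be farther from $S_i$ than that vertex. (Were such a guarantee available, your bookkeeping would give $f_2=(1+2f_1)/(3+2f_1)=11/21$ rather than $5/11$. So it is not the mechanism behind the claimed constant.) Your estimator also omits $\ecc_{\bG}(s)-d_{\bG}(s,v)$ for the intermediate sample points $s\in S_i\cap T_{i+1}$, which are exactly what the argument needs.

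\emph{The missing idea.} Your top-level dichotomy is right; what is missing is how to descend. Keep $D=\ecc_{\bG}(v)$ fixed throughout, and use each walk point only as a probe, never as a new query vertex. Write $B(u,\rho)$ for the nodes within distance $\rho$ of $u$. Set $f=(2^k+1)/(3\cdot 2^k-1)$, $c=(3f-1)D/2$ and $\rho_i=(2^i-1)c$. The invariant at level $i+1$ is $B(w_{i+1},\rho_{i+1})\subseteq T_{i+1}$ together with $d_{\bG}(v,w_{i+1})<fD$; otherwise the term $d_{\bG}(v,w_{i+1})$ already gives $fD$.

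\emph{One level of the descent.} Let $z_i$ be the node of $\pi_{\bG}(w_{i+1},v)$ at distance $x_i=\rho_i+c$ from $w_{i+1}$, or $z_i=v$ if $v$ is closer. Then $d_{\bG}(z_i,v)<fD-x_i$ and $B(z_i,\rho_i)\subseteq B(w_{i+1},\rho_{i+1})\subseteq T_{i+1}$. There are two cases.
\begin{enumerate}
\item Some $s\in S_i$ lies within $\rho_i$ of $z_i$. Then $s\in S_i\cap T_{i+1}$ was a Dijkstra source, and $\ecc_{\bG}(s)-d_{\bG}(s,v)\ge D-2(fD-c)=fD$.
\item Otherwise $\min\{d_{\bG}(z_i,S_i\cap T_{i+1}),\,d_{\bG}(z_i,V\setminus T_{i+1})\}>\rho_i$. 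Then the maximizer $w_i$ of this quantity over $T_{i+1}$ has an $S_i$-free ball of radius $\rho_i$ inside $T_{i+1}$. Since $S_i$ hits $T_i$, this gives $B(w_i,\rho_i)\subseteq T_i$.
\end{enumerate}

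\emph{Closing the argument.} At the bottom, a node $y\in T_1$ at distance $\rho_1=c$ from $w_1$ towards $v$ gives $\ecc_{\bG}(y)-d_{\bG}(y,v)\ge fD$. At the top, the dichotomy on $d_{\bG}(v',S_k)$ starts the invariant provided $\rho_k=(2^k-1)c\le(1-f)D$. This is exactly $f\le(2^k+1)/(3\cdot 2^k-1)$. So the linear recurrence lives on the radii, $\rho_{i+1}=2\rho_i+c$, not on approximation factors. The estimator is $\max_u\max\{d_{\bG}(u,v),\ecc_{\bG}(u)-d_{\bG}(u,v)\}$ over all Dijkstra sources $u$.

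\emph{Additive term.} It has to come from rounding the probe points on weighted paths. In the accounting above, an error made at level $i$ is doubled at every level above it. So your claim that the overshoots are ``halved geometrically'' also needs an actual argument.
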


The proof of this algorithm involves repeatedly sampling hitting sets of size $\Ohtilde(n^{1/(k+1)})$, finding the farthest vertex $w$ and its nearest neighborhood of varying size, decreasing with each iteration. Using the techniques introduced in the proofs of \cref{thm:directedsinglenodeecc} and \cref{thm:singlenodeeccundir}, we can alter the definition of $w$ and maintain the $\delta_k$ values of relevant nodes to obtain the following incremental eccentricity algorithm. As the ideas are identical to the ones presented in previous proofs, we omit the details.

\begin{theorem}\label{thm:singlenodeecccgr}
    Given an undirected, weighted graph with edge weights bounded by $M$, for every integer $k$, there exists a $(3 - \frac{4}{2^k +1}, \left(1-\frac{1}{2^k}\right)M)$-approximation for 1-incremental eccentricity of a single node $p$, with $\Ohtilde(mn^{\frac{1}{k+1}})$ pre-processing and $\Ohtilde(1)$ query time.
\end{theorem}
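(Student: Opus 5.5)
I will follow the template of the proofs of \cref{thm:directedsinglenodeecc} and \cref{thm:singlenodeeccundir}. The plan is to open up the static algorithm of \cite{cgr2016} in a white-box way and replace every eccentricity estimate of the form $d_\bG(v,u)$ for a fixed point $u$ by the quantity $\min(d_\bG(p,u), d_\bG(y,u)+k)$. Concretely, I replace the target quantity $\ecc_\bG(v)$ by $\delta_k^V(p,y)$ with $k = d_\bG(p,x)+w(x,y)$, where w.l.o.g.\ $p$ is closer to $x$ than to $y$. The function $u \mapsto \min(d_\bG(p,u),d_\bG(y,u)+k)$ is $1$-Lipschitz in $u$ and $\min(d_\bG(p,\cdot), d_\bG(\cdot,y)+k)$ obeys \cref{clm:closedeltas}. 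These are exactly the two facts the static analysis uses about eccentricities: the triangle inequality and $|\ecc(v)-\ecc(v')|\le d(v,v')$.

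\textbf{Preprocessing.} The algorithm of \cite{cgr2016} runs for $k$ rounds. In round $i$ it samples a hitting set $S_i$ of size $\Ohtilde(n^{1/(k+1)})$ (for neighborhoods of size $n^{i/(k+1)}$, with sizes shrinking across rounds), picks a far vertex $w_i$, and takes the ball $T_i$ of $w_i$ of the appropriate size. The next round restricts to $T_i$. I mimic this with the following changes.
\begin{itemize}
\item Define $w_i$ as the maximizer of $\min(d_\bG(S_i,\cdot), c_i\, d_\bG(p,\cdot))$, where $c_i$ is the threshold fraction used in round $i$ of the static analysis ($1/2$ and $2/5$ in the two earlier proofs).
\item Call \prepdelta{p,S_i\cup\{w_i\},V} for every round.
\item In the last round, call \prepdelta{p,V,T_k}, together with nearest-neighbor computations from each $T_i$.
\end{itemize}
By \cref{clm:computedeltazu}\ref{it:computedeltazu:i}, each call costs $\Ohtilde(|U||Z|+m\min(|U|,|Z|))$. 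This gives $\Ohtilde(mn^{1/(k+1)})$ provided $|T_k| \le n^{1/(k+1)}$, as in the static algorithm.

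\textbf{Query.} Given $(x,y)$, return the maximum of $\delta_k^{S_i\cup\{w_i\}}(p,y)$ over all rounds, and of $\delta_k^V(p,z)-d_\bG(z,y)$ for $z$ the nearest point of $T_k$ to $y$ (and analogous terms for the intermediate $T_i$). Each term costs $\Ohtilde(1)$ by binary search. Every returned value is at most $\delta_k^V(p,y)$, by the definition of $\delta$ and by the left inequality of \cref{clm:closedeltas}.

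\textbf{Lower bound and main obstacle.} The main obstacle is redoing the inductive approximation analysis of \cite{cgr2016} with $v$ denoting a maximizer of $\delta_k^V(p,y)$. In each round, either some $s\in S_i$ is within $c_i\delta$ of $v$, giving a $(1-c_i)$ fraction directly, or $w_i$ inherits $d(S_i,w_i)>c_i\delta$. In the latter case, either $y$ is far from $w_i$, or $y$ lies close to the ball $T_i$, and then we recurse. I would check that the recurrence on the $c_i$ matches the static one, so that it yields ratio $3-4/(2^k+1)$. The additive $(1-2^{-k})M$ comes, exactly as in \cref{thm:singlenodeeccundir}, from choosing a path vertex at a prescribed distance in a weighted graph, with an error of at most $M$ per rounding step weighted geometrically.
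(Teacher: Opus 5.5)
Your overall route is the one the paper sketches: open up \cite{cgr2016}, bias the choice of each far vertex by $d_\bG(p,\cdot)$, and precompute $\delta$-values with \prepdelta{\cdot}. For $k=1$ your algorithm is exactly the one in \cref{thm:singlenodeeccundir}. For $k\ge 2$, however, the plan fails at the intermediate rounds.

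\emph{The inductive step only works in round 1.} Your step says that either $S_i$ is close to the maximizer $u^\ast$ of $\delta_k^V(p,y)$, or $w_i$ inherits $d_\bG(S_i,w_i)>c_i\delta$. From round $2$ on, $S_i$ and $w_i$ live inside $T_{i-1}$, and $u^\ast$ need not lie there, so nothing is inherited. In the \cite{cgr2016} analysis the later rounds are anchored instead at a point $a_i$ on a shortest path from $w_{i-1}$ to the query vertex (here $y$), at distance $x_i$ from $w_{i-1}$. It is placed so that the ball of radius $\rho_i$ around $a_i$ stays inside the ball of radius $\rho_{i-1}$ around $w_{i-1}$. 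With target ratio $1/f$ this forces $\rho_1=(1-f)\delta$ and $\rho_i=\tfrac12\bigl(\rho_{i-1}+\tfrac{1-3f}{2}\delta\bigr)$. The final condition $\rho_k\ge\tfrac{3f-1}{2}\delta$ then gives exactly $1/f=3-4/(2^k+1)$.

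\emph{The wrong term is used for nearby samples.} When some $s\in S_i$ lies within $\rho_i$ of $a_i$, the vertex $s$ is close to $y$. Your direct term then satisfies $\min(d_\bG(p,s),d_\bG(y,s)+k)\le d_\bG(y,s)+k$, which is useless when the offset $k=d_\bG(p,x)+w(x,y)$ is small. What works is the analogue of $\ecc(s)-d(s,v)$, namely $\delta_k^V(p,s)-d_\bG(s,y)\ge\delta-2d_\bG(s,y)$ by \cref{clm:closedeltas}. In the static algorithm $\ecc(s)$ comes for free with the Dijkstra run from $s$; here $\delta_k^V(p,s)$ does not.

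\emph{The missing preprocessing.} You need to call \prepdelta{p,V,S_i} on the $\Ohtilde(n^{1/(k+1)})$ Dijkstra sources of every round $i\ge 2$. This costs $\Ohtilde(mn^{1/(k+1)})$ per round. At query time you then add the term $\delta_k^V(p,s_i)-d_\bG(s_i,y)$, where $s_i$ is the source of round $i$ nearest to $y$. It can be precomputed by one multi-source Dijkstra per round, and it is at least as close to $y$ as the source near $a_i$. Your preprocessing contains no such call. The ``analogous terms for the intermediate $T_i$'' cannot replace it within budget: $|T_i|=n^{(k+1-i)/(k+1)}$, so \prepdelta{p,V,T_i} costs $\Ohtilde(mn^{(k+1-i)/(k+1)})$.

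\emph{A smaller point: calibrating the weights.} For $i\ge 2$ the weight in the definition of $w_i$ must be calibrated against $d_\bG(p,a_i)\ge\delta-\sum_{j=2}^{i}x_j$, not against $d_\bG(p,u^\ast)\ge\delta$. Take it to be $\rho_i/(\delta-\sum_{j=2}^{i}x_j)$, which is a constant since everything scales with $\delta$. This guarantees $d_\bG(S_i,w_i)\ge\rho_i$. It also guarantees $d_\bG(p,w_i)\ge\delta-\sum_{j}x_j\ge f\delta$, which one checks holds for the optimal parameters. That lower bound on $d_\bG(p,w_i)$ is what the term $\min(d_\bG(p,w_i),d_\bG(y,w_i)+k)$ needs.
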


\section{Approximate Undirected Incremental Diameter and Radius} 
\label{sec:inc-diam}

In this section we consider approximating the diameter and radius in the incremental setting. All the graphs in the following algorithms will be weighted and undirected. We note that most algorithms in this section incur an additive error proportional to the largest edge weight in the graph. The only exception is the purely multiplicative $3$-approximation given in \cref{thm:diamkinc3approx}.

We begin by introducing the key lemma that serves as the foundation for our algorithms. 
This lemma states that if $r$ edges are added to an undirected graph, then for any 
set of $\ell = 2^r + 1$ vertices, at least one of the pairwise distances between 
these vertices will remain unchanged. In particular, in the single incremental setting we focus on in the rest of this section, for any three points at least one of their pairwise distances will remain unchanged after the insertion of an edge. 

\begin{lemma}\label{lm:ktrianglethm}
    Let $w_1, w_2, \ldots, w_k$ be $k = 2^r + 1$ vertices in an undirected graph $\bG = (V,E)$. In the graph $\bG^+$ obtained by the addition of $r$ edges to $\bG$, $\bG^+ \coloneqq (V, E\cup \{(x_1,y_1),\ldots, (x_r, y_r)\})$ at least one of the pairwise distances will remain unchanged, $d_{\bG^+}(w_i, w_j) = d_\bG(w_i, w_j)$.
\end{lemma}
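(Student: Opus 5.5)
The plan is induction on $r$, inserting the new edges one at a time and generalizing the observation from \cref{subsec:to:2} (two vertices on the ``same side'' of an inserted edge keep their distance). Define $\bG_0 \coloneqq \bG$ and $\bG_j \coloneqq \bG_{j-1} \cup \{(x_j,y_j)\}$, so that $\bG_r = \bG^+$. We prove the slightly more convenient statement: for every $r \geq 0$, every undirected graph $\bG$ with non-negative weights, every $r$ inserted edges, and every set $W$ of $2^r+1$ vertices, some pair $w_i \neq w_j$ in $W$ satisfies $d_{\bG_r}(w_i,w_j) = d_{\bG}(w_i,w_j)$. The base case $r=0$ is trivial: $|W| = 2$ and nothing is inserted.

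The key single-edge step is the following. Let $\bH$ be any undirected graph with non-negative weights, let $\bH' = \bH \cup \{(x,y)\}$, and let $u,v$ satisfy $d_{\bH}(u,x) \leq d_{\bH}(u,y)$ and $d_{\bH}(v,x) \leq d_{\bH}(v,y)$. We claim $d_{\bH'}(u,v) = d_{\bH}(u,v)$. A shortest $u$--$v$ path in $\bH'$ uses the new edge at most once, since weights are non-negative and repeated use can be shortcut. If it uses the edge, it traverses it in one of two directions.
\begin{itemize}
\item In direction $u \to x \to y \to v$, its length is at least $d_{\bH}(u,x) + d_{\bH}(y,v) \geq d_{\bH}(u,x) + d_{\bH}(x,v) \geq d_{\bH}(u,v)$.
\item In direction $u \to y \to x \to v$, its length is at least $d_{\bH}(u,y) + d_{\bH}(x,v) \geq d_{\bH}(u,x) + d_{\bH}(x,v) \geq d_{\bH}(u,v)$.
\end{itemize}
Either way the distance does not drop. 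The same conclusion holds symmetrically when both $u$ and $v$ are at least as close to $y$ as to $x$.

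For the inductive step with $r \geq 1$, take $W$ of size $2^r+1$ and consider the last inserted edge $(x_r,y_r)$ relative to $\bH = \bG_{r-1}$. Split $W$ according to whether $d_{\bG_{r-1}}(w,x_r) \leq d_{\bG_{r-1}}(w,y_r)$ or not. By pigeonhole, one part $W'$ has at least $\lceil (2^r+1)/2 \rceil = 2^{r-1}+1$ vertices. By the single-edge step, all pairwise distances within $W'$ satisfy $d_{\bG_r} = d_{\bG_{r-1}}$. Applying the induction hypothesis to $\bG$ with the first $r-1$ inserted edges and to a subset of $W'$ of size $2^{r-1}+1$, we get a pair with $d_{\bG_{r-1}}(w_i,w_j) = d_{\bG}(w_i,w_j)$. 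Hence $d_{\bG^+}(w_i,w_j) = d_{\bG}(w_i,w_j)$.

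I expect the main subtlety to be that, with several inserted edges, a shortest path may use many of them in an interleaved way. A direct ``sign vector'' argument over all $r$ edges in $\bG$ is therefore awkward. The sequential induction sidesteps this by always measuring closeness in the intermediate graph $\bG_{r-1}$ rather than in $\bG$. The remaining check is that the single-edge step only needs non-negative weights and undirectedness, both of which are assumed here.
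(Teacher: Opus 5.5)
Your proof is correct, and it differs from the paper's exactly where you expected trouble. The paper also proves your single-edge step (as a claim about $W_{x[y]}$). For general $r$, however, it builds all $2^r$ cells $Z_p=\bigcap_\ell Z^\ell_{p_\ell}$ from sides measured in the \emph{original} graph $\bG$. It then uses pigeonhole to place two of the $2^r+1$ points in a common cell. Finally it argues that since no single inserted edge shortcuts their distance, $d_{\bG^+}(w_i,w_j)=d_\bG(w_i,w_j)$.

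That last inference fails for $r\ge 2$, because shortcuts through several new edges are not excluded. For example, take a path of length $10$ from $w_1$ to $w_2$. Attach an edge $y_1y_2$ to the midpoint of this path by a path of length $100$, and insert unit edges $(w_1,y_1)$ and $(w_2,y_2)$. In $\bG$, both $w_1$ and $w_2$ are weakly closer to $x_\ell$ than to $y_\ell$ for $\ell=1,2$, so they share a cell. Yet $d_{\bG^+}(w_1,w_2)\le 3<10$.

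Your induction avoids this by measuring the side of the $r$-th edge in $\bG_{r-1}$, the graph into which that edge is actually inserted. The single-edge step then applies legitimately, and the equalities $d_{\bG_r}=d_{\bG_{r-1}}=d_\bG$ chain. Unrolled, your argument is the paper's sign-vector pigeonhole with $Z^\ell_0=\{v : d_{\bG_{\ell-1}}(v,x_\ell)\le d_{\bG_{\ell-1}}(v,y_\ell)\}$ in place of the $\bG$-based sets. For $r=1$ the two arguments coincide, and that is the only case the paper's diameter and radius sampling lemmas use. For the general-$r$ statement, which the $3$-approximation for $r$-incremental diameter relies on, your version is the one that actually proves the lemma.
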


\begin{proof}
    For any two distinct vertices $x,y \in V$, define $W_{x[y]} = \{v\in V : d_\bG(v,x) \leq d_\bG(v,y)\}$ and $W_{y[x]}= \{v\in V : d_\bG(v,y) \leq d_\bG(v,x)\}$. We observe that for any pair of points $x,y$ the sets $W_{x[y]},W_{y[x]}$ cover all vertices. Moreover, we claim the following.

    \begin{claim}\label{clm:singleedgepartition}
        For any $a,b\in W_{x[y]}$, the distances between $a,b$ is not shortcutted by the addition of the edge $(x,y)$, i.e., $d_{\bG\cup (x,y)}(a,b) = d_\bG(a,b)$.
    \end{claim}
    \begin{claimproof}
       Consider a shortest path between $u$ and $v$ in the graph $\bG\cup(x,y)$ of length $d_{\bG\cup (x,y)}(u,v)$. If this path is shorter than $d_\bG(u,v)$, it must use the edge $(x,y)$. Without loss of generality assume the path visits the vertices $u,x,y,v$ in this order. Therefore $d_{\bG\cup (x,y)}(u,v) = d_\bG(u,x) + w(x,y) + d_\bG(y,v)$. If $d_\bG(u,x) \geq d_\bG(u,y)$ then by the triangle inequality $d_{\bG\cup (x,y)}(u,v) \geq w(x,y) + d_\bG(u,y) + d_\bG(y,v) \geq w(x,y) + d_\bG(u,v) > d_{\bG\cup (x,y)}(u,v)$, contradiction. We arrive at a similar contradiction if $d_\bG(v,y) \geq d_\bG(v,x)$.
    \end{claimproof}

     \cref{clm:singleedgepartition} shows that any pair of vertices $x,y$ partition the graph (possibly with overlap) into two sets such that the distance between two vertices in the same set is not shortcutted by the addition of the edge $(x,y)$.
     Next, we generalize this notion to multiple pairs of vertices. For any $\ell \in \fragment{1}{r}$, denote $Z^\ell_0 \coloneqq W_{x_\ell[y_\ell]}$ and $Z^\ell_1 = W_{y_\ell[x_\ell]}$. We now consider the $2^r$ ways to take intersections of these sets. Formally, for every vector $p=(p_1, \ldots, p_r)\in \{0,1\}^r$ define the intersection 
    \[
    Z_p = \bigcap_{\ell \in \fragment{1}{r}} Z^\ell_{p_\ell}.
    \]
    
    This defines $2^r$ different sets $Z_p$ that cover $V$, since for every $\ell$, $Z^\ell_0\cup Z^\ell_1 = V$. By the pigeonhole principle, one of these sets must contain a pair of points $w_i, w_j \in Z_p$. For any $\ell \in  \fragment{1}{r}$, $w_i, w_j \in Z^\ell_{p_\ell}$, and so the edge $(x_\ell, y_\ell)$ does not shortcut the distance between $w_i$ and $w_j$. Therefore, none of the additional edges shortcut the distance between $w_i, w_j$ and so $d_{\bG^+}(w_i, w_j) = d_\bG(w_i, w_j)$.
\end{proof}

Using \cref{lm:ktrianglethm} we can construct a $3$-approximation for $r$-incremental undirected diameter. In particular, we obtain a linear time algorithm for any constant $r$.

\incdiamthree

\begin{proof}
    Pick an arbitrary vertex $w_1$ and run Dijkstra's algorithm from it. Now, for $i \in \fragment{1}{2^r}$ set $w_{i+1}$ to be the vertex that maximizes the distance to the set $\{w_1, \ldots, w_i\}$ and run Dijkstra's algorithm from it. This completes our preprocessing stage in $\Oh(2^r (m + n\log n))$ time.

    After adding $r$ edges $\bG^+ \coloneqq (V, E\cup \{(x_1,y_1),\ldots, (x_r, y_r)\})$ compute all the new distances between pairs of points $w_i, w_j$ in $\Oh(r)$ time per pair using the precomputed distances as follows:
    \[
    d_{\bG^+}(w_i,w_j) = \min_{1\leq \ell \leq r} (d_\bG(w_i, w_j), 
    d_\bG(w_i,x_\ell) + w(x_\ell,y_\ell) + w(y_\ell,w_j), d_\bG(w_i,y_\ell) + w(x_\ell,y_\ell) + w(x_\ell,w_j)).
    \]
    Return the greatest of the distances computed. Since there are $\Oh(4^r)$ pairs of points, the query runs in $\Oh(r \cdot 4^r)$ time.

    We want to argue that some pair of points must have $d_{\bG^+}(w_i, w_j) \geq \diam(\bG^+)/3$,  and so our algorithm achieves a 3-approximation. 

    To this end, let $s,t \in \bG$ be such that $d_{\bG^+}(s,t) = \diam(\bG^+)$.
    First, if $d_\bG(s,w_i) \geq \diam(\bG^+)/3$ for every $i\in \fragment{1}{2^r+1}$, then $d_\bG(w_i, w_j) \geq \diam(\bG^+)/3$ for all pairs of points. Therefore, by \cref{lm:ktrianglethm}, at least one of these distances is unchanged by the addition of the $r$ new edges and so $d_{\bG^+}(w_i, w_j)\geq \diam(\bG^+)/3$. Similarly, if $d_\bG(t, w_i) \geq \diam(\bG^+)/3$ for all $i\in \fragment{1}{2^d+1}$ we are finished.
    Otherwise, $d_\bG(s, w_i) <D/3$ and $d_\bG(t, w_j) <\diam(\bG^+)/3$ for some $i,j\in \fragment{1}{2^r+1}$. Therefore $d_{\bG^+}(s,w_i), d_{\bG^+}(t, w_j) <D/3$ and so by the triangle inequality, $d_{\bG^+}(w_i, w_j) >\diam(\bG^+)/3$.
\end{proof}

Next, we demonstrate how to combine \cref{lm:ktrianglethm} with standard 
eccentricity-based arguments to derive the following structural result. 
These results form the foundation for various algorithms designed to 
maintain an approximate diameter in the incremental setting.

\begin{lemma}\label{lem:samp_inc}
    Let $\bG = (V,E)$ be a weighted undirected graph.
    Further, let $\alpha, \beta, \gamma \geq 0$ be parameters, and consider the following process:
    \begin{enumerate}
        \item Sample a hitting set $S \subseteq V$ of size $|S| = \Theta(n^{\gamma} \log n)$;
        \item Let $w_1 \in V$ maximize $d_\bG(S, w_1)$ and let $W_1 \subseteq V$ be the $\Theta(n^{1-\gamma})$ closest points to $w_1$;
        \item Let $w_2 \in V$ maximize $\min(\alpha d_\bG(S, w_2), \beta d_\bG(W_1, w_2))$ and let $W_2 \subseteq V$ be the $\Theta(n^{1-\gamma})$ closest points to $w_2$;
        \item Let $w_3 \in V$ maximize $\min(\alpha d_\bG(S, w_3), \beta d_\bG(W_1 \cup W_2, w_3))$ and let $W_3 \subseteq V$ be the $\Theta(n^{1-\gamma})$ closest points to $w_3$;
    \end{enumerate}
    Consider adding to $\bG$ the edge $(x,y)$ of weight $w(x,y)$ resulting in the graph $\bG^+$. Then, for $T \coloneqq W_1 \cup W_2 \cup W_3$, at least one of the three following holds:
    \begin{enumerate}[(a)]
        \item $(1-2/\alpha) \cdot \diam(\bG^+) \leq \max_{u,v \in S} d_{\bG^+}(u,v)$ and $(1-1/\alpha) \cdot \diam(\bG^+) \leq \max_{(u,v) \in S \times V} d_{\bG^+}(u,v)$;
        \label{it:samp_inc:a}
        \item $(1/\alpha+1/\beta) \cdot \diam(\bG^+) \leq \max_{u,v \in \{w_1, w_2, w_3\}} d_{\bG^+}(u,v)$; or
        \label{it:samp_inc:b}
        \item $(1-1/\beta) \cdot \diam(\bG^+) \leq \max_{(u,v) \in T \times V} d_{\bG^+}(u,v)$.
        \label{it:samp_inc:c}
    \end{enumerate}
\end{lemma}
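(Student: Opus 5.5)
Throughout, let $D \coloneqq \diam(\bG^+)$ and fix $s,t \in V$ with $d_{\bG^+}(s,t) = D$. We use two facts repeatedly. First, $d_{\bG^+} \leq d_\bG$ pointwise. Second, by the triangle inequality in $\bG^+$, if some node $u$ satisfies $d_{\bG^+}(u,t) \leq \rho$, then $d_{\bG^+}(u,s) \geq D - \rho$. The proof is a cascade of cases: at each step either some sampled set is close to an endpoint of a diameter pair, which yields (a) or (c), or the endpoint certifies that the chosen $w_i$ are far from everything, which yields (b) via the Triangle Lemma (\cref{lm:ktrianglethm} with $r=1$).

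\emph{Step 1 (case (a)).} Suppose $d_{\bG^+}(S,s) \leq D/\alpha$ and $d_{\bG^+}(S,t) \leq D/\alpha$. Pick $s',t' \in S$ realizing these distances. The triangle inequality in $\bG^+$ gives $d_{\bG^+}(s',t') \geq D - 2D/\alpha$, which is the first half of (a). It also gives $d_{\bG^+}(s',t) \geq D - D/\alpha$, which is the second half of (a). Otherwise, without loss of generality $d_{\bG^+}(S,t) > D/\alpha$, and hence $d_\bG(S,t) > D/\alpha$, i.e.\ $\alpha\, d_\bG(S,t) > D$.

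\emph{Step 2 (case (c)).} If $d_\bG(W_1,t) \leq D/\beta$, take $u \in W_1$ with $d_{\bG^+}(u,t) \leq d_\bG(u,t) \leq D/\beta$. Then $d_{\bG^+}(u,s) \geq (1-1/\beta)D$, which is (c) since $W_1 \subseteq T$. Otherwise $\min(\alpha d_\bG(S,t), \beta d_\bG(W_1,t)) > D$. By the choice of $w_2$, the same strict inequality then holds for $w_2$. Repeating the argument with $W_1 \cup W_2$: either $t$ is within $D/\beta$ of $W_1\cup W_2$, giving (c), or $w_3$ also satisfies $d_\bG(S,w_3) > D/\alpha$ and $d_\bG(W_1\cup W_2, w_3) > D/\beta$. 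Finally, $d_\bG(S,w_1) \geq d_\bG(S,t) > D/\alpha$ by the choice of $w_1$.

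\emph{Step 3 (case (b)).} I then show that each $W_i$ contains the $\bG$-ball of radius $D/\alpha$ around $w_i$. Since $S$ is a hitting set of size $\Theta(n^\gamma \log n)$, with high probability it intersects every neighborhood $W_i$ of size $\Theta(n^{1-\gamma})$. Because $W_i$ consists of the closest points to $w_i$ and contains some $s'' \in S$ with $d_\bG(w_i,s'') \geq d_\bG(S,w_i) > D/\alpha$, every node at distance at most $D/\alpha$ from $w_i$ lies in $W_i$ (modulo tie-breaking in Dijkstra, which I handle by choosing $W_i$ as an initial segment of the Dijkstra order). Consequently, for $j>i$,
\[
d_\bG(w_i,w_j) \geq D/\alpha + d_\bG(W_i,w_j) > D/\alpha + D/\beta .
\]
To see the first inequality, take the node on $\pi_\bG(w_j,w_i)$ at distance exactly (or just beyond) $D/\alpha$ from $w_i$; it lies in $W_i$ and splits the path. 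Thus all three pairwise $\bG$-distances among $w_1,w_2,w_3$ exceed $(1/\alpha+1/\beta)D$. By the Triangle Lemma, at least one of them is unchanged in $\bG^+$, which yields (b).

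The main obstacle is Step 3: making the ball-containment argument rigorous in weighted graphs. One must extract a clean lower bound on $d_\bG(w_i,w_j)$ from $d_\bG(W_i,w_j)$ plus the radius of $W_i$ without losing an additive edge weight, which is why the surrounding results carry additive $M$-terms. I would first try to argue with the last node of the path still inside the ball of radius $d_\bG(S,w_i)$ together with its successor, and accept an additive $\Oh(M)$ slack if exactness fails. The remaining steps are routine triangle-inequality bookkeeping. Note also that Step 2 needs the cases for $s$ and $t$ handled symmetrically, which the without-loss-of-generality choice in Step 1 covers.
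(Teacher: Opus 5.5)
Your proposal is correct and follows the paper's proof essentially step for step. It uses the same case cascade: both diameter endpoints near $S$ gives (a); an endpoint within $D/\beta$ of $W_1$ or $W_1\cup W_2$ gives (c); otherwise $w_1,w_2,w_3$ inherit the endpoint's far-ness, the hitting-set property places the radius-$D/\alpha$ balls inside $W_1,W_2$, and the Triangle Lemma yields (b). As you flag, the path-splitting step gives the exact bound only under the paper's unweighted/divisibility assumption and otherwise costs an additive $\Oh(M)$, which is the same caveat the paper makes.
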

\begin{proof}
    Let $s,t$ be a pair of diameter endpoints in $\bG^+$ with distance $\tilde{D} \coloneqq \diam(\bG^+) = d_{\bG^+}(s,t)$. For simplicity, we assume the graph is unweighted and that $\tilde{D}$ is divisible by either $\alpha$ or beta. Otherwise, we incur an additional additive error.

    First, we consider the case when $d_{\bG^+}(s, S) \leq \tilde{D}/\alpha$ and $d_{\bG^+}(t, S) \leq \tilde{D}/\alpha$. 
    Let $s',t'$ be the closest points in $S$ to $s,t$. Then, by the triangle inequality $d_{\bG^+}(s', t') \geq \tilde{D}(1-2/\alpha)$ and $d_{\bG^+}(s', t) \geq \tilde{D}(1-1/\alpha)$, yielding case \ref{it:samp_inc:a}.

    Otherwise, w.l.o.g $d_{\bG^+}(s, S) >  \tilde{D}/\alpha$. Since we picked $w_1$ to be the furthest point from $S$ we also have that $d_\bG(w_1,S) \geq \tilde{D}/\alpha$. Therefore, since $S$ hits $W_1$ w.h.p., we have that $W_1$ contains the ball of radius $\tilde{D}/\alpha$ around $w_1$.

    Now, if $d_{\bG^+}(s, W_1) \leq \tilde{D}/\beta$, then there exists a point $z\in W_1$ such that $d_{\bG^+}(z,s) \leq \tilde{D}/\beta$. Therefore, $d_{\bG^+}(z,t) \geq \tilde{D}(1-1/\beta)$, yielding case \ref{it:samp_inc:c}.

    Otherwise, $d_{\bG^+}(s,S) > \tilde{D}/\alpha$ and $d_{\bG^+}(s,W_1) > \tilde{D}/\beta$, and so by our choice of $w_2$, we have that $d_\bG(w_2,S) > \tilde{D}/\alpha$ and $d_\bG(w_2,W_1) > \tilde{D}/\beta$. We can therefore conclude that $W_2$ contains the ball of radius $\tilde{D}/\alpha$ around $w_2$. Furthermore, since $d_\bG(w_2, W_1) > \tilde{D}/\beta$ and $W_1$ contains the ball of radius $\tilde{D}/\alpha$ around $w_1$, we conclude that $d_\bG(w_1, w_2) > \tilde{D}(1/\alpha+1/\beta)$.

    Note that it is at this point that we assumed the graph is unweighted and $\tilde{D}$ is divisible by either $\alpha$ or $\beta$. In this case, if $d_\bG(w_1, w_2) \leq \tilde{D}(1/\alpha + 1/\beta)$ then there is a vertex $x$ on the path between them such that $d_\bG(w_1,x) \leq \tilde{D}/\alpha, d_\bG(w_2, x) \leq \tilde{D}/\beta$. Thus $x$ is contained in the ball of radius $\tilde{D}/\alpha$ around $w_1$ and so $x\in W_1$, giving us $d_\bG(w_2, W_1) \leq \tilde{D}/\beta$, contradiction. Otherwise, we incur an $\Oh(M)$ additive error.

    Applying the same argument one more time, we have either a vertex $z \in W_1\cup W_2$ such that $d_{\bG^+}(z',t) > \tilde{D}(1-1/\beta)$, yielding case \cref{it:samp_inc:c}, or we conclude that $d_\bG(w_3, W_1\cup W_2) > \tilde{D}(1/\alpha+1/\beta)$. In this case, the points $w_1, w_2, w_3$ all have pairwise distance greater than $\tilde{D}(1/\alpha+1/\beta)$. Thus, by \cref{lm:ktrianglethm} with $r=1$, we have $d_{\bG^+}(w_i, w_j)>\tilde{D}(1/\alpha+1/\beta)$ for some $i,j\in \{1,2,3\}$, yielding case \ref{it:samp_inc:b}.
\end{proof}

Via \cref{lem:samp_inc} we can construct the following algorithms that achieve different tradeoffs. 
\incdiam

\begin{proof}
    We give the algorithms one by one, slightly out of order.
    \begin{description}
        \item[Algorithm \ref{it:incdiam:i} and \ref{it:incdiam:ii}:] 
        In the pre-processing stage, we begin by computing the sets $S,W_1, W_2, W_3$ and points $w_1, w_2,w_3$ as in \cref{lem:samp_inc} with $\gamma = 1/2$ and $\alpha = \beta = 3$, and taking $T\coloneqq W_1 \cup W_2 \cup W_3$, using a constant number of calls to Dijkstra's algorithm. 

        Next, we would like to be able to approximate $\max_{u\in T\cup S, v\in V} d_{\bG^+}(u,v)$ using a similar approach to that of \cref{clm:computedeltazu}. To do so, we define a $|S\cup T|\times n$ matrix $B$ by $B[u,w] \coloneqq d_\bG(u,w)$ for every $u\in T\cup S, w\in V$. 
        Next, for every $i \in \fragment{0}{\lceil \log_{1+\eps}(Mn) \rceil}$, let $k = (1+\eps)^i$ and define the $n\times n$ matrix 
        $C_k$ by $C_k[w,v] \coloneqq d_\bG(v,w) + k$ for every $w,v\in V$.
        We now compute their $\maxmin$ product $A_k \coloneqq B \maxminprod C_k$.

        For any vertex $u$ and pair $x,y\in V$ denote by $\{x(u), y(u)\}=\{x,y\}$ such that $x(u)$ is closer to $u$ than $y(u)$ and let $k(u)$ be the closest value of $(1+\eps)^i$ to $d_\bG(u,x) + k$.

        Upon the insertion of an edge $(x,y)$, we compute $d_{\bG^+}(w_i, w_j)$ for $i,j\in \{1,2,3\}$ and $\max_{u\in T\cup S}A_{k(u)}[u, y(u)]$ and return the largest distance found. 

        To save on query time we can additionally at preprocessing time compute the following for every $u\in S\cup T$, every $x,y\in V$ and every $i \in \fragment{0}{\lceil \log_{1+\eps}(Mn) \rceil}$. Let $r = (1+\varepsilon)^i$ and define $\tilde{d}_r(w_i, w_j) \coloneqq \min (d_\bG(w_i,w_j), d_\bG(w_i,x(w_i)) + r + d_\bG(y(w_i),w_j))$. Now precompute the value
        \[
            \tilde{D}_r (x,y) \coloneqq \max \Big\{ \ \tilde{d}_r(w_1, w_2) \ , \ \tilde{d}_r(w_1, w_3) \ , \ \tilde{d}_r(w_2, w_3) \ , \ \max_{u\in T\cup S}A_{k(u)}[u, y(u)] \ \Big\}.
        \]

        Upon the insertion of an edge $(x,y)$, we choose $r$ be the closest value of $(1+\eps)^i$ to $w(x,y)$, and we return $\tilde{D}_r (x,y)$.

        \subparagraph*{Correctness:}
        By arguments we have seen before we have that $\max_{u\in T\cup S}A_{k(u)}[u, y(u)]$ is a $(1+\Oh(\eps))$-approximation to $\max_{u\in T\cup S, v\in V}d_{\bG^+}(u,v)$. Thus, in both cases we output a $(1+\Oh(\eps))$-approximation to the value 
        \[
            \tilde{D}\coloneqq \max \Big\{ \ d_{\bG^+}(w_1, w_2) \ , \ d_{\bG^+}(w_1, w_3) \ , \ d_{\bG^+}(w_2, w_3) \ , \ \max_{u\in T\cup S, v\in V} d_{\bG^+}(u,v) \ \Big\}.
        \]
        
        By \cref{lem:samp_inc} one of $\max_{(u,v) \in S \times V} d_{\bG^+}(u,v)$, $\max_{u,v \in \{w_1, w_2, w_3\}} d_{\bG^+}(u,v)$, and $\max_{(u,v) \in T \times V} d_{\bG^+}(u,v)$ is at least $2/3 \cdot \diam(\bG^+)$.
        Thus, $\tilde{D} \geq 2/3 \cdot \diam(\bG^+)$ (observe that $\tilde{D} \leq \diam(\bG^+)$ trivially holds).
        Choosing a slightly smaller $\varepsilon$ in the algorithm yields the correctness.
    
        \subparagraph*{Runtime:} 
        The preprocessing time consists of constructing the sets $S,W_1, W_2, W_3$ in time $\Ohtilde(n^{1/2} m)$, and computing the matrices $A_k$ for $\log_{1+\eps}(Mn)$ many values of $k$. Computing the distances needed for the construction of the matrices takes $\Ohtilde(m\sqrt{n})$ and each product computation, by \cref{thm:maxmintime} takes $\Ohtilde(\sqrt{n}\cdot \sqrt{n}\cdot n^{1/2 \cdot (\omega+3)/2}) = \Ohtilde(n^{(\omega+7)/4})$. 
        During query time we query $\Oh(|T\cup S|)$ values of the matrices $A_k$ for a total query time of $\Ohtilde(\sqrt{n})$.

        If we additionally preprocess the values $\tilde{D}_r$ we spend an additional $\Ohtilde(n^{2.5})$, thus dominating the runtime. In this case our query time is constant.
        
         \item[Algorithm \ref{it:incdiam:iii}:] 
         We make two changes to algorithm \ref{it:incdiam:i}. First, we take $\alpha = \beta = 4$ (instead of 3) in \cref{lem:samp_inc}.
         Next, instead of computing an approximation to $\max_{s\in S\cup T, v\in V}d_{\bG^+}(s,v)$ we approximate $\max_{s,t\in S\cup T}d_{\bG^+}(s,t)$. We do this in the same way except that now our matrix $B$ is of dimension $\Ohtilde(\sqrt{n})\times \Ohtilde(\sqrt{n})$, indexed by $s,t\in S\cup T$ and our matrices $C_k$ are of dimension $\Ohtilde(\sqrt{n})\times n$, indexed by $t\in S\cup T, v\in V$. We then compute $A_k  \coloneqq B \maxminprod C_k$ as before.

         Upon the insertion of an edge $(x,y)$ we compute and output
         \[
         \max(d_{\bG^+}(w_1, w_2), d_{\bG^+}(w_1, w_3), d_{\bG^+}(w_2, w_3), \max_{u\in T\cup S}A_{k(u)}[u,y(u)]).
         \]

        \subparagraph*{Correctness:}
        By \cref{lem:samp_inc}, either one of $d_{\bG^+}(w_i, w_j) \geq \diam(\bG^+)/2$ or  $\max_{s,t\in S\cup T}d_{\bG^+}(s,t) > \diam(\bG^+)/2$. As we output a $(1+\eps)$-approximation to the maximum of these two values, we have a $(2+\Oh(\eps))$-approximation.

        \subparagraph*{Runtime:} As in algorithm \ref{it:incdiam:i}, our query time is $\Ohtilde(\sqrt{n})$ and our preprocessing time is dominated by $\Ohtilde{\sqrt{n}})$ calls to Dijkstra's algorithm and the time it takes to perform the $\maxmin$ products to compute the matrices $A_k$. By \cref{thm:maxmintime}, each computation takes $\Ohtilde(\sqrt{n}\cdot n^{1/2 \cdot (\omega+3)/2}) = \Ohtilde(n^{(\omega+5)/4})$.
        
        \item[Algorithm \ref{it:incdiam:iv}:] 
         We now combine the approaches of the previous two algorithms. We take $\alpha = 5, \beta = 5/2$ and $\gamma = \frac{\omega + 1}{2\omega}$ in \cref{lem:samp_inc}.

         Now we approximate $\max_{t\in T, v\in V}d_{\bG^+}(t,v)$ and $\max_{s,t\in S}d_{\bG^+}(s,t)$, treating $S$ and $T$ differently as they are now of different sizes. Using the previous techniques, this can be done in query time $\Ohtilde(n^\gamma + n^{1-\gamma})$ and preprocessing time
         \[
            \sT_{\maxmin}(n^\gamma,n^\gamma,n) + \sT_{\maxmin}(n^{1-\gamma},n,n).
         \]

        We output a $(1+\eps)$ approximation to 
        \[
         \max(d_{\bG^+}(w_1, w_2), d_{\bG^+}(w_1, w_3), d_{\bG^+}(w_2, w_3), \max_{t\in T, v\in V}d_{\bG^+}(t,v), \max_{s,t\in S}d_{\bG^+}(s,t)).
         \]

        \subparagraph*{Correctness:}
        By \cref{lem:samp_inc}, either one of $d_{\bG^+}(w_i, w_j) \geq \diam(\bG^+)(1/\alpha + 1/\beta) = 3/5 \cdot  \diam(\bG^+)$ or  $\max_{s,t\in S}d_{\bG^+}(s,t) > \diam(\bG^+)(1-2/\alpha) = 3/5 \cdot  \diam(\bG^+)$, or $\max_{t\in T,v\in V}d_{\bG^+}(t,v) > \diam(\bG^+)(1-1/\beta) = 3/5 \cdot  \diam(\bG^+)$. As we output a $(1+\eps)$-approximation to the maximum of these values, we have a $(5/3+\Oh(\eps))$-approximation.

        \subparagraph*{Runtime:} By our choice of $\gamma$, our query time is $\Ohtilde(n^\gamma) = \Ohtilde(n^{\frac{\omega+1}{2\omega}})$ and our preprocessing time is dominated by $\Ohtilde(n^\gamma)$ calls to Dijkstra's algorithm and two calls to a $\maxmin$ product. By \cref{thm:maxmintime},

        \[
        \sT_{\maxmin}(n^\gamma,n^\gamma,n) \leq n^{1-\gamma} \cdot n^{\gamma \cdot \frac{3+\omega}{2}} = n^{\frac{\omega^2 + 6\omega + 1}{4\omega}},
        \]
        \[
        \sT_{\maxmin}(n^{1-\gamma},n,n) \leq n^{2\gamma} \cdot n^{(1-\gamma) \cdot \frac{3+\omega}{2}} = n^{\frac{\omega^2 + 6\omega + 1}{4\omega}}.
        \]

        Giving us a final preprocessing time of $\Ohtilde(mn^{\frac{\omega + 1}{2\omega}} + n^{\frac{\omega^2 + 6\omega + 1}{4\omega}})$.

        \item[Algorithm \ref{it:incdiam:v}:] 
         Lastly, we combine \cref{lem:samp_inc} with \cref{thm:directedsinglenodeecc} to obtain a $5/2$ approximation. We construct the sets and points $S,T,w_1, w_2, w_3$ from $\cref{lem:samp_inc}$ with $\gamma = 1/2, \alpha = \beta = 5$. Now instead of computing a $(1+\eps)$ approximation to $\max_{s\in S\cup T, v\in V}d_{\bG^+}(s,v)$ we compute a $2$ approximation using \cref{thm:directedsinglenodeecc}\ref{it:directedsinglenodeecc:iii}.

         Upon query we output the largest among $d_{\bG^+}(w_1, w_2), d_{\bG^+}(w_1, w_3), d_{\bG^+}(w_2, w_3)$ and the 2-approximate largest eccentricity of a node in $S\cup T$.

        \subparagraph*{Correctness:}
        By \cref{lem:samp_inc}, either one of $d_{\bG^+}(w_i, w_j) \geq \diam(\bG^+)(1/\alpha + 1/\beta) = 2/5 \cdot  \diam(\bG^+)$  or $\max_{s\in S\cup T, v\in V}d_{\bG^+}(s,v) > \diam(\bG^+)(1-1/\beta) = 4/5 \cdot  \diam(\bG^+)$. As we output a $2$-approximation to the latter value, we know it is $\geq 2/5\cdot  \diam(\bG^+)$ thus we have a $5/2$-approximation.

        \subparagraph*{Runtime:} By our choice of $\gamma$, our query time is $\Ohtilde(\sqrt{n})$ and our preprocessing time is dominated by $\Ohtilde(\sqrt{n})$ calls to Dijkstra's algorithm and $\Ohtilde(\sqrt{n})$ calls to \cref{thm:directedsinglenodeecc}\ref{it:directedsinglenodeecc:iii}, running in total $\Ohtilde(m\sqrt{n})$ preprocessing time.
    
        \qedhere
    \end{description}
\end{proof}

Lastly, we demonstrate how to adapt \cref{lem:samp_inc} 
to make it applicable to radius as well.

\begin{lemma}\label{lem:samp_inc_diam}
    Let $\bG = (V,E)$ be a weighted undirected graph. Consider the following process:
    \begin{enumerate}
        \item Sample a hitting set $S \subseteq V$ of size $|S| = \Theta(n^{1/2} \log n)$;
        \item Let $w_1 \in V$ maximize $d_\bG(S, w_1)$ and let $W_1 \subseteq V$ be the $\Theta(n^{1/2})$ closest points to $w_1$;
        \item Let $w_2 \in V$ maximize $\min(4d_\bG(S, w_2), d_\bG(w_1, w_2))$ and let $W_2 \subseteq V$ be the $\Theta(n^{1/2})$ closest points to $w_2$;
        \item Let $w_3 \in V$ maximize $d_\bG(w_1, w_3)$;
    \end{enumerate}
    Consider adding to $\bG$ the edge $(x,y)$ of weight $w(x,y)$ resulting in the graph $\bG^+$. Either there is $z \in S \cup W_1 \cup W_2 \cup \{w_3\}$ with $\ecc_{\bG^+}(z) \leq 3/2 \cdot \radius(\bG^+)$ or $\ecc_\bG(w_1) \leq 3\radius(\bG^+)$.
\end{lemma}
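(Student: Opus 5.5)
Let $c$ be a center of $\bG^+$ and $R \coloneqq \radius(\bG^+) = \ecc_{\bG^+}(c)$. As in \cref{lem:samp_inc}, I will assume the graph is unweighted and that the relevant fractions of $R$ are integers; otherwise the argument incurs an additive $\Oh(M)$ error. The plan is a case analysis on where $c$ lies relative to the sampled structures. In each case we either exhibit a node $z$ in $S\cup W_1\cup W_2\cup\{w_3\}$ close to $c$ in $\bG^+$, in which case $\ecc_{\bG^+}(z)\le R + d_{\bG^+}(z,c)\le 3R/2$ as soon as $d_{\bG^+}(z,c)\le R/2$, or we bound $\ecc_\bG(w_1)$ directly.

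\textbf{Case 1: $d_\bG(c,S)\le R/2$.} The nearest $s\in S$ satisfies $d_{\bG^+}(c,s)\le d_\bG(c,s)\le R/2$, so $\ecc_{\bG^+}(s)\le 3R/2$.

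\textbf{Case 2: $d_\bG(c,S)>R/2$.} By the choice of $w_1$ we get $d_\bG(w_1,S)>R/2$. Since $S$ hits $W_1$ w.h.p., $W_1$ contains the ball of radius $R/2$ around $w_1$.
\begin{itemize}
\item If $d_\bG(c,w_1)\le R/2$, then $c\in W_1$ and we take $z=c$.
\item Otherwise $\min(4d_\bG(S,c),d_\bG(w_1,c))>R/2$. Hence $w_2$ satisfies $d_\bG(w_2,S)>R/8$ and $d_\bG(w_1,w_2)>R/2$, so $W_2$ contains the ball of radius $R/8$ around $w_2$. This is weaker than we would like, and I expect to need to redo the thresholds so that $W_2$ contains a ball of radius about $R/2$ around $w_2$ in the relevant subcase. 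The weight $4$ in the definition of $w_2$ suggests splitting at $d_\bG(c,S)>R/8$ versus $R/2$ and using the weaker bound together with the fallback $\ecc_\bG(w_1)\le 3R$.
\end{itemize}

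\textbf{Case 3: the fallback for $w_1$ and $w_3$.} For the remaining configurations I use the Triangle Lemma (\cref{lm:ktrianglethm} with $r=1$) on a triple such as $\{c,w_1,w_3\}$ or $\{w_1,w_2,w_3\}$: one pairwise distance is unchanged by inserting $(x,y)$.
\begin{itemize}
\item If $d_\bG(w_1,c)$ is unchanged, then $d_\bG(w_1,c)\le R$. Combined with $\ecc_\bG(c)\le$ (something involving $d_\bG(w_1,w_3)$), I aim for $\ecc_\bG(w_1)=d_\bG(w_1,w_3)\le 3R$.
\item If $d_\bG(c,w_3)$ is unchanged, then $d_\bG(c,w_3)\le R$, and we take $z=w_3$, which is within $R$ of $c$. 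This seems to give only $2R$, so I would instead try to show $d_\bG(w_1,w_3)\le d_\bG(w_1,c)+R$. Since $w_3$ is the farthest point from $w_1$, this gives $\ecc_\bG(w_1)\le d_\bG(w_1,c)+R$, and then I would bound $d_\bG(w_1,c)\le 2R$ via the shortcut structure: in $\bG^+$ we have $d_{\bG^+}(w_1,c)\le R$, and if the path uses the new edge, the Observation forces $w_1$ and $c$ onto opposite sides of $(x,y)$.
\end{itemize}

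The main obstacle is this last step: getting a purely multiplicative bound $\ecc_\bG(w_1)\le 3R$ when the center's distances are all shortcut. To handle it I would first try applying the Triangle Lemma to $\{w_1,c,v\}$, where $v$ is the farthest node from $w_1$ in $\bG$, which is $w_3$. This yields $d_\bG(w_1,w_3)\le d_\bG(w_1,c)+d_\bG(c,w_3)$, where at least one of the three pairwise distances is preserved and hence at most $R$, except for the pair $(w_1,w_3)$ itself. When $(w_1,w_3)$ is the preserved pair, I would use $d_\bG(w_1,w_3)=d_{\bG^+}(w_1,w_3)\le 2R$ through $c$, which already gives $2R\le 3R$. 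When another pair is preserved, bounding the remaining pair by $2R$ via $W_2$ and the $w_2$ case should give the factor $3$.
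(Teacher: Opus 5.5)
There is a genuine gap. The decisive step is left as an aim, and the routes you sketch for it cannot work.

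\textbf{Where your Case~3 fails.} You try to bound $\ecc_{\bG}(w_1)=d_{\bG}(w_1,w_3)\le 3R$ in every branch of the Triangle Lemma. In the branches that matter, this quantity genuinely exceeds $3R$:
\begin{itemize}
\item If $\ecc_{\bG}(w_1)>3R$ and $(w_1,c)$ is the preserved pair, then $d_{\bG}(c,w_3)>2R$ is forced. So nothing like ``$\ecc_{\bG}(c)\le\dots$'' can help; $\ecc_{\bG}(c)$ is uncontrolled, since only $\bG^+$-distances from $c$ are bounded.
\item If $(c,w_3)$ is the preserved pair, then $d_{\bG}(w_1,c)>2R$ is forced. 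So your hoped-for bound $d_{\bG}(w_1,c)\le 2R$ ``via the shortcut structure'' is false exactly when you need it. The Observation only says which side of $(x,y)$ each vertex lies on; it says nothing about original distances.
\end{itemize}
In these configurations you must use the first alternative of the lemma: exhibit a vertex of $W_1$ or $W_2$ within $R/2$ of $c$.

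Your Case~2 branch on $d_{\bG}(w_1,c)>R/2$ is also too weak. It only gives $\min(4d_{\bG}(S,c),d_{\bG}(w_1,c))>R/2$, hence a ball of radius $R/8$ around $w_2$. Splitting at $R/8$ misreads the role of the weight $4$.

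\textbf{The missing idea: assume $\ecc_{\bG}(w_1)>3R$ first and exploit it.}
\begin{itemize}
\item After your Case~1 you correctly have $d_{\bG}(w_1,S)>R/2$, so $W_1$ contains the $R/2$-ball around $w_1$.
\item All $\bG^+$-distances are at most $2R$ (route through $c$). Hence $d_{\bG}(w_1,w_3)>3R$ rules out $(w_1,w_3)$ as the pair preserved by the Triangle Lemma on $\{w_1,w_3,c\}$. Your remark that this pair, if preserved, gives $\le 2R$ is exactly this fact in contrapositive.
\item If $d_{\bG}(w_1,c)\le R$: take the vertex $y$ on $\pi_{\bG}(w_1,c)$ within $R/2$ of both ends. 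It lies in $W_1$ and has $\ecc_{\bG^+}(y)\le 3R/2$.
\item Otherwise $d_{\bG}(c,w_3)\le R$. The triangle inequality then gives the \emph{lower} bound $d_{\bG}(w_1,c)>2R$, the opposite of what you tried to prove.
\item This explains the weight $4$. Together with $d_{\bG}(c,S)>R/2$ we get $\min(4d_{\bG}(S,c),d_{\bG}(w_1,c))>2R$. So $w_2$ inherits $d_{\bG}(w_2,S)>R/2$ (hence $W_2$ contains the $R/2$-ball around $w_2$) and $d_{\bG}(w_1,w_2)>2R$.
\item Apply the Triangle Lemma again, to $\{w_1,w_2,c\}$. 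Both $(w_1,w_2)$ and $(w_1,c)$ exceed $2R$ in $\bG$ but are at most $2R$ in $\bG^+$, so neither is preserved. This forces $d_{\bG}(w_2,c)\le R$.
\item The midpoint of $\pi_{\bG}(w_2,c)$ is then the desired witness in $W_2$.
\end{itemize}
Note that $w_3$ is never the witness $z$; it only certifies $\ecc_{\bG}(w_1)>3R$.
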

\begin{proof}
    Let $c$ be the point that minimizes $\ecc_{\bG^+}(c)$, i.e., the center of $\bG^+$ and set $\tilde{R} \coloneqq \ecc_{\bG^+}(c) = \radius(\bG^+)$. For simplicity we assume $\tilde{R}$ is even, as otherwise we incur an additional additive error.
 
    If any point $s \in S$ has $d_\bG(s,c) \leq \tilde{R}/2$ then $d_{\bG^+}(s,c) \leq \tilde{R}/2$. Therefore, $\ecc_{\bG^+}(s)\leq 3\tilde{R}/2$. Otherwise, denote by $Z = \{v\in V : d_\bG(v,S) > \tilde{R}/2\}$. With high probability, the set $S$ hits any set of size $\Theta(n^{1/2})$ and thus for any point $z\in Z$ the $\Theta(n^{1/2})$ closest vertices to it contain its $\tilde{R}/2$ neighborhood. By our choice of $w_1$, we have $w_1 \in Z$.

    If $\ecc_\bG(w_1)\leq 3\tilde{R}$ then we are finished. Otherwise, $d_\bG(w_1,w_3)> 3\tilde{R}$. By \cref{lm:ktrianglethm} with $r=1$, one of the distances between $w_1,w_3,c$ must remain unchanged after the edge insertion. Since any two points $u,v \in \bG$ satisfy $d_{\bG^+}(u,v) \leq 2\tilde{R}$, this unchanged distance cannot be between $w_1$ and $w_3$. Thus, either $d_\bG(w_1,c) \leq \tilde{R}$ or $d_\bG(w_3, c) \leq \tilde{R}$. In the first case, some point $y$ on the shortest path between $w_1$ and $c$ has $d_\bG(w_1,y)\leq \tilde{R}/2$ and $d_\bG(y, c)\leq \tilde{R}/2$. Therefore by the same argument $\ecc_{\bG^+}(y) \leq 3/2 \cdot \tilde{R}$ and since $w_1 \in Z$ we have that $y\in W_1$. It is at this point that we assumed $\tilde{R}$ was even and would otherwise incur an additive error.

    Otherwise, we have $d_\bG(w_3, c)\leq \tilde{R}$ while $d_\bG(w_1,w_3)> 3\tilde{R}$ and so by the triangle inequality $d_\bG(w_1,c) > 2\tilde{R}$. Since $d_\bG(c,S)> \tilde{R}$, the point $c$ satisfies $\min(4d_\bG(c, S), d_\bG(c, w_1)) > 2\tilde{R}$. Thus, by our choice of $w_2$, we have that $d_\bG(w_1,w_2) > 2\tilde{R}$ and $d_\bG(w_2, S)> \tilde{R}/2$, so $w_2 \in Z$. Repeating the same argument, we know that by \cref{lm:ktrianglethm} with $r=1$ either $d_\bG(w_1,c) \leq \tilde{R}$ or $d_\bG(w_2,c)\leq \tilde{R}$. We are in the case that $d_\bG(w_1,c) > \tilde{R}$ so we have $d_\bG(w_2,c)\leq \tilde{R}$. Now, again by the same argument, we have that some point $y'\in W_2$ must have $\ecc_{\bG^+}(y')\leq 3/2 \cdot \tilde{R}$.
    
\end{proof}

\Cref{lem:samp_inc_diam} yields the following algorithm. 

\incradiusapprox

\begin{proof}
     We proceed as in \cref{lem:samp_inc_diam}, where we find $s,S,w_1,W_1,w_2,W_2,w_3$ by running repeatedly Dijkstra's algorithm. For every point $x\in X\coloneqq S\cup W_1\cup W_2\cup \{w_3\}$ we maintain its 2-approximate eccentricity under one edge insertion using the algorithm from \cref{thm:directedsinglenodeecc}\ref{it:directedsinglenodeecc:iii}.

     Upon the insertion of an edge, query the 2-approximate eccentricity of all points in $X$ and return the smallest eccentricity found, considering $\ecc_\bG(w_1)$ as a candidate as well (as a lower bound to $\ecc_{\bG^+}(w_1)$).
    \subparagraph*{Correctness:}
    By \cref{lem:samp_inc_diam},  either there is $z \in X$ with $\ecc_{\bG^+}(z) \leq 3/2 \cdot 
adius(\bG^+)$ or $\ecc_\bG(w_1) \leq 3\radius(\bG^+)$.
    Since we compute the minimum between a 2-approximation of the former and the second value the algorithm is indeed correct.

    \subparagraph*{Runtime:}The runtime is dominated by computing the approximate eccentricities of the set $X$. As $S, W_1, W_2$ are all of size $\Ohtilde(\sqrt{n})$, we have $|X| = \Ohtilde(\sqrt{n})$. Therefore, using \cref{thm:directedsinglenodeecc}\ref{it:directedsinglenodeecc:iii}, this step takes $\Ohtilde(m\sqrt{n})$ preprocessing time and $\Ohtilde(\sqrt{n})$ query time.
\end{proof}

We conclude this section by giving an algorithm for incremental diameter in disconnected graphs.

\incdiscon

\begin{proof}
If the disconnected graph has more than two connected components, then after any single edge insertion there will still be more than one connected component and therefore infinite diameter. Thus, in this case we just output $\infty$ no matter what edge is added.

If the original graph has two connected components, $S_1, S_2$, we compute for every vertex $v\in S_1$ its $5/3$-approximate eccentricity in the graph $\bG[S_1]$. Denote its true eccentricity by $\eps_1(s) \coloneqq \ecc_{\bG[S_1]}(s)$ and the approximation by $\tilde{\eps}_1(v)$. Similarly, we approximate all the eccentricities $\tilde{\eps}_2$ of vertices in the graph $\bG[S_2]$. Using the all-node eccentricity $5/3$-approximation of Chechik et al. \cite{Chechik2014BetterAA}, this can be done in $\Ohtilde(m\sqrt{n})$ time. Let $D_1$ be the maximal approximate eccentricity computed within $S_1$ and $D_2$ be the maximal approximate eccentricity computed within $S_2$.

Now, when an edge $(u,v)$ is added, if $u$ and $v$ lie in the same connected component in the original graph, then the graph is still disconnected and thus the diameter is infinite. Otherwise, w.l.o.g $u\in S_1$ and $v\in S_2$. In this case we output $\tilde{D}\coloneqq\max(D_1, D_2, \tilde{\eps}_1(u)+w(u,v) + \tilde{\eps}_2(v))$. We claim that this gives the desired approximation.

Consider the longest path in the graph after adding the edge $(u,v)$, denote its length by $D\coloneqq \diam(\bG\cup (u,v))$. If the path is contained entirely in $S_1$, then its length is at most the diameter of $\bG[S_1]$ and so $D_1$ is a good approximation for it, $\frac{3D}{5} \leq D_1 \leq D$. Similarly, if the path is contained in $S_2$ then $\frac{3D}{5} \leq D_2 \leq D$. Otherwise, the path must begin in $S_1$, go through the edge $(u,v)$ and then continue in $S_2$. In this case, the length of path of the path contained in $S_1$ is bounded by $\eps_1(u)$ and the length of the path contained in $S_2$ is bounded by $\eps_2(v)$. Therefore, $D\leq \eps_1(u) + w(u,v) + \eps_2(v)$, and by our approximation guarantees we have 
$$\frac{3D}{5} \leq \frac{3\eps_1(u)}{5} + w(u,v) + \frac{3\eps_2(v)}{5} \leq \tilde{\eps}_1(u) + w(u,v) + \tilde{\eps}_2(v).$$

We conclude that $\frac{3D}{5}\leq \tilde{D}\leq D$.
\end{proof}
\section{Lower Bounds} \label{sec:lb}

In this section, we show the following lower bounds.

\lbdec

\inclb

We give the proof of \cref{lem:lb_dec} in \cref{subsec:exact_dec_lb}, of \cref{thm:inc_lb}\ref{it:inc_lb:i},\ref{it:inc_lb:ii} in \cref{subsec:undir-lbs} and of \cref{thm:inc_lb}\ref{it:inc_lb:iii},\ref{it:inc_lb:iv} in \cref{subsec:dir-lbs}.

\subsection{Lower Bound for Decremental Diameter}
\label{subsec:exact_dec_lb}

\begin{figure*}[thbp]
   \centering
   \scalebox{0.9}{\begin{tikzpicture}[scale=1]
    \begin{scope}[scale=0.9]
        \begin{scope}[shift={(-2.5,4)},rotate=45]
            \foreach \i in {1,...,4} {
              \node (L\i) at (1*\i-2.5,0) {$a_{\i}$};
            }
            \draw[thick] (0,0) ellipse (2.4 and 0.8);
        \end{scope}
        
        \begin{scope}[shift={(2.5,4)},rotate=-45]
            \foreach \i in {1,...,4} {
              \node (R\i) at (1*\i-2.5,0) {$b_{\i}$};
            }
            \draw[thick] (0,0) ellipse (2.4 and 0.8);
        \end{scope}
        
        \foreach \i in {1,...,4} {
          \node (B\i) at (1*\i-2.5,0) {$c_{\i}$};
        }
        \draw[thick] (0,0) ellipse (2.4 and 0.8);
        
        \foreach \i in {1,...,4} {
            \foreach \j in {1,...,4} {
                \draw[opacity=0.25] (R\i) -- (L\j);
            }
        }

        \draw[opacity=0.25] (L1) -- (B3);
        \draw[opacity=0.25] (L4) -- (B1);
        \draw[opacity=0.25] (L4) -- (B4);

        \draw[opacity=0.25] (R1) -- (B1);
        \draw[opacity=0.25] (R3) -- (B4);
        \draw[opacity=0.25] (R4) -- (B1);
        
        \draw[red, thick] (R2) -- (L2);
        \draw[teal, thick] (L2) -- (B2);
        \draw[teal, thick] (L2) -- (B3);
        \draw[teal, thick] (R2) -- (B1);
        \draw[teal, thick] (R2) -- (B3);
        
        \node (X) at (-4,1) {$\ell$};
        \node (Y) at (4,1) {$r$};
        
        \foreach \i in {1,...,4} {
            \draw[opacity=0.25] (X) -- (L\i);
            \draw[opacity=0.25] (X) -- (B\i);
        }
        \foreach \i in {1,...,4} {
            \draw[opacity=0.25] (Y) -- (R\i);
            \draw[opacity=0.25] (Y) -- (B\i);
        }
        
    \end{scope}
    
    \begin{scope}[scale=0.4, shift={(13,10)}]
        \node (XM) {$
        \underbrace{
        \begin{bmatrix}
        0 & 0 & 1 & 0 \\
        \textcolor{teal}{0} & \textcolor{teal}{1} & \textcolor{teal}{1} & \textcolor{teal}{0} \\
        0 & 0 & 0 & 0 \\
        1 & 0 & 0 & 1
        \end{bmatrix}
        }_{X}
        $};

        \node (YM) [right=4mm of XM] {$
        \underbrace{
        \begin{bmatrix}
        1 & \textcolor{teal}{1} & 0 & 1 \\
        0 & \textcolor{teal}{0} & 0 & 0 \\
        0 & \textcolor{teal}{1} & 0 & 0 \\
        0 & \textcolor{teal}{0} & 1 & 0
        \end{bmatrix}
        }_{Y}
        $};

        \node[left=-0.5mm of YM, yshift=3mm] {$\times$};

        \node (ZM) [right=4mm of YM] {$
        \underbrace{
        \begin{bmatrix}
        0 & 1 & 0 & 0 \\
        0 & \textcolor{red}{1} & 0 & 0 \\
        0 & 0 & 0 & 0 \\
        1 & 1 & 1 & 1
        \end{bmatrix}
        }_{Z}
        $};

        \node[left=-0.5mm of ZM, yshift=3mm] {$=$};
    \end{scope}
\end{tikzpicture}}
   \caption{
       An example of the graph $\bG$ used in the reduction of \cref{lem:lb_dec}.
    }
   \label{fig:dec_lb}
\end{figure*}

\begin{proof}[Proof of \cref{lem:lb_dec}]
    Let $X, Y$ be two boolean $n \times n$ matrices whose boolean product $Z \coloneqq XY$ we wish to compute. Based on $X$ and $Y$, we construct an unweighted graph $\bG$. (In our case $\bG$ is undirected, extending it to a directed graph is straightforward.) Executing a decremental diameter (resp. decremental eccentricities) algorithm on $\bG$ allows us to extract $Z$ from the results.

    Refer to \cref{fig:dec_lb} for an example of $\bG$. The node set of $\bG$ consists of a tripartite set $A \cup B \cup C$ and two auxiliary nodes $\ell, r$. The sets $A, B, C$ each consist of $n$ nodes, specifically $A = \{a_1, \ldots, a_n\}$, $B = \{b_1, \ldots, b_n\}$, and $C = \{c_1, \ldots, c_n\}$. The edge set is defined as follows: The nodes $\ell$ and $r$ are connected to all nodes in $A \cup C$ and $B \cup C$, respectively. All nodes in $A$ and $B$ are connected to each other (forming a complete bipartite graph between $A$ and $B$). Additionally, there is an edge $(a_i, c_k)$ if $X_{i,k}=1$, and an edge $(c_k, b_j)$ if $Y_{k,j}=1$.
    
    Observe that the distance from $\ell$ (resp. $r$) to any node in its neighbor set $A \cup C$ (resp. $B \cup C$) is one. Furthermore, the distance from any $c_k$ to any other node is at most two (via $\ell$ or $r$), and this remains true even if an edge of the form $(a_i, b_j)$ is removed. Finally, the distance from any $a_i$ to $b_j$ is initially one. This distance remains 1 for any pair $(a_{i'}, b_{j'})$ where $i \neq i'$ or $j \neq j'$, even in $\bG \setminus (a_{i'}, b_{j'})$. However, in $\bG \setminus (a_i, b_j)$, the distance between $a_i$ and $b_j$ is two if and only if there exists a $c_k$ such that both $(a_i, c_k)$ and $(c_k, b_j)$ are edges (implying $X_{i,k}=1$ and $Y_{k,j}=1$). Otherwise, the distance is three. We conclude that $Z_{i,j} = 1$ if and only if $\diam(\bG \setminus (a_i,b_j))=2$, and $Z_{i,j} = 0$ if and only if $\diam(\bG \setminus (a_i,b_j))=3$.
    Similarly, $Z_{i,j}=1$ if and only if $\ecc_{\bG\setminus(a_i,b_j)}(a_i)=2$, and $Z_{i,j}=0$ if and only if $\ecc\dots=3$.
\end{proof}

\subsection{Lower Bounds for Incremental Algorithms in Undirected Graphs}  \label{subsec:undir-lbs}

\subsubsection*{Base graph}
Both of our reductions for undirected graphs begin with the same base graph $\bG$, constructed from an OV instance as follows. Let $V$ be the vector set in the OV instance, with $|V| = N$, and let $C$ be the coordinate set. We begin constructing the graph $\bG$ by creating  four layers of nodes. The first layer $V_1$ consists of the vectors in $V$, and the second consists of the coordinates $C$. We abuse notation slightly and allow $C$ to stand for both the coordinate set and the set of nodes in the second layer of $\bG$. The third and fourth layers $V_2$ and $V_3$ are copies of $V_1$. For a $v \in V$, we let $v_1$, $v_2$, and $v_3$ denote the nodes corresponding to $v$ in $V_1$, $V_2$, and $V_3$ respectively. Let $a$ be a constant to be decided later. We add an undirected path of $a$ edges between $v \in V_1$ and $c \in C$ whenever $v[c] = 1$. We add a path of $a$ edges between $c \in C$ and $v \in V_2$ whenever $v[c] = 1$. We add an undirected path of $a$ edges between matching nodes in $V_2$ and $V_3$. 
Refer to \cref{fig:inc_lb} for an example of a graph $\bG$.

\begin{figure*}[thbp]
   \centering
   \scalebox{0.9}{\begin{tikzpicture}[scale=1, dot/.style={
        draw, 
        circle, 
        fill, 
        minimum size=1mm, 
        inner sep=0pt
    }]
    \begin{scope}[scale=0.9]
        \begin{scope}[shift={(-3,4)},rotate=90]
            \node[dot, red, label={[red]below:$s$}] (s) at (0,2){};
        
            \foreach \i in {1,...,6} {
              \node[dot] (L\i) at (0.8*\i-2.8,0) {};
            }
            \draw[thick] (0,0) ellipse (2.4 and 0.4) node[below=2.4] {$V_1$};
        \end{scope}

        \begin{scope}[shift={(0,4)},rotate=90]
            \foreach \i in {1,...,4} {
              \node[dot] (C\i) at (1.2*\i-3.0,0) {};
            }
            \draw[thick] (0,0) ellipse (2.2 and 0.4) node[below=2.4] {$C$};
        \end{scope}

        \begin{scope}[shift={(3,4)},rotate=90]
            \foreach \i in {1,...,6} {
              \node[dot] (R\i) at (0.8*\i-2.8,0) {};
            }
            \draw[thick] (0,0) ellipse (2.4 and 0.4) node[below=2.4] {$V_2$};
        \end{scope}
        
        \begin{scope}[shift={(6,4)},rotate=90]
            \foreach \i in {1,...,6} {
              \node[dot] (RR\i) at (0.8*\i-2.8,0) {};
            }
            \draw[thick] (0,0) ellipse (2.4 and 0.4) node[below=2.4] {$V_3$};
        \end{scope}

        \foreach \x/\y in {1/2,2/2,2/3,3/1,3/2,4/1,4/4,5/1,5/3,6/1,6/3,6/4} {
            \draw[gray] (L\x) -- node[dot] {} (C\y);
            \draw[gray] (R\x) -- node[dot] {} (C\y);
        }
         \foreach \i in {1,...,6} {
             \draw[gray] (R\i) -- node[dot] {} (RR\i);
         }

         \draw[red] (s) -- (L1);
    \end{scope}

     \begin{scope}[scale=0.8, shift={(12,5)}]
        \node (XM) {$V = \{\mathtt{0101},\mathtt{0110}, \mathtt{1100}, \mathtt{1001}, \mathtt{1010}, \mathtt{1011}\}$};
    \end{scope}
\end{tikzpicture}}
   \caption{
        An example of the graph $\bG$ used in the reduction for undirected graphs where $a = 2$. The red node and edge is added to $\bG$ in the first round of the lower bound of \cref{thm:inc_lb}\ref{it:inc_lb:i}.
    }
   \label{fig:inc_lb}
\end{figure*}

\begin{proposition} \label{lem:basegphund}
    If $v \in V$ is in an orthogonal pair, then $\ecc_{\bG}(v_1) \geq 5a$. Otherwise, $\ecc_{\bG}(v_1) \leq 3a$.
\end{proposition}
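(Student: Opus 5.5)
The plan is to classify every node of $\bG$ by its distance from $v_1$, using the layered structure: $V_1$, then $C$, then $V_2$, then $V_3$, with consecutive layers joined by paths of exactly $a$ edges. Every edge lies on one of these paths. So a shortest path from $v_1$ to any layer node is a walk in the layered ``skeleton'' graph (vectors and coordinates as vertices) with each skeleton step costing $a$. Internal path nodes are at most $a/2$ farther from $v_1$ than the nearer of their two path endpoints. I will therefore first compute skeleton distances, and then handle internal nodes.

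For the upper bound, assume $v$ is in no orthogonal pair. Then $v$ has at least one coordinate $c$ with $v[c]=1$, since otherwise $v$ is orthogonal to everything. (I will assume w.l.o.g.\ $N\ge 2$ and that the all-zero vector is absent, or else handle it as part of the hypothesis.) The distances are then as follows.
\begin{itemize}
\item Every $c$ with $v[c]=1$ is at distance $a$.
\item For every $u\in V$, $u$ and $v$ share some coordinate $c'$ because they are not orthogonal. Hence $u_2$ is at distance $2a$ via $v_1\to c'\to u_2$, and $u_3$ is at distance $3a$.
\item Every coordinate $c$ is adjacent (via an $a$-path) to some $u_2$ with $u[c]=1$, assuming no coordinate is all-zero; such coordinates can be deleted w.l.o.g. So $c$ is at distance at most $3a$.
\item Every $u_1$ is reachable as $v_1\to c'\to u_1$, where $c'$ is a shared coordinate, giving distance $2a$.
\end{itemize}
It remains to check internal path nodes. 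An internal node of a path whose endpoints are at distances $d_1,d_2$ has distance at most $(d_1+d_2+a)/2$. The worst cases are the following.
\begin{itemize}
\item A $C$–$V_2$ path with endpoints at distances $3a$ and $2a$ gives internal nodes at distance $\le 3a$.
\item $V_2$–$V_3$ paths end at the leaf $u_3$ at distance $3a$.
\item $V_1$–$C$ paths have endpoints at distances $\le 2a$ and $\le 3a$, giving internal nodes at distance $\le 3a$.
\end{itemize}
So $\ecc_\bG(v_1)\le 3a$.

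For the lower bound, let $u$ be orthogonal to $v$. The target is $u_3$, whose only neighbour path leads to $u_2$, so $d(v_1,u_3)=d(v_1,u_2)+a$. It therefore suffices to show $d(v_1,u_2)\ge 4a$. Any path from $v_1$ to $u_2$ in the skeleton alternates between vector nodes (in $V_1\cup V_2$) and coordinate nodes, since $V_3$ nodes are dead ends and never help. The first step goes from $v_1$ to some $c$ with $v[c]=1$, and the last step enters $u_2$ from some $c'$ with $u[c']=1$. Since $u$ and $v$ are orthogonal, $c\neq c'$, so the skeleton walk has length at least $4$: $v_1,c,\text{(vector node)},c',u_2$. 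This gives $d(v_1,u_2)\ge 4a$ and $d(v_1,u_3)\ge 5a$.

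The main subtlety is the upper-bound bookkeeping for internal path nodes and the degenerate cases (zero vectors, unused coordinates). I would dispatch these by a w.l.o.g.\ normalization of the OV instance before building $\bG$, and by the midpoint bound above. The lower bound relies only on the parity/alternation argument, which is routine.
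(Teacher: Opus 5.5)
Your proposal is correct and follows essentially the paper's proof: the lower bound forces a shortest $v_1$--$u_3$ path through two distinct coordinate nodes (your bipartite-skeleton parity argument is a cleaner form of the paper's ``coordinate nodes are $2a$ apart''), and the upper bound uses a shared coordinate to put every $u_1,u_2$ at distance $2a$, with every other node within $a$ of such a node. The only blemish is your opening remark that internal nodes are at most $a/2$ beyond the \emph{nearer} endpoint, which is false (for $a\ge 3$, the node adjacent to $u_3$ is $a-1$ beyond $u_2$). You never use that remark, though, and the bound $(d_1+d_2+a)/2$ you actually apply is correct.
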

\begin{proof}
    Suppose $(v,u)$ is an orthogonal pair, and consider the shortest path $P$ from $v_1$ to $u_3$. By our construction, $P$ must go through a coordinate node $c \in C$ so that $v[c] = 1$, and must also go though a $c' \in C$ so that $u[c'] = 1$. Because $v$ and $u$ are orthogonal, $c \neq c'$, so $P$ must go through at least two coordinate nodes. Any two coordinate nodes have distance at least $2a$ from each other. Therefore, the shortest path from $v_1$ to $u_3$ must have the following structure: a path of length at least $a$ from $v_1$ to $c$, then a path of length at least $2a$ from $c$ to $c'$, then a path of length at least $2a$ from $c'$ to $u_3$. Thus, $\ecc_{\bG}(v_1) \geq 5a$.

    Otherwise, suppose $v$ is not orthogonal to any other vector. Consider an arbitrary node $w \in V(G)$. If $w$ lies on a path between $u_1 \in V_1$ and $c \in C$, then there is a path of length at most $3a$ from $v_1$ to $w$ which takes a length $2a$ path from $v_1$ to $u_1$, then a path of length at most $a$ from $u_1$ to $w$. Similarly, if there is a $u$ so that $w$ lies on a path between $c \in C$ and $u_2$, or between $u_2$ and $u_3$, then there is a length at most $3a$ path from $v_1$ to $w$ going through $u_2$. This exhausts all the cases, so $\ecc_{\bG}(v_1) = 3a$.
\end{proof}
We devise reductions from the above base graph which gives us desired lower-bounds.
\subsubsection*{Reductions}

\begin{proof}[Proof of \cref{thm:inc_lb}\ref{it:inc_lb:i}]
We assume towards contradiction that there exists such data structure. We build a strongly subquadratic time algorithm for OV as follows. First, construct the graph $\bG_{\ecc}$ which consists of $\bG$ with $a \coloneq \lceil 1/\eps \rceil$, along with an additional node $s$. We use our data structure to maintain the eccentricity of $s$ in $\bG_{\ecc}$ under the following updates: for each $v_1 \in V_1$, we add an edge $(s,v_1)$ to $\bG_{\ecc}$. After each update, we query the $(5/3-\eps)$-approximate eccentricity of $s$. If, after any update, the data structure returns at least $5a+1$, then we return that there is an orthogonal pair. Otherwise, after all the updates are completed, we return that there is no orthogonal pair.

We observe that if an orthogonal pair $(v,w) \in V$ exists, then after the update which adds $(s,v_1)$, we have $\ecc(s) = \ecc(v_1)+1 \geq 5a+1$ by \ref{lem:basegphund}. If no such pair exists, then after every update $\ecc(s) \leq 3a+1$. 

It remains to analyze the runtime. By the OV hypothesis, we can take the dimension of the OV instance to be polylog$(N)$. It follows that the number of edges $m$ in $\bG$ is $\Oh(a \cdot N \cdot \polylog N) = \Ohtilde(N)$ if we take $a$ to be a constant dependent on $\eps$. We make $N$ queries to the eccentricity algorithm. Therefore, if there exists a data structure for $(5/3-\eps)$-approximate incremental eccentricity with $\Oh(m^{2-\delta})$ preprocessing time and $\Oh(m^{1-\delta})$ query time, then OV can be solved in time $\Ohtilde(N^{2-\delta})$.
\end{proof}

\begin{proof}[Proof of \cref{thm:inc_lb}\ref{it:inc_lb:ii}]
Here is our construction. We create two copies of $\bG$, called $\bG^{(1)}$ and $\bG^{(2)}$. Let $V_i^{(1)}$ and $V_i^{(2)}$ denote the copies of $V_i$ in $\bG^{(1)}$ and $\bG^{(2)}$, respectively. We make the following updates: for each $v^{(1)} \in V_1^{(1)}$ and corresponding $v^{(2)} \in V_1^{(2)}$, we add the edge $(v^{(1)}, v^{(2)})$. We call the graph after this update $\bG_v$.
\begin{claim}
    If there is an orthogonal pair in $V$, then there exists a $v$ so that $\diam(\bG_v) \geq 10a+1$ and $\radius(\bG_v) \geq 5a+1$. Otherwise, for any $v$, $\diam(\bG_v) \leq 6a+1$ and $\radius(\bG_v) \leq 3a+1$.
\end{claim}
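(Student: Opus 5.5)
The plan is to reduce everything to Proposition~\ref{lem:basegphund}, applied inside each copy, plus the fact that the added edge $(v^{(1)},v^{(2)})$ is the only connection between the two copies. Since $\bG^{(1)}$ and $\bG^{(2)}$ are disjoint before the update, every path between them in $\bG_v$ uses this edge. Hence for $p\in\bG^{(1)}$ and $q\in\bG^{(2)}$ we get
\[
d_{\bG_v}(p,q)=d_{\bG}(p,v_1)+1+d_{\bG}(v_1,q),
\]
identifying both copies with $\bG$. Distances inside a single copy are unchanged, since a detour through the other copy would use the edge twice.

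\textbf{Orthogonal case.} Let $(v,u)$ be an orthogonal pair and take the update for this $v$. By Proposition~\ref{lem:basegphund}, $d_{\bG}(v_1,u_3)\ge 5a$. The pair $u_3^{(1)},u_3^{(2)}$ is then at distance at least $5a+1+5a=10a+1$, which gives the diameter bound.

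For the radius, fix any center $c$ and assume by symmetry that $c\in\bG^{(1)}$. Then
\[
\ecc_{\bG_v}(c)\ \ge\ d(c,u_3^{(2)})=d(c,v_1)+1+d(v_1,u_3)\ \ge\ 5a+1,
\]
so $\radius(\bG_v)\ge 5a+1$.

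\textbf{Non-orthogonal case.} Consider any $v$ and the node $v_1$.
\begin{itemize}
\item If $v$ is not in an orthogonal pair, Proposition~\ref{lem:basegphund} gives $\ecc_{\bG}(v_1)\le 3a$. So $\ecc_{\bG_v}(v_1^{(1)})\le \max(3a,\,1+3a)=3a+1$, hence $\radius\le 3a+1$. The diameter is at most twice the eccentricity of $v_1^{(1)}$, and a cross-copy pair is at distance at most $3a+1+3a=6a+1$.
\item Distances within one copy are bounded by $\diam(\bG)$. If no vector lies in an orthogonal pair, every $w\in\bG$ is within $3a$ of every $v_1$, so any $p,q$ in one copy satisfy $d(p,q)\le d(p,v_1)+d(v_1,q)\le 6a$.
\end{itemize}
Hence $\diam(\bG_v)\le 6a+1$.

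\textbf{Main obstacle.} The main issue is that Proposition~\ref{lem:basegphund} only bounds eccentricities of $V_1$-nodes, so intra-copy distances between arbitrary nodes need the routing-through-$v_1$ argument above. I also have to check that $\bG$ is connected whenever no pair is orthogonal, i.e.\ that every vector has some $1$-coordinate; if needed this can be ensured by padding the OV instance.

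With $a=\lceil 1/\eps\rceil$, the gaps $10a+1$ versus $6a+1$ and $5a+1$ versus $3a+1$ exceed a factor $5/3-\eps$. The $N$ queries then yield the contradiction exactly as in the proof of part~\ref{it:inc_lb:i}.
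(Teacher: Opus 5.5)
Your proof is correct and follows the same route as the paper: every cross-copy path must use the single inserted edge $(v^{(1)},v^{(2)})$, so Proposition~\ref{lem:basegphund} applied to $v_1$ in each copy gives the $10a+1$ and $5a+1$ lower bounds in the orthogonal case and the $3a+1$ eccentricity of $v_1^{(1)}$ otherwise. You are also a bit more careful than the paper in two places: splitting pairs into cross-copy ($\le 3a+1+3a$) and same-copy ($\le 6a$, routing through $v_1$) actually yields $6a+1$, whereas the paper's one-line step from eccentricity $3a+1$ would only give $6a+2$; and your bound $\ecc_{\bG_v}(c)\ge d(c,u_3^{(2)})$ for an arbitrary center $c$ makes the radius claim explicit.
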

\begin{claimproof}
    If $v$ is in an orthogonal pair, then by \ref{lem:basegphund} $\ecc_{\bG^{(1)}}(v_1^{(1)}) =  \ecc_{\bG^{(2)}}(v_1^{(2)})\geq 5a$. Let $w^{(1)}$ be the furthest node from $v_1^{(1)}$ in $\bG^{(1)}$ and $w^{(2)}$ be the furthest node from $v_1^{(2)}$ in $\bG^{(2)}$. In the graph $\bG_v$, the only path from $w^{(1)}$ to $w^{(2)}$ goes through the edge $(v^{(1)}, v^{(2)})$. Therefore, $\diam(\bG_v) \geq 10a+1$ and, moreover, $\radius(\bG_v) \geq 5a+1$.

    Otherwise, suppose there is no orthogonal pair in the OV instance. Then for any graph $\bG_v$, the node $v^{(1)} \in V_1^{(1)}$ has eccentricity at most $3a+1$, so $\diam(\bG_v) \leq 6a+1$ and $\radius(\bG_v) \leq 3a+1$.
\end{claimproof}
Following the eccentricity construction, the graph consists of $\Ohtilde(N)$ edges, 
and we perform $N$ queries. Consequently, the existence of a data structure 
as described in the statement would yield a contradiction of the OV conjecture.
\end{proof}

\subsection{Lower Bounds for Incremental Algorithms in Directed Graphs} \label{subsec:dir-lbs}
\subsubsection*{Base graph}
As in the undirected case, we set $a \coloneq \lceil 1/\eps \rceil$. We construct the base graph $\bG$ as follows. As before, we create three node layers $V_1$, $V_2$, and $V_3$ which are copies of the vector set. We also have a layer $C$ which is a copy of the coordinate set. For each $v_1 \in V_1$, $c \in C$, we add an edge $(v, c)$ if $v[c] = 1$. For each $u \in V_2$, we add a directed path of $a$ edges from $c$ to $u$ if $u[c] = 1$. For each edge $(x,y)$ in this path, we add an additional edge $(y,x)$, so the path is directed in both directions. For any pair of nodes in $v_2 \in V_2$ and $v_3 \in V_3$ which correspond to the same vector, we add a path of length $a$ from $v_2$ to $v_3$. Lastly, for each $v \in V$, we add a path of $a$ edges from $v_2$ to $v_1$.
\begin{lemma} \label{lem:direcclb}
    Let $v \in V$ be in an orthogonal pair. Then $\ecc_{\bG}(v_1) \geq 4a+1$. Otherwise, $\ecc_{\bG}(v_1) \leq 2a+1$. 
\end{lemma}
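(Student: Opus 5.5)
The plan is to compute distances out of $v_1$ directly, using the fact that edges in $\bG$ leave a node in only a few ways. Out of $v_1$ there are only the single edges $(v_1,c)$ with $v[c]=1$. Out of a coordinate node there are only the bidirected paths of length $a$ towards nodes of $V_2$. Out of $u_2$ there are three options: the bidirected paths back to coordinates, the directed path of length $a$ to $u_3$, and the directed path of length $a$ to $u_1$. Every node $u_3\in V_3$ has exactly one incoming route, namely the path from $u_2$. I will use the standard assumptions on an OV instance: every coordinate is used by some vector, and no vector is the zero vector (a zero vector trivially yields an orthogonal pair, which is detected separately).

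\textbf{Orthogonal case.} Let $(v,u)$ be orthogonal. I bound $d_{\bG}(v_1,u_3)$ from below by cutting the shortest path into segments. The path must enter $u_3$ through $u_2$, which costs $a$. It must enter $u_2$ along a path from some coordinate $c'$ with $u[c']=1$, which costs another $a$. It must leave $v_1$ through an edge to some $c$ with $v[c]=1$, which costs $1$. Orthogonality gives $c\neq c'$, so the path travels from one coordinate to a different one. This middle segment is the key step. The route $c\to w_2\to c'$ costs $2a$, and the route $c\to w_2\to w_1\to c'$ costs $2a+1$. Since these are the only ways to leave and re-enter the coordinate layer, the segment costs at least $2a$. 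Summing the pieces gives $1+2a+a+a=4a+1$, so $\ecc_{\bG}(v_1)\ge 4a+1$. A cleaner way to formalize this is to note that every path from $c$ to $c'$ must pass through some $w_2$, and the distance between a coordinate node and any $V_2$ node is at least $a$.

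\textbf{Non-orthogonal case.} Suppose $v$ is not orthogonal to any vector. Then for every $u$ there is a shared coordinate $c$, and the route $v_1\to c\to u_2$ shows $d(v_1,u_2)\le a+1$. From $u_2$, every node on the directed paths to $u_3$ and to $u_1$, including $u_3$ and $u_1$ themselves, lies within $a+1+a=2a+1$ of $v_1$. It remains to cover the nodes on the bidirected paths between a coordinate $c''$ and $u_2$ with $u[c'']=1$. Each such node is within $a$ of $u_2$ along that path, so it is within $2a+1$ of $v_1$. In particular every coordinate node is within $2a+1$, since each coordinate is used by some vector. This exhausts all node types, so $\ecc_{\bG}(v_1)\le 2a+1$.

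\textbf{Main obstacle.} The only delicate point is the lower bound on the coordinate-to-coordinate segment in the orthogonal case. It needs a careful enumeration of the exits from $C$ to confirm that no route shorter than $2a$ exists.
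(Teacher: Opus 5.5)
Your proof is correct and follows essentially the same approach as the paper. In the orthogonal case you use the same decomposition of the $v_1\to u_3$ path through two distinct coordinates. In the other case you use the same enumeration of node types reachable via some $u_2$ within $a+1$.

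Your accounting $1+2a+a+a$ also yields exactly the stated bound $4a+1$. The paper's write-up instead uses $2a-1$ for the final segment, records only $\geq 4a$, and calls the coordinate-to-coordinate distance exactly $2a$ where only ``at least $2a$'' holds.
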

\begin{proof}
    Suppose $(v,u)$ is an orthogonal pair. Observe that any path from $v_1$ to $u_3$ goes through a $c \in C$ so that $v[c] = 1$ and a $c' \in C$ so $u[c'] = 1$. Because $v$ and $u$ are orthogonal, $c \neq c'$. Moreover, the distance between any two nodes in $C$ is $2a$, as any such path must go through $V_2$ then back to $C$. Therefore, the shortest path from $v_1$ to $u_3$ consists of the edge $(v,c)$, then a path of length $2a$ to $c'$, then a path of length $2a-1$ to $u_3$, so $\ecc_{\bG}(v_1) \geq 4a$.

    Otherwise, suppose $v$ is not in an orthogonal pair. Let $w$ be a node in $\bG$. If there is a $u \in V$ so that $w = u_1 \in V_1$, then there is a path of length $2a+1$ from $v_1$ to $w$ that first goes to $u_2$, then back to $u_1$ through the length $a$ path from $u_2$ to $u_1$. If $w$ is in a path between $C$ and $V_2$, then there is a length at most $2a+1$ path from $v_1$ to $w$ which goes through some $u_2 \in V_2$ so that $u[c] = 1$. If $w$ is a node in a path from some $u_2 \in V_2$ to $u_1 \in V_1$, then there is a length at most $2a+1$ path from $v_1$ to $w$ which goes through $u_2$. Lastly, if $w$ is a node in a path from $u_2 \in V_2$ to $u_3 \in V_3$, then there is a length at most $2a+1$ path from $v_1$ to $w$ which goes through $u_2$. This exhausts all the cases, so $\ecc_{\bG}(v_1) \leq 2a+1$.
 \end{proof}
\subsubsection*{Reductions}
Our first reduction gives us a lower bound for single-source eccentricity and radius.

\begin{proof}[Proof of \cref{thm:inc_lb}\ref{it:inc_lb:iii}]
We build our reduction from the graph $\bG$ defined above. We add a node $s$ disconnected from the rest of the graph. We make $N$ updates: for each $v \in V$, we add the edge $(s,v_1)$. 

We observe that if $v$ is in an orthogonal pair, then $\ecc_{\bG_v}(s) \geq 4a+2$. Otherwise, $\ecc_{\bG_v}(s) \leq 2a+2$.
This is because $v_1$ is the only out-neighbor of $s$ in $\bG_v$, $\ecc_{\bG_v}(s) = \ecc_{\bG}(v_1) + 1$, so $\ecc_{\bG_v}(s) = 4a+2$ if $v$ is in an orthogonal pair and $\ecc_{\bG_v}(s) = 2a+2$ otherwise.

The reduction gives us a lower bound for maintaining either the eccentricity or the radius of the graph. We get the radius lower-bound because $s$ has in-degree 0, so every node in $\bG_v$ other than $s$ has eccentricity $\infty$, so $s$ is the center. Therefore, the radius of $\bG_v$ is simply $\ecc_{\bG_v}(s)$. The runtime analysis is similar to that of the undirected case. 
\end{proof}

\begin{proof}[Proof of \cref{thm:inc_lb}\ref{it:inc_lb:iv}]
We create a copy of the graph $\bG$ with the edge orientations reversed, which we call $\bG^{(1)}$. We create a second copy of $\bG$ with the edge orientations in the original direction, which we call $\bG^{(2)}$. We add an additional node $q$ to $\bG$. For every $x \in \bG^{(1)}$, we add an edge $(q,x)$, and for every $y \in \bG^{(2)}$, we add the edge $(y,q)$. Let $V_i^{(1)}$ be the copy of $V_i$ in $\bG^{(1)}$, and $V_i^{(2)}$ the copy of $V_i$ in $\bG^{(2)}$. 

The queries we make are as follows. For each $v^{(1)} \in V_1^{(1)}$ and corresponding $v^{(2)} \in V_1^{(2)}$, we add the edge $(v^{(1)}, v^{(2)})$. We call the graph after this update $\bG_v$.

\begin{claim}
    If $v$ is in an orthogonal pair, then $\diam(\bG_v) \geq 8a+3$. Otherwise, $\diam(\bG_v) \leq 4a+3$. 
\end{claim}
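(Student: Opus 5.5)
We need to prove the claim: if $v$ is in an orthogonal pair then $\diam(\bG_v)\ge 8a+3$, and otherwise $\diam(\bG_v)\le 4a+3$. The plan is to reduce both directions to \cref{lem:direcclb}, using that $\bG^{(1)}$ is the reversal of $\bG$, $\bG^{(2)}$ is a copy of $\bG$, and $q$ is a hub reachable from every node of $\bG^{(2)}$ and reaching every node of $\bG^{(1)}$.

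For the lower bound, let $(v,u)$ be orthogonal. In $\bG$ every path from $v_1$ to $u_3$ has length at least $4a$ (the argument in \cref{lem:direcclb}), so in the reversed copy every path from $u_3^{(1)}$ to $v_1^{(1)}$ inside $\bG^{(1)}$ also has length at least $4a$. Now take the pair $(u_3^{(1)}, u_3^{(2)})$. The only edges leaving $\bG^{(1)}$ are the inserted edge $(v^{(1)},v^{(2)})$; edges touching $q$ only enter $\bG^{(1)}$ or leave $\bG^{(2)}$. Hence any path from $u_3^{(1)}$ to $u_3^{(2)}$ must reach $v_1^{(1)}$ within $\bG^{(1)}$ (cost $\ge 4a$), cross the new edge (cost $1$), and then go from $v_1^{(2)}$ to $u_3^{(2)}$. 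Re-entering $\bG^{(1)}$ via $q$ would force another use of the new edge, so that detour is no shorter and this segment is at least $4a$ as well. The total is at least $8a+1$. To reach the claimed $8a+3$ exactly I would recheck the constants in \cref{lem:direcclb}, whose statement says $4a+1$; with $4a+1$ on each side plus the crossing edge the bound is $8a+3$. In any case the ratio is $2-O(1/a)$, which is all the reduction needs.

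For the upper bound, assume no orthogonal pair. Then $\ecc_{\bG}(v_1)\le 2a+1$, so $v_1^{(2)}$ reaches all of $\bG^{(2)}$ within $2a+1$, and by reversal every node of $\bG^{(1)}$ reaches $v_1^{(1)}$ within $2a+1$. I split into four cases for a pair $(x,y)$.
\begin{itemize}
\item $x\in\bG^{(1)}$, $y\in\bG^{(2)}$: route $x\to v_1^{(1)}\to v_1^{(2)}\to y$, giving $\le 4a+3$.
\item $x\in\bG^{(2)}$, $y\in\bG^{(1)}$: use $x\to q\to y$, giving $2$.
\item $x,y$ in the same copy: go from $x$ out to $q$ if needed and then into $y$. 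For $x\in\bG^{(1)}$, the path $x\to v_1^{(1)}\to v_1^{(2)}\to q\to y$ has length $\le 2a+4$. For $x\in\bG^{(2)}$, the path $x\to q\to v_1^{(1)}\to v_1^{(2)}\to y$ has length $\le 2a+4$, or $x\to q\to y$ if $y\in\bG^{(1)}$.
\item Pairs involving $q$ itself: these are bounded similarly, by $\le 2a+3$.
\end{itemize}
This gives $\diam(\bG_v)\le 4a+3$.

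The main obstacle is the lower-bound path-structure argument: ruling out that a path from $\bG^{(1)}$ to $\bG^{(2)}$ escapes through $q$. It does not, since $q$ has out-edges only into $\bG^{(1)}$ and in-edges only from $\bG^{(2)}$. Beyond that, the remaining work is verifying the exact additive constants against \cref{lem:direcclb}. The theorem then follows as in the undirected case: $N$ updates, $\Ohtilde(N)$ edges, and the ratio $(8a+3)/(4a+3)>2-\eps$ for $a=\lceil 1/\eps\rceil$.
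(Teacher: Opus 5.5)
Your proposal is correct and follows the paper's proof. For the lower bound you use that $q$ has out-edges only into $\bG^{(1)}$ and in-edges only from $\bG^{(2)}$, so every $\bG^{(1)}\to\bG^{(2)}$ path crosses the inserted edge and each half costs at least the (reversed) eccentricity bound; for the upper bound you use the same four-case routing through $v_1^{(1)}$, $v_1^{(2)}$ and $q$.

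You can drop the hedging about constants. The segment from $c'$ to $u_3$ is $c'\to u_2\to u_3$ with $2a$ edges, so $d_{\bG}(v_1,u_3)\ge 1+2a+2a=4a+1$, exactly as the lemma states; the ``$2a-1$'' in that lemma's proof is a slip. This gives $8a+3$ outright.
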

\begin{claimproof}
    Suppose $v$ is in an orthogonal pair. For any pair $x \in \bG^{(1)}$ and $y \in \bG^{(2)}$, the only path from $x$ to $y$ in $\bG_v$ goes through the edge $(v_1^{(1)}, v_1^{(2)})$. By \cref{lem:direcclb}, there is a choice of $x \in \bG^{(1)}$ so that $d(x, v_1^{(1)}) \geq 4a+1$, and $y \in \bG^{(2)}$ so that $d(v_1^{(2)},y) \geq 4a+1$. It follows that $d(x,y) \geq 8a+3$.

    Otherwise, suppose there is no orthogonal pair. Consider two nodes $x$ and $y$. We consider four cases:
    \begin{enumerate}[(i)]
        \item $x \in \bG^{(1)}$ and $y \in \bG^{(2)} \cup \{q\}$. By \cref{lem:direcclb}, there is a path from $x$ to $v_1^{(1)}$ of length at most $2a+1$ and a path from $v_1^{(2)}$ to $y$ of length at most $2a+1$, so $d(x,y) \leq 4a+3$.
        \item $x \in \bG^{(2)} \cup \{q\}$ and $y \in \bG^{(1)}$. Then there is a path of length at most 2 from $x$ to $y$ going through $q$.
        \item  $x \in \bG^{(1)}$ and $y \in \bG^{(1)}$. By \cref{lem:direcclb}, there is a path from $x$ to $v_1^{(1)}$ of length at most $2a+1$. The distance from $v_1^{(1)}$ to any node in $\bG^{(1)}$ is at most 3, as there is a path going through $v_1^{(2)}$, then $q$, so $d(x,y) \leq 2a+4$.
        \item $x \in \bG^{(2)} \cup \{q\}$ and $y \in \bG^{(2)} \cup \{q\}$. Then $d(x,v_1^{(2)}) \leq 3$, as there is a path going through $q$ then $v_1^{(1)}$. By \cref{lem:direcclb}, $d(v_1^{(2)},y) \leq 2a+1$, so $d(x,y) \leq 2a+4$. \claimqedhere
    \end{enumerate}
\end{claimproof}
     The runtime analysis is again similar to that of the undirected case. 
\end{proof}

\bibliographystyle{alphaurl}
\bibliography{main}

@article{GV19,
    author = {Grandoni, Fabrizio and Williams, Virginia Vassilevska},
    title = {Faster Replacement Paths and Distance Sensitivity Oracles},
    year = {2019},
    issue_date = {January 2020},
    publisher = {Association for Computing Machinery},
    address = {New York, NY, USA},
    volume = {16},
    number = {1},
    issn = {1549-6325},
    url = {https://doi.org/10.1145/3365835},
    doi = {10.1145/3365835},
    journal = {ACM Trans. Algorithms},
    month = dec,
    articleno = {15},
    numpages = {25}
}

@inproceedings{backurs2018,
    author = {Backurs, Arturs and Roditty, Liam and Segal, Gilad and {Vassilevska Williams}, Virginia  and Wein, Nicole},
    title = {Towards tight approximation bounds for graph diameter and eccentricities},
    year = {2018},
    isbn = {9781450355599},
    publisher = {Association for Computing Machinery},
    address = {New York, NY, USA},
    url = {https://doi.org/10.1145/3188745.3188950},
    doi = {10.1145/3188745.3188950},
    booktitle = {Proceedings of the 50th Annual ACM SIGACT Symposium on Theory of Computing},
    pages = {267–280},
    numpages = {14},
    location = {Los Angeles, CA, USA},
    series = {STOC 2018}
}

@inproceedings{cgr2016,
    author = {Cairo, Massimo and Grossi, Roberto and Rizzi, Romeo},
    title = {New bounds for approximating extremal distances in undirected graphs},
    year = {2016},
    isbn = {9781611974331},
    publisher = {Society for Industrial and Applied Mathematics},
    address = {USA},
    booktitle = {Proceedings of the Twenty-Seventh Annual ACM-SIAM Symposium on Discrete Algorithms},
    pages = {363–376},
    numpages = {14},
    location = {Arlington, Virginia},
    series = {SODA '16}
}

@inproceedings{bernsteinRP,
  author       = {Aaron Bernstein},
  editor       = {Moses Charikar},
  title        = {A Nearly Optimal Algorithm for Approximating Replacement Paths and
                  k Shortest Simple Paths in General Graphs},
  booktitle    = {Proceedings of the Twenty-First Annual {ACM-SIAM} Symposium on Discrete
                  Algorithms, {SODA} 2010, Austin, Texas, USA, January 17-19, 2010},
  pages        = {742--755},
  publisher    = {{SIAM}},
  year         = {2010},
  url          = {https://doi.org/10.1137/1.9781611973075.61},
  doi          = {10.1137/1.9781611973075.61},
  bibsource    = {dblp computer science bibliography, https://dblp.org}
}

@inproceedings{HKIM24,
  author       = {Kaito Harada and
                  Naoki Kitamura and
                  Taisuke Izumi and
                  Toshimitsu Masuzawa},
  editor       = {Timothy M. Chan and
                  Johannes Fischer and
                  John Iacono and
                  Grzegorz Herman},
  title        = {A Nearly Linear Time Construction of Approximate Single-Source Distance
                  Sensitivity Oracles},
  booktitle    = {32nd Annual European Symposium on Algorithms, {ESA} 2024, Royal Holloway,
                  London, United Kingdom, September 2-4, 2024},
  series       = {LIPIcs},
  volume       = {308},
  pages        = {65:1--65:18},
  publisher    = {Schloss Dagstuhl - Leibniz-Zentrum f{\"{u}}r Informatik},
  year         = {2024},
  url          = {https://doi.org/10.4230/LIPIcs.ESA.2024.65},
  doi          = {10.4230/LIPICS.ESA.2024.65},
  bibsource    = {dblp computer science bibliography, https://dblp.org}
}

@inproceedings{apbp2009,
    author = {Duan, Ran and Pettie, Seth},
    year = {2009},
    month = {01},
    pages = {384-391},
    title = {Fast Algorithms for (max, min)-Matrix Multiplication and Bottleneck Shortest Paths},
    journal = {Proceedings of the Annual ACM-SIAM Symposium on Discrete Algorithms},
    doi = {10.1145/1496770.1496813}
}

@inproceedings{roditty-vw-diamrad,
    author = {Roditty, Liam and {Vassilevska Williams}, Virginia},
    title = {Fast approximation algorithms for the diameter and radius of sparse graphs},
    year = {2013},
    isbn = {9781450320290},
    publisher = {Association for Computing Machinery},
    address = {New York, NY, USA},
    url = {https://doi.org/10.1145/2488608.2488673},
    doi = {10.1145/2488608.2488673},
    booktitle = {Proceedings of the Forty-Fifth Annual ACM Symposium on Theory of Computing},
    pages = {515–524},
    numpages = {10},
    location = {Palo Alto, California, USA},
    series = {STOC '13}
}

@InProceedings{sensitivity2017,
  author =	{Henzinger, Monika and Lincoln, Andrea and Neumann, Stefan and {Vassilevska Williams}, Virginia},
  title =	{{Conditional Hardness for Sensitivity Problems}},
  booktitle =	{8th Innovations in Theoretical Computer Science Conference (ITCS 2017)},
  pages =	{26:1--26:31},
  series =	{Leibniz International Proceedings in Informatics (LIPIcs)},
  ISBN =	{978-3-95977-029-3},
  ISSN =	{1868-8969},
  year =	{2017},
  volume =	{67},
  editor =	{Papadimitriou, Christos H.},
  publisher =	{Schloss Dagstuhl -- Leibniz-Zentrum f{\"u}r Informatik},
  address =	{Dagstuhl, Germany},
  URL =		{https://drops.dagstuhl.de/entities/document/10.4230/LIPIcs.ITCS.2017.26},
  URN =		{urn:nbn:de:0030-drops-81783},
  doi =		{10.4230/LIPIcs.ITCS.2017.26}
}

@InProceedings{ftdiam2023,
  author =	{Bil\`{o}, Davide and Choudhary, Keerti and Cohen, Sarel and Friedrich, Tobias and Krogmann, Simon and Schirneck, Martin},
  title =	{{Fault-Tolerant ST-Diameter Oracles}},
  booktitle =	{50th International Colloquium on Automata, Languages, and Programming (ICALP 2023)},
  pages =	{24:1--24:20},
  series =	{Leibniz International Proceedings in Informatics (LIPIcs)},
  ISBN =	{978-3-95977-278-5},
  ISSN =	{1868-8969},
  year =	{2023},
  volume =	{261},
  editor =	{Etessami, Kousha and Feige, Uriel and Puppis, Gabriele},
  publisher =	{Schloss Dagstuhl -- Leibniz-Zentrum f{\"u}r Informatik},
  address =	{Dagstuhl, Germany},
  URL =		{https://drops.dagstuhl.de/entities/document/10.4230/LIPIcs.ICALP.2023.24},
  URN =		{urn:nbn:de:0030-drops-180762},
  doi =		{10.4230/LIPIcs.ICALP.2023.24}
}

@InProceedings{ftdiam2021,
  author =	{Bil\`{o}, Davide and Cohen, Sarel and Friedrich, Tobias and Schirneck, Martin},
  title =	{{Space-Efficient Fault-Tolerant Diameter Oracles}},
  booktitle =	{46th International Symposium on Mathematical Foundations of Computer Science (MFCS 2021)},
  pages =	{18:1--18:16},
  series =	{Leibniz International Proceedings in Informatics (LIPIcs)},
  ISBN =	{978-3-95977-201-3},
  ISSN =	{1868-8969},
  year =	{2021},
  volume =	{202},
  editor =	{Bonchi, Filippo and Puglisi, Simon J.},
  publisher =	{Schloss Dagstuhl -- Leibniz-Zentrum f{\"u}r Informatik},
  address =	{Dagstuhl, Germany},
  URL =		{https://drops.dagstuhl.de/entities/document/10.4230/LIPIcs.MFCS.2021.18},
  URN =		{urn:nbn:de:0030-drops-144581},
  doi =		{10.4230/LIPIcs.MFCS.2021.18}
}

@InProceedings{ftdiam2022,
  author =	{Bil\`{o}, Davide and Choudhary, Keerti and Cohen, Sarel and Friedrich, Tobias and Schirneck, Martin},
  title =	{{Deterministic Sensitivity Oracles for Diameter, Eccentricities and All Pairs Distances}},
  booktitle =	{49th International Colloquium on Automata, Languages, and Programming (ICALP 2022)},
  pages =	{22:1--22:19},
  series =	{Leibniz International Proceedings in Informatics (LIPIcs)},
  ISBN =	{978-3-95977-235-8},
  ISSN =	{1868-8969},
  year =	{2022},
  volume =	{229},
  editor =	{Boja\'{n}czyk, Miko{\l}aj and Merelli, Emanuela and Woodruff, David P.},
  publisher =	{Schloss Dagstuhl -- Leibniz-Zentrum f{\"u}r Informatik},
  address =	{Dagstuhl, Germany},
  URL =		{https://drops.dagstuhl.de/entities/document/10.4230/LIPIcs.ICALP.2022.22},
  URN =		{urn:nbn:de:0030-drops-163633},
  doi =		{10.4230/LIPIcs.ICALP.2022.22}
}

@inproceedings{connectivity2010,
    author = {Duan, Ran and Pettie, Seth},
    title = {Connectivity oracles for failure prone graphs},
    year = {2010},
    isbn = {9781450300506},
    publisher = {Association for Computing Machinery},
    address = {New York, NY, USA},
    url = {https://doi.org/10.1145/1806689.1806754},
    doi = {10.1145/1806689.1806754},
    booktitle = {Proceedings of the Forty-Second ACM Symposium on Theory of Computing},
    pages = {465–474},
    numpages = {10},
    location = {Cambridge, Massachusetts, USA},
    series = {STOC '10}
}

@article{connectivity2017,
  author       = {Ran Duan and
                  Seth Pettie},
  title        = {Connectivity Oracles for Graphs Subject to Vertex Failures},
  journal      = {{SIAM} J. Comput.},
  volume       = {49},
  number       = {6},
  pages        = {1363--1396},
  year         = {2020},
  url          = {https://doi.org/10.1137/17M1146610},
  doi          = {10.1137/17M1146610},
  bibsource    = {dblp computer science bibliography, https://dblp.org}
}

@InProceedings{connectivity2016,
  author =	{Henzinger, Monika and Neumann, Stefan},
  title =	{{Incremental and Fully Dynamic Subgraph Connectivity For Emergency Planning}},
  booktitle =	{24th Annual European Symposium on Algorithms (ESA 2016)},
  pages =	{48:1--48:11},
  series =	{Leibniz International Proceedings in Informatics (LIPIcs)},
  ISBN =	{978-3-95977-015-6},
  ISSN =	{1868-8969},
  year =	{2016},
  volume =	{57},
  editor =	{Sankowski, Piotr and Zaroliagis, Christos},
  publisher =	{Schloss Dagstuhl -- Leibniz-Zentrum f{\"u}r Informatik},
  address =	{Dagstuhl, Germany},
  URL =		{https://drops.dagstuhl.de/entities/document/10.4230/LIPIcs.ESA.2016.48},
  URN =		{urn:nbn:de:0030-drops-63607},
  doi =		{10.4230/LIPIcs.ESA.2016.48}
}

@inproceedings{connectivity2007,
    author = {Patrascu, Mihai and Thorup, Mikkel},
    title = {Planning for Fast Connectivity Updates},
    year = {2007},
    isbn = {0769530109},
    publisher = {IEEE Computer Society},
    address = {USA},
    url = {https://doi.org/10.1109/FOCS.2007.54},
    doi = {10.1109/FOCS.2007.54},
    booktitle = {Proceedings of the 48th Annual IEEE Symposium on Foundations of Computer Science},
    pages = {263–271},
    numpages = {9},
    series = {FOCS '07}
}

@InProceedings{reachability2015,
    author="Baswana, Surender
    and Choudhary, Keerti
    and Roditty, Liam",
    editor="Moses, Yoram",
    title="Fault Tolerant Reachability for Directed Graphs",
    booktitle="Distributed Computing",
    year="2015",
    publisher="Springer Berlin Heidelberg",
    address="Berlin, Heidelberg",
    pages="528--543",
    isbn="978-3-662-48653-5"
}

@inproceedings{reachability2016,
    author = {Baswana, Surender and Choudhary, Keerti and Roditty, Liam},
    title = {Fault tolerant subgraph for single source reachability: generic and optimal},
    year = {2016},
    isbn = {9781450341325},
    publisher = {Association for Computing Machinery},
    address = {New York, NY, USA},
    url = {https://doi.org/10.1145/2897518.2897648},
    doi = {10.1145/2897518.2897648},
    booktitle = {Proceedings of the Forty-Eighth Annual ACM Symposium on Theory of Computing},
    pages = {509–518},
    numpages = {10},
    location = {Cambridge, MA, USA},
    series = {STOC '16}
}

@inproceedings{reachability2012,
  author       = {Surender Baswana and
                  Utkarsh Lath and
                  Anuradha S. Mehta},
  editor       = {Yuval Rabani},
  title        = {Single source distance oracle for planar digraphs avoiding a failed
                  node or link},
  booktitle    = {Proceedings of the Twenty-Third Annual {ACM-SIAM} Symposium on Discrete
                  Algorithms, {SODA} 2012, Kyoto, Japan, January 17-19, 2012},
  pages        = {223--232},
  publisher    = {{SIAM}},
  year         = {2012},
  url          = {https://doi.org/10.1137/1.9781611973099.20},
  doi          = {10.1137/1.9781611973099.20},
  bibsource    = {dblp computer science bibliography, https://dblp.org}
}

@InProceedings{reachability2016b,
  author =	{Choudhary, Keerti},
  title =	{{An Optimal Dual Fault Tolerant Reachability Oracle}},
  booktitle =	{43rd International Colloquium on Automata, Languages, and Programming (ICALP 2016)},
  pages =	{130:1--130:13},
  series =	{Leibniz International Proceedings in Informatics (LIPIcs)},
  ISBN =	{978-3-95977-013-2},
  ISSN =	{1868-8969},
  year =	{2016},
  volume =	{55},
  editor =	{Chatzigiannakis, Ioannis and Mitzenmacher, Michael and Rabani, Yuval and Sangiorgi, Davide},
  publisher =	{Schloss Dagstuhl -- Leibniz-Zentrum f{\"u}r Informatik},
  address =	{Dagstuhl, Germany},
  URL =		{https://drops.dagstuhl.de/entities/document/10.4230/LIPIcs.ICALP.2016.130},
  URN =		{urn:nbn:de:0030-drops-62659},
  doi =		{10.4230/LIPIcs.ICALP.2016.130}
}

@InProceedings{sssp2010,
  author =	{Khanna, Neelesh and Baswana, Surender},
  title =	{{Approximate Shortest Paths Avoiding a Failed Vertex: Optimal Size Data Structures for Unweighted Graphs}},
  booktitle =	{27th International Symposium on Theoretical Aspects of Computer Science},
  pages =	{513--524},
  series =	{Leibniz International Proceedings in Informatics (LIPIcs)},
  ISBN =	{978-3-939897-16-3},
  ISSN =	{1868-8969},
  year =	{2010},
  volume =	{5},
  editor =	{Marion, Jean-Yves and Schwentick, Thomas},
  publisher =	{Schloss Dagstuhl -- Leibniz-Zentrum f{\"u}r Informatik},
  address =	{Dagstuhl, Germany},
  URL =		{https://drops.dagstuhl.de/entities/document/10.4230/LIPIcs.STACS.2010.2481},
  URN =		{urn:nbn:de:0030-drops-24812},
  doi =		{10.4230/LIPIcs.STACS.2010.2481}
}

@INPROCEEDINGS{KS23,
  author={Karczmarz, Adam and Sankowski, Piotr},
  booktitle={2023 IEEE 64th Annual Symposium on Foundations of Computer Science (FOCS)}, 
  title={Sensitivity and Dynamic Distance Oracles via Generic Matrices and Frobenius Form}, 
  year={2023},
  volume={},
  number={},
  pages={1745-1756},
  doi={10.1109/FOCS57990.2023.00106}
}

@InProceedings{sssp2016,
  author =	{Bilo, Davide and Guala, Luciano and Leucci, Stefano and Proietti, Guido},
  title =	{{Compact and Fast Sensitivity Oracles for Single-Source Distances}},
  booktitle =	{24th Annual European Symposium on Algorithms (ESA 2016)},
  pages =	{13:1--13:14},
  series =	{Leibniz International Proceedings in Informatics (LIPIcs)},
  ISBN =	{978-3-95977-015-6},
  ISSN =	{1868-8969},
  year =	{2016},
  volume =	{57},
  editor =	{Sankowski, Piotr and Zaroliagis, Christos},
  publisher =	{Schloss Dagstuhl -- Leibniz-Zentrum f{\"u}r Informatik},
  address =	{Dagstuhl, Germany},
  URL =		{https://drops.dagstuhl.de/entities/document/10.4230/LIPIcs.ESA.2016.13},
  URN =		{urn:nbn:de:0030-drops-63640},
  doi =		{10.4230/LIPIcs.ESA.2016.13}
}

@InProceedings{sssp2016b,
  author =	{Bil\`{o}, Davide and Gual\`{a}, Luciano and Leucci, Stefano and Proietti, Guido},
  title =	{{Multiple-Edge-Fault-Tolerant Approximate Shortest-Path Trees}},
  booktitle =	{33rd Symposium on Theoretical Aspects of Computer Science (STACS 2016)},
  pages =	{18:1--18:14},
  series =	{Leibniz International Proceedings in Informatics (LIPIcs)},
  ISBN =	{978-3-95977-001-9},
  ISSN =	{1868-8969},
  year =	{2016},
  volume =	{47},
  editor =	{Ollinger, Nicolas and Vollmer, Heribert},
  publisher =	{Schloss Dagstuhl -- Leibniz-Zentrum f{\"u}r Informatik},
  address =	{Dagstuhl, Germany},
  URL =		{https://drops.dagstuhl.de/entities/document/10.4230/LIPIcs.STACS.2016.18},
  URN =		{urn:nbn:de:0030-drops-57196},
  doi =		{10.4230/LIPIcs.STACS.2016.18}
}

@article{sssp2015,
  title={Dual Failure Resilient BFS Structure},
  author={Merav Parter},
  journal={Proceedings of the 2015 ACM Symposium on Principles of Distributed Computing},
  year={2015},
  url={https://api.semanticscholar.org/CorpusID:6923325}
}

@inproceedings{grandoni-vw-dso,
    author = {Grandoni, Fabrizio and {Vassilevska Williams}, Virginia},
    title = {Improved Distance Sensitivity Oracles via Fast Single-Source Replacement Paths},
    year = {2012},
    isbn = {9780769548746},
    publisher = {IEEE Computer Society},
    address = {USA},
    url = {https://doi.org/10.1109/FOCS.2012.17},
    doi = {10.1109/FOCS.2012.17},
    pages = {748–757},
    numpages = {10},
    series = {FOCS '12}
}

@InProceedings{spanners2015,
    author="Bil{\`o}, Davide
    and Grandoni, Fabrizio
    and Gual{\`a}, Luciano
    and Leucci, Stefano
    and Proietti, Guido",
    editor="Bansal, Nikhil
    and Finocchi, Irene",
    title="Improved Purely Additive Fault-Tolerant Spanners",
    booktitle="Algorithms - ESA 2015",
    year="2015",
    publisher="Springer Berlin Heidelberg",
    address="Berlin, Heidelberg",
    pages="167--178",
    isbn="978-3-662-48350-3"
}

@InProceedings{apsp2014,
    author="Bil{\`o}, Davide
    and Gual{\`a}, Luciano
    and Leucci, Stefano
    and Proietti, Guido",
    editor="Schulz, Andreas S.
    and Wagner, Dorothea",
    title="Fault-Tolerant Approximate Shortest-Path Trees",
    booktitle="Algorithms - ESA 2014",
    year="2014",
    publisher="Springer Berlin Heidelberg",
    address="Berlin, Heidelberg",
    pages="137--148",
    isbn="978-3-662-44777-2"
}

@InProceedings{spanners2012,
    author="Braunschvig, Gilad
    and Chechik, Shiri
    and Peleg, David",
    editor="Golumbic, Martin Charles
    and Stern, Michal
    and Levy, Avivit
    and Morgenstern, Gila",
    title="Fault Tolerant Additive Spanners",
    booktitle="Graph-Theoretic Concepts in Computer Science",
    year="2012",
    publisher="Springer Berlin Heidelberg",
    address="Berlin, Heidelberg",
    pages="206--214",
    isbn="978-3-642-34611-8"
}

@inproceedings{apsp2017,
  title={$(1+\varepsilon)$-Approximate f-Sensitive Distance Oracles},
  author={Chechik, Shiri and Cohen, Sarel and Fiat, Amos and Kaplan, Haim},
  booktitle={Proceedings of the Twenty-Eighth Annual ACM-SIAM Symposium on Discrete Algorithms},
  pages={1479--1496},
  year={2017},
  organization={SIAM}
}

@article{apsp2014b,
  author       = {Merav Parter and
                  David Peleg},
  title        = {Fault-Tolerant Approximate {BFS} Structures},
  journal      = {{ACM} Trans. Algorithms},
  volume       = {14},
  number       = {1},
  pages        = {10:1--10:15},
  year         = {2018},
  url          = {https://doi.org/10.1145/3022730},
  doi          = {10.1145/3022730},
  bibsource    = {dblp computer science bibliography, https://dblp.org}
}

@inproceedings{AlmanDWXXZ25,
  author       = {Josh Alman and
                  Ran Duan and
                  Virginia {Vassilevska Williams} and
                  Yinzhan Xu and
                  Zixuan Xu and
                  Renfei Zhou},
  title        = {More Asymmetry Yields Faster Matrix Multiplication},
  booktitle    = {Proceedings of the 2025 Annual {ACM-SIAM} Symposium on Discrete Algorithms,
                  {SODA} 2025, New Orleans, LA, USA, January 12-15, 2025},
  pages        = {2005--2039},
  publisher    = {{SIAM}},
  year         = {2025}
}

@article{staticdiam1,
    author = {Aingworth, D. and Chekuri, C. and Indyk, P. and Motwani, R.},
    title = {Fast Estimation of Diameter and Shortest Paths (Without Matrix Multiplication)},
    journal = {SIAM Journal on Computing},
    volume = {28},
    number = {4},
    pages = {1167-1181},
    year = {1999},
    doi = {10.1137/S0097539796303421},
    URL = {https://doi.org/10.1137/S0097539796303421},
    eprint = {https://doi.org/10.1137/S0097539796303421}
}

@InProceedings{dynamicdiam1,
  author =	{Ancona, Bertie and Henzinger, Monika and Roditty, Liam and {Vassilevska Williams}, Virginia and Wein, Nicole},
  title =	{{Algorithms and Hardness for Diameter in Dynamic Graphs}},
  booktitle =	{46th International Colloquium on Automata, Languages, and Programming (ICALP 2019)},
  pages =	{13:1--13:14},
  series =	{Leibniz International Proceedings in Informatics (LIPIcs)},
  ISBN =	{978-3-95977-109-2},
  ISSN =	{1868-8969},
  year =	{2019},
  volume =	{132},
  editor =	{Baier, Christel and Chatzigiannakis, Ioannis and Flocchini, Paola and Leonardi, Stefano},
  publisher =	{Schloss Dagstuhl -- Leibniz-Zentrum f{\"u}r Informatik},
  address =	{Dagstuhl, Germany},
  URL =		{https://drops.dagstuhl.de/entities/document/10.4230/LIPIcs.ICALP.2019.13},
  URN =		{urn:nbn:de:0030-drops-105891},
  doi =		{10.4230/LIPIcs.ICALP.2019.13}
}

@inproceedings{Chechik2014BetterAA,
  author       = {Shiri Chechik and
                  Daniel H. Larkin and
                  Liam Roditty and
                  Grant Schoenebeck and
                  Robert Endre Tarjan and
                  Virginia {Vassilevska Williams}},
  editor       = {Chandra Chekuri},
  title        = {Better Approximation Algorithms for the Graph Diameter},
  booktitle    = {Proceedings of the Twenty-Fifth Annual {ACM-SIAM} Symposium on Discrete
                  Algorithms, {SODA} 2014, Portland, Oregon, USA, January 5-7, 2014},
  pages        = {1041--1052},
  publisher    = {{SIAM}},
  year         = {2014},
  url          = {https://doi.org/10.1137/1.9781611973402.78},
  doi          = {10.1137/1.9781611973402.78},
  bibsource    = {dblp computer science bibliography, https://dblp.org}
}

@inproceedings{dualrp,
  author       = {Shiri Chechik and
                  Tianyi Zhang},
  editor       = {David P. Woodruff},
  title        = {Nearly Optimal Approximate Dual-Failure Replacement Paths},
  booktitle    = {Proceedings of the 2024 {ACM-SIAM} Symposium on Discrete Algorithms,
                  {SODA} 2024, Alexandria, VA, USA, January 7-10, 2024},
  pages        = {2568--2596},
  publisher    = {{SIAM}},
  year         = {2024},
  url          = {https://doi.org/10.1137/1.9781611977912.91},
  doi          = {10.1137/1.9781611977912.91},
  bibsource    = {dblp computer science bibliography, https://dblp.org}
}

@inproceedings{abboud**,
  author       = {Amir Abboud and
                  Virginia {Vassilevska Williams} and
                  Joshua R. Wang},
  editor       = {Robert Krauthgamer},
  title        = {Approximation and Fixed Parameter Subquadratic Algorithms for Radius
                  and Diameter in Sparse Graphs},
  booktitle    = {Proceedings of the Twenty-Seventh Annual {ACM-SIAM} Symposium on Discrete
                  Algorithms, {SODA} 2016, Arlington, VA, USA, January 10-12, 2016},
  pages        = {377--391},
  publisher    = {{SIAM}},
  year         = {2016},
  url          = {https://doi.org/10.1137/1.9781611974331.ch28},
  doi          = {10.1137/1.9781611974331.CH28},
  bibsource    = {dblp computer science bibliography, https://dblp.org}
}

@article{Ren22,
  author       = {Hanlin Ren},
  title        = {Improved distance sensitivity oracles with subcubic preprocessing
                  time},
  journal      = {J. Comput. Syst. Sci.},
  volume       = {123},
  pages        = {159--170},
  year         = {2022},
  url          = {https://doi.org/10.1016/j.jcss.2021.08.005},
  doi          = {10.1016/J.JCSS.2021.08.005},
  bibsource    = {dblp computer science bibliography, https://dblp.org}
}

@inproceedings{BK09,
    author       = {Aaron Bernstein and
                  David R. Karger},
    editor       = {Michael Mitzenmacher},
    title        = {A nearly optimal oracle for avoiding failed vertices and edges},
    booktitle    = {Proceedings of the 41st Annual {ACM} Symposium on Theory of Computing,
                  {STOC} 2009, Bethesda, MD, USA, May 31 - June 2, 2009},
    pages        = {101--110},
    publisher    = {{ACM}},
    year         = {2009},
    url          = {https://doi.org/10.1145/1536414.1536431},
    doi          = {10.1145/1536414.1536431},
    bibsource    = {dblp computer science bibliography, https://dblp.org}
}

@inproceedings{BK08,
    author       = {Aaron Bernstein and
                  David R. Karger},
    editor       = {Shang{-}Hua Teng},
    title        = {Improved distance sensitivity oracles via random sampling},
    booktitle    = {Proceedings of the Nineteenth Annual {ACM-SIAM} Symposium on Discrete
                  Algorithms, {SODA} 2008, San Francisco, California, USA, January 20-22,
                  2008},
    pages        = {34--43},
    publisher    = {{SIAM}},
    year         = {2008},
    url          = {http://dl.acm.org/citation.cfm?id=1347082.1347087},
    bibsource    = {dblp computer science bibliography, https://dblp.org}
}

@InProceedings{GR21,
    author =	{Gu, Yong and Ren, Hanlin},
    title =	{{Constructing a Distance Sensitivity Oracle in $O(n^2.5794 M)$ Time}},
    booktitle =	{48th International Colloquium on Automata, Languages, and Programming (ICALP 2021)},
    pages =	{76:1--76:20},
    series =	{Leibniz International Proceedings in Informatics (LIPIcs)},
    ISBN =	{978-3-95977-195-5},
    ISSN =	{1868-8969},
    year =	{2021},
    volume =	{198},
    editor =	{Bansal, Nikhil and Merelli, Emanuela and Worrell, James},
    publisher =	{Schloss Dagstuhl -- Leibniz-Zentrum f{\"u}r Informatik},
    address =	{Dagstuhl, Germany},
    URL =		{https://drops.dagstuhl.de/entities/document/10.4230/LIPIcs.ICALP.2021.76},
    URN =		{urn:nbn:de:0030-drops-141450},
    doi =		{10.4230/LIPIcs.ICALP.2021.76}
}

@inproceedings{CC20,
    author = {Chechik, Shiri and Cohen, Sarel},
    title = {Distance sensitivity oracles with subcubic preprocessing time and fast query time},
    year = {2020},
    isbn = {9781450369794},
    publisher = {Association for Computing Machinery},
    address = {New York, NY, USA},
    url = {https://doi.org/10.1145/3357713.3384253},
    doi = {10.1145/3357713.3384253},
    booktitle = {Proceedings of the 52nd Annual ACM SIGACT Symposium on Theory of Computing},
    pages = {1375–1388},
    numpages = {14},
    location = {Chicago, IL, USA},
    series = {STOC 2020}
}

@inproceedings{W14,
    author = {Williams, Ryan},
    title = {Faster all-pairs shortest paths via circuit complexity},
    year = {2014},
    isbn = {9781450327107},
    publisher = {Association for Computing Machinery},
    address = {New York, NY, USA},
    url = {https://doi.org/10.1145/2591796.2591811},
    doi = {10.1145/2591796.2591811},
    booktitle = {Proceedings of the Forty-Sixth Annual ACM Symposium on Theory of Computing},
    pages = {664–673},
    numpages = {10},
    location = {New York, New York},
    series = {STOC '14}
}

@inproceedings{directeddiamlb,
  author       = {Amir Abboud and
                  Mina Dalirrooyfard and
                  Ray Li and
                  Virginia {Vassilevska Williams}},
  editor       = {Inge Li G{\o}rtz and
                  Martin Farach{-}Colton and
                  Simon J. Puglisi and
                  Grzegorz Herman},
  title        = {On Diameter Approximation in Directed Graphs},
  booktitle    = {31st Annual European Symposium on Algorithms, {ESA} 2023, September
                  4-6, 2023, Amsterdam, The Netherlands},
  series       = {LIPIcs},
  volume       = {274},
  pages        = {2:1--2:17},
  publisher    = {Schloss Dagstuhl - Leibniz-Zentrum f{\"{u}}r Informatik},
  year         = {2023},
  url          = {https://doi.org/10.4230/LIPIcs.ESA.2023.2},
  doi          = {10.4230/LIPICS.ESA.2023.2},
  bibsource    = {dblp computer science bibliography, https://dblp.org}
}

@inproceedings{diamhardnessdir,
author = {Dalirrooyfard, Mina and Wein, Nicole},
title = {Tight conditional lower bounds for approximating diameter in directed graphs},
year = {2021},
booktitle = {Proceedings of the 53rd Annual ACM SIGACT Symposium on Theory of Computing},
pages = {1697–1710},
numpages = {14},
location = {Virtual, Italy},
series = {STOC 2021}
}

@article{diamhardnessundir,
author = {Dalirrooyfard, Mina and Li, Ray and {Vassilevska Williams}, Virginia},
title = {Hardness of Approximate Diameter: Now for Undirected Graphs},
year = {2025},
issue_date = {February 2025},
publisher = {Association for Computing Machinery},
address = {New York, NY, USA},
volume = {72},
number = {1},
issn = {0004-5411},
url = {https://doi.org/10.1145/3704631},
doi = {10.1145/3704631},
journal = {J. ACM},
month = jan,
articleno = {6},
numpages = {32}
}

\end{document}